\newcommand{\kibitz}[2]{\ifnum\Comments=1{\color{#1}{#2}}\fi}
\DeclareMathOperator{\Hessian}{Hess}
\newcommand{\kibitzTODO}[2]{\ifnum\TODO=1{\color{#1}{#2}}\fi}
\colorlet{darkblue}{blue!40!black}
\definecolor{auburn}{rgb}{0.43, 0.21, 0.1}
\definecolor{orange}{rgb}{1, 0.5, 0}
\definecolor{lightblue}{rgb}{0.1176, 0.5647, 1}
\lstdefinestyle{mycodestyle}
{
basicstyle = \ttfamily,
mathescape=true,
breaklines=true,
}
\newcommand{\colcmts}[2]{\ifnum\ColCMT=1{\color{#1}{#2}}\fi}
\theoremstyle{plain}
\newtheorem{theorem}{Theorem}
\theoremstyle{definition}
\newtheorem{definition}{Definition} 
\newtheorem{assumption}{Assumption} 
\newtheorem{example}{Example}
\newtheorem{lemma}{Lemma}
\newtheorem{proposition}{Proposition}
\newtheorem{corollary}{Corollary}
\theoremstyle{definition}
\newtheorem{remark}{Remark}
\DeclareMathOperator{\var}{var}
\DeclareMathOperator{\Binom}{Binomial}
\DeclareMathOperator{\Bias}{Bias}
\newcommand{\Thetaseg}{\hat \Theta_y^{(\textrm{\normalfont{seg}})}}
\newcommand{\ThetasegG}{\hat \Theta_{y}^{(\textrm{\normalfont{g-seg}})}}
\newcommand{\ThetasegP}{\hat \Theta_{Y_p}^{(\textrm{\normalfont{seg}})}}
\newcommand{\Thetasegopt}{\hat \Theta_y^{\ast(\textrm{\normalfont{seg}})}}
\newcommand{\ThetasegoptG}{\hat \Theta_y^{\ast(\textrm{\normalfont{g-seg}})}}
\newcommand{\Thetasegpbigy}{\hat \Theta_{Y_p}^{(\textrm{\normalfont{seg}})}}
\newcommand{\Thetatrp}{\hat \Theta_y^{(\textrm{\normalfont{route}})}}
\newcommand{\Thetatrpopt}{\hat \Theta_y^{\ast(\textrm{\normalfont{route}})}}
\newcommand{\Thetatrpoptp}{\hat \Theta_{y_p}^{\ast(\textrm{\normalfont{route}})}}
\newcommand{\ThetatrpoptP}{\hat \Theta_{Y_p}^{\ast(\textrm{\normalfont{route}})}}
\newcommand{\bb}[1]{\mathbb{#1}}
\newcommand{\1}{\mathbf 1}
\newcommand{\diag}{\mathop{\mathrm{diag}}}
\newcommand{\be}{\begin{equs}}
\newcommand{\ee}{\end{equs}}
\newcommand{\Thetasegprime}{\hat{\Theta}_y^{\prime(\textrm{\normalfont{seg}})}}
\pgfplotsset{
    discard if not/.style 2 args={
        x filter/.code={
            \edef\tempa{\thisrow{#1}}
            \edef\tempb{#2}
            \ifx\tempa\tempb
            \else
                
            \fi
        }
    }
}
\title{Efficiency of ETA Prediction} %
\author{
Chiwei Yan%
\thanks{Department of Industrial Engineering and Operations Research, University of California, Berkeley.}
\and
James Johndrow%
\thanks{Department of Statistics and Data Science, Wharton School, University of Pennsylvania.}
\and Dawn Woodard%
\thanks{LinkedIn, work done while at Uber Technologies.}
\and Yanwei Sun\thanks{Department of Analytics, Marketing and Operations, Imperial College Business School.}
\bigskip
}
\date{\today}
\begin{document}

\maketitle

\begin{abstract}
	Modern mobile applications such as navigation services and ride-sharing platforms rely heavily on geospatial technologies, most critically predictions of the time required for a vehicle to traverse a particular route, or the so-called estimated time of arrival (ETA). There are various methods used in practice, which differ in terms of the geographic granularity at which the predictive model is trained --- e.g., segment-based methods predict travel time at the level of road segments (or a combination of several adjacent road segments) and then aggregate across the route, whereas route-based methods use generic information about the trip, such as origin and destination, to predict travel time. Though various forms of these methods have been developed, there has been no rigorous theoretical comparison regarding their accuracies, and empirical studies have, in many cases, drawn opposite conclusions. 
		We provide the first theoretical analysis of the predictive accuracy of various ETA prediction methods and argue that maintaining a segment-level architecture in predicting travel time is often of first-order importance. Our work highlights that the accuracy of ETA prediction is driven not just by the sophistication of the model but also by the spatial granularity at which those methods are applied.
\end{abstract}

\section{Introduction} \label{sec:intro}

	Geospatial (maps) technologies underlie a broad spectrum of modern mobile applications. For example, consumer-facing navigation applications (such as Google Maps and Waze) provide recommended routes along with associated times, as well as turn-by-turn navigation along those routes.  Geospatial technologies are also the foundation of decision systems for  ride-sharing (such as Uber, Lyft, Didi Chuxing, and Ola) and delivery platforms (such as Uber Eats and Doordash).  For example, riders on these platforms are presented with estimated pickup time and time to arrival, and drivers are provided with turn-by-turn navigation. Matching and pricing decisions  on these platforms also heavily rely on mapping inputs to optimize efficiency and reliability \citep{yan2020dynamic}.

	An important geospatial technology is the prediction of the time required for a driver (or biker or pedestrian) to travel a particular route in the road network or the so-called estimated time of arrival (ETA). %
	Modern methods leverage location data traces from past vehicle trips in the road network, so-called ``floating-car'' data, typically gathered (with permission) from users of a particular application, such as a consumer-facing navigation service. Location traces from driver trips in the road network are processed by a ``map-matching'' algorithm to obtain travel time observations on each road segment along the driver's trajectory \citep{quddus2007current}. This data provides detailed information about traffic and travel speed patterns throughout the road network and along individual routes, and so is the foundation of modern methods for ETA prediction at scale. A feature that distinguishes different prediction methods is \emph{the level of geographic granularity} at which the predictive model is trained. The geographic unit can be a single road segment, a combination of multiple adjacent and connecting segments (aka ``super-segment'', see \citealt{derrow2021eta} for the implementation in Google Maps), or the entire trip. To be more specific, \emph{segment-based} methods rely heavily on the underlying road network and predict travel time at the level of road segments or super-segments, and then aggregate across the route (see, e.g., \citealt{hofleitner2012learning} and \citealt{jenelius2013travel}). On the other hand, with a large and growing amount of trip data being collected by firms such as ride-sharing platforms, a class of more recently proposed \emph{route-based} methods hinge less on the road network and use generic information about the origin, destination, departure time and sometimes route characteristics to predict travel time. This started with the \emph{k}-nearest neighbors approach proposed in \cite{wang2016simple}, where the prediction of travel time on a new route was done using travel times of historical trips that have similar origins, destinations and departure times as those of the predicting route. Then a number of neural-network based approaches were developed for route-based prediction \citep{jindal2017unified,li2018multi,yuan2020effective}, including work from Didi Chuxing, a major ride-sharing provider.

	Though many variations of these methods have been proposed in the literature and used in practice, there is a very limited theoretical understanding of the accuracy of these methods.
	Most work in this space is empirical, and these empirical studies have, in many cases, drawn opposite conclusions (see, e.g., \citealt{wang2018learning,yuan2020effective,derrow2021eta,wang2016simple}). Indeed, the comparison is not trivial. Segment-based methods have the advantage of a larger sample size as there are more individual traversals on a segment level. However, the estimation can accumulate errors due to aggregating over road segments. On the other hand, route-based methods can have the advantage of absorbing errors among segment travel times, but it is often at the cost of a smaller sample size. Part of the confusion in the empirical analyses stems from some papers assuming that the route that the driver will take is known, whereas other papers assume that the route is uncertain (and so must be estimated or ignored); the latter naturally disadvantages segment-based methods. However, even papers that analyze cases where the route is known sometimes conclude that route-based methods are superior \citep{wang2016simple}.
 Due to the uncertainty about the best approach to take, several recent papers have tried combining segment-based and route-based methods into a single model \citep{wangfuye,hu2022deepreta}, or using methods that model travel time at the level of the ``super-segment'' (a sequence of segments) \citep{wang2014travel,derrow2021eta}.

	To fill this gap in understanding, we conduct rigorous analyses comparing segment-based and route-based methods in terms of their predictive accuracy as a function of the training data sample size, i.e., in terms of statistical efficiency. We now give a brief summary of our framework, analysis, and major results.
	
	\vspace{1mm}
	
	\noindent\textbf{Framework.}  We consider a road network consisting of a set of road segments (directed edges) between intersections (vertices). The training data consists of individual trips traversed at different times, each of which is along a route that consists of a sequence of adjacent road segments. The travel time on each road segment of that route is observed, and the total travel time of a trip is the sum of the observed segment travel times along the route. Our goal is to predict the total travel time on a new trip, given the dataset of historical trips. We ask questions such as: What is the most accurate ETA prediction method? and, How do various methods used in practice compare? To precisely answer these questions, we construct a data-generating process based on a general mean function that depends flexibly on features like distance and time of day, and that also incorporates parameters associated with the idiosyncratic travel speed effects of individual road segments. Travel times are allowed to be correlated across the road segments of a trip.
 Under this data-generating process, we first explicitly characterize the optimal predictor that has the lowest predictive mean squared error. The optimal predictor, though having the best accuracy, is computationally intractable to implement in real-time mapping services and requires full knowledge of the segment travel times' covariance structure which can be hard to obtain in practice. %
	This calls for the need to understand the accuracy of simpler and more practical methods. We formally define a family of segment-based methods and route-based methods that resemble many practical methods proposed in the literature and used in practice.
	
	\vspace{1mm}
	
	\noindent\textbf{Analysis and Results.} We start with a finite-sample setting where a set of historical trips are given on an arbitrary road network. When segment travel times are \emph{non-negatively} correlated over the network, we show that the predictive mean squared error of the optimal segment-based method, where prediction is made on each individual road segment and then aggregated over the predicting route, is always lower than those of a wide range of route-based methods. 
 We then extend our analysis to an asymptotic setting where the number of trip observations grows with the size of the road network, and trip routes are sampled randomly from a generic route distribution. We show that a very simple class of segment-based methods with minimum information requirement %
	can asymptotically dominate popular route-based methods. Furthermore, under a broad range of trip-generating processes on a grid network, we show that this class of simple segment-based methods is at least as good (up to a logarithmic factor) as any possible predictor. In other words, segment-based methods are asymptotically optimal up to a logarithmic factor. Numerical experiments based on realistic parameters reveal that the accuracy of the segment-based methods is extremely competitive --- the error of the segment-based method is often very close to that of the optimal method, even not at the asymptotic limit. %

 Our analysis is greatly facilitated by the fact that minimizing the predictive mean squared error is mathematically equivalent to minimizing the estimation error for the expected travel time  (i.e., the conditional mean of travel time given the input features). This allows us to focus our analysis on the accuracy of estimating the expected travel time  (rather than, for example, the variance of travel time), which is summable across road segments. We believe that focusing on the accuracy of estimation of the expected travel time is reasonable because modern navigation applications have chosen to focus mainly on communicating the expected travel time to users, ignoring the variance (which is harder to communicate and less interpretable for many users).\footnote{As far as we know, currently there is only one case where a navigation provider gives a measure of variability of travel time to the user. This is for the web interface for Google Maps, in the case where the user inputs a future time for departure/arrival. In this case, Google Maps provides a range (interval prediction) for travel time. To our knowledge this information isn’t provided in the Google Maps app, or by any other common mapping providers like Apple Maps.} 
	
	\vspace{1mm}
	
	In short, our paper makes a contribution to the literature and practice of the ETA prediction problem by providing important theoretical underpinnings. Through extensive analyses, our paper argues that maintaining a segment-level architecture in predicting travel time is often of first-order importance. This gives important practical guidance to mapping services as they improve their underlying predictive models. The remainder of the paper is organized as follows. %
	In Section \ref{sec:finite}, we introduce the model setup and conduct finite sample analysis, meaning analysis for a given set of historical trips. In Section \ref{sec:asymptotic}, we analyze an asymptotic setting in which the number of trip observations grows with the road network size, and trip observations are sampled randomly from a route distribution. %
	We conclude with a brief discussion in Section \ref{sec:dis}. All proofs and various auxiliary results are presented in the supplement. A companion Jupyter notebook can be found at \url{https://github.com/yanchiwei/eta/blob/main/examples.ipynb} to reproduce all the examples presented in the paper.

	\section{Model and Finite Sample Analysis}
	\label{sec:finite}
	
	We first consider a standard travel time setting, where we are given $N$ historical trips on an arbitrary road network $(\mathcal{V}, \mathcal{S})$ where $\mathcal{V}$ is a vertex set and $\mathcal{S}$ is an edge (road segment) set. Let $y_1, \ldots, y_N$ be the routes for each trip, and $[N] := \{1,\ldots, N\}$, so that $y_{[N]}$ is the set of routes. Put $|y_n|$ as the number of segments on route $n$. Each route consists of a sequence of distinct road segments $s\in \mathcal{S}$. Most simply, think of a road segment as the primitive used in the standard representation of road graphs, i.e., a directed section of roadway that is uninterrupted by intersections and has constant values for features like the number of lanes and speed limit. A more sophisticated representation of road graphs also fits into our framework: one which incorporates turn effects by defining a road segment $s$ to be a section of roadway (with constant feature values) that is followed by a specific turn direction. For example, segment $s$ can represent a particular directed section of highway that is followed by the turn onto an exit ramp, and the next segment $s'$ in the route could be the exit ramp that is followed by a left turn onto a minor road (see e.g., Section 4.1 of \cite{delling2017customizable}). Let $T_{n,s}$ be the travel time on segment $s\in y_n$ for the $n^{\textrm{th}}$ observed trip, and denote the $n^{\textrm{th}}$ trip by $\mathcal{T}_n = \{y_n, \{T_{n,s}\}_{s\in y_n}\}$. %

	\subsection{Generative Process}
	
	We first discuss the generative process that we assume for travel times. In practice, the segment travel time $T_{n,s}$ and the route travel time  $\sum_{s\in y_n}T_{n,s}$ are affected by the set of observed features $V_s$ of the road segments such as the number of lanes, speed limit, segment length, and road classification (local road, highway, arterial, etc.). The travel times are also affected by a set of trip-level characteristics $W_n$, such as time of week and weather conditions. In addition, there are unobserved idiosyncratic characteristics of the road segments that affect their travel times. For example, some segments have bad traffic conditions, a poor layout of the lanes, road constructions, or a slow traffic light, which the mapping services typically don't observe directly outside of the location trace data. Following this physical understanding, we assume that the segment travel times $T_{n,s}=g(\theta_s,V_s,W_n)+\varepsilon_{n,s}$, where $g(\theta_s, V_s, W_n)$ is the true mean with some function $g(\cdot)$, $\theta_s$ is an unobserved feature vector for each road segment $s$, and $\varepsilon_{n,s}$ is the error term with mean $0$. Let $\theta := [\theta_s]_{s\in\mathcal{S}}$. %
    The mean of the travel time on route $y_n$ is then $\sum_{s\in y_n} g(\theta_s, V_s, W_n)$. For mathematical tractability, we analyze a simplified generative model that has an additive structure, i.e., we assume that $g(\theta_s, V_s, W_n) = \theta_s + h(V_s, W_n)$ for some function $h(\cdot)$ and for $\theta_s$ a scalar that can capture the fact that a particular road segment $s$ has faster or slower average travel time. This generative model, while simple, captures the most foundational characteristics of typical traffic data, specifically a mean structure that depends in a potentially nonlinear way on $W_n$ and $V_s$, as well as idiosyncratic travel time effects at the level of the road segment. Such additive models are common in the statistics literature, where they are called mixed-effects models \cite{pinheiro2006mixed}.  
    These discussions lead us to the following assumptions regarding the generative process of $T_{n,s}$. %

	\begin{assumption}
		\label{assump:1}
		We make the following assumptions about $T_{n,s}$,
		\begin{enumerate}
 			\item $T_{n,s} = \theta_s + h(V_s, W_n) + \varepsilon_{n,s}$ for some function $h$ of the input features $V_s,W_n$, and for $\theta_s$ a scalar capturing road segment travel time effects.
			\item For every trip $n$, the errors $\{\varepsilon_{n,s}\}_{s\in y_n}$ are drawn from a joint distribution with mean $0$ for all $\varepsilon_{n,s}$ and covariances $\{\sigma_{s,t}\}_{s, t \in y_n}$, where $\sigma_{s,s} = \sigma_s^2$ is the variance of the error term on segment $s$. %
			\item For any $n\neq n'$ and any $s\in y_n, t\in y_{n'}$, $\varepsilon_{n,s}$ and $\varepsilon_{n',t}$ are independent.   
		\end{enumerate}
	\end{assumption}

    The first and second assumptions are directly motivated by the discussions above. Note that we do not impose any distributional assumptions other than specifying the means and covariances of the travel times $T_{n,s}$.
	The third assumption says that conditional on all the segment-level and trip-level effects, the travel times on different trips are independent. This is a natural assumption, since much of the observed correlation across trips is due to time of week and other covariates. Conditional on those relevant covariates, it is much more reasonable to assume independence. Empirical evidence also shows that intra-trip correlation is much stronger than inter-trip correlation within similar time of week (see Figure 5 in \citealt{woodard2017predicting}).

	\subsection{Travel Time Estimators}
	
	For a new $(N+1)^{\textrm{th}}$ trip $\mathcal{T}_{N+1} = \{y_{N+1}, \{T_{N+1,s}\}_{s\in y_{N+1}}\}$ with segment-level feature sets $\{V_s\}_{s\in y_{N+1}}$ and route-level feature set $W_{N+1}$, the goal is to come up with an estimator $\hat{\Theta}_{\mathcal{T}_{N+1}}$, a function of the $N$ historical trips, $\{\mathcal{T}_n\}_{n\in[N]}$, for the total travel time $\sum_{s\in y_{N+1}} T_{N+1,s}$ that minimizes the following \emph{predictive  mean squared error} where the expectations are taken over $\{T_{n,s}\}_{n\in [N+1], s\in y_n}$ conditional on $h$ and on $\{\theta_s, V_s, W_n\}_{s\in y_n, n\in[N+1]}$. We drop the explicit conditioning in the following expectations for notation brevity. 
	\be
	&\bb E\Bigg[\bigg( \hat{\Theta}_{\mathcal{T}_{N+1}} - \sum_{s\in y_{N+1}} T_{N+1,s}\bigg)^2 %
	\Bigg] \\ 
	=&\bb E\Bigg[\bigg( \hat{\Theta}_{\mathcal{T}_{N+1}} - \sum_{s\in y_{N+1}}\Big(\theta_s + h(V_s, W_{N+1})\Big) + \sum_{s\in y_{N+1}}\Big(\theta_s + h(V_s, W_{N+1})\Big) -  \sum_{s\in y_{N+1}} T_{N+1,s}\bigg)^2 \Bigg] \\ 
	=&\bb E\Bigg[\bigg( \hat{\Theta}_{\mathcal{T}_{N+1}} - \sum_{s\in y_{N+1}}\Big(\theta_s + h(V_s, W_{N+1})\Big)\bigg)^2\Bigg] + \bb E\Bigg[ \bigg(\sum_{s\in y_{N+1}}\Big(\theta_s + h(V_s, W_{N+1})\Big) -  \sum_{s\in y_{N+1}} T_{N+1,s}\bigg)^2 \Bigg].
	\ee
	The last equality holds because $\hat{\Theta}_{\mathcal{T}_{N+1}}$ (a function of $\{T_{n,s}\}_{s\in y_n, n\in[N]}$) and $\sum_{s\in y_{N+1}} T_{N+1,s}$ are independent conditional on $\{\theta_s, V_s, W_n\}_{s\in y_n, n\in[N+1]}$ by the third part in Assumption~\ref{assump:1}, and moreover $\bb E\big[\sum_{s\in y_{N+1}} T_{N+1,s}\big] = \sum_{s\in y_{N+1}}\big(\theta_s + h(V_s, W_{N+1})\big)$. Now notice that the second term in the last equality does not depend on $\hat{\Theta}_{\mathcal{T}_{N+1}}$. 
This implies that the estimator $\hat{\Theta}_{\mathcal{T}_{N+1}}$ that minimizes the predictive error is the same one that minimizes the squared error for 
estimating the mean $\sum_{s\in y_{N+1}}\left(\theta_s + h(V_s, W_{N+1})\right)$.

Segment-based approaches typically directly estimate segment-level embeddings $\theta_s$ as well as a mean function $g(\cdot)$, and then they sum up the estimated travel times across the segments of the route. %
For example, the production ETA model in Google Maps at the time of the publication of \cite{rss,derrow2021eta} was based on a linear regression model for $g(\cdot)$ with features $V_s$ that include length, road class, and real-time and historical average travel speed; it also included learnable embedding vectors $\theta_s$ for each road segment to capture idiosyncratic effects. For tractability, we analyze a class of segment-based models that fit into the additive framework where $T_{n,s} = \theta_s + h(V_s, W_n) + \varepsilon_{n,s}$ for a scalar $\theta_s$.

Route-based methods, unlike segment-based methods, fit a model for the whole trip travel time $\sum_{s\in  y_n} T_{n,s}$. Typically they include some embeddings at the level of the origin and destination, or at the level of origin-destination pair.  They also typically use trip-level features $W_n$, as well as route-level features $\tilde{V}_n$ that are created by aggregating over the segments of the route, such as route length.  An example is a deep neural network used by Uber \citep{hu2022deepreta}, which includes an embedding $\Theta_{o,d}$ for the origin-destination pair, as well as trip-level features $W_n$ that include time of week, and route features $\tilde{V}_n$ that include route length and aggregated inputs related to real-time traffic conditions. In our simplified setting, a route-based method corresponds to fitting a model $\Theta_{y_n} + f(\tilde{V}_n, W_n)$, where $\Theta_{y_n}$ is an embedding that approximates $\sum_{s\in y_{n}}\theta_s$ and $f(\tilde{V}_n, W_n)$ is a function that approximates $ \sum_{s\in y_{n}} h(V_s,W_n)$ by using a feature transformation vector $\tilde{V}_n = \phi(\{V_s\}_{s\in y_n})$. For example, if $V_s$ is the length of segment $s$, $\tilde{V}_n$ can be the total length of the route $y_n$ (see, e.g., the linear regression model based on trip distance on page 8 of \citealt{wang2016simple}). 

In most of the travel time models described in the literature, the parameters in the function $h(\cdot)$ or $f(\cdot)$ don't need to scale with the number of road segments of the network (because the set of features, which include things like segment/route length and weather conditions, is fixed). The number of parameters $\{\theta_s\}_{s\in\mathcal{S}}$, by contrast, scales proportionally with the number of road segments. Since typical road networks, for example for large metropolitan regions, have hundreds of thousands or even millions of road segments, the number of parameters in the set $\{\theta_s\}_{s\in\mathcal{S}}$ tends to dominate the parameter size needed to robustly model $h(\cdot)$ or $f(\cdot)$. For example, the linear regression production model described in \cite{rss,derrow2021eta} estimates the parameters of a regression model with fixed input and output dimension. This model also includes road segment embeddings, the number of which scales proportionally with the size of the road graph.  

As a result, most of the error in the estimation of $\left(\theta_s + h(V_s, W_n)\right)$ in a segment-based method typically comes from the error in the estimation of $\theta_s$. A similar effect occurs in route-based methods: so long as $\phi(\cdot)$ is chosen in such a way that $f(\tilde{V}_n, W_n)$ is a good approximation to $ \sum_{s\in y_{n}} h(V_s,W_n)$, typically it is much easier to estimate $f(\cdot)$ than to estimate $\Theta_y$. If $\phi(\cdot)$ is chosen in such a way that $f(\tilde{V}_n, W_n)$ is {\it not} a good approximation to $ \sum_{s\in y_{n}} h(V_s,W_n)$, then the accuracy of the route-based method is degraded. We assume that this is not the case, which gives the benefit of the doubt to the route-based method. %

Based on the discussions above, we will thus focus on estimating the accumulation of segment random effects on a route, $\sum_{s\in y_n} \theta_s$. We denote by $T'_{n,s}= T_{n,s} - h(V_s, W_n)$ the adjusted observed segment travel time with mean $\theta_s$. %

\begin{assumption}
   We assume that the function $h(\cdot)$ is known (approximating a situation where $h(\cdot)$ is much easier to estimate than $\{\theta_s\}_{s\in\mathcal{S}}$).
\end{assumption}

	To compare the predictive accuracy of different estimators, we introduce the  \emph{integrated risk}, a Bayesian statistical concept capturing the accuracy of the travel time prediction by integrating the risk (in our case, the predictive mean squared error) over the prior distribution of the unknown parameters. This is also known as an ``average-case''  analysis of accuracy (versus for example a worst-case analysis). As we shall see later, focusing on such an average-case analysis also helps us to reach more general conclusions regarding the comparisons of these estimators. In particular, we impose the following assumption on the prior distribution.

    \begin{assumption}
    \label{assump:2}
       We assume that $\{\theta_s\}_{s\in\mathcal{S}}$ are drawn i.i.d.\ from a population distribution with mean $\mu$ and variance $\tau^2$.%
    \end{assumption}

The choice of i.i.d.\ population distribution is for notation brevity, and our results can be generalized to non-i.i.d.\ population distribution to capture, for example, congestion patterns across road networks. The integrated risk of estimator $\hat{\Theta}_y$ for a new route $y$ given historical routes $y_{[N]}$, which we call $R(\hat \Theta_y \,\vert\,  y_{[N]})$, is defined to be the expectation of the squared difference between the true total mean travel time $\Theta_y:=\sum_{s\in y}\theta_s$ and the estimated total mean travel time $\hat{\Theta}_y$.  This expectation is taken with respect to (i) the observed adjusted segment travel times $\{T'_{n,s}\}_{n\in[N], s\in y_n}$ and (ii) the population distribution over the parameters $\{\theta_s\}_{s\in y}$, conditional on the historical route observations $y_{[N]}$: %
	\be
	\label{eq:integrated_risk}
	R\left(\hat \Theta_y \,\middle\vert\,  y_{[N]} \right) := \mathbb{E}\left[\left(\hat \Theta_y-\Theta_y\right)^2 \,\middle\vert\,  y_{[N]}\right].%
	\ee

 We now illustrate how travel time on a route can be predicted using different examples of estimators.

	\begin{example}[\textsc{Travel Time Estimators}]
		\label{ex:estimators}
		Consider the following $3\times 3$ grid in \Cref{fig:grid} where there are six historical trips $\{\mathcal{T}_n\}_{n\in\{1,\cdots,6\}}$ whose routes are displayed in the figure. Let segments $s_1$ and $s_2$ denote the ones that traverse $(1,0)\rightarrow(1,1)$ and $(1,1)\rightarrow (1,2)$, respectively. Suppose that we want to predict the travel time on route $y$ which traverses through $(1,0)\rightarrow(1,1)\rightarrow(1,2)$ ($s_1$ and $s_2$). Among the space of all possible estimators which are functions of historical data, the following are a few simple estimators that capture characteristics of popular estimators used in the literature and practice. %
		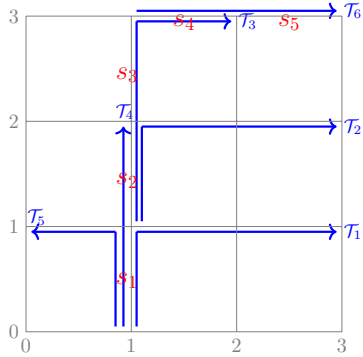
\begin{figure}[h]
			\centering
			\scalebox{0.7}{\begin{tikzpicture}
\foreach \i[evaluate=\i as \evali using int(\i/2)] in {0,2,4,6} {
    \draw [very thin,gray] node [below] at (\i,0) {$\evali$}; %
}
\foreach \i[evaluate=\i as \evali using int(\i/2)] in {0,2,4,6} {
    \draw [very thin,gray] node [left] at (0,\i) {$\evali$}; %
}

\draw[very thick, blue]   (2.1,0.1) -- (2.1,1.9);
\draw[very thick, ->, blue]   (2.1,1.9) -- (5.9,1.9);
\draw node [right, blue] at (5.9,1.9) {\small{$\mathcal{T}_1$}};

\draw[very thick, blue]   (2.2,2.1) -- (2.2,3.9);
\draw[very thick, ->, blue]   (2.2,3.9) -- (5.9,3.9);

\draw node [right, blue] at (5.9,3.9) {\small{$\mathcal{T}_2$}};

\draw[very thick, blue]   (2.1,2.1) -- (2.1,5.9);
\draw[very thick, ->, blue]   (2.1,5.9) -- (3.9,5.9);

\draw node [right, blue] at (3.9,5.9) {\small{$\mathcal{T}_3$}};

\draw node [right, blue] at (5.9,6.1) {\small{$\mathcal{T}_6$}};

\draw[very thick, ->, blue]   (1.85,0.1) -- (1.85,3.9);

\draw[very thick, -, blue]   (1.7,0.1) -- (1.7,1.9);

\draw[very thick, ->, blue]   (1.7,1.9) -- (0.1,1.9);

\draw[very thick, ->, blue]   (2.1,6.1) -- (5.9,6.1);

\draw node [right, blue, above] at (1.88,3.9) {\small{$\mathcal{T}_4$}};

\draw node [right, red, above] at (1.92,0.7) {\Large{$s_1$}};

\draw node [right, red, above] at (1.92,2.6) {\Large{$s_2$}};

\draw node [right, red, above] at (1.92,4.6) {\Large{$s_3$}};

\draw node [right, red, above] at (3.0,5.6) {\Large{$s_4$}};

\draw node [right, red, above] at (5.0,5.6) {\Large{$s_5$}};

\draw node [right, blue, above] at (0.2,1.9) {\small{$\mathcal{T}_5$}};

\draw[step=2cm,gray,very thin] (0,0) grid (6,6);
\end{tikzpicture}}
			\caption{A $3\times 3$ grid example.}
			\label{fig:grid}
		\end{figure} 
		\smallskip
		\begin{enumerate}
			\item \emph{Segment-based estimator}: We estimate travel times on each segment using individual segment traversal data and then aggregate them across the route. In this example, $s_1$ has three traversals (from trips $\mathcal{T}_1, \mathcal{T}_4$ and $\mathcal{T}_5$), $s_2$ has three traversals (from trips $\mathcal{T}_2, \mathcal{T}_3$ and $\mathcal{T}_4$). If we use a simple estimator by taking the average of historical travel times, 
			\be
			\hat{\Theta}_y = \frac{T^\prime_{1,s_1} + T^\prime_{4, s_1} + T^\prime_{5, s_1}}{3} + \frac{T^\prime_{2,s_2} + T^\prime_{3, s_2} + T^\prime_{4, s_2}}{3},
			\ee
			where each term is the estimation of the segment travel times on $s_1$ and $s_2$ respectively.
			
			\smallskip
			
			\item \emph{Generalized segment-based estimator}: Combining individual segments into \emph{super-segments}, we can generalize the previously defined segment-based estimator. For example, we can define $s_1$ and $s_2$ together as one super-segment, and this super-segment has one traversal (from trips $\mathcal{T}_4$). This yields
			\be
			\hat{\Theta}_y = T^\prime_{4,s_1} +T^\prime_{4,s_2}.
			\ee
			
			\smallskip
			
			\item \emph{Route-based estimator}: Instead of aggregating over segments or super-segments, we directly use total travel times on historical routes that are similar to route $y$ for estimation. %
			For example, we can average over travel times of routes that share similar origin and destination as $y$ --- both the origin and the destination of the route are at most one segment away from those of $y$. This includes trips $\mathcal{T}_4$ and $\mathcal{T}_5$. This gives,
			\be
			\hat{\Theta}_y = \frac{\sum_{s\in y_4}T^\prime_{4,s} + \sum_{s\in y_5}T^\prime_{5,s}}{2}.
			\ee
			
			Note that for route-based estimators, it is possible to use historical traversal data on segments that are \emph{not} included in $y$ to predict the total travel time of $y$. %
			
		\end{enumerate}
	\end{example}
	
	Variations of the abovementioned estimators have been practiced extensively, but there is a lack of formal analysis to compare their accuracies. Such comparison is not straightforward because these estimators differ in sample sizes and the way historical data is used, and these differences can have non-trivial effects on accuracies, as we will illustrate later.
	
	\subsubsection{The Optimal Estimator under Normality Assumptions}
	
	Before comparing different estimators, it is always helpful to understand the perfect benchmark first --- the optimal estimator $\hat \Theta^\ast_y$ that has the minimal integrated risk $R(\hat \Theta^\ast_y \,\vert\,  y_{[N]})$. %
    Characterizing the optimal estimator under general distributions does not always admit closed or tractable forms \citep{gelman2013bayesian}. Here we give the form of the optimal estimator under an additional assumption that the error terms $\varepsilon_{n,s}$ and $\theta_s$ are jointly Gaussian distributed. %
    Defining $M=\sum_{n=1}^N |y_n|$, let $Z\in\mathbb R_{\ge0}^{M\times 1}$ be the vector consisting of the concatenated travel times across all the trips so that $Z_i$ is the travel time for a single segment on a single trip. Similarly, let $\mathcal{E}\in\mathbb R^{M\times 1}$ be the corresponding vector of error terms.
    Let $u_i=s$ if the $i^{\textrm{th}}$ entry in $Z$ and $\mathcal{E}$ is a travel time and its corresponding error term on segment $s$. Let $w_i = n$ if the $i^{\textrm{th}}$ entry in $Z$ and $\mathcal{E}$ is a travel time and its corresponding error term from trip $n$.
    Thus, $Z_i = T^\prime_{w_i, u_i},\:\forall i\in\{1,\cdots, M\}$. 
 Let $U\in\{0,1\}^{M\times|\mathcal{S}|}$ be a matrix with entries $U_{i,s}=\1\{u_i=s\}$.
 This yields that $Z = U \theta + \mathcal{E}$.
 
 Define $\Phi\in\mathbb R^{M\times M}$ to be a matrix whose entries are
	\be
	\Phi_{i,j} = \begin{cases} \sigma_{u_i,u_j}, & \textrm{if } w_i = w_j, \\ 0, & \text{otherwise}. \end{cases}
	\ee
	Notice that $\Phi$ is a block diagonal matrix with $N$ total blocks, where the $n^{\textrm{th}}$ diagonal block has dimension $|y_n| \times |y_n|$. Let $e\in 1^{|\mathcal{S}|\times 1}$ be a $|\mathcal{S}|$-dimensional all-ones vector and $e_y\in\{0,1\}^{|\mathcal{S}|\times 1}$ such that $e_{y,s}=1$ if $s\in y$ and $0$ otherwise. To simplify the notation, put $E_y = e_y e_y^\intercal$ and $Q = U^\intercal \Phi^{-1}U + \diag((1/\tau^2) e)$.

	\begin{theorem}[\textsc{Optimal Estimator}]
		\label{thm:optimal_estimator}
		In addition to Assumptions \ref{assump:1} and \ref{assump:2}, assume that $(\mathcal{E}, \theta)$ are jointly Gaussian distributed, the following estimator $\hat{\Theta}^\ast_y$ of travel time on route $y$ minimizes the integrated risk \eqref{eq:integrated_risk} among all possible estimators,
		\be
		\label{eq:opt_est}
		\hat{\Theta}^\ast_y = e^\intercal_yQ^{-1}\left(U^\intercal\Phi^{-1}Z+ (\mu/\tau^2)e\right).
		\ee
		Its integrated risk $R(\hat \Theta_y^\ast \mid  y_{[N]})$ based on squared error is
		\be
		\mathrm{tr}\left(\Phi^{-1}UQ^{-1}E_y Q^{-1}U^\intercal\right) + \mathrm{tr}\left(\diag\left(\tau^2e\right)\left(E_y + U^\intercal\Phi^{-1}UQ^{-1}E_y Q^{-1} U^\intercal\Phi^{-1}U - 2 E_y Q^{-1}U^\intercal\Phi^{-1}U\right)\right).
		\ee
		The first term is the expected variance, and the second term is the expected squared bias. 
	\end{theorem}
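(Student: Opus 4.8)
The plan is to treat this as a Bayesian linear--Gaussian estimation problem and exploit the fact that, under squared-error loss, the posterior mean minimizes the integrated risk among \emph{all} estimators. Recall from the reduction preceding \eqref{eq:integrated_risk} that it suffices to estimate $\Theta_y = e_y^\intercal \theta$. For any estimator $\hat\Theta_y = \delta(Z)$, the orthogonality decomposition
\[
\mathbb{E}\left[(\delta(Z) - \Theta_y)^2\right] = \mathbb{E}\left[(\delta(Z) - \mathbb{E}[\Theta_y \mid Z])^2\right] + \mathbb{E}\left[(\mathbb{E}[\Theta_y \mid Z] - \Theta_y)^2\right]
\]
holds because the cross term vanishes by the tower property, so the minimizer is the posterior mean $\mathbb{E}[\Theta_y \mid Z] = e_y^\intercal\,\mathbb{E}[\theta \mid Z]$. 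This disposes of optimality cheaply; the real content is computing this conditional expectation and its risk in the stated forms.

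Next I would compute the posterior. Under joint Gaussianity with $\theta$ and $\mathcal{E}$ independent, $Z = U\theta + \mathcal{E}$ is a linear--Gaussian observation of $\theta \sim N(\mu e, \tau^2 I)$ with noise $\mathcal{E}\sim N(0,\Phi)$, so $\theta\mid Z$ is Gaussian. I would derive its mean by completing the square in the log-joint density (equivalently, by the information-form Bayes update), which gives posterior precision $U^\intercal\Phi^{-1}U + \diag((1/\tau^2)e) = Q$ and posterior mean $Q^{-1}(U^\intercal\Phi^{-1}Z + (\mu/\tau^2)e)$. Left-multiplying by $e_y^\intercal$ yields exactly \eqref{eq:opt_est}. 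Note $Q \succ 0$ always holds since $U^\intercal\Phi^{-1}U \succeq 0$ and $\tau^{-2}I \succ 0$, so no invertibility caveat is needed.

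For the risk I would write the estimator in linear form $\hat\Theta_y^\ast = a^\intercal Z + b$ with $a = \Phi^{-1}UQ^{-1}e_y$ and $b = (\mu/\tau^2)\,e_y^\intercal Q^{-1}e$, and decompose the integrated risk by conditioning on $\theta$ first:
\[
R(\hat\Theta_y^\ast \mid y_{[N]}) = \mathbb{E}_\theta\!\left[\var(\hat\Theta_y^\ast \mid \theta)\right] + \mathbb{E}_\theta\!\left[\left(\mathbb{E}[\hat\Theta_y^\ast \mid \theta] - \Theta_y\right)^2\right].
\]
Since $\var(Z\mid\theta) = \Phi$ does not depend on $\theta$, the first term equals $a^\intercal\Phi a = e_y^\intercal Q^{-1}U^\intercal\Phi^{-1}UQ^{-1}e_y$, which the cyclic trace identity rewrites as $\mathrm{tr}(\Phi^{-1}UQ^{-1}E_yQ^{-1}U^\intercal)$ --- the expected-variance term.

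The expected squared bias is where the main work lies. Writing $A := U^\intercal\Phi^{-1}U$ and $c := AQ^{-1}e_y$, the conditional bias is $(c - e_y)^\intercal\theta + b$, so averaging over $\theta\sim N(\mu e,\tau^2 I)$ gives $\tau^2\|c - e_y\|^2 + (\mu(c-e_y)^\intercal e + b)^2$. The key step I expect to be the crux is observing that the second, prior-mean-driven piece cancels exactly: using $A + \tau^{-2}I = Q$ one checks $\mu(c-e_y)^\intercal e + b = \mu e_y^\intercal Q^{-1}(A + \tau^{-2}I)e - \mu e_y^\intercal e = \mu e_y^\intercal e - \mu e_y^\intercal e = 0$, i.e. the Bayes estimator is unbiased \emph{on average}. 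What remains, $\tau^2\|c-e_y\|^2$, expands via $\|c-e_y\|^2 = c^\intercal c - 2c^\intercal e_y + e_y^\intercal e_y$ into $\mathrm{tr}(AQ^{-1}E_yQ^{-1}A) - 2\,\mathrm{tr}(E_yQ^{-1}A) + \mathrm{tr}(E_y)$, and since $\diag(\tau^2 e) = \tau^2 I$ this is precisely the second stated trace expression. Apart from this cancellation, the remaining steps are routine bookkeeping with the cyclic property of the trace and repeated substitution of $A = Q - \tau^{-2}I$; the only genuinely non-mechanical observation is the average-unbiasedness that collapses the unsimplified second term to a clean squared norm.
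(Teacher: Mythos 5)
Your proposal is correct and follows essentially the same route as the paper's proof: identify the Bayes estimator as the posterior mean, derive the Gaussian posterior with precision $Q$ by completing the square, and then compute the integrated risk via a conditional bias--variance decomposition, with the prior-mean terms cancelling because $A+\tau^{-2}I=Q$. Your explicit observation that the estimator is unbiased on average (so the squared-bias term collapses to $\tau^2\|c-e_y\|^2$) is a cleaner way of organizing the cancellation that the paper carries out by direct expansion, but it is the same argument.
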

	
An interesting observation is that the minimum integrated risk does \emph{not} depend on the population mean $\mu$. The proof  relies on deriving the conditional mean $\mathbb E[\sum_{s\in y}\theta_s |\: \{\mathcal{T}_n\}_{n\in[N]}]$ of travel time on route $y$ given historical trip data. It then uses the fact that the estimator minimizing the integrated risk based on squared error is the posterior mean (see, e.g., \citealt{berger2013statistical}). As a sanity check, when there is no historical traversal at all, $\hat{\Theta}^\ast_y = |y|\mu$ which simply uses the population mean.
	
	When segment travel times are independent across the road network, the form of the optimal estimator can be greatly simplified. Let $N_s := |\{y_n: s\in y_n, n\in[N]\}|$ denote the sample size of traversals on segment $s$ in the historical data.
	
	\begin{corollary}[\textsc{Optimal Estimator under Independent Segment Travel Times}]
		\label{cor:independent_simple_form}
	In addition to Assumptions \ref{assump:1} and \ref{assump:2}, assume that $(\mathcal{E}, \theta)$ are jointly Gaussian distributed,	when $\sigma_{s,t}=0,\:\forall s\neq t\in\mathcal{S}$, the optimal estimator takes the following form,
		\be
		\hat{\Theta}_y^\ast = \sum_{s\in y}\left( \frac{\sigma_s^2}{N_s\tau^2 + \sigma_s^2}\cdot\mu + \frac{N_s\tau^2}{N_s\tau^2 + \sigma_s^2}\cdot\frac{\sum_{n: s\in y_n} T_{n,s}^\prime}{N_s}\right).
		\ee
	\end{corollary}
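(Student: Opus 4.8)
The plan is to specialize the closed form from Theorem~\ref{thm:optimal_estimator} by exploiting the fact that, under independence across segments, every matrix appearing in \eqref{eq:opt_est} becomes diagonal, after which the estimator separates cleanly into a sum over the segments of $y$. No new probabilistic input is needed; this is an exercise in simplifying the general formula once $\sigma_{s,t}=0$ for $s\neq t$.

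First I would observe that the covariance matrix $\Phi$ collapses to a diagonal matrix. By definition $\Phi_{i,j}=\sigma_{u_i,u_j}$ whenever $w_i=w_j$ and $0$ otherwise. Since the segments on any single route are distinct, for $i\neq j$ with $w_i=w_j$ we have $u_i\neq u_j$, so $\Phi_{i,j}=\sigma_{u_i,u_j}=0$, while $\Phi_{i,i}=\sigma_{u_i}^2$. Hence $\Phi^{-1}=\diag(1/\sigma_{u_i}^2)$. Next I would compute $U^\intercal\Phi^{-1}U$: because $\Phi^{-1}$ is diagonal, $(U^\intercal\Phi^{-1}U)_{s,t}=\sum_i U_{i,s}\,\sigma_{u_i}^{-2}\,U_{i,t}$, and a single index $i$ satisfies $u_i=s$ and $u_i=t$ only when $s=t$; the off-diagonal entries therefore vanish and the $(s,s)$ entry equals $\sum_{i:\,u_i=s}\sigma_s^{-2}=N_s/\sigma_s^2$. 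Consequently $Q=\diag(N_s/\sigma_s^2)+\diag(1/\tau^2)=\diag\big((N_s\tau^2+\sigma_s^2)/(\sigma_s^2\tau^2)\big)$ is diagonal and is inverted entrywise. In the same way, the $s$-th coordinate of $U^\intercal\Phi^{-1}Z$ is $\sigma_s^{-2}\sum_{n:\,s\in y_n}T'_{n,s}$ (using $Z_i=T'_{w_i,u_i}$), and the $s$-th coordinate of $(\mu/\tau^2)e$ is $\mu/\tau^2$.

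Finally I would assemble the pieces. Multiplying the vector $U^\intercal\Phi^{-1}Z+(\mu/\tau^2)e$ by the diagonal matrix $Q^{-1}$ and then contracting with $e_y^\intercal$ (which selects and sums the coordinates $s\in y$) yields $\hat{\Theta}_y^\ast=\sum_{s\in y}\frac{\sigma_s^2\tau^2}{N_s\tau^2+\sigma_s^2}\big(\sigma_s^{-2}\sum_{n:\,s\in y_n}T'_{n,s}+\mu/\tau^2\big)$. Rearranging each summand into the stated convex combination is then immediate, writing $\frac{\tau^2}{N_s\tau^2+\sigma_s^2}\sum_{n:\,s\in y_n}T'_{n,s}=\frac{N_s\tau^2}{N_s\tau^2+\sigma_s^2}\cdot\frac{\sum_{n:\,s\in y_n}T'_{n,s}}{N_s}$, and noting that the weights $\frac{\sigma_s^2}{N_s\tau^2+\sigma_s^2}$ and $\frac{N_s\tau^2}{N_s\tau^2+\sigma_s^2}$ sum to one.

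There is no real analytic obstacle here; the only step requiring care is the bookkeeping between the length-$M$ concatenated representation (indexed by $i$, i.e.\ segment–traversal pairs) and the segment-level quantities $N_s$ and $\sum_{n:\,s\in y_n}T'_{n,s}$. In particular, one must confirm that distinct traversals of the same segment across different trips all accumulate into the single diagonal entry $N_s/\sigma_s^2$, and that the cross-segment terms drop out precisely because $\Phi^{-1}$ is diagonal under the independence hypothesis — this diagonalization is what makes the general estimator factor into an independent shrinkage of the segment-level sample mean $\frac{1}{N_s}\sum_{n:\,s\in y_n}T'_{n,s}$ toward the prior mean $\mu$ on each segment.
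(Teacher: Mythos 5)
Your proposal is correct and follows essentially the same route as the paper's own proof: both specialize the closed form \eqref{eq:opt_est} by noting that $\Phi$, and hence $U^\intercal\Phi^{-1}U$ and $Q$, become diagonal under $\sigma_{s,t}=0$ for $s\neq t$, compute the $s$-th coordinate of $U^\intercal\Phi^{-1}Z$ as $\sigma_s^{-2}\sum_{n:\,s\in y_n}T'_{n,s}$, and then contract with $e_y^\intercal Q^{-1}$ to obtain the segment-wise shrinkage form. Your added remark that the off-diagonal entries of $\Phi$ vanish because the segments within a single route are distinct is a point the paper leaves implicit, but otherwise the arguments coincide.
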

	
	This result says that in the independent case, the optimal estimator $\hat{\Theta}_y^\ast$ takes the form of a simple segment-based estimator, where the travel times on each segment are estimated by a weighted average of the sample mean $\sum_{n: s\in y_n} T_{n,s}^\prime/N_s$ and the population mean $\mu$. The weight depends on the sample size $N_s$ and variance parameters $\sigma^2_s$ and $\tau^2$. Intuitively, when the sample size $N_s$ or population variance $\tau^2$ is high, the sample mean gains more weight --- the optimal estimator weighs the historical observations more; on the other hand, when the variance of the segment travel time $\sigma_s^2$ is high, the optimal estimator relies more on the population information.
	
	When segment travel times are correlated over the road network, the form of the optimal estimator cannot be decomposed by segments, and computing the optimal estimator for a new route $y$ uses \emph{all} historical segment traversal data over the \emph{entire} road network, regardless of being part of route $y$ or not. Below is an example.
	
	\begin{example}
		\label{eg:opt_estimator}
		Using the same setup in \Cref{ex:estimators}, we now compute the optimal estimator for travel time on route $y$, traversing through $(1,0)\rightarrow(1,1)\rightarrow(1,2)$. Take the example where 
        $\{\theta_s\}_{s\in\mathcal{S}}$ are drawn i.i.d.\ from a normal distribution with mean $\mu=1$ and variance $\tau^2=0.2$ and the adjusted segment travel times are drawn from a multivariate normal distribution with mean $\{\theta_s\}_{s\in\mathcal{S}}$ and covariance matrix $\Sigma=[\sigma_{s,t}]_{s,t\in\mathcal{S}}=e^{-\mathscr{L}}$.
  Matrix $\mathscr{L}$ is the normalized graph Laplacian of an undirected graph where each node in the graph is a segment in the $3\times3$ grid network in Figure \ref{fig:grid}, and an edge is created if two segments are directly connected. Precisely, let $\mathscr{L}=D^{-1/2}LD^{1/2}$ where $D$ is the diagonal matrix of segment degrees, and $L=D-A$ is the graph Laplacian where $A$ is the adjacency matrix. The covariance matrix is the matrix exponential $e^{-\mathscr{L}}$ which is often called the diffusion kernel. It models spatial decay of correlation among segment travel times. For this example, the optimal estimator $\hat{\Theta}_y^\ast$ based on \eqref{eq:opt_est} takes the form
		\be
		\hat{\Theta}_y^\ast =& 0.211\cdot T^\prime_{1, (1,0)\rightarrow(1,1)}  - 0.040\cdot T^\prime_{1, (1,1)\rightarrow(2,1)} + 0.002\cdot T^\prime_{1, (2,1)\rightarrow(3,1)} \\[1mm]
		&+0.207\cdot T^\prime_{2, (1,1)\rightarrow(1,2)} - 0.003 \cdot T^\prime_{2, (1,2)\rightarrow (2,2)} + 0.002\cdot T^\prime_{2, (2,2)\rightarrow (3,2)} \\[1mm]
		&+0.210\cdot T^\prime_{3, (1,1)\rightarrow(1,2)} - 0.040\cdot T^\prime_{3, (1,2)\rightarrow(1,3)} + 0.002\cdot T^\prime_{3, (1,3)\rightarrow(2,3)} \\[1mm]
		&+0.157\cdot T^\prime_{4, (1,0)\rightarrow(1,1)} + 0.156\cdot T^\prime_{4, (1,1)\rightarrow(1,2)} \\[1mm]
		&+0.201\cdot T^\prime_{5,(1,0)\rightarrow(1,1)} - 0.010\cdot T^\prime_{5,(1,1)\rightarrow (0,1)}\\[1mm]
		& -0.001\cdot T^\prime_{6, (1,3)\rightarrow(2,3)}  + 0.000\cdot T^\prime_{6,(2,3)\rightarrow(3,3)}+0.978.
		\ee
		The optimal integrated risk $R(\hat \Theta_y^\ast \mid  y_{[N]})$ based on squared error is then $0.172$ (which is the sum of the expected variance $0.097$ and the expected squared bias $0.075$). %
	\end{example}
	
	\subsubsection{Segment-Based and Route-Based Estimators}
	
	As we mentioned above, the optimal estimator $\hat{\Theta}_y^\ast$ uses historical traversal data on segments that are not part of route $y$. Moreover, the traversal data on different segments cannot be easily aggregated to obtain the optimal estimator, since the weight of each historical observation is non-trivially determined by the correlation structure. This makes the optimal estimator intractable to implement for large road networks, which typically consist of millions of road segments. Moreover, as we discussed in the previous subsection, the form of the optimal estimator depends on distributional assumptions, which can be hard to obtain in practice.
	
	On the other hand, although being sub-optimal in general, the estimators mentioned in \Cref{ex:estimators} are simple in nature and only use historical traversal data that is directly relevant to route $y$; approaches like these are generally regarded as efficient and scalable methods and are practiced widely in mapping services. In addition, as we will show later, optimal estimators within these classes can be developed without distributional assumptions. It is thus of both practical and theoretical interests to compare their relative performance and benchmark them against the best estimator possible. To do so, generalizing the discussions in \Cref{ex:estimators} and \Cref{cor:independent_simple_form}, we first introduce a formal definition of the segment-based estimators.

	\begin{definition}[\textsc{Segment-Based Estimator}]
		\label{dfn:seg}
		A segment-based estimator $\Thetaseg$ takes the form 
		\begin{align*}
			\Thetaseg &:= \sum_{s \in y} \hat \theta_s,\qquad
			\hat \theta_s := (1-\phi_s(N_s))\mu + \phi_s(N_s)\frac{\sum_{n: s\in y_n}T'_{n,s}}{N_s},
		\end{align*}
		for some $\phi_s: \bb Z_{\ge0}\mapsto\mathbb R$ such that %
		$\phi_s(0)=0$ for all $s\in y$, and define $0/0=0$.
	\end{definition}

	In other words, $\hat{\Theta}_y^{(\textrm{seg})}$ is the summation of segment-level estimators $\hat{\theta}_s$ that are constructed using a weighted average of the sample mean and the mean of the population distribution, where the weights $\{\phi_s(N_s)\}_{s\in y}$ are sample size dependent. %
	One would typically expect the weights $\{\phi_s(N_s)\}_{s\in y}\in[0,1]$ to converge to $1$ as the sample size $N_s$ grows to infinity. Such behavior ensures consistency of the estimator, i.e., $\hat{\theta}_s$ converges to $\theta_s$ almost surely as $N_s$ tends to infinity. %
	As an example, the estimator in \Cref{cor:independent_simple_form} takes the form of $\phi_s(N_s) = \tau^2 N_s/(\tau^2 N_s + \sigma^2_{s})$. 
 Although the segment-based estimator in Definition~\ref{dfn:seg}
 appears quite simple, variants on this estimator are used widely as a foundational component of commercial ETA prediction systems. In particular, either the time or speed is averaged for both (a) traversals of the road segment in the last few minutes; (b) traversals of the road segment from the same time of week in previous weeks. These are combined using either simple fallback logic (if sufficient recent traversals are available, then use their average, otherwise use the historical average) or a machine learning model trained with those features as inputs (see Section~3.1.3 and 4.1.2 of \citealt{derrow2021eta}). 
	
	Generalizing \Cref{dfn:seg} by combining individual segments into super-segments, we have the following definition of the generalized segment-based estimators. A super-segment $S$ is a set of (usually connecting) distinct segments. Let $\mathscr{S}_y$ denote a set of super-segments that constitutes route $y$, i.e., $\cup_{S\in \mathscr{S}_y} S = y$ and $S\cap T =\emptyset$ for any $S\neq T \in\mathscr{S}_y$. With a slight abuse of notation, let $N_S := |\{y_n: S\subset y_n, n\in[N]\}|$ be the sample size of traversals on super-segment $S$ in the historical data. 
	
	\begin{definition}[\textsc{Generalized Segment-Based Estimator}]
		\label{dfn:gseg}
		A generalized segment-based estimator $\ThetasegG$ takes the form 
		\begin{align*}
			\ThetasegG &:= \sum_{S \in \mathscr{S}_y} \hat \theta_S, \qquad\hat \theta_S := (1-\phi_S(N_S))|S|\mu + \phi_S(N_S)\frac{\sum_{n: S\subset y_n}\sum_{s\in S} T'_{n,s}}{N_S},
		\end{align*}
		for some $\mathscr{S}_y$, a set of super-segments constituting route $y$ and some $\phi_S: \bb Z_{\ge0}\mapsto\mathbb R$ such that %
		$\phi_S(0)=0$ for all $S\in \mathscr{S}_y$, and define $0/0=0$.
	\end{definition}
	
	Similarly to the segment-based estimator, $\{\phi_S(N_S)\}_{S\in \mathscr{S}_y}\in[0,1]$ are expected to converge to $1$ as the sample size $N_S$ approaches infinity. When $\mathscr{S}_y=\{\{s\}\}_{s\in y}$, the generalized segment-based estimator based on $\mathscr{S}_y$ reduces to the segment-based estimator in \Cref{dfn:seg}. A generalized segment-based travel time estimation method is described in \cite{derrow2021eta} for Google Maps.

\smallskip

     We finally define a family of \emph{route-based estimators} which uses route-level traversal data to estimate the travel time on a new route $y$. Let $\delta(y)\subset y_{[N]}$ denote a subset of historical routes which represents the neighborhood of route $y$. For example, $\delta(y)$ can be historical routes that share the same or similar origin and destination (but possibly with a different sequence of segments) as those of $y$ (see \citealt{wang2016simple}). These neighboring routes are representative observations to estimate travel time on route $y$. Let $M_{\delta(y)} = \sum_{n=1}^N \1\{y_n \in \delta(y)\}$ be the sample size of route $y$'s neighborhood, and $|y|$ be the number of segments traversed on route $y$.

	\begin{definition}[\textsc{Route-Based Estimator}]
		\label{dfn:route}
		A route-based estimator $\Thetatrp$ takes the form
		\be
		\Thetatrp&:=(1-\phi_{\delta(y)}(M_{\delta(y)}))|y|\mu + \phi_{\delta(y)}(M_{\delta(y)})\frac{\sum_{n : y_n \in \delta(y)} \sum_{s\in y_n} T'_{n,s}}{M_{\delta(y)}},
		\ee
		for some neighborhood of route $y$, $\delta(y)$, and some $\phi_{\delta(y)}: \bb Z_{\ge0}\mapsto\mathbb R$ such that %
		$\phi_y(0)=0$. Define $0/0=0$.
	\end{definition}
	
	In words, $\Thetatrp$ estimates travel time on route $y$ by a weighted average of the sample mean of all observed travel times of the historical routes in $\delta(y)$, and the population mean of travel time on route $y$, where the weight $\phi_{\delta(y)}(M_{\delta(y)})$ is sample size dependent. Such a nearest-neighbor route-based estimator is used in \cite{wang2016simple}, for example.%

    We can derive the integrated risks $R\big(\ThetasegG \:\big|\: y_{[N]}\big)$ and $R\big(\Thetatrp \:\big|\:  y_{[N]}\big)$ conditional on historical routes $y_{[N]}$. With a slight abuse of notation, let $N_s^{\delta(y)} :=  \left|\left\{y_n\in\delta(y): s\in y_n\right\}\right|$ be the number of traversals on segment $s$ from the historical routes in neighborhood $\delta(y)$. Note that $N_s^{\delta(y)}$ is defined for $s\notin y$ as well since routes in $\delta(y)$ can traverse segments that are not in $y$. Similarly, for a set of distinct segments $S$, let $N_S^{\delta(y)}:=\left|\left\{y_n\in\delta(y): S\subset y_n\right\}\right|$ denote the number of traversals on super-segments $S$ from the historical routes in neighborhood $\delta(y)$.
	By definition, we have $N_S^{\delta(y)} \le N_s^{\delta(y)} \le M_{\delta(y)}$, for all super-segments $S$ such that $s\in S$. Let $\mathcal{S}_{\delta(y)} = \cup_{y^\prime\in \delta(y)}y^\prime$ be the set of all segments traversed by the routes in the neighborhood $\delta(y)$. Finally, put $\bar{y}_{\delta(y)} = \sum_{y_n\in\delta(y)} |y_n| /M_{\delta(y)}$ to be the average number of segments traversed per route in the neighborhood $\delta(y)$. We now give the integrated risks $R\big(\ThetasegG \:\big|\: y_{[N]}\big)$ and $R\big(\Thetatrp \:\big|\:  y_{[N]}\big)$ conditional on historical routes $y_{[N]}$. The integrated risk of the segment-based method $R\big(\Thetaseg \:\big|\: y_{[N]}\big)$ is a special case of $R\big(\ThetasegG \:\big|\: y_{[N]}\big)$ with $\mathscr{S}_y=\{\{s\}\}_{s\in y}$. It is worth noting that these expressions are derived under \emph{no} distributional assumption.

	\begin{proposition}
		\label{prop:diff}
		Under Assumptions \ref{assump:1} and \ref{assump:2}, for any route $y$, the integrated risks, conditional on the historical routes $y_{[N]}$, are %
		\be
		\nonumber
		R\left(\ThetasegG\,\middle\vert\,  y_{[N]}\right)=&%
		\sum_{S,T\in \mathscr{S}_y}\frac{N_{S\cup T}}{N_S N_T}\phi_S(N_S)\phi_{T}(N_{T})\bigg(\sum_{s\in S, t\in T} \sigma_{s,t}\bigg) \\ 		\label{eq:seg_risk}
		&\quad+ \sum_{S\in\mathscr{S}_y}(1-\phi_S(N_S))^2|S|\tau^2, \\  
		R\left(\Thetatrp\,\middle\vert\,  y_{[N]}\right)=&%
		\left(\frac{\phi_{\delta(y)}(M_{\delta(y)})}{M_{\delta(y)}}\right)^2 \bigg(\sum_{s, t\in \mathcal{S}_{\delta(y)}} N_{s\cup t}^{\delta(y)} \sigma_{s,t}\bigg) + \left(\phi_{\delta(y)}(M_{\delta(y)})(\bar{y}_{\delta(y)} - |y|)\mu\right)^2 \\ \label{eq:trp_risk}
		+\sum_{s\in \mathcal{S}_{\delta(y)}\setminus y}&\left(\phi_{\delta(y)}(M_{\delta(y)}) \frac{N_s^{\delta(y)}}{M_{\delta(y)}}\right)^2\tau^2 + \sum_{s\in y}\left(1 - \phi_{\delta(y)}(M_{\delta(y)}) \frac{N_s^{\delta(y)}}{M_{\delta(y)}} \right)^2\tau^2.  %
		\ee
		
	\end{proposition}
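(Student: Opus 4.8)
The plan is to evaluate both integrated risks through the bias--variance decomposition $R(\hat\Theta_y \mid y_{[N]}) = \operatorname{Var}(\hat\Theta_y - \Theta_y) + \big(\mathbb E[\hat\Theta_y - \Theta_y]\big)^2$, where the expectation and variance are taken jointly over the population draw $\{\theta_s\}$ and the errors $\{\varepsilon_{n,s}\}$, conditional on $y_{[N]}$. The enabling step is to substitute $T'_{n,s} = \theta_s + \varepsilon_{n,s}$, which turns each estimator into an explicit affine function of the $\theta_s$ and $\varepsilon_{n,s}$. Writing the estimation error as $\hat\Theta_y - \Theta_y = G(\{\theta_s\}) + H(\{\varepsilon_{n,s}\})$ with $G$ depending only on the population draw and $H$ only on the errors, the independence of $\{\theta_s\}$ from $\{\varepsilon_{n,s}\}$ (together with $\mathbb E[\varepsilon_{n,s}]=0$) kills the cross-covariance, so $\operatorname{Var}(G+H)=\operatorname{Var}(G)+\operatorname{Var}(H)$ and the two pieces can be computed separately.

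For the generalized segment-based estimator, writing $\theta_S := \sum_{s\in S}\theta_s$ and $\bar\varepsilon_S := N_S^{-1}\sum_{n:S\subset y_n}\sum_{s\in S}\varepsilon_{n,s}$, each super-segment estimator is $\hat\theta_S = (1-\phi_S)|S|\mu + \phi_S(\theta_S + \bar\varepsilon_S)$. Since $\mathbb E[\theta_S] = |S|\mu$ and $\mathbb E[\bar\varepsilon_S]=0$, we get $\mathbb E[\hat\theta_S]=|S|\mu=\mathbb E[\theta_S]$, so the integrated bias vanishes and the risk equals the variance. The population part of the error is $\sum_{S}(1-\phi_S)(|S|\mu-\theta_S)$; because the super-segments in $\mathscr S_y$ are disjoint and the $\theta_s$ are i.i.d.\ (Assumption~\ref{assump:2}), the $\theta_S$ are uncorrelated across distinct $S$ with $\operatorname{Var}(\theta_S)=|S|\tau^2$, giving the second term $\sum_S(1-\phi_S)^2|S|\tau^2$. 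For the error part, $\operatorname{Cov}(\phi_S\bar\varepsilon_S,\phi_T\bar\varepsilon_T)$ collapses, by inter-trip independence (third part of Assumption~\ref{assump:1}), to a single sum over the trips traversing both $S$ and $T$; there are $N_{S\cup T}$ such trips, and each contributes $\sum_{s\in S,t\in T}\sigma_{s,t}$ by the within-trip covariances (second part of Assumption~\ref{assump:1}), producing the first term $\sum_{S,T}\tfrac{N_{S\cup T}}{N_S N_T}\phi_S\phi_T\sum_{s\in S,t\in T}\sigma_{s,t}$.

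For the route-based estimator (abbreviate $\phi := \phi_{\delta(y)}(M_{\delta(y)})$), the decisive bookkeeping step is to collect the pooled neighborhood sum by segment multiplicity: $\sum_{n:y_n\in\delta(y)}\sum_{s\in y_n}\theta_s = \sum_{s\in\mathcal S_{\delta(y)}} N_s^{\delta(y)}\theta_s$, so the population part of the error is $\tfrac{\phi}{M_{\delta(y)}}\sum_{s\in\mathcal S_{\delta(y)}}N_s^{\delta(y)}\theta_s - \sum_{s\in y}\theta_s$. Taking its expectation and using $\sum_{s\in\mathcal S_{\delta(y)}}N_s^{\delta(y)} = M_{\delta(y)}\bar y_{\delta(y)}$ yields the integrated bias $\phi(\bar y_{\delta(y)}-|y|)\mu$, i.e.\ the squared-bias term; unlike the segment-based case this does not vanish, because the neighborhood routes need not have the same length as $y$. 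Summing the squared coefficient of each independent $\theta_s$ then gives the population variance: segments $s\in y$ carry coefficient $1-\phi N_s^{\delta(y)}/M_{\delta(y)}$ and segments $s\in\mathcal S_{\delta(y)}\setminus y$ carry coefficient $\phi N_s^{\delta(y)}/M_{\delta(y)}$, matching the last two sums. The error variance is $(\phi/M_{\delta(y)})^2\sum_{n:y_n\in\delta(y)}\sum_{s,t\in y_n}\sigma_{s,t}$ by inter-trip independence; swapping the order of summation counts each pair $(s,t)$ once per neighborhood trip traversing both, i.e.\ $N_{s\cup t}^{\delta(y)}$ times, giving the first term.

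The algebra itself is routine; the main obstacle is the combinatorial bookkeeping, and the two delicate points are: (i) checking that inter-trip independence correctly collapses the double sums over trips into the single counts $N_{S\cup T}$ and $N_{s\cup t}^{\delta(y)}$, with $\sigma_{s,t}$ pulled out as trip-independent; and (ii) in the route-based case, partitioning the $\theta_s$ coefficients cleanly into the ``in $y$'' and ``outside $y$'' groups and verifying the multiplicity identity $\sum_{s}N_s^{\delta(y)} = M_{\delta(y)}\bar y_{\delta(y)}$. Notably, only the first two moments specified in Assumptions~\ref{assump:1} and~\ref{assump:2} are used, so no distributional (e.g.\ Gaussian) assumption is needed, in contrast to Theorem~\ref{thm:optimal_estimator}.
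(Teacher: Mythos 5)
Your proposal is correct and follows essentially the same route as the paper's proof: a bias--variance decomposition that separates the $\theta$-contribution from the $\varepsilon$-contribution, uses inter-trip independence to collapse the error covariances into the counts $N_{S\cup T}$ and $N_{s\cup t}^{\delta(y)}$, and identifies the nonvanishing mean $\phi(\bar y_{\delta(y)}-|y|)\mu$ in the route-based case. The only difference is cosmetic (the paper conditions on $\theta$ first and then integrates, whereas you split the error into independent $G+H$ pieces up front), and all the key bookkeeping identities you flag match the paper's computation.
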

	
	The first and second terms in \eqref{eq:seg_risk} correspond to the expected variance and squared bias of the generalized segment-based estimator, respectively. The expected squared bias comes from the shrinkage towards the prior mean $|S|\mu$ (can be different from the true means $\sum_{s\in S}\theta_s$), which goes down as $\phi_S(N_S)$ increases. The choice of $\phi_S(N_S)$ controls the bias-variance trade-off. Higher $\phi_S(N_S)$ (less shrinkage) leads to lower bias but introduces more variance as the estimator puts more weight on the information provided by the historical observations. Similarly, the first term in \eqref{eq:trp_risk} represents the expected variance of the route-based estimator, and the sum of the second, third, and fourth terms collectively represents the expected squared bias of the route-based estimator. Specifically, the second term represents the squared bias introduced by including routes in $\delta(y)$ that have more or fewer road segments than $y$. The third term accounts for the squared bias of using traversal data on segments that are not included in $y$. Finally, the fourth term calculates the amount of squared bias induced by the shrinkage towards the prior mean $|y|\mu$.  In addition to $\phi_{\delta(y)}(M_{\delta(y)})$, the choice of neighborhood $\delta(y)$ also plays a significant role here. If $\delta(y)$ is chosen to include only routes that are very similar to $y$, in terms of the number of segments and the set of segments they traverse, $\bar{y}_{\delta(y)}$ will be close to $|y|$, and $N_s^{\delta(y)}/M_{\delta(y)}$ will be close to $1$ for segments $s\in y$ and close to $0$ for segments $s\notin y$. This will lead to a lower bias, but potentially a higher variance as the number of samples $M_{\delta(y)}$ will be smaller.

	Based on the formulae of the integrated risks, we define the optimal generalized segment-based estimator $\ThetasegoptG$ given $\mathscr{S}_y$ as the one that minimizes the integrated risk \eqref{eq:seg_risk} by picking the best forms of $\{\phi_S(N_S)\}_{S\in \mathscr{S}_y}$. Also, given a neighborhood $\delta(y)$, the optimal route-based estimator $\Thetatrpopt$ is defined to be the one that minimizes the integrated risk \eqref{eq:trp_risk} by picking the best form of $\phi_{\delta(y)}(M_{\delta(y)})$. The next result characterizes the forms of $\ThetasegoptG$ and $\Thetatrpopt$. Note that $\Thetasegopt$ is a special case of $\ThetasegG$ with $\mathscr{S}_y=\{\{s\}\}_{s\in y}$. Again, in contrast to \Cref{thm:optimal_estimator}, these optimal estimators are characterized under no distributional assumption.

	\begin{proposition}
		\label{prop:optimal_formula}
		Under Assumptions \ref{assump:1} and \ref{assump:2}, given $\mathscr{S}_y$, the optimal generalized segment-based estimator $\ThetasegoptG$ takes the following form:
		\be
		\label{eq:opt_seg}
		\ThetasegoptG &:= \sum_{S \in \mathscr{S}_y} \hat \theta_S, \quad\hat \theta_S &:= (1-\phi^\ast_S(N_S))|S|\mu + \phi^\ast_S(N_S)\frac{\sum_{n: S\subset y_n}\sum_{s\in S} T'_{n,s}}{N_S},
		\ee
		where $\{\phi_S^\ast(N_S)\}_{S\in \mathscr{S}_y}$ uniquely solves a system of linear equations 
		\be
		\sum_{T\in \mathscr{S}_y} (N_{S\cup T}/ (N_SN_T))&\phi^\ast_T(N_T)\left(\sum_{s\in S,t\in T} \sigma_{s,t}\right) \\ 		\label{eq:lin_sys}
		&+ (\phi_S^\ast(N_S) - 1)|S|\tau^2 = 0,~\forall S\in \mathscr{S}_y.
		\ee
		
		On the other hand, the optimal route-based estimator $\Thetatrpopt$ has the following form:
		\be
		\label{eq:opt}
		\Thetatrpopt:=(1-\phi_{\delta(y)}^\ast(M_{\delta(y)}&))|y|\mu + \phi^\ast_{\delta(y)}(M_{\delta(y)})\frac{\sum_{n : y_n = y} \sum_{s\in y} T'_{n,s}}{M_{\delta(y)}}, \\ %
		\phi_{\delta(y)}^\ast(M_{\delta(y)}) =\left(\sum_{s\in y} N_s^{\delta(y)} \right)\tau^2\Bigg/ \Bigg(&\sum_{s\in \mathcal{S}_{\delta(y)}} \frac{\left(N_s^{\delta(y)}\right)^2}{M_{\delta(y)}}\tau^2 + \frac{\sum_{n: y_n\in\delta(y)}\sum_{s,t\in y_n}\sigma_{s,t}}{M_{\delta(y)}} + M_{\delta(y)}\mu^2(\bar{y}_{\delta(y)} - |y|)^2\Bigg).
		\ee
	\end{proposition}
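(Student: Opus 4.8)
The plan is to exploit the fact that, by Proposition~\ref{prop:diff}, both integrated risks are \emph{explicit quadratic functions} of the shrinkage weights, and that these weights only ever enter evaluated at the realized sample sizes. Hence optimizing over the entire functions $\phi_S(\cdot)$ and $\phi_{\delta(y)}(\cdot)$ collapses to optimizing over the finite collection of scalars $x_S := \phi_S(N_S)$ (respectively over the single scalar $z := \phi_{\delta(y)}(M_{\delta(y)})$). I would first treat the generalized segment-based case and then the simpler route-based case, in each case computing the stationary point and verifying that it is a genuine (and, for the segment case, unique) minimizer.

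For the generalized segment-based estimator, I would write the risk \eqref{eq:seg_risk} as $x^\intercal A x - 2\,x^\intercal b + \mathrm{const}$, where $A_{S,T} = \frac{N_{S\cup T}}{N_S N_T}\sum_{s\in S,\,t\in T}\sigma_{s,t}$ and the linear and constant pieces come from expanding $\sum_{S}(1-x_S)^2|S|\tau^2$. Setting the gradient to zero yields exactly the linear system \eqref{eq:lin_sys}, where I use the symmetry $A_{S,T}=A_{T,S}$ (from $\sigma_{s,t}=\sigma_{t,s}$ and $N_{S\cup T}=N_{T\cup S}$) so that differentiating the quadratic term contributes the factor $2$. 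The crux of uniqueness is that the Hessian $2(A+D)$, with $D$ the diagonal matrix whose $S$-th entry is $|S|\tau^2$, is positive definite: $A$ is positive semidefinite because, by the derivation underlying Proposition~\ref{prop:diff}, the quadratic form $x^\intercal A x$ equals the (conditional) variance of $\ThetasegG$, which is nonnegative for every real $x$, while $D$ is strictly positive definite whenever $\tau^2>0$. Strict convexity then forces the stationary point to be the unique global minimizer, so \eqref{eq:lin_sys} has a unique solution $\{\phi^\ast_S(N_S)\}_{S\in\mathscr{S}_y}$.

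For the route-based estimator only the single weight $z$ is free, so the risk \eqref{eq:trp_risk} is a one-dimensional quadratic $R(z)=a z^2 + b z + c$. I would collect coefficients: $c=|y|\tau^2$; the linear coefficient $b=-\frac{2\tau^2}{M_{\delta(y)}}\sum_{s\in y}N_s^{\delta(y)}$ coming from the cross term of $\sum_{s\in y}(1-z N_s^{\delta(y)}/M_{\delta(y)})^2\tau^2$; and the quadratic coefficient $a$ from combining the variance term, the $\mu^2$ bias term, and the $\tau^2$ terms over $s\in\mathcal{S}_{\delta(y)}\setminus y$ and over $s\in y$ (these last two merging into one sum over all of $\mathcal{S}_{\delta(y)}$, since $N_s^{\delta(y)}=0$ for any $s\in y$ not reached by a neighbor). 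The minimizer is $z^\ast=-b/(2a)$, and $a>0$ because the contribution $\frac{\tau^2}{M_{\delta(y)}^2}\sum_s (N_s^{\delta(y)})^2$ is strictly positive, confirming this is a minimum. To match the stated form I would rewrite the covariance contribution using the identity $\sum_{s,t\in\mathcal{S}_{\delta(y)}}N_{s\cup t}^{\delta(y)}\sigma_{s,t}=\sum_{n:\,y_n\in\delta(y)}\sum_{s,t\in y_n}\sigma_{s,t}$, valid because $N_{s\cup t}^{\delta(y)}$ counts precisely the neighboring routes containing both $s$ and $t$.

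I expect the main obstacle to be establishing positive-definiteness, and hence uniqueness, in the segment case in a clean way. The slick route is the variance interpretation above, which identifies $x^\intercal A x$ with a genuine variance and thus grants semidefiniteness for free. The alternative, exhibiting $A$ directly as a congruence transform of the segment-level covariance matrix, is awkward because the factor $N_{S\cup T}$ prevents $A$ from being a simple conjugation and would force a trip-by-trip regrouping of contributions. I would therefore lean on the probabilistic identification of $x^\intercal A x$ as a variance to avoid that combinatorial bookkeeping; the remainder of the argument in both cases is routine differentiation and algebraic simplification.
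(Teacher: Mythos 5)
Your proposal is correct and follows the same overall strategy as the paper's proof: collapse the optimization over the functions $\phi_S(\cdot)$, $\phi_{\delta(y)}(\cdot)$ to a finite-dimensional quadratic program in the evaluated weights, read off \eqref{eq:lin_sys} and the closed form for $\phi^\ast_{\delta(y)}$ from the first-order conditions, and obtain uniqueness in the segment case from positive definiteness of the Hessian $2(A+D)$. The one point where you genuinely diverge is the positive semidefiniteness of $A$: the paper asserts that scaling the entries of the (PSD) super-segment covariance matrix by the factors $N_{S\cup T}/(N_S N_T)\le 1/N_{S}$ ``results in a PSD matrix,'' a claim that is false for arbitrary entrywise scalings and really needs the observation that $[N_{S\cup T}]_{S,T}$ is itself a Gram matrix together with the Schur product theorem. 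Your alternative --- identifying $x^\intercal A x$ as the conditional variance of $\sum_S x_S N_S^{-1}\sum_{n:S\subset y_n}\sum_{s\in S}T'_{n,s}$ for every real $x$, hence nonnegative --- sidesteps this entirely and is, if anything, the cleaner of the two arguments. The remainder (the one-dimensional quadratic in the route case, the identity $\sum_{s,t\in\mathcal{S}_{\delta(y)}}N^{\delta(y)}_{s\cup t}\sigma_{s,t}=\sum_{n:y_n\in\delta(y)}\sum_{s,t\in y_n}\sigma_{s,t}$, and the merging of the two $\tau^2$ sums over $\mathcal{S}_{\delta(y)}\setminus y$ and $y$) matches the paper's computation step for step.
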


\smallskip
 
	The following example illustrates the forms of these estimators.
	
	\begin{example}
		\label{ex:opt_seg_route_estimators}
		Using the same setup in \Cref{ex:estimators} without the normality assumptions, we first compute the optimal segment-based estimator $\Thetasegopt$ for route $y$ traversing through $(1,0)\rightarrow(1,1)\rightarrow(1,2)$. Note that the covariance matrix $e^{-\mathscr{L}}$ has all non-negative entries. According to \Cref{prop:optimal_formula} of the supplement, %
		\be
		\Thetasegopt =& \left(0.560\cdot\frac{T^\prime_{1,s_1} + T^\prime_{4, s_1} + T^\prime_{5, s_1}}{3} + 0.440\cdot\mu\right) + \left(0.562\cdot\frac{T^\prime_{2,s_2} + T^\prime_{3, s_2} + T^\prime_{4, s_2}}{3} + 0.438\cdot\mu\right),
		\ee
		with integrated risk $0.176$ (expected variance $0.099$ and expected squared bias $0.077$).
		
		The optimal generalized segment-based estimator $\ThetasegoptG$ based on $\mathscr{S}_y = \{\{y\}\}$ is with $\phi_y^\ast(N_y) =N_y|y|\tau^2\big/\big(N_y|y|\tau^2 + \sum_{s,t\in y}\sigma_{s,t}\big)$. In this example, $N_y=1$. This gives
		\be
		\ThetasegoptG = 0.267\cdot\left(T^\prime_{4,s_1} +T^\prime_{4,s_2}\right) + 0.733\cdot (2\mu),
		\ee
		with integrated risk $0.293$ (expected variance $0.078$ and expected squared bias $0.215$).

		Finally, consider the optimal route-based estimator $\Thetatrpopt$ based on $\delta(y)$ which includes all historical routes whose origin and destination are at almost one segment away from that of $y$. This includes trips $\mathcal{T}_4$ and $\mathcal{T}_5$. %
		\be
		\Thetatrpopt = 0.372\cdot \frac{\left(\sum_{s\in y} T^\prime_{4,s} + \sum_{s\in y} T^\prime_{5,s} \right)}{2} + 0.628\cdot(2\mu),
		\ee
		with integrated risk $0.288$ (expected variance $0.070$ and expected squared bias $0.218$).
	\end{example}

	Our next result investigates the comparison of the integrated risks of these estimators, under a case where $\sigma_{s,t}\ge0$ for all $s\neq t\in \mathcal{S}$, i.e., there exists only non-negative covariances in the road network. We first show that the optimal segment-based estimator $\Thetasegopt$ is more accurate than a wide variety of optimal route-based estimators $\Thetatrpopt$. %
	
	\begin{theorem}
		\label{thm:opt_bayes_comparison}
		In addition to Assumptions \ref{assump:1} and \ref{assump:2}, suppose that $\sigma_{s,t}\ge 0$ for all $s,t \in \mathcal{S}$. Let $\Thetasegopt$ be the optimal segment-based estimator and let $\Thetatrpopt$ be the optimal route-based estimator with neighborhood $\delta(y)$ such that 
		\be
		\label{eq:nb_condition}
		N_{s\cup t}N_s^{\delta(y)} N_t^{\delta(y)} \le N_{s\cup t}^{\delta(y)}N_sN_t,\:\forall s,t\in y,
		\ee
		we have,
		\be
		R\left(\Thetasegopt \,\middle\vert\,  y_{[N]}\right)  \le R\left(\Thetatrpopt \,\middle\vert\,  y_{[N]}\right),
		\ee
		for any set of historical routes $y_{[N]}$.
	\end{theorem}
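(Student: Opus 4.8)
The plan is to leverage the optimality of $\Thetasegopt$: by Proposition~\ref{prop:optimal_formula} it minimizes the segment-based integrated risk \eqref{eq:seg_risk} over all admissible weight functions $\{\phi_s\}_{s\in y}$ (those with $\phi_s(0)=0$). Hence it suffices to exhibit a \emph{single} segment-based estimator $\Thetaseg$ --- not necessarily the optimal one --- whose conditional integrated risk satisfies $R(\Thetaseg\mid y_{[N]})\le R(\Thetatrpopt\mid y_{[N]})$; the chain $R(\Thetasegopt\mid y_{[N]})\le R(\Thetaseg\mid y_{[N]})\le R(\Thetatrpopt\mid y_{[N]})$ then yields the claim.

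Write $\phi^\ast := \phi^\ast_{\delta(y)}(M_{\delta(y)})$ for the optimal route-based weight from Proposition~\ref{prop:optimal_formula} and abbreviate $M := M_{\delta(y)}$. The key construction is a \emph{matching} segment-based estimator obtained by setting, for each $s\in y$,
\[
\phi_s := \phi^\ast\,\frac{N_s^{\delta(y)}}{M}.
\]
(When $N_s=0$ we have $N_s^{\delta(y)}=0$, so $\phi_s=0$, consistent with the requirement $\phi_s(0)=0$.) With this choice the shrinkage-bias term of \eqref{eq:seg_risk}, namely $\sum_{s\in y}(1-\phi_s)^2\tau^2$, is \emph{identical} to the fourth term of the route-based risk \eqref{eq:trp_risk}, $\sum_{s\in y}\big(1-\phi^\ast N_s^{\delta(y)}/M\big)^2\tau^2$. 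It therefore remains only to compare the variance term of \eqref{eq:seg_risk} against the first three terms of \eqref{eq:trp_risk}.

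For the variance, substituting $\phi_s=\phi^\ast N_s^{\delta(y)}/M$ turns the first term of \eqref{eq:seg_risk} into $(\phi^\ast/M)^2\sum_{s,t\in y}\big(N_{s\cup t}N_s^{\delta(y)}N_t^{\delta(y)}/(N_sN_t)\big)\sigma_{s,t}$, while the first term of \eqref{eq:trp_risk} is $(\phi^\ast/M)^2\sum_{s,t\in\mathcal{S}_{\delta(y)}}N_{s\cup t}^{\delta(y)}\sigma_{s,t}$. After cancelling the common factor $(\phi^\ast/M)^2$ (the case $\phi^\ast=0$ being trivial), the desired inequality follows from two uses of the hypothesis $\sigma_{s,t}\ge0$: first, restricting the larger index set $\mathcal{S}_{\delta(y)}\supseteq y\cap\mathcal{S}_{\delta(y)}$ on the right down to $s,t\in y$ merely discards non-negative summands (and the terms with $N_s^{\delta(y)}=0$ or $N_t^{\delta(y)}=0$ vanish on both sides); second, the resulting comparison reduces to the \emph{termwise} bound $N_{s\cup t}N_s^{\delta(y)}N_t^{\delta(y)}\le N_{s\cup t}^{\delta(y)}N_sN_t$ multiplied by $\sigma_{s,t}\ge0$, which is exactly the neighborhood condition \eqref{eq:nb_condition}. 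Finally, the remaining second and third terms of \eqref{eq:trp_risk} --- the squared bias from route-length mismatch and from traversals on segments outside $y$ --- are manifestly non-negative, so discarding them only strengthens the inequality.

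The main obstacle is conceptual rather than computational: one must recognize the correct ansatz $\phi_s=\phi^\ast N_s^{\delta(y)}/M$, which simultaneously (i) matches the route-based shrinkage bias exactly and (ii) makes the variance comparison collapse to a termwise inequality governed precisely by \eqref{eq:nb_condition}. The rest is bookkeeping: invoking $\sigma_{s,t}\ge0$ correctly both to restrict the index sets and to preserve the direction of the termwise inequality, and checking the degenerate cases $\phi^\ast=0$ and $N_s=0$. I would also note that, since $\Thetasegopt$ dominates the constructed matching estimator, the conclusion in fact holds already for this explicit $\Thetaseg$, which clarifies why condition \eqref{eq:nb_condition} is the natural and essentially sharp requirement on the neighborhood $\delta(y)$.
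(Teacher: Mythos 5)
Your proposal is correct and follows essentially the same route as the paper's proof: the paper likewise constructs the auxiliary segment-based estimator with $\phi_s = \phi^\ast_{\delta(y)}(M_{\delta(y)})\,N_s^{\delta(y)}/M_{\delta(y)}$, matches the shrinkage-bias terms exactly, applies condition \eqref{eq:nb_condition} together with $\sigma_{s,t}\ge 0$ termwise to the variance, discards the remaining non-negative bias terms of \eqref{eq:trp_risk}, and concludes by the optimality of $\Thetasegopt$ within its class. Your extra checks of the degenerate cases ($\phi^\ast=0$, $N_s=0$, and segments of $y$ not covered by $\mathcal{S}_{\delta(y)}$) are a slight refinement of the paper's argument but do not change the approach.
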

	Condition \eqref{eq:nb_condition} on the neighborhood of route-based estimators $\delta(y)$ is very mild --- by re-arranging the term, we have 
	\be
	N_{s\cup t}/(N_s N_t) \le N^{\delta(y)}_{s\cup t}/(N_s^{\delta(y)} N_t^{\delta(y)}),\:\forall s,t\in y.
	\ee
	One can effectively think of $N_{s\cup t}, N_s$, and $N_t$ as the sample sizes under a neighborhood that includes \emph{all} historical routes. In other words, condition \eqref{eq:nb_condition} intuitively requires the chosen neighborhood $\delta(y)$ to concentrate more around the predicting route compared to a neighborhood which includes all routes --- the ratio $N^{\delta(y)}_{s\cup t}/(N_s^{\delta(y)} N_t^{\delta(y)})$ gets larger as the neighborhood $\delta(y)$ concentrates around $y$. This is generally expected because as the neighborhood gets smaller, routes in $\delta(y)$ become more similar to $y$. It then becomes more likely that a route in $\delta(y)$ traversing over segment $s\in y$ also traverses over segment $t\in y$. Another way to appreciate this intuition is to look at the other extreme --- the smallest possible neighborhood $\delta(y) = \{y_n: y_n = y\}$ which only includes historical routes that are \emph{exactly the same} as route $y$. Under such a neighborhood, we have $N^{\delta(y)}_{s\cup t}=N_s^{\delta(y)}=N_t^{\delta(y)}$. Now consider any neighborhood $\delta'(y) \supset \delta(y)$, we have
	\be
	\frac{N_{s\cup t}^{\delta^\prime(y)}}{N_s^{\delta^\prime(y)} N_t^{\delta^\prime(y)}} \le \frac{N_{s\cup t}^{\delta(y)}}{N_s^{\delta(y)} N_t^{\delta(y)}}
	~~\Leftrightarrow~~N_{s\cup t}^{\delta^\prime(y)} N_t^{\delta(y)} \le N_s^{\delta^\prime(y)} N_t^{\delta^\prime(y)},\:\forall s,t\in y.
	\ee
	The latter holds because $N_{s\cup t}^{\delta^\prime(y)}\le N_s^{\delta^\prime(y)}$ and $N_t^{\delta(y)}\le N_t^{\delta^\prime(y)},\:\forall s,t\in y$. In other words, enlarging the neighborhood from the smallest one $\delta(y) = \{y_n: y_n=y \}$ always decreases the ratio $N^{\delta(y)}_{s\cup t}/(N_s^{\delta(y)} N_t^{\delta(y)})$.

	We now show that the non-negative covariance assumption in \Cref{thm:opt_bayes_comparison} is critical. When travel times on different road segments can potentially be negatively correlated, we show, with the following example, that the optimal segment-based estimator can produce a strictly higher integrated risk than the optimal route-based estimator with neighborhood $\delta(y) = \{y_n: y_n = y\}$. %
	
	\begin{example}
		\label{ex:negative_cov}
		Using the same setup in \Cref{ex:estimators} without normality assumptions, we now compare the integrated risk of $\Thetasegopt$ with that of $\Thetatrpopt$ with a neighborhood $\delta(y) = \{y_n: y_n = y\}$, under negative covariances.
		Suppose $\sigma_{s_1}^2 = \sigma_{s_2}^2 = 1$ and $\sigma_{s_1,s_2} = \sigma_{s_2, s_1} = -0.9$. Let $\tau^2=1$. In this case, the optimal segment-based estimator $\Thetasegopt$ takes the form
		\be
		\Thetasegopt = & 0.811\cdot\frac{T^\prime_{1,s_1} + T^\prime_{4, s_1} + T^\prime_{5, s_1}}{3} + 0.189\cdot\mu\\
		&+ 0.811\cdot\frac{T^\prime_{2,s_2} + T^\prime_{3, s_2} + T^\prime_{4, s_2}}{3} + 0.189\cdot\mu,
		\ee
		with integrated risk $0.378$ (expected variance $0.307$ and expected squared bias $0.071$).
		
		On the other hand, the optimal route-based estimator $\Thetatrpopt$ takes the form,
		\be
		\Thetatrpopt = 0.909\cdot \left(T^\prime_{4,s_1} + T^\prime_{4, s_2}\right) + 0.091\cdot(2\mu),
		\ee
		with integrated risk $0.182$ (expected variance $0.165$ and expected squared bias $0.017$). 
	\end{example}

	The intuition behind the observation that negatively correlated segment travel time can benefit the route-based estimator is that route-level travel times can potentially \emph{absorb} the variance of segment travel times by avoiding additional aggregation. This could sometimes create an edge over the segment-based estimator even when the route-based estimator uses fewer samples. Negative correlations between the segments can occur, for instance, due to having traffic signals in the route. If one segment is slow due to a red signal, the subsequent segment can have faster travel time due to a green signal \citep{ramezani2012estimation}.

	Based on this observation in \Cref{ex:negative_cov}, it is reasonable to conjecture that when all the covariances are non-negative $\sigma_{s,t}\ge0,\:\forall s,t\in\mathcal{S}$, the optimal segment-based estimator $\Thetasegopt$ has the minimum integrated risk among \emph{all} generalized segment-based estimators $\ThetasegG$. It appears at first glance that aggregating segment travel times into super-segment travel times in this case does not help reduce the overall variance of the estimator. Surprisingly, the next example shows that this might \emph{not} be the case.
	
	\begin{example}
		\label{ex:seg_not_optimal}
		Using the same setup in \Cref{ex:estimators} without normality assumptions, we consider the optimal segment-based and generalized segment-based estimator for the travel time of a new route $y$ traversing through $(1,2)\rightarrow(1,3)\rightarrow(2,3)\rightarrow(3,3)$. We call segment $(1,2)\rightarrow(1,3)$ to be $s_3$, segment $(1,3)\rightarrow(2,3)$ to be $s_4$ and segment $(2,3)\rightarrow(3,3)$ to be $s_5$. Consider $\sigma_{s_3}^2 = 0.1$, $\sigma_{s_4}^2 = 10$, $\sigma_{s_5}^2 = 10$, $\sigma_{s_3,s_4} = 1$, and $\sigma_{s_3,s_5} = \sigma_{s_4,s_5} = 0$. Let $\tau^2=1$. One can check that this is a valid covariance matrix. We first compute the optimal segment-based estimator $\Thetasegopt$, which takes the form
		\be
		\Thetasegopt =& \left(0.866\cdot T^\prime_{3,s_3} + 0.134\cdot\mu\right)
		+ \left(0.094\cdot\frac{T^\prime_{3,s_4} + T^\prime_{6, s_4}}{2} + 0.906\cdot\mu\right)
		+ \left(0.091\cdot T^\prime_{6,s_5}+ 0.909\cdot\mu\right).
		\ee
		The integrated risk of $\Thetasegopt$ is $1.948$ (expected variance $0.284$ and expected squared bias $1.664$).

		Now the optimal generalized segment-based estimator $\ThetasegoptG$ with $\mathscr{S}_y=\{\{s_3\},\{s_4, s_5\}\}$. It takes the form
		\be
		\label{eq:gseg_form}
		\ThetasegoptG =& \left(0.909\cdot T^\prime_{3,s_3} + 0.091\cdot\mu\right) +\left(0.091\cdot\left(T^\prime_{6,s_4} + T^\prime_{6,s_5}\right) + 0.909\cdot(2\mu)\right).
		\ee
		The integrated risk of $\ThetasegoptG$ is $1.909$ (expected variance $0.248$ and expected squared bias $1.661$).  
	\end{example}
	
	The reason that the optimal segment-based estimator is not the best among all generalized segment-based estimators under non-negative covariances is quite subtle. In \Cref{ex:seg_not_optimal}, the only pair of segments that are correlated are $s_3$ and $s_4$ with a positive covariance $1$. Interestingly, merging $s_4$ and $s_5$ into a super-segment avoids increasing the variance of the estimator resulting from the positive covariance between a \emph{different} pair of segments $s_3$ and $s_4$. To see that, in the form of $\ThetasegoptG$ (equation \eqref{eq:gseg_form}), historical traversal data on segments $s_3$ and $s_4$ from trip $\mathcal{T}_3$ are not both used in the estimator because $\mathcal{T}_3$ does not traverse through all three segments $s_3$, $s_4$ and $s_5$. In other words, merging segments into super-segments sometimes breaks the dependency of two segments within a different super-segment. This is achieved by creating a higher barrier for the historical traversals on these segments within the same trip to be included in the estimator.

	Nevertheless, we have the following proposition when the optimal generalized segment-based estimator uses the entire route $y$ as a super-segment, i.e., $\mathscr{S}_y=\{\{y\}\}$, and the optimal route-based estimator $\Thetatrpopt$ uses the neighborhood $\delta(y)=\{y_n: y_n=y\}$ that contains the exact same route as $y$ in the historical data. There is a (subtle) difference between these two estimators. The former includes all traversals that go through $y$, i.e., $y$ is a sub-path of the traversals.  
 On the other hand, the latter only includes historical routes that share the exact same route as $y$ including its origin and destination.

	\begin{proposition}
		\label{cor:same_route}
		In addition to Assumptions \ref{assump:1} and \ref{assump:2}, suppose that $\sigma_{s,t}\ge 0$ for all $s,t \in y$. Let $\Thetasegopt$ be the optimal segment-based estimator, $\ThetasegoptG$ be the optimal generalized segment-based estimator with $\mathscr{S}_y=\{\{y\}\}$, and $\Thetatrpopt$ be the optimal route-based estimator with neighborhood $\delta(y)=\{y_n: y_n=y\}$,
		\be
		R\left(\Thetasegopt \,\middle\vert\,  y_{[N]}\right) \le R\left(\ThetasegoptG \,\middle\vert\,  y_{[N]}\right) \le R\left(\Thetatrpopt \,\middle\vert\,  y_{[N]}\right),
		\ee
		for any set of historical routes $y_{[N]}$.
	\end{proposition}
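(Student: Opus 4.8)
The plan is to prove the two inequalities separately, in both cases working directly from the closed-form integrated-risk expressions in \Cref{prop:diff} together with the fact that $\Thetasegopt$, $\ThetasegoptG$, and $\Thetatrpopt$ are \emph{defined} as the minimizers of those expressions over their respective weight parameters. The common device is: to upper-bound the risk of the ``smaller'' estimator in each inequality, I exhibit a single (possibly suboptimal) feasible weight assignment whose risk is no larger than the optimized risk of the ``larger'' estimator, and then reduce the comparison to an elementary inequality among traversal counts. The hypothesis $\sigma_{s,t}\ge 0$ is precisely what lets those count inequalities be applied term by term.

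For the first inequality $R(\Thetasegopt\mid y_{[N]})\le R(\ThetasegoptG\mid y_{[N]})$, I write the whole-route risk via \eqref{eq:seg_risk} with $\mathscr{S}_y=\{\{y\}\}$, which collapses to $\frac{(\phi_y^\ast)^2}{N_y}\big(\sum_{s,t\in y}\sigma_{s,t}\big)+(1-\phi_y^\ast)^2|y|\tau^2$, where $\phi_y^\ast$ is the optimal single weight. I then feed the \emph{constant} assignment $\phi_s(N_s)\equiv\phi_y^\ast$ (permissible since the domain of each $\phi_s$ is all of $\mathbb{R}$, with $\phi_s(0)=0$ only constraining unseen segments) into the segment-based risk \eqref{eq:seg_risk} with $\mathscr{S}_y=\{\{s\}\}_{s\in y}$. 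The squared-bias terms of the two expressions coincide exactly (both equal $(1-\phi_y^\ast)^2|y|\tau^2$), so the comparison reduces to the variance terms, i.e., to showing $\sum_{s,t\in y}\frac{N_{s\cup t}}{N_sN_t}\sigma_{s,t}\le\frac{1}{N_y}\sum_{s,t\in y}\sigma_{s,t}$. Since every $\sigma_{s,t}\ge 0$, it suffices to prove $N_{s\cup t}N_y\le N_sN_t$ for each pair $s,t\in y$, which follows at once from $N_{s\cup t}\le N_s$ and $N_y\le N_t$ (a trip containing all of $y$ contains $t$), both counts being non-negative. As the optimal segment-based risk is no larger than the risk of this feasible choice, the inequality follows. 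The degenerate case $N_y=0$ is handled separately: then $\ThetasegoptG=|y|\mu$ with risk $|y|\tau^2$, which the segment-based estimator matches by taking all $\phi_s=0$.

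For the second inequality $R(\ThetasegoptG\mid y_{[N]})\le R(\Thetatrpopt\mid y_{[N]})$, I specialize the route-based risk \eqref{eq:trp_risk} to $\delta(y)=\{y_n:y_n=y\}$. For this neighborhood $\bar{y}_{\delta(y)}=|y|$ and $\mathcal{S}_{\delta(y)}=y$, so the second and third (bias) terms vanish, while $N_s^{\delta(y)}=N_{s\cup t}^{\delta(y)}=M_{\delta(y)}$ for all $s,t\in y$. The route-based risk therefore collapses to $\frac{\phi^2}{M_{\delta(y)}}\big(\sum_{s,t\in y}\sigma_{s,t}\big)+(1-\phi)^2|y|\tau^2$, which is exactly the whole-route super-segment risk from the previous paragraph, except that $N_y$ is replaced by $M_{\delta(y)}$. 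Since a route equal to $y$ contains $y$ as a sub-path we have $M_{\delta(y)}\le N_y$, and since $\sum_{s,t\in y}\sigma_{s,t}\ge 0$ the risk is, for each fixed $\phi$, non-increasing in the sample size; minimizing over $\phi$ on both sides then preserves the inequality, giving the claim.

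I expect the main obstacle to be the first inequality, specifically recognizing that a \emph{single} common weight suffices and that the comparison then reduces to the count inequality $N_{s\cup t}N_y\le N_sN_t$. The delicacy is exactly the phenomenon in \Cref{ex:seg_not_optimal}: the segment-based estimator is \emph{not} in general optimal within the generalized class, so the inequality genuinely needs both the non-negativity hypothesis and the comparison against the \emph{whole-route} super-segment specifically. The role of $\sigma_{s,t}\ge 0$ is to pass from the termwise count bound to the summed variance inequality; with mixed-sign covariances the termwise bound would point the wrong way on the negative entries, consistent with the counterexample. The remaining steps (bias-term cancellation and sample-size monotonicity) are routine once the risk formulas are specialized.
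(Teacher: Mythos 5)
Your proposal is correct and follows essentially the same route as the paper's proof: for the first inequality you substitute the constant weight $\phi_s(N_s)\equiv\phi_y^\ast(N_y)$ into the segment-based risk and reduce to the count inequality $N_{s\cup t}N_y\le N_sN_t$ via $\sigma_{s,t}\ge0$, and for the second you observe that both optimal risks share the same one-parameter form with sample sizes $N_y\ge M_{\delta(y)}$ and invoke monotonicity of the minimized risk in the sample size. Your explicit handling of the $N_y=0$ case and of the "minimize over $\phi$ on both sides" step only makes precise what the paper leaves implicit.
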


	We conclude this section by commenting that although \Cref{thm:opt_bayes_comparison} and \Cref{cor:same_route} give some evidence in terms of the superiority of the optimal segment-based estimator, \Cref{ex:negative_cov} and \Cref{ex:seg_not_optimal} also point out that there are cases where the comparisons are not clean. To garner more insights, in the next section, we are going to analyze an \emph{asymptotic} setting where the number of trip observations grows with the size of the road network. %
	
	\section{Asymptotic Analysis}
	\label{sec:asymptotic}
	
	In this section, we compare estimators in terms of how their integrated risks scale with the road network size. We consider an asymptotic setting where the number of trip observations grows with the size of the road network.\footnote{One can also consider the context of a fixed size road network, and analyze the efficiency as the number of trips $N\rightarrow \infty$. This is less informative because nearly all reasonable approaches have the same asymptotic rate as a function of $N$, but with very large (and meaningful) differences in their constants.} This regime is relevant in practice, since the road network of a major metropolitan area typically contains hundreds of thousands to millions of road segments, and typical commercial datasets contain tens of millions of trips in such a network \citep{li2018multi}. One benefit of such an asymptotic analysis is to compare estimators in a more tractable setting, enabling comparisons that can't be done in the finite-sample setting. 
 
 We start by pointing out that the optimal segment-based estimator $\Thetasegopt$ %
	requires inverting a $|y|\times|y|$ matrix which could be computationally intensive for real-time implementation on large-scale road networks. Moreover, it also requires explicit knowledge of the covariance structures among each pair of road segments $\sigma_{s,t}$, which can be hard to precisely estimate in practice. %
	Our goal in this section is to see if a similar (or stronger) result as \Cref{thm:opt_bayes_comparison} or \Cref{cor:same_route} holds in an asymptotic limit with a class of \emph{much simpler} segment-based estimators. These simple segment-based estimators are tractable to compute for large road networks and do not require any knowledge of the covariance structures. In addition, we aim to generalize our result to a case where there exist \emph{negative} correlations among segment travel times. Finally, the asymptotic analysis also enables us to compare the segment-based estimators with generalized segment-based estimators, which we are not able to do in the finite-sample case.

	We first introduce our asymptotic setting. Consider a road network indexed by a size $p\in\mathbb{N}$, with a set of vertices (intersections) $\mathcal{V}_p$ and a set of edges (road segments) $\mathcal S_p$. %
	An example road network is the grid network where $p$ represents the size of the grid. For a grid network with size $p$, $|\mathcal{S}_p|=|\mathcal{V}_p|\simeq p^2$. Let $\mathcal{Y}_p$ be the set of all possible routes in the road network of size $p$. We assume that any route $y\in\mathcal{Y}_p$ contains at least one road segment, $|y|\ge1$. The number of trips $N$ in the training data grows with $p$, such that $N\rightarrow \infty$ as $p\rightarrow \infty$, although this is not strictly required for any of the following results. In addition,
	\begin{assumption}
		\label{assumption:3}
		Assume that:
		\begin{enumerate}
			\item For each road network with size $p$, the historical routes $Y_{p,[N]}$ in the training data as well as the predicting route $Y_p$ are drawn independently according to some probability distribution $\mu_p$ over $\mathcal{Y}_p$. 
			\item The covariance matrix $\Sigma_p = [\sigma_{s,t}]_{s,t\in\mathcal{S}_p}$ and its corresponding precision matrix $\Psi_p = \Sigma^{-1}_p = [\psi_{s,t}]_{s,t\in\mathcal{S}_p}$ under network size $p$ satisfy $\sum_{t\in\mathcal{S}_p}|\sigma_{s,t}| = \mathcal{O}(1)$ and $\sum_{t\in\mathcal{S}_p}|\psi_{s,t}| = \mathcal{O}(1),\:\forall s\in\mathcal{S}_p$. Moreover, there exists $\sigma_{\textrm{min}}>0$ such that for any route $y_p\in\mathcal{Y}_p$, $\sum_{s,t\in y_p} \sigma_{s,t} \ge \sigma_{\textrm{min}}$.

		\end{enumerate}  
	\end{assumption}
	
	The first part of the assumption introduces a route distribution $\mu_p$ for each size of the road network, from which historical routes and predicting routes are sampled. Note that $Y_{p,[N]}$ and $Y_p$ are capitalized because they are random in this setting. The second part of the assumption is justifiable in the ETA prediction context since spatial decay in the correlation of segment travel times is widely observed in empirical studies --- the correlation between two road segments decays as the distance between the two segments increases (see e.g., \citealt{Bernard2006CorrelationOL,rachtan2013spatiotemporal,guo2020understanding,woodard2017predicting}). It further implies that the sum of all the (co)variance components in the road network grows at most linearly to the total number of segments, $\sum_{s,t\in\mathcal{S}_p}\sigma_{s,t}\le\sum_{s,t\in\mathcal{S}_p}|\sigma_{s,t}|=\mathcal{O}(|\mathcal{S}_p|)$. %

	Similarly to \Cref{sec:finite}, we compare the accuracy of different estimators using integrated risk. To obtain our results in this asymptotic setting where routes are randomly sampled, we slightly alter the definition of the integrated risk used in \Cref{sec:finite}. Specifically, for a given road network of size $p$, we leverage Assumption \ref{assumption:3} to integrate the risk over the distribution of historical routes $Y_{p,[N]}$ and predicting route $Y_p$. This yields the following definition of integrated risk:
	\be
	\label{eq:risk_unconditional}
	R\left(\hat \Theta_{Y_p}\right) := \bb E\left[\left(\hat \Theta_{Y_p} - \Theta_{Y_p}\right)^2\right],%
	\ee
	where the expectation is now taken with respect to (1) the distribution over the historical routes $Y_{p,[N]}$, (2) the predicting route $Y_p$, in addition to (3) the adjusted travel times $\{T'_{n,s}\}_{n\in[N], s\in Y_{p,n}}$ and (4) the population distribution on $\{\theta_s\}_{s\in\mathcal{S}_p}$. Our results will compare the asymptotic integrated risk of travel time estimators $R(\hat \Theta_{Y_p})$ as $p \rightarrow \infty$ (and $N \rightarrow \infty$).

	\subsection{Grid Networks}
	\label{subsec:grid}
	
	We consider an example of grid road networks. Let $x = (i,j) \in \mathcal{V}_p$ for $\mathcal{V}_p = \{0,\ldots,p\}^2$ denote a vertex on the grid (a possible start or end point of a route), and $s\in \mathcal S_p$ denote a road segment, i.e., a directed edge between adjacent vertices. %
	We define the route distribution $\mu_p$ under grid size $p$ by assuming that the trip's origin $x_1 = (i_1,j_1)$ and destination $x_2 = (i_2,j_2)$ are drawn independently from the following probability distribution over vertices:
	\be %
	&\bb P[X = (i,j)] =\prod_{k\in\{i,j\}} {p \choose k} \frac{B(\alpha + k,\alpha+p-k)}{B(\alpha,\alpha)},
	\ee
	where $0 < \alpha \le 1$ and $B(\cdot,\cdot)$ denotes the beta function.\footnote{The case of $\alpha>1$ is less interesting as origins and destinations concentrate within the center of the grid, and so trips do not fully utilize the entire $p$ by $p$ grid. This case can somewhat be captured by a grid with a smaller size. Nevertheless, we look at the case of $\alpha>1$ in the numerical experiments in \Cref{sec:numerical}.}
	In other words, the east-west and north-south coordinates of the origin and destination are independently sampled from a \emph{symmetric beta-binomial distribution}. When $\alpha < 1$, this distribution has a ``horseshoe'' shape, with a high probability at the edges of the grid and a low probability in the center. For $\alpha = 1$, this is just the uniform distribution over $\mathcal{V}_p$. As $\alpha$ decreases, the distribution more heavily weighs the locations near the four corners of the grid. Given the origin and the destination, routes are sampled from some distributions we do not put restrictions on first.

	We consider a neighborhood $\delta^{\textrm{od}}(\cdot)$ that includes all historical routes whose origins and destinations are close to those of the predicting route respectively. We define $x_1(y_p)$ and $x_2(y_p)$ as the origin and destination of route $y_p$. Construct $\delta^{\textrm{od}}(y_p) = \{y\in \mathcal{Y}_p: \|x_1(y), x_1(y_p)\|_1\le c,  \|x_2(y), x_2(y_p)\|_1\le c\}$ for some fixed constant $c>0$ that does \emph{not} depend on $p$. In our first asymptotic result below, we compare a large family of simple segment-based estimators $\Thetaseg$ to the optimal route-based estimators $\Thetatrpopt$ with neighborhood $\delta^{\textrm{od}}(\cdot)$. This family of simple segment-based estimators only requires that $\phi_s(N_s)$ approaches $1$ quickly enough. %
	We provide this result without restricting to non-negative covariances, as required in \Cref{thm:opt_bayes_comparison}.
	
	\begin{theorem}
		\label{cor:grid}
		Under Assumptions \ref{assump:1}, \ref{assump:2} and \ref{assumption:3}, consider an optimal route-based estimator $\Thetatrpopt$ based on a route neighborhood $\delta^{\textrm{od}}(\cdot)$ with similar origin and destination as the those of the predicting route, if $1/4< \alpha\le 1$,
		\be
		\lim_{p\to\infty}\frac{R\left(\ThetasegP\right)}{R\left(\ThetatrpoptP\right)} = 0,
		\ee
		for any segment-based estimator $\Thetaseg$ with $\phi_s(N_s) = 1-\mathcal{O}(1/\sqrt{N_s}),\:\forall s\in\mathcal{S}_p$. In addition,
		\be
		R\big(\ThetasegP\big)=\mathcal{O}(p^2/N),\quad R\big(\ThetatrpoptP\big)=\begin{cases}\Omega\left(p^4/N \right), & 1/2<\alpha\le 1, \\
			\Omega\left(p^{8\alpha}/N \right), & 0<\alpha\le 1/2.
		\end{cases}
		\ee
	\end{theorem}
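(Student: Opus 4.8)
The plan is to establish the two one-sided rate bounds $R(\ThetasegP)=\mathcal O(p^2/N)$ and $R(\ThetatrpoptP)=\Omega(p^{\min(8\alpha,4)}/N)$ separately, and then divide; the condition $1/4<\alpha\le 1$ is exactly what makes $\min(8\alpha,4)>2$, so the quotient is $\mathcal O(p^{2-\min(8\alpha,4)})\to 0$.

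\textbf{Segment-based upper bound.} I start from \Cref{prop:diff} specialized to $\mathscr S_y=\{\{s\}\}_{s\in y}$ and take the outer expectation over $Y_p$ and $Y_{p,[N]}$. Writing $q_s:=\bb P[s\in Y_p]$ and noting $N_s\sim\Binom(N,q_s)$ is independent of $Y_p$, the \emph{bias} term equals $\tau^2\sum_s q_s\,\bb E[(1-\phi_s(N_s))^2]$. Since $\phi_s(0)=0$ and $1-\phi_s(N_s)=\mathcal O(1/\sqrt{N_s})$ for $N_s\ge 1$, I bound $\bb E[(1-\phi_s(N_s))^2]\le \bb P[N_s=0]+C\,\bb E[\mathbf 1\{N_s\ge1\}/N_s]$; the harmonic-moment bound $\bb E[1/(N_s\vee1)]\le 2/((N+1)q_s)$ and $q_s(1-q_s)^N\le q_se^{-Nq_s}\le 1/(eN)$ then give a per-segment contribution of order $q_s(1-q_s)^N+1/N$, so summing over $|\mathcal S_p|\simeq p^2$ segments yields $\mathcal O(p^2/N)$ \emph{uniformly in $\alpha$}. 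For the \emph{variance} term I take absolute values, use $N_{s\cup t}/(N_sN_t)\le \tfrac12(1/N_s+1/N_t)$ together with $\phi_s\phi_t=\mathcal O(1)$, symmetrize, and apply the row-sum bound $\sum_{t}|\sigma_{s,t}|=\mathcal O(1)$ from Assumption~\ref{assumption:3}; this reduces the term to $\mathcal O(1)\cdot\bb E[\sum_{s\in Y_p}\mathbf 1\{N_s\ge1\}/N_s]=\mathcal O(1)\sum_s q_s\cdot 2/((N+1)q_s)=\mathcal O(p^2/N)$.

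\textbf{Route-based lower bound (reduction to the neighborhood size).} Substituting the optimal $\phi^\ast$ from \eqref{eq:opt} into \eqref{eq:trp_risk} gives the closed form $R(\Thetatrpopt\mid y_{[N]})=\tau^2|y|-b^2/a$, where, writing $M:=M_{\delta(y)}$ and $N_s^\delta:=N_s^{\delta(y)}$, one has $b=\tau^2\sum_{s\in y}N_s^\delta/M$ and $a=\mathsf V+D+\tau^2P$ with $\mathsf V=\frac1{M^2}\sum_{n:y_n\in\delta(y)}\sum_{s,t\in y_n}\sigma_{s,t}$, $D=(\bar y_{\delta(y)}-|y|)^2\mu^2$, and $P=\sum_{s}(N_s^\delta/M)^2$. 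Cauchy--Schwarz ($(\sum_{s\in y}N_s^\delta/M)^2\le|y|P$) and $D\ge0$ yield $\tau^2|y|\,a-b^2\ge\tau^2|y|\,\mathsf V$, hence $R^\ast\ge \tau^2|y|\,\mathsf V/a$. Because the neighborhood radius $c$ is a fixed constant and routes are (near-)shortest paths, $|\bar y_{\delta(y)}-|y||=\mathcal O(1)$, so $D=\mathcal O(1)$, while $\tau^2P\le\tau^2\bar y_{\delta(y)}=\mathcal O(\tau^2|y|)$ and $\mathsf V=\mathcal O(p/M)$; thus on $\{M\ge1\}$ we have $a=\mathcal O(\tau^2|y|)$ and therefore $R^\ast\gtrsim \mathsf V\ge \sigma_{\min}/M$, using the lower bound $\sum_{s,t\in y_n}\sigma_{s,t}\ge\sigma_{\min}$ from Assumption~\ref{assumption:3}. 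On $\{M=0\}$ the estimator reduces to the prior $|y|\mu$, contributing the full $\tau^2|y|$. So the integrated route risk is lower-bounded by the sum of $\sigma_{\min}\,\bb E[\mathbf 1\{M\ge1\}/M]$ and $\tau^2\,\bb E[|Y_p|\,\mathbf 1\{M=0\}]$.

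\textbf{Neighborhood size and its $\alpha$-dependence.} It remains to show $\bb E[M]\asymp N\,p^{-4\min(2\alpha,1)}$. Conditioning on $Y_p$, $\bb E[M]=N\,P_\delta$ with $P_\delta$ the probability that an independent historical route has origin and destination within $\ell_1$-distance $c$ of those of $Y_p$. Since origin and destination are independent and each coordinate is an independent symmetric beta-binomial $\pi_1$, $P_\delta$ factorizes and, for fixed $c$, is of the order of the product of four one-dimensional collision probabilities $\sum_k\pi_1(k)^2$. Using the beta-function asymptotics $\pi_1(0)=\Theta(p^{-\alpha})$, the endpoints contribute $\Theta(p^{-2\alpha})$ and the bulk $\Theta(p^{-1})$, so $\sum_k\pi_1(k)^2=\Theta(p^{-\min(2\alpha,1)})$, with the crossover at $\alpha=1/2$ producing the two cases. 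Raising to the fourth power gives $P_\delta=\Theta(p^{-4\min(2\alpha,1)})$ and hence $\bb E[M]=\Theta(N\,p^{-\min(8\alpha,4)})$. Passing from $\bb E[M]$ to the risk lower bound is done by restricting the expectation to a constant-probability set of predicting routes on which $\bb E[M\mid Y_p]$ has the stated order, applying Jensen's inequality ($\bb E[1/M\mid M\ge1,Y_p]\ge 1/\bb E[M\mid Y_p]$) on the nonempty event and using the $\tau^2|Y_p|$ fallback on $\{M=0\}$; in either case the per-route contribution is $\gtrsim \min(\tau^2|Y_p|,\,1/\bb E[M\mid Y_p])$, which integrates to $\Omega(p^{\min(8\alpha,4)}/N)$.

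\textbf{Main obstacle.} The routine parts are the binomial harmonic-moment estimates in the upper bound and the algebraic reduction $R^\ast\ge\tau^2|y|\mathsf V/a$. The genuinely delicate step is the third paragraph: obtaining the sharp two-regime scaling of the collision probability $P_\delta$ from the beta-binomial tails, and then \emph{converting} the control of $\bb E[M]$ into a lower bound on $\bb E[1/M]$-type quantities. This conversion must handle the fact that the neighborhood density varies with the location of $Y_p$ (corner-concentrated routes have many neighbors, central routes few), which is why I localize to a favorable constant-probability set of predicting routes and combine the $M\ge1$ (Jensen) and $M=0$ (prior-fallback) contributions; making this trade-off precise, and tracking the regime in $N$ where the variance term rather than the fallback is binding, is where the main care is required.
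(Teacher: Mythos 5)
Your overall architecture is the same as the paper's: the paper obtains \Cref{cor:grid} by combining a general result (\Cref{thm:asymptotic_general_seg_better_route}), which gives $R(\ThetasegP)=\mathcal O(|\mathcal S_p|/N)$ via binomial harmonic-moment bounds and $R(\ThetatrpoptP)=\Omega(1/(Nq_{\delta}))$ via Jensen's inequality applied to the neighborhood count $M_{\delta(Y_p)}$, with \Cref{prop:q_delta_od}, which computes the collision probability $q_{\delta^{\textrm{od}}}\simeq p^{-4}$ (resp.\ $p^{-8\alpha}$) from the beta-binomial marginals exactly as in your third paragraph. Your segment-based upper bound and your two-regime evaluation of $\sum_k\pi_1(k)^2$ are correct and parallel the paper's \Cref{lm:SumBound}, \Cref{lm:NodeLowerBound} and \Cref{prop:q_delta_od}; the only cosmetic difference on that side is that you bound the variance term with $N_{s\cup t}/(N_sN_t)\le\min(1/N_s,1/N_t)$ plus the row-sum condition, where the paper uses $N_{s\cup t}/(N_sN_t)\le 1/N_{s\cup t}$ together with the identity $\bb E[I_sI_t]=\bb E[N_{s\cup t}]/N$.

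The one step that does not go through as written is the claim, in your route-based lower bound, that $a=\mathcal O(\tau^2|y|)$ on $\{M\ge1\}$ ``because the neighborhood radius $c$ is fixed and routes are (near-)shortest paths.'' At the point where \Cref{cor:grid} is stated the conditional route distribution given origin and destination is arbitrary (the paper's first remark after the theorem stresses that the result holds for \emph{any} such distribution), so routes in $\delta^{\textrm{od}}(y)$ need not have length comparable to $|y|$, and neither $D=(\bar y_{\delta(y)}-|y|)^2\mu^2$ nor $\tau^2P\le\tau^2\bar y_{\delta(y)}$ is $\mathcal O(\tau^2|y|)$ in general. The gap is repairable inside your own framework: since $b\le\tau^2|y|$, whenever $a\ge 2\tau^2|y|$ one has $b^2/a\le\tau^2|y|/2$ and hence $R^\ast\ge\tau^2|y|/2=\Omega(1)$, which dominates the $1/M$-type bound, so the dichotomy between $a\lesssim\tau^2|y|$ and $a\gtrsim\tau^2|y|$ yields $R^\ast\gtrsim\min\bigl(\tau^2|y|,\,\sigma_{\min}/M\bigr)$ unconditionally. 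The paper sidesteps the issue by discarding the nonnegative bias terms ($D$ and the $s\notin y$ contribution) \emph{before} minimizing over $\phi$, which produces the surrogate weight $\phi^{\ast\ast}$ and the bound $\Omega(1/(Nq_{\delta}))$ without ever having to control $\bar y_{\delta(y)}$. With that one justification replaced, your argument is a valid proof along essentially the paper's lines.
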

	
	\begin{remark}
		This result holds under \emph{any} route distribution given origin and destination. In fact, the integrated risk of the simple segment-based estimator $R\big(\ThetasegP\big)=\mathcal{O}(p^2/N)$ holds under \emph{any} distribution of origins and destinations.  		This result also does \emph{not} require any minimum data growth rate on $N$ as a function of $p$, which suggests that the result holds under data-sparse settings. 
	\end{remark} 
	
	\begin{remark}
		The family of segment-based estimators considered in \Cref{cor:grid} includes, for example, the optimal estimator under independent, Gaussian distributed segment travel times in \Cref{cor:independent_simple_form}, $\phi_s(N_s) = N_s \tau^2/(N_s\tau^2 + \sigma_s^2)$. It is interesting to note that the rate requirement $\phi_s(N_s) = 1-\mathcal{O}(1/\sqrt{N_s})$ gives some leeway in the sense that $\phi_s(N_s)$ can approach $1$ more slowly than what is required in the optimal estimator for the independent case. This family of segment-based estimators also includes other simple forms without any knowledge of the variance parameters, e.g., $\phi_s(N_s) = N_s/(N_s + \lambda)$, for any $\lambda>0$; or some threshold-based structure such as $\phi_s(N_s) = 1$ if $N_s\ge c$ with some constant $c>0$ and $\phi_s(N_s) = 0$ otherwise. 
		The former choice of $\phi_s(N_s)$ can be interpreted as an estimator $\hat{\Theta}_y$ minimizing penalized squared-error loss: $\min_{\hat{\theta}_s}\sum_{n: s\in y_n}(\hat{\theta}_s - T'_{n,s})^2 + \lambda(\hat{\theta}_s - \mu)^2$ where $\lambda>0$ is the regularizing parameter;
		and the latter choice of $\phi_s(N_s)$ can be thought of as a simple fallback logic --- predict the segment travel time using the sample average if sample size exceeds a threshold or using the population mean $\mu$ if there is not enough data. The simpler forms of $\phi_s(N_s)$ enhance the relevance of this result since, in practice, mapping services do not have access to the ``optimal'' form $\phi^\ast_s(N_s)$ and the choice of $\phi_s(N_s)$ is often tuned through cross-validation. %
	\end{remark}

        \begin{remark}
        \label{rmk:neighborhood}
        One can also develop a result similar to \Cref{cor:grid} where the size of the neighborhood $\delta^{\textrm{od}}(y_p)=\{y\in \mathcal{Y}_p: \|x_1(y), x_1(y_p)\|_1\le c,  \|x_2(y), x_2(y_p)\|_1\le c\}$, $c$, grows with the grid size $p$. One approach is to simply multiply the sample size by the number of distinct origin-destination pairs in the neighborhood and revise Theorem 3.1 accordingly. However, this is an overly optimistic lower bound of the integrated risk of the route-based method, because increasing the size of the neighborhood also introduces additional biases by including relatively irrelavant historical trips. In \Cref{sec:numerical}, we conduct numerical experiments to investigate this trade-off. These experiments show that having an increasing neighborhood size does not change the asymptotic comparisons in \Cref{cor:grid}. This suggests that the additional biases introduced by a larger neighborhood can offset the benefits of a larger sample size.
        \end{remark}

	\Cref{cor:grid} suggests that when the distributions of the route origins and destinations are not overly concentrated, a route-based estimator using routes with similar origins and destinations is \emph{asymptotically dominated} by a class of simple segment-based estimators. When origins and destinations of the routes are too concentrated, historical and predicting can be very similar --- in the extreme case where $\alpha \rightarrow 0^+$, all origins and destinations are concentrated at the four corners of the grid so that there are only 16 types of origin-destination pairs in the data where each route goes from one corner of the grid to another. This can give a route-based estimator some advantages. The requirement $1/4<\alpha\le 1$ ensures that there is enough dispersion among historical routes. This range is quite generous --- when $\alpha=1/4$, for a $10\times 10$ grid, the probability of sampling route origins or destinations at the corners is over $50$ times higher than the center of the grid, and this gap increases as the grid size increases. %

	\smallskip
	
	It turns out that we are able to say a lot more by directly comparing the segment-based estimators $\ThetasegP$ to the optimal estimator $\hat{\Theta}^\ast_{Y_p}$ characterized in \Cref{thm:optimal_estimator}.
	To do so, we first fully specify the route distribution $\mu_p$. Conditional on the origin and destination $x_1=(i_1,j_1)$ and $x_2=(i_2, j_2)$, we sample the route $Y_p$ uniformly from the set of all routes in $\mathcal Y_p$ that \emph{minimize both the number of traversals and turns} from $x_1$ to $x_2$, i.e., that have length equal to the grid distance $\|x_1-x_2\|_1 = |i_1 - i_2| + |j_1 - j_2|$ and the minimum number of turns. %
	\Cref{fig:grid2} below illustrates the route distribution given specific origin $x_1$ and destination $x_2$. On the left of \Cref{fig:grid2}, there is only one possible route between them, while on the right of \Cref{fig:grid2}, there are two possible routes, each with probability 0.5 being sampled. Although somewhat simplified, this route distribution $\mu_p$ biases towards route-based (and generalized segment-based) estimators as it significantly limits the set of possible routes $\mathcal{Y}_p$ and increases the sample size of each possible route $y\in\mathcal{Y}_p$. In other words, for segment-based estimators, having good relative performance under such a route distribution $\mu_p$ likely implies good relative performance under other route distributions.

	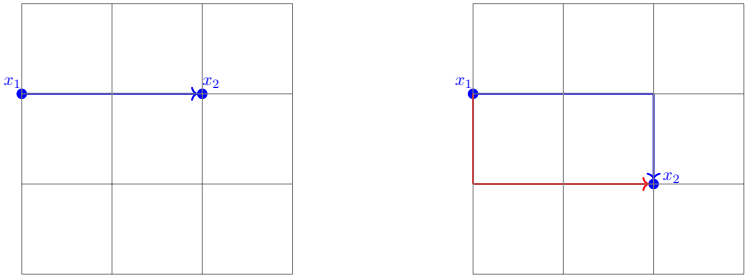
\begin{figure}[htbp]
		\centering
		\scalebox{0.6}{\begin{tikzpicture}

\draw node [right, blue, above] at (-0.2,4.0) {$x_1$};
\draw node [right, blue, above] at (4.2,4.0) {$x_2$};

\draw node [right, blue, above] at (9.8,4.0) {$x_1$};
\draw node [right, blue, above] at (14.4,1.9) {$x_2$};

\filldraw[blue] (10,4) circle (3pt);
\filldraw[blue] (14,2) circle (3pt);

\draw[very thick, ->, blue]   (0,4) -- (3.9,4);

\filldraw[blue] (0,4) circle (3pt);
\filldraw[blue] (4,4) circle (3pt);

\draw[very thick, -, blue]   (10,4) -- (14,4);
\draw[very thick, ->, blue]   (14,4) -- (14,2.1);
\draw[very thick, -, red]   (10,4) -- (10,2);
\draw[very thick, ->, red]   (10,2) -- (13.9,2);

\draw[step=2cm,gray,very thin] (0,0) grid (6,6);
\draw[step=2cm,gray,very thin] (10,0) grid (16,6);
\end{tikzpicture}}
		\caption{Examples of the route distribution $\mu_p$ conditional on origin $x_1$ and destination $x_2$.}
		\label{fig:grid2}
	\end{figure} 
	
	We now give the main result of the paper. We show that the same class of simple segment-based estimators considered in \Cref{cor:grid} is \emph{asymptotically optimal up to a logarithmic factor}. %
    Precisely, we compare its integrated risk with that of the optimal estimator $\hat{\Theta}^\ast_y$ under normal distributions and show that, in the asymptotic limit, the ratio of the risks can be upper bounded by a small logarithmic factor $\mathcal{O}(\log(p))$. Such comparison is non-trivial as we neither have a closed-form formula for the optimal estimator $\hat{\Theta}_{Y_p}^\ast$, nor for its risk (see \Cref{thm:optimal_estimator}). We thus compare $R(\ThetasegP)$ to a more tractable lower bound of the optimal risk $R(\hat{\Theta}_{Y_p}^{\ast})$. This lower bound is obtained by adapting the Bayesian Cram{\'e}r-Rao bound \citep{gill1995applications} through the van Trees inequality \citep{van2004detection}, a Bayesian analog of the information inequality (see \Cref{lm:info_lb} in the supplement).

	We now formally present the result regarding the asymptotic optimality of the segment-based estimators below.%
	
	\begin{theorem}[\textsc{Asymptotic Optimality of Segment-Based Estimators}]
		\label{thm:asymptotic_opt}
		In addition to Assumptions \ref{assump:1}, \ref{assump:2} and \ref{assumption:3}, assume that $(\mathcal{E}, \theta)$ are jointly Gaussian distributed. When $1/2\le\alpha\le 1$ and $N=\omega(p)$,
		\be
		\frac{R\left(\ThetasegP\right)}{R\left(\hat{\Theta}_{Y_p}^{\ast}\right)}=\mathcal{O}(\log(p)),
		\ee
		for any segment-based estimator with $\phi_s(N_s) = 1-\mathcal{O}(1/\sqrt{N_s}),\:\forall s\in\mathcal{S}_p$. When the data growth rate $N = \mathcal{O}(p)$, $\liminf_{p\rightarrow\infty}R\left(\hat{\Theta}_Y\right) > 0$
		for any estimator $\hat{\Theta}_Y$.
	\end{theorem}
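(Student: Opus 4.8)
The plan is to sandwich the ratio by combining the available upper bound on the numerator with a matching lower bound on the denominator. For the numerator, \Cref{cor:grid} already gives $R(\ThetasegP) = \mathcal{O}(p^2/N)$ for any segment-based estimator with $\phi_s(N_s)=1-\mathcal{O}(1/\sqrt{N_s})$, and that bound holds under the present (minimal-length, minimal-turn) route distribution as well. Consequently the entire burden is to prove a lower bound $R(\hat{\Theta}_{Y_p}^{\ast}) = \Omega\big(p^2/(N\log p)\big)$; dividing the two then yields the claimed $\mathcal{O}(\log p)$.

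For the lower bound I would invoke the Bayesian Cram\'er--Rao (van Trees) inequality \Cref{lm:info_lb}, which has the advantage of bounding the risk of \emph{every} estimator and of averaging over the random historical design through the \emph{expected} Fisher information. Recall from \Cref{thm:optimal_estimator} that the data Fisher information for $\theta$ is $U^\intercal\Phi^{-1}U$ and the Gaussian prior contributes $\diag((1/\tau^2)e)$, so that $\bb E[Q] = \bb E[U^\intercal\Phi^{-1}U]+\diag((1/\tau^2)e)$. Conditioning on the predicting route $Y_p=y$ and applying van Trees with the test vector $e_y$ gives, for any estimator,
\[
R\big(\hat{\Theta}_{Y_p}\,\big|\,Y_p=y\big)\;\ge\;\frac{(e_y^\intercal e_y)^2}{e_y^\intercal\bb E[U^\intercal\Phi^{-1}U]e_y+|y|/\tau^2}\;=\;\frac{|y|^2}{\bb E[e_y^\intercal Q\,e_y]},
\]
since $e_y^\intercal\diag((1/\tau^2)e)e_y=|y|/\tau^2$. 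Averaging over $Y_p$ yields $R(\hat{\Theta}_{Y_p})\ge \bb E_{Y_p}\big[|Y_p|^2/\bb E[e_{Y_p}^\intercal Q\,e_{Y_p}\mid Y_p]\big]$, and it remains to \emph{upper} bound the denominator.

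To control $\bb E[e_y^\intercal Q\,e_y]=e_y^\intercal\bb E[U^\intercal\Phi^{-1}U]e_y+|y|/\tau^2$ for fixed $y$, write $U^\intercal\Phi^{-1}U=\sum_{n}B_n$, where $B_n$ is the within-trip precision $(\Sigma|_{y_n})^{-1}$ embedded on the segments of $y_n$. Here \Cref{assumption:3} does the work: the bounded row sums of $\Psi$ force $\|\Psi\|_{\mathrm{op}}=\mathcal{O}(1)$, hence $\lambda_{\min}(\Sigma)=\Omega(1)$, and by eigenvalue interlacing $\|(\Sigma|_{y_n})^{-1}\|_{\mathrm{op}}=\mathcal{O}(1)$ for every trip. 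Therefore $e_y^\intercal B_n e_y=\mathbf 1_{y\cap y_n}^\intercal(\Sigma|_{y_n})^{-1}\mathbf 1_{y\cap y_n}=\mathcal{O}(|y\cap y_n|)$, and taking expectations over the i.i.d.\ historical routes gives $e_y^\intercal\bb E[U^\intercal\Phi^{-1}U]e_y=\mathcal{O}\big(N\sum_{s\in y}\pi_s\big)$, where $\pi_s:=\bb P[s\in Y]$ is the traversal probability of a single route $Y\sim\mu_p$.

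The last and hardest step is the grid combinatorics. I would show that there is a constant-probability set of predicting routes on which \emph{simultaneously} $|Y_p|=\Omega(p)$ and $\sum_{s\in Y_p}\pi_s=\mathcal{O}(\log p)$; restricting the outer expectation to this good event (all summands being nonnegative) then produces a contribution of order $p^2/(N\log p+p)$, which is $\Omega\big(p^2/(N\log p)\big)$ when $N=\omega(p)$, finishing the ratio bound. The main obstacle is exactly the estimate $\sum_{s\in Y_p}\pi_s=\mathcal{O}(\log p)$: the $\pi_s$ must be computed under the beta-binomial origin/destination law together with the L-shaped minimal-turn routing, and the sum along a typical route accumulates a logarithmic factor coming from the near-boundary segments whose traversal frequency is inflated as $\alpha$ approaches $1/2$. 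This is precisely where the hypothesis $1/2\le\alpha\le 1$ is used --- below $1/2$ these boundary frequencies, and hence the denominator, grow polynomially rather than logarithmically (consistent with the route-based method becoming competitive for smaller $\alpha$ in \Cref{cor:grid}). Finally, the second claim follows from the same inequality: when $N=\mathcal{O}(p)$ the bound reads $R(\hat{\Theta}_{Y_p})\ge\Omega\big(p^2/(N\log p)\big)=\Omega\big(p/\log p\big)$ for every estimator, so $\liminf_{p\to\infty}R(\hat{\Theta}_{Y_p})>0$.
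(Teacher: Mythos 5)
Your overall strategy is sound and genuinely different from the paper's. Both arguments rest on the van Trees bound of \Cref{lm:info_lb} and on $R(\ThetasegP)=\mathcal O(p^2/N)$ from \Cref{cor:grid}, but they diverge in how the expectation over the \emph{predicting} route is handled. The paper pushes that expectation into the denominator via a second application of Jensen, which forces it to control $\bb E\big[I_sI_t/(\sum_{s'}I_{s'})^2\big]$; the logarithm then arises from short routes, i.e.\ from $\bb E\big[1/|Y_p|^2\mid I_s=1\big]=\mathcal O(\log(p)p^{-2})$ (\Cref{lm:conditional_length}), combined with $\sum_s q_s^2=\mathcal O(1)$ for $\alpha\ge 1/2$ (\Cref{lm:sum_square_ub}). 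You instead condition on $Y_p=y$, apply van Trees per route, and restrict the outer expectation to a constant-probability event of long routes, which sidesteps the $1/|Y_p|^2$ averaging entirely. Two remarks. First, you misdiagnose where the logarithm lives: since $\bb E\big[\sum_{s\in Y_p}\pi_s\big]=\sum_{s\in\mathcal S_p}q_s^2$, the paper's \Cref{lm:sum_square_ub} shows this is $\mathcal O(1)$ for $1/2\le\alpha\le1$, so by Markov your good event can be taken with $\sum_{s\in Y_p}\pi_s=\mathcal O(1)$ rather than $\mathcal O(\log p)$ --- your route, carried out, actually yields the sharper conclusion $R(\hat\Theta^\ast_{Y_p})=\Omega(p^2/N)$ and hence a ratio of $\mathcal O(1)$, which of course implies the stated $\mathcal O(\log p)$. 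Second, the step you flag as the ``hardest'' is exactly the grid combinatorics the paper isolates in \Cref{prop:seg_accum} and \Cref{lm:sum_square_ub}; your sketch does not execute it, so as written the proof has a real but fillable hole, and the restriction $1/2\le\alpha\le1$ enters precisely there (for $\alpha<1/2$ one gets $\sum_sq_s^2\simeq p^{1-2\alpha}$, degrading the bound polynomially, consistent with \Cref{cor:grid}). Your treatment of the Fisher information via $\|(\Sigma|_{y_n})^{-1}\|_{\mathrm{op}}=\mathcal O(1)$ from the bounded row sums of $\Psi$ in \Cref{assumption:3} is correct and gives $e_y^\intercal B_ne_y=\mathcal O(|y\cap y_n|)$ as needed, and your deduction of the second claim (inconsistency when $N=\mathcal O(p)$) from the same inequality is fine, though the paper proves it more directly via $N_{s\cup t}\le N$ and $\bb E[|Y_p|]=\Omega(p)$ without needing the good-event construction.
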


    \Cref{thm:asymptotic_opt} has strong practical implications. It says that although improvement can be made in some cases over simple segment-based estimators by, for example, using a route-based method (\Cref{ex:negative_cov}) or combining segments into super-segments (\Cref{ex:seg_not_optimal}), their benefits are limited and in the asymptotic limit where grid size and sample size grow, these benefits can only make a difference up to a logarithmic factor. This gives reassurance that maintaining a segment-based travel time prediction architecture achieves most of the accuracy of the optimal estimator. Similar to \Cref{cor:grid}, \Cref{thm:asymptotic_opt} does require some conditions to make sure that historical routes are diverse enough. The condition $1/2\le\alpha\le 1$ is stricter than the one required in \Cref{cor:grid} but still quite generous --- when $\alpha=1/2$, for a $10\times 10$ grid, the probability of sampling route origins or destinations at the corners is more than $8$ times higher than in the center of the grid and this gap again increases as the grid size increases. In addition, \Cref{thm:asymptotic_opt} also requires the data growth rate to be at least $N=\omega(p)$. This turns out to be a very mild condition as for any slower data growth rate $N=\mathcal{O}(p)$, \emph{no} estimator can be consistent in the sense that the asymptotic risk tends to zero.

	\subsection{Numerical Examples}
	\label{sec:numerical}
	
	We numerically demonstrate the accuracy of different estimators based on representative correlation structures used for travel times on road networks. %
	We construct $p\times p$ grid networks where $p\in\{10,15,20,25,30\}$. For each grid size $p$, we consider different sample sizes of historical routes $N=p, p^2, p^3$, and $p^4$. Each historical route is generated from a route distribution $\mu_p$ as detailed at the beginning of \Cref{subsec:grid}. We consider a route distribution with $\alpha=1.0$ --- origins and destinations are generated uniformly over the grid.
    The covariance matrix of segment travel times is taken to be $u e^{-v\mathscr{L}} + I$ where $u, v>0$ are some parameters and $I$ is an identity matrix representing a white noise of travel time uncertainty. The matrix $\mathscr{L}=D^{-1/2}LD^{1/2}$ is the normalized Laplacian of the grid network where $D$ is the diagonal matrix of segment degrees, $A$ is the adjacency matrix of the grid network and $L=D-A$ is the graph Laplacian. This is also called the diffusion kernel. It models spatial decay of correlation among segment travel times. As $u$ increases, the matrix becomes more diffused in the sense that the correlation becomes relatively stronger. On the other hand, $v$ controls the weight between the diffusion kernel and the white noise. Note that this covariance structure also satisfies the second part of Assumption \ref{assumption:3}. We set the population variance of the means of segment travel times to be $\tau^2 = 0.5$, which is similar to the variances of the segment travel times $\sigma^2_s$.

	For each grid size $p\in\{10,15,20,25,30\}$ and taking the parameter of the route distribution to be $\alpha\in\{0.3, 1.0, 3.0\}$, we generate $100$ predicting routes and report the average integrated risk of these predicting routes for the following methods, under a covariance matrix $u e^{-v\mathscr{L}} + I$ specified with $(u,v) = (1,1)$.
	
	\begin{enumerate}
		\vspace{1mm}
		\item Simple segment-based method $\hat{\Theta}_{y_p}^{(\textrm{simple-seg})}$
		with $\phi_s(N_s) = N_s/(N_s + 1)$.
		
		\vspace{1mm}
		
		\item Optimal route-based method $\Thetatrpoptp$ %
		with $\delta(y_p) = \{y_n: x_1(y_n) = x_1(y_p), x_2(y_n) = x_2(y_p)\}$ that includes all historical routes sharing the same origin and destination with the predicting route $y_p$. 
		\vspace{1mm}
            \item Optimal route-based method $\Thetatrpoptp$ with a growing neighborhood $\delta(y_p) = \{y_n: \|x_1(y_n) - x_1(y_p)\|_1\le \left\lceil 0.1p \right\rceil, \|x_2(y_n) - x_2(y_p)\|_1\le \left\lceil 0.1p \right\rceil \}$ that includes all historical routes sharing similar origin and destination with the predicting route $y_p$ where the degree of similarity is growing with the grid size $p$.
            \vspace{1mm}
		\item Optimal estimator $\hat{\Theta}_{y_p}^\ast$ in \Cref{thm:optimal_estimator} under the assumption that $(\mathcal{E}, \theta)$ are jointly Gaussian distributed.
		\vspace{1mm}
		\item The information-theoretic lower bound developed in \Cref{lm:info_lb} in the supplement under the assumption that $(\mathcal{E}, \theta)$ are jointly Gaussian distributed. %
	\end{enumerate}
	
	\vspace{1mm}

    We reiterate that the average integrated risks of the simple segment-based method and the optimal route-based method do not depend on distributional assumptions.
    Figures \ref{fig:risk_0.3_1_1_1}, \ref{fig:risk_1.0_1_1_1} and \ref{fig:risk_3.0_1_1_1} report the average integrated risks (in logarithmic scale) of the aforementioned methods %
	over 100 predicting routes, under different sample sizes as the grid size $p$ increases. Each figure corresponds to a different route distribution. \Cref{fig:risk_0.3_1_1_1} depicts the situation under $\alpha=0.3$ where origins and destinations of the routes are more concentrated at the corners of the grid; \Cref{fig:risk_1.0_1_1_1} represents $\alpha = 1.0$ where route origins and destinations are uniformly distributed over the grid; and finally \Cref{fig:risk_3.0_1_1_1} reports the situation of $\alpha=3.0$ where origins and destinations are more concentrated at the central part of the grid, though this is outside the range of $\alpha\in(0,1]$ we assume in \Cref{subsec:grid}. %
    These numerical findings match our theoretical results. The increasing difference between the integrated risks of the two optimal route-based methods and those of the simple segment-based method reflects the asymptotic dominance result in \Cref{cor:grid} as well as \Cref{rmk:neighborhood} that the dominance results likely remain true even if we consider a route-based method with growing neighborhood size (marked by ``Optimal Route GN'').
	The average risks of the simple segment-based method tend to zero when the sample size grows faster than $N=p^{2.0}$.%
	The performance of the simple segment-based estimator is extremely competitive --- it almost matches the optimal risk. The gap between the risk of the simple segment-based method and the information-theoretic lower bound increases very mildly as grid size $p$ increases, which reflects the logarithmic scaling in \Cref{thm:asymptotic_opt}. It is worth noting that both the cases of $\alpha=3.0$ and $\alpha=0.3$ are outside the range of $\alpha$ assumed in \Cref{thm:asymptotic_opt} but the optimality seems to remain valid. %
	 In \Cref{sec:additional_num} of the supplement, we report additional numerical experiments under other covariance matrices $u e^{-v\mathscr{L}} + I$ specified with other values of $(u,v)$ as well as covariance structures that violate the second part of Assumption \ref{assumption:3}.

 	\begin{figure}[htbp]
		\captionsetup[subfigure]{justification=centering}
		\centering 
		
		\begin{subfigure}[h]{0.45\textwidth}
			\begin{tikzpicture}
				\begin{axis}[
					width=2.6in,
					xlabel = Grid Size ($p$),
					ylabel = $\log_{10}\textrm{(Average Risk)}$,
					legend style={legend pos=south east, font=\tiny},
					label style={font=\footnotesize},
					ymin=0,
					ymax=1.2,
					xtick={10,15,20,25,30}
					]
					\addplot+ [
					discard if not={alpha}{1.0},
					] table [
					x=grid_size,
					y=seg_simple,
					]{data/revision/results_0.3_cov_6_1_1_1.txt};\addlegendentry{Simple Segment};

					\addplot+ [
					discard if not={alpha}{1.0},
					] table [
					x=grid_size,
					y=route,
					]{data/revision/results_0.3_cov_6_1_1_1.txt};\addlegendentry{Optimal Route};

          			\addplot+ [
					discard if not={alpha}{1.0},
					] table [
					x=grid_size,
					y=route_grow,
					]{data/revision/results_0.3_cov_6_1_1_1.txt};\addlegendentry{Optimal Route GN};
					
					\addplot+ [
					discard if not={alpha}{1.0},
					] table [
					x=grid_size,
					y=bayes_optimal,
					]{data/revision/results_0.3_cov_6_1_1_1.txt};\addlegendentry{Optimal};
					
					\addplot+ [
					discard if not={alpha}{1.0},
					] table [
					x=grid_size,
					y=lb,
					]{data/revision/results_0.3_cov_6_1_1_1.txt};\addlegendentry{Lower Bound};
				\end{axis}
			\end{tikzpicture}
			\caption{$N= p^{1.0}$}
			\label{fig:risk_0.3_1_1_1_1.0}
		\end{subfigure}
		~~
		\begin{subfigure}[h]{0.45\textwidth}
			\begin{tikzpicture}
				\begin{axis}[
					width=2.6in,
					xlabel = Grid Size ($p$),
					ylabel = $\log_{10}\textrm{(Average Risk)}$,
					legend style={legend pos=south east, font=\tiny},
					label style={font=\footnotesize},
					ymin=-1.5,
					ymax=1.5,
					xtick={10,15,20,25,30}
					]
					\addplot+ [
					discard if not={alpha}{2.0},
					] table [
					x=grid_size,
					y=seg_simple,
					]{data/revision/results_0.3_cov_6_1_1_1.txt};\addlegendentry{Simple Segment};

					\addplot+ [
					discard if not={alpha}{2.0},
					] table [
					x=grid_size,
					y=route,
					]{data/revision/results_0.3_cov_6_1_1_1.txt};\addlegendentry{Optimal Route};

     				\addplot+ [
					discard if not={alpha}{2.0},
					] table [
					x=grid_size,
					y=route_grow,
					]{data/revision/results_0.3_cov_6_1_1_1.txt};\addlegendentry{Optimal Route GN};
					
					\addplot+ [
					discard if not={alpha}{2.0},
					] table [
					x=grid_size,
					y=bayes_optimal,
					]{data/revision/results_0.3_cov_6_1_1_1.txt};\addlegendentry{Optimal};
					
					\addplot+ [
					discard if not={alpha}{2.0},
					] table [
					x=grid_size,
					y=lb,
					]{data/revision/results_0.3_cov_6_1_1_1.txt};\addlegendentry{Lower Bound};
				\end{axis}
			\end{tikzpicture}
			\caption{$N= p^{2.0}$}
			\label{fig:risk_0.3_1_1_1_2.0}
		\end{subfigure}
		
		\begin{subfigure}[h]{0.45\textwidth}
			\begin{tikzpicture}
				\begin{axis}[
					width=2.6in,
					xlabel = Grid Size ($p$),
					ylabel = $\log_{10}\textrm{(Average Risk)}$,
					legend style={legend pos=south west, font=\tiny},
					label style={font=\footnotesize},
					ymin=-4.0,
					ymax=1.5,
					xtick={10,15,20,25,30},
					ytick={-4,-3,-2,-1,0,1}
					]
					\addplot+ [
					discard if not={alpha}{3.0},
					] table [
					x=grid_size,
					y=seg_simple,
					]{data/revision/results_0.3_cov_6_1_1_1.txt};\addlegendentry{Simple Segment};

					\addplot+ [
					discard if not={alpha}{3.0},
					] table [
					x=grid_size,
					y=route,
					]{data/revision/results_0.3_cov_6_1_1_1.txt};\addlegendentry{Optimal Route};

					\addplot+ [
					discard if not={alpha}{3.0},
					] table [
					x=grid_size,
					y=route_grow,
					]{data/revision/results_0.3_cov_6_1_1_1.txt};\addlegendentry{Optimal Route GN};
					
					\addplot+ [
					discard if not={alpha}{3.0},
					] table [
					x=grid_size,
					y=bayes_optimal,
					]{data/revision/results_0.3_cov_6_1_1_1.txt};\addlegendentry{Optimal};
					
					\addplot+ [
					discard if not={alpha}{3.0},
					] table [
					x=grid_size,
					y=lb,
					]{data/revision/results_0.3_cov_6_1_1_1.txt};\addlegendentry{Lower Bound};
				\end{axis}
			\end{tikzpicture}
			\caption{$N= p^{3.0}$}
			\label{fig:risk_0.3_1_1_1_3.0}
		\end{subfigure}
		~~
		\begin{subfigure}[h]{0.45\textwidth}
			\begin{tikzpicture}
				\begin{axis}[
					width=2.6in,
					xlabel = Grid Size ($p$),
					ylabel = $\log_{10}\textrm{(Average Risk)}$,
					legend style={legend pos=south west, font=\tiny},
					label style={font=\footnotesize},
					ymin=-6.0,
					ymax=1.2,
					xtick={10,15,20,25,30},
					ytick={-6,-5,-4,-3,-2,-1,0,1}
					]
					\addplot+ [
					discard if not={alpha}{4.0},
					] table [
					x=grid_size,
					y=seg_simple,
					]{data/revision/results_0.3_cov_6_1_1_1.txt};\addlegendentry{Simple Segment};

					\addplot+ [
					discard if not={alpha}{4.0},
					] table [
					x=grid_size,
					y=route,
					]{data/revision/results_0.3_cov_6_1_1_1.txt};\addlegendentry{Optimal Route};

					\addplot+ [
					discard if not={alpha}{4.0},
					] table [
					x=grid_size,
					y=route_grow,
					]{data/revision/results_0.3_cov_6_1_1_1.txt};\addlegendentry{Optimal Route GN};
					
					\addplot+ [
					discard if not={alpha}{4.0},
					] table [
					x=grid_size,
					y=bayes_optimal,
					]{data/revision/results_0.3_cov_6_1_1_1.txt};\addlegendentry{Optimal};
					
					\addplot+ [
					discard if not={alpha}{4.0},
					] table [
					x=grid_size,
					y=lb,
					]{data/revision/results_0.3_cov_6_1_1_1.txt};\addlegendentry{Lower Bound};
				\end{axis}
			\end{tikzpicture}
			\caption{$N= p^{4.0}$}
			\label{fig:risk_0.3_1_1_1_4.0}
		\end{subfigure}
		\caption{Average integrated risks of different methods ($\alpha = 0.3$).}
		\label{fig:risk_0.3_1_1_1}
	\end{figure}
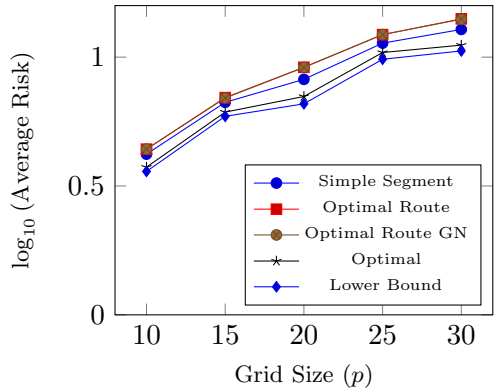
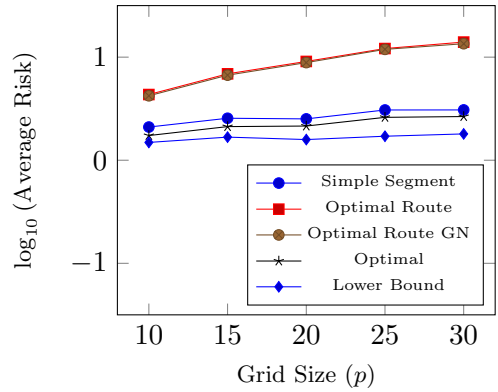
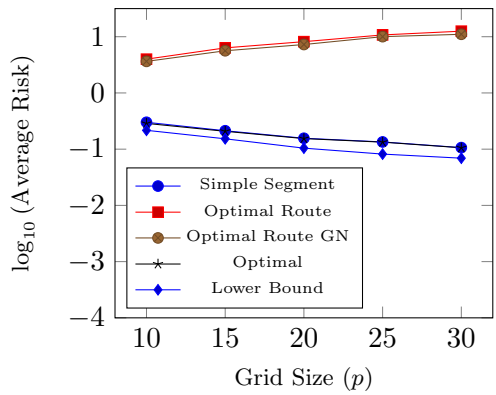
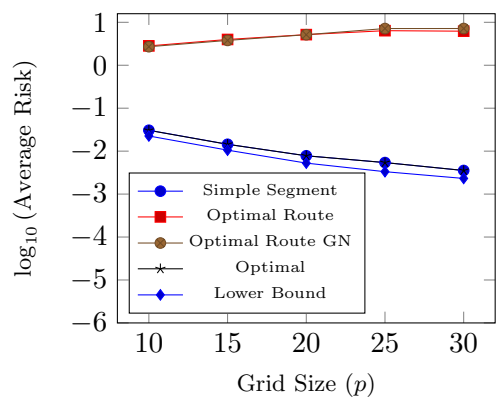

	\begin{figure}[htbp]
		\captionsetup[subfigure]{justification=centering}
		\centering 
		
		\begin{subfigure}[h]{0.45\textwidth}
			\begin{tikzpicture}
				\begin{axis}[
					width=2.6in,
					xlabel = Grid Size ($p$),
					ylabel = $\log_{10}\textrm{(Average Risk)}$,
					legend style={legend pos=south east, font=\tiny},
					label style={font=\footnotesize},
					ymin=0,
					ymax=1.2,
					xtick={10,15,20,25,30}
					]
					\addplot+ [
					discard if not={alpha}{1.0},
					] table [
					x=grid_size,
					y=seg_simple,
					]{data/revision/results_1.0_cov_6_1_1_1.txt};\addlegendentry{Simple Segment};

					\addplot+ [
					discard if not={alpha}{1.0},
					] table [
					x=grid_size,
					y=route,
					]{data/revision/results_1.0_cov_6_1_1_1.txt};\addlegendentry{Optimal Route};

          			\addplot+ [
					discard if not={alpha}{1.0},
					] table [
					x=grid_size,
					y=route_grow,
					]{data/revision/results_1.0_cov_6_1_1_1.txt};\addlegendentry{Optimal Route GN};
					
					\addplot+ [
					discard if not={alpha}{1.0},
					] table [
					x=grid_size,
					y=bayes_optimal,
					]{data/revision/results_1.0_cov_6_1_1_1.txt};\addlegendentry{Optimal};
					
					\addplot+ [
					discard if not={alpha}{1.0},
					] table [
					x=grid_size,
					y=lb,
					]{data/revision/results_1.0_cov_6_1_1_1.txt};\addlegendentry{Lower Bound};
				\end{axis}
			\end{tikzpicture}
			\caption{$N= p^{1.0}$}
			\label{fig:risk_1.0_1_1_1_1.0}
		\end{subfigure}
		~~
		\begin{subfigure}[h]{0.45\textwidth}
			\begin{tikzpicture}
				\begin{axis}[
					width=2.6in,
					xlabel = Grid Size ($p$),
					ylabel = $\log_{10}\textrm{(Average Risk)}$,
					legend style={legend pos=south east, font=\tiny},
					label style={font=\footnotesize},
					ymin=-1.5,
					ymax=1.5,
					xtick={10,15,20,25,30}
					]
					\addplot+ [
					discard if not={alpha}{2.0},
					] table [
					x=grid_size,
					y=seg_simple,
					]{data/revision/results_1.0_cov_6_1_1_1.txt};\addlegendentry{Simple Segment};

					\addplot+ [
					discard if not={alpha}{2.0},
					] table [
					x=grid_size,
					y=route,
					]{data/revision/results_1.0_cov_6_1_1_1.txt};\addlegendentry{Optimal Route};

     				\addplot+ [
					discard if not={alpha}{2.0},
					] table [
					x=grid_size,
					y=route_grow,
					]{data/revision/results_1.0_cov_6_1_1_1.txt};\addlegendentry{Optimal Route GN};
					
					\addplot+ [
					discard if not={alpha}{2.0},
					] table [
					x=grid_size,
					y=bayes_optimal,
					]{data/revision/results_1.0_cov_6_1_1_1.txt};\addlegendentry{Optimal};
					
					\addplot+ [
					discard if not={alpha}{2.0},
					] table [
					x=grid_size,
					y=lb,
					]{data/revision/results_1.0_cov_6_1_1_1.txt};\addlegendentry{Lower Bound};
				\end{axis}
			\end{tikzpicture}
			\caption{$N= p^{2.0}$}
			\label{fig:risk_1.0_1_1_1_2.0}
		\end{subfigure}
		
		\begin{subfigure}[h]{0.45\textwidth}
			\begin{tikzpicture}
				\begin{axis}[
					width=2.6in,
					xlabel = Grid Size ($p$),
					ylabel = $\log_{10}\textrm{(Average Risk)}$,
					legend style={legend pos=south west, font=\tiny},
					label style={font=\footnotesize},
					ymin=-4.0,
					ymax=1.5,
					xtick={10,15,20,25,30},
					ytick={-4,-3,-2,-1,0,1}
					]
					\addplot+ [
					discard if not={alpha}{3.0},
					] table [
					x=grid_size,
					y=seg_simple,
					]{data/revision/results_1.0_cov_6_1_1_1.txt};\addlegendentry{Simple Segment};

					\addplot+ [
					discard if not={alpha}{3.0},
					] table [
					x=grid_size,
					y=route,
					]{data/revision/results_1.0_cov_6_1_1_1.txt};\addlegendentry{Optimal Route};

					\addplot+ [
					discard if not={alpha}{3.0},
					] table [
					x=grid_size,
					y=route_grow,
					]{data/revision/results_1.0_cov_6_1_1_1.txt};\addlegendentry{Optimal Route GN};
					
					\addplot+ [
					discard if not={alpha}{3.0},
					] table [
					x=grid_size,
					y=bayes_optimal,
					]{data/revision/results_1.0_cov_6_1_1_1.txt};\addlegendentry{Optimal};
					
					\addplot+ [
					discard if not={alpha}{3.0},
					] table [
					x=grid_size,
					y=lb,
					]{data/revision/results_1.0_cov_6_1_1_1.txt};\addlegendentry{Lower Bound};
				\end{axis}
			\end{tikzpicture}
			\caption{$N= p^{3.0}$}
			\label{fig:risk_1.0_1_1_1_3.0}
		\end{subfigure}
		~~
		\begin{subfigure}[h]{0.45\textwidth}
			\begin{tikzpicture}
				\begin{axis}[
					width=2.6in,
					xlabel = Grid Size ($p$),
					ylabel = $\log_{10}\textrm{(Average Risk)}$,
					legend style={legend pos=south west, font=\tiny},
					label style={font=\footnotesize},
					ymin=-6.0,
					ymax=1.2,
					xtick={10,15,20,25,30},
					ytick={-6,-5,-4,-3,-2,-1,0,1}
					]
					\addplot+ [
					discard if not={alpha}{4.0},
					] table [
					x=grid_size,
					y=seg_simple,
					]{data/revision/results_1.0_cov_6_1_1_1.txt};\addlegendentry{Simple Segment};

					\addplot+ [
					discard if not={alpha}{4.0},
					] table [
					x=grid_size,
					y=route,
					]{data/revision/results_1.0_cov_6_1_1_1.txt};\addlegendentry{Optimal Route};

					\addplot+ [
					discard if not={alpha}{4.0},
					] table [
					x=grid_size,
					y=route_grow,
					]{data/revision/results_1.0_cov_6_1_1_1.txt};\addlegendentry{Optimal Route GN};
					
					\addplot+ [
					discard if not={alpha}{4.0},
					] table [
					x=grid_size,
					y=bayes_optimal,
					]{data/revision/results_1.0_cov_6_1_1_1.txt};\addlegendentry{Optimal};
					
					\addplot+ [
					discard if not={alpha}{4.0},
					] table [
					x=grid_size,
					y=lb,
					]{data/revision/results_1.0_cov_6_1_1_1.txt};\addlegendentry{Lower Bound};
				\end{axis}
			\end{tikzpicture}
			\caption{$N= p^{4.0}$}
			\label{fig:risk_1.0_1_1_1_4.0}
		\end{subfigure}
		\caption{Average integrated risks of different methods ($\alpha = 1.0$).}
		\label{fig:risk_1.0_1_1_1}
	\end{figure}

 	\begin{figure}[htbp]
		\captionsetup[subfigure]{justification=centering}
		\centering 
		
		\begin{subfigure}[h]{0.45\textwidth}
			\begin{tikzpicture}
				\begin{axis}[
					width=2.6in,
					xlabel = Grid Size ($p$),
					ylabel = $\log_{10}\textrm{(Average Risk)}$,
					legend style={legend pos=south east, font=\tiny},
					label style={font=\footnotesize},
					ymin=-0.15,
					ymax=1.2,
					xtick={10,15,20,25,30}
					]
					\addplot+ [
					discard if not={alpha}{1.0},
					] table [
					x=grid_size,
					y=seg_simple,
					]{data/revision/results_3.0_cov_6_1_1_1.txt};\addlegendentry{Simple Segment};

					\addplot+ [
					discard if not={alpha}{1.0},
					] table [
					x=grid_size,
					y=route,
					]{data/revision/results_3.0_cov_6_1_1_1.txt};\addlegendentry{Optimal Route};

          			\addplot+ [
					discard if not={alpha}{1.0},
					] table [
					x=grid_size,
					y=route_grow,
					]{data/revision/results_3.0_cov_6_1_1_1.txt};\addlegendentry{Optimal Route GN};
					
					\addplot+ [
					discard if not={alpha}{1.0},
					] table [
					x=grid_size,
					y=bayes_optimal,
					]{data/revision/results_3.0_cov_6_1_1_1.txt};\addlegendentry{Optimal};
					
					\addplot+ [
					discard if not={alpha}{1.0},
					] table [
					x=grid_size,
					y=lb,
					]{data/revision/results_3.0_cov_6_1_1_1.txt};\addlegendentry{Lower Bound};
				\end{axis}
			\end{tikzpicture}
			\caption{$N= p^{1.0}$}
			\label{fig:risk_3.0_1_1_1_1.0}
		\end{subfigure}
		~~
		\begin{subfigure}[h]{0.45\textwidth}
			\begin{tikzpicture}
				\begin{axis}[
					width=2.6in,
					xlabel = Grid Size ($p$),
					ylabel = $\log_{10}\textrm{(Average Risk)}$,
					legend style={legend pos=south east, font=\tiny},
					label style={font=\footnotesize},
					ymin=-1.5,
					ymax=1.5,
					xtick={10,15,20,25,30}
					]
					\addplot+ [
					discard if not={alpha}{2.0},
					] table [
					x=grid_size,
					y=seg_simple,
					]{data/revision/results_3.0_cov_6_1_1_1.txt};\addlegendentry{Simple Segment};

					\addplot+ [
					discard if not={alpha}{2.0},
					] table [
					x=grid_size,
					y=route,
					]{data/revision/results_3.0_cov_6_1_1_1.txt};\addlegendentry{Optimal Route};

     				\addplot+ [
					discard if not={alpha}{2.0},
					] table [
					x=grid_size,
					y=route_grow,
					]{data/revision/results_3.0_cov_6_1_1_1.txt};\addlegendentry{Optimal Route GN};
					
					\addplot+ [
					discard if not={alpha}{2.0},
					] table [
					x=grid_size,
					y=bayes_optimal,
					]{data/revision/results_3.0_cov_6_1_1_1.txt};\addlegendentry{Optimal};
					
					\addplot+ [
					discard if not={alpha}{2.0},
					] table [
					x=grid_size,
					y=lb,
					]{data/revision/results_3.0_cov_6_1_1_1.txt};\addlegendentry{Lower Bound};
				\end{axis}
			\end{tikzpicture}
			\caption{$N= p^{2.0}$}
			\label{fig:risk_3.0_1_1_1_2.0}
		\end{subfigure}
		
		\begin{subfigure}[h]{0.45\textwidth}
			\begin{tikzpicture}
				\begin{axis}[
					width=2.6in,
					xlabel = Grid Size ($p$),
					ylabel = $\log_{10}\textrm{(Average Risk)}$,
					legend style={legend pos=south west, font=\tiny},
					label style={font=\footnotesize},
					ymin=-4.0,
					ymax=1.5,
					xtick={10,15,20,25,30},
					ytick={-4,-3,-2,-1,0,1}
					]
					\addplot+ [
					discard if not={alpha}{3.0},
					] table [
					x=grid_size,
					y=seg_simple,
					]{data/revision/results_3.0_cov_6_1_1_1.txt};\addlegendentry{Simple Segment};

					\addplot+ [
					discard if not={alpha}{3.0},
					] table [
					x=grid_size,
					y=route,
					]{data/revision/results_3.0_cov_6_1_1_1.txt};\addlegendentry{Optimal Route};

					\addplot+ [
					discard if not={alpha}{3.0},
					] table [
					x=grid_size,
					y=route_grow,
					]{data/revision/results_3.0_cov_6_1_1_1.txt};\addlegendentry{Optimal Route GN};
					
					\addplot+ [
					discard if not={alpha}{3.0},
					] table [
					x=grid_size,
					y=bayes_optimal,
					]{data/revision/results_3.0_cov_6_1_1_1.txt};\addlegendentry{Optimal};
					
					\addplot+ [
					discard if not={alpha}{3.0},
					] table [
					x=grid_size,
					y=lb,
					]{data/revision/results_3.0_cov_6_1_1_1.txt};\addlegendentry{Lower Bound};
				\end{axis}
			\end{tikzpicture}
			\caption{$N= p^{3.0}$}
			\label{fig:risk_3.0_1_1_1_3.0}
		\end{subfigure}
		~~
		\begin{subfigure}[h]{0.45\textwidth}
			\begin{tikzpicture}
				\begin{axis}[
					width=2.6in,
					xlabel = Grid Size ($p$),
					ylabel = $\log_{10}\textrm{(Average Risk)}$,
					legend style={legend pos=south west, font=\tiny},
					label style={font=\footnotesize},
					ymin=-6.0,
					ymax=1.2,
					xtick={10,15,20,25,30},
					ytick={-6,-5,-4,-3,-2,-1,0,1}
					]
					\addplot+ [
					discard if not={alpha}{4.0},
					] table [
					x=grid_size,
					y=seg_simple,
					]{data/revision/results_3.0_cov_6_1_1_1.txt};\addlegendentry{Simple Segment};

					\addplot+ [
					discard if not={alpha}{4.0},
					] table [
					x=grid_size,
					y=route,
					]{data/revision/results_3.0_cov_6_1_1_1.txt};\addlegendentry{Optimal Route};

					\addplot+ [
					discard if not={alpha}{4.0},
					] table [
					x=grid_size,
					y=route_grow,
					]{data/revision/results_3.0_cov_6_1_1_1.txt};\addlegendentry{Optimal Route GN};
					
					\addplot+ [
					discard if not={alpha}{4.0},
					] table [
					x=grid_size,
					y=bayes_optimal,
					]{data/revision/results_3.0_cov_6_1_1_1.txt};\addlegendentry{Optimal};
					
					\addplot+ [
					discard if not={alpha}{4.0},
					] table [
					x=grid_size,
					y=lb,
					]{data/revision/results_3.0_cov_6_1_1_1.txt};\addlegendentry{Lower Bound};
				\end{axis}
			\end{tikzpicture}
			\caption{$N= p^{4.0}$}
			\label{fig:risk_3.0_1_1_1_4.0}
		\end{subfigure}
		\caption{Average integrated risks of different methods ($\alpha = 3.0$).}
		\label{fig:risk_3.0_1_1_1}
	\end{figure}

	\section{Concluding Remarks}
	\label{sec:dis}
	
	Our model and analysis reveal insights into the accuracy of various travel time predictors used in practice. 
	Our results favor segment-based estimators and show that a simple class of them is asymptotically optimal up to a logarithmic factor with a variety of trip-generating processes on a grid network. %
	At the core of our analysis is the following tradeoff. Segment-based estimators have the advantage of a larger sample size as there are more individual traversals on a segment level. However, the estimation can accumulate errors due to aggregating over road segments. On the other hand, route-based or generalized segment-based estimators can have the advantage of absorbing errors among segment travel times, but it is often at the cost of a smaller sample size. %
	Our results expose that, under mild conditions, the sample size difference is often of first-order importance, leading to favorable consideration towards a segment-based approach.

	It remains open whether similar insights hold under the setting of ETA prediction where one is only interested in predicting travel time from an origin to a destination without conditional on a route. Such settings occur in practice, for example, when one has little control over the route a driver will take \citep{hu2022deepreta}. Route-based methods which use data for all trip observations between the origin-destination pair can estimate travel time and the uncertain route distribution simultaneously, while segment-based methods require additional steps to estimate such route distribution. Extending our analyses in such settings can be meaningful follow-up work. %

\bibliography{references}

\begin{thebibliography}{29}
\providecommand{\natexlab}[1]{#1}
\providecommand{\url}[1]{\texttt{#1}}
\expandafter\ifx\csname urlstyle\endcsname\relax
  \providecommand{\doi}[1]{doi: #1}\else
  \providecommand{\doi}{doi: \begingroup \urlstyle{rm}\Url}\fi

\bibitem[Berger(2013{\natexlab{a}})]{berger2013statistical}
James~O Berger.
\newblock \emph{Statistical decision theory and Bayesian analysis}.
\newblock Springer Science \& Business Media, 2013{\natexlab{a}}.

\bibitem[Berger(2013{\natexlab{b}})]{berger2013statistical_sm}
James~O Berger.
\newblock \emph{Statistical decision theory and Bayesian analysis}.
\newblock Springer Science \& Business Media, 2013{\natexlab{b}}.

\bibitem[Bernard et~al.(2006)Bernard, Hackney, and
  Axhausen]{Bernard2006CorrelationOL}
Michael Bernard, Jeremy~Keith Hackney, and Kay~W. Axhausen.
\newblock Correlation of link travel speeds.
\newblock In \emph{6th Swiss Transport Research Conference}, 2006.

\bibitem[Chao and Strawderman(1972)]{chao1972negative}
Min-Te Chao and WE~Strawderman.
\newblock Negative moments of positive random variables.
\newblock \emph{Journal of the American Statistical Association}, 67\penalty0
  (338):\penalty0 429--431, 1972.

\bibitem[DeepMind(2020)]{rss}
DeepMind.
\newblock Traffic prediction with advanced graph neural networks, 2020.
\newblock URL
  \url{https://www.deepmind.com/blog/traffic-prediction-with-advanced-graph-neural-networks}.

\bibitem[Delling et~al.(2017)Delling, Goldberg, Pajor, and
  Werneck]{delling2017customizable}
Daniel Delling, Andrew~V Goldberg, Thomas Pajor, and Renato~F Werneck.
\newblock Customizable route planning in road networks.
\newblock \emph{Transportation Science}, 51\penalty0 (2):\penalty0 566--591,
  2017.

\bibitem[Derrow-Pinion et~al.(2021)Derrow-Pinion, She, Wong, Lange, Hester,
  Perez, Nunkesser, Lee, Guo, Wiltshire, et~al.]{derrow2021eta}
Austin Derrow-Pinion, Jennifer She, David Wong, Oliver Lange, Todd Hester, Luis
  Perez, Marc Nunkesser, Seongjae Lee, Xueying Guo, Brett Wiltshire, et~al.
\newblock {ETA} prediction with graph neural networks in {Google Maps}.
\newblock \emph{arXiv preprint arXiv:2108.11482}, 2021.

\bibitem[{\relax DLMF}()]{NIST:DLMF}
{\relax DLMF}.
\newblock {\it NIST Digital Library of Mathematical Functions}.
\newblock http://dlmf.nist.gov/, Release 1.0.28 of 2020-09-15, 2020.
\newblock URL \url{http://dlmf.nist.gov/}.

\bibitem[Gelman et~al.(2013)Gelman, Carlin, Stern, Dunson, Vehtari, and
  Rubin]{gelman2013bayesian}
Andrew Gelman, John~B Carlin, Hal~S Stern, David~B Dunson, Aki Vehtari, and
  Donald~B Rubin.
\newblock \emph{Bayesian data analysis}.
\newblock CRC press, 2013.

\bibitem[Gill and Levit(1995)]{gill1995applications}
Richard~D Gill and Boris~Y Levit.
\newblock Applications of the van {Trees} inequality: a {Bayesian
  Cram{\'e}r-Rao} bound.
\newblock \emph{Bernoulli}, 1\penalty0 (1-2):\penalty0 59--79, 1995.

\bibitem[Guo et~al.(2020)Guo, Gu, Guo, Dong, and Wallace]{guo2020understanding}
Feng Guo, Xin Gu, Zhaoxia Guo, Yucheng Dong, and Stein~W Wallace.
\newblock Understanding the marginal distributions and correlations of link
  travel speeds in road networks.
\newblock \emph{Scientific Reports}, 10\penalty0 (1):\penalty0 1--8, 2020.

\bibitem[Hofleitner et~al.(2012)Hofleitner, Herring, Abbeel, and
  Bayen]{hofleitner2012learning}
Aude Hofleitner, Ryan Herring, Pieter Abbeel, and Alexandre Bayen.
\newblock Learning the dynamics of arterial traffic from probe data using a
  dynamic {Bayesian} network.
\newblock \emph{IEEE Transactions on Intelligent Transportation Systems},
  13\penalty0 (4):\penalty0 1679--1693, 2012.

\bibitem[Hu et~al.(2022)Hu, Binaykiya, Frank, and Cirit]{hu2022deepreta}
Xinyu Hu, Tanmay Binaykiya, Eric Frank, and Olcay Cirit.
\newblock Deepr{ETA}: An {ETA} post-processing system at scale.
\newblock \emph{arXiv preprint arXiv:2206.02127}, 2022.

\bibitem[Jenelius and Koutsopoulos(2013)]{jenelius2013travel}
Erik Jenelius and Haris~N Koutsopoulos.
\newblock Travel time estimation for urban road networks using low frequency
  probe vehicle data.
\newblock \emph{Transportation Research Part B: Methodological}, 53:\penalty0
  64--81, 2013.

\bibitem[Jindal et~al.(2017)Jindal, Chen, Nokleby, Ye,
  et~al.]{jindal2017unified}
Ishan Jindal, Xuewen Chen, Matthew Nokleby, Jieping Ye, et~al.
\newblock A unified neural network approach for estimating travel time and
  distance for a taxi trip.
\newblock \emph{arXiv preprint arXiv:1710.04350}, 2017.

\bibitem[Lehmann and Casella(2006)]{lehmann2006theory}
Erich~L Lehmann and George Casella.
\newblock \emph{Theory of point estimation}.
\newblock Springer Science \& Business Media, 2006.

\bibitem[Li et~al.(2018)Li, Fu, Wang, Shahabi, Ye, and Liu]{li2018multi}
Yaguang Li, Kun Fu, Zheng Wang, Cyrus Shahabi, Jieping Ye, and Yan Liu.
\newblock Multi-task representation learning for travel time estimation.
\newblock In \emph{Proceedings of the 24th ACM SIGKDD International Conference
  on Knowledge Discovery \& Data Mining}, pages 1695--1704, 2018.

\bibitem[Pinheiro and Bates(2006)]{pinheiro2006mixed}
Jos{\'e} Pinheiro and Douglas Bates.
\newblock \emph{Mixed-effects models in S and S-PLUS}.
\newblock Springer science \& business media, 2006.

\bibitem[Quddus et~al.(2007)Quddus, Ochieng, and Noland]{quddus2007current}
Mohammed~A Quddus, Washington~Y Ochieng, and Robert~B Noland.
\newblock Current map-matching algorithms for transport applications:
  State-of-the art and future research directions.
\newblock \emph{Transportation Research Part C: Emerging Technologies},
  15\penalty0 (5):\penalty0 312--328, 2007.

\bibitem[Rachtan et~al.(2013)Rachtan, Huang, and
  Gao]{rachtan2013spatiotemporal}
Piotr Rachtan, He~Huang, and Song Gao.
\newblock Spatiotemporal link speed correlations: Empirical study.
\newblock \emph{Transportation Research Record}, 2390\penalty0 (1):\penalty0
  34--43, 2013.

\bibitem[Ramezani and Geroliminis(2012)]{ramezani2012estimation}
Mohsen Ramezani and Nikolas Geroliminis.
\newblock On the estimation of arterial route travel time distribution with
  markov chains.
\newblock \emph{Transportation Research Part B: Methodological}, 46\penalty0
  (10):\penalty0 1576--1590, 2012.

\bibitem[{van}~Trees(2004)]{van2004detection}
Harry~L {van}~Trees.
\newblock \emph{Detection, estimation, and modulation theory, part I:
  detection, estimation, and linear modulation theory}.
\newblock John Wiley \& Sons, 2004.

\bibitem[Wang et~al.(2016)Wang, Kuo, Kifer, and Li]{wang2016simple}
Hongjian Wang, Yu-Hsuan Kuo, Daniel Kifer, and Zhenhui Li.
\newblock A simple baseline for travel time estimation using large-scale trip
  data.
\newblock In \emph{Proceedings of the 24th ACM SIGSPATIAL International
  Conference on Advances in Geographic Information Systems}, pages 1--4, 2016.

\bibitem[Wang et~al.(2014)Wang, Zheng, and Xue]{wang2014travel}
Yilun Wang, Yu~Zheng, and Yexiang Xue.
\newblock Travel time estimation of a path using sparse trajectories.
\newblock In \emph{Proceedings of the 20th ACM SIGKDD International Conference
  on Knowledge Discovery and Data Mining}, pages 25--34, 2014.

\bibitem[Wang et~al.(2018{\natexlab{a}})Wang, Fu, and Ye]{wang2018learning}
Zheng Wang, Kun Fu, and Jieping Ye.
\newblock Learning to estimate the travel time.
\newblock In \emph{Proceedings of the 24th ACM SIGKDD International Conference
  on Knowledge Discovery and Data Mining}, pages 858--866, 2018{\natexlab{a}}.

\bibitem[Wang et~al.(2018{\natexlab{b}})Wang, Fu, and Ye]{wangfuye}
Zheng Wang, Kun Fu, and Jieping Ye.
\newblock Learning to estimate the travel time.
\newblock In \emph{KDD}, pages 858--866. ACM Special Interest Group on
  Knowledge Discovery and Data Mining, 2018{\natexlab{b}}.

\bibitem[Woodard et~al.(2017)Woodard, Nogin, Koch, Racz, Goldszmidt, and
  Horvitz]{woodard2017predicting}
Dawn Woodard, Galina Nogin, Paul Koch, David Racz, Moises Goldszmidt, and Eric
  Horvitz.
\newblock Predicting travel time reliability using mobile phone {GPS} data.
\newblock \emph{Transportation Research Part C: Emerging Technologies},
  75:\penalty0 30--44, 2017.

\bibitem[Yan et~al.(2020)Yan, Zhu, Korolko, and Woodard]{yan2020dynamic}
Chiwei Yan, Helin Zhu, Nikita Korolko, and Dawn Woodard.
\newblock Dynamic pricing and matching in ride-hailing platforms.
\newblock \emph{Naval Research Logistics (NRL)}, 67\penalty0 (8):\penalty0
  705--724, 2020.

\bibitem[Yuan et~al.(2020)Yuan, Li, Bao, and Feng]{yuan2020effective}
Haitao Yuan, Guoliang Li, Zhifeng Bao, and Ling Feng.
\newblock Effective travel time estimation: When historical trajectories over
  road networks matter.
\newblock In \emph{Proceedings of the 2020 ACM SIGMOD International Conference
  on Management of Data}, pages 2135--2149, 2020.

\end{thebibliography}

\newpage

\appendix

\noindent{}\textbf{\huge{Appendix}}

\bigskip

\noindent{}%

	\section{Additional Technical Results}
	\label{apx:additional_results}

        In this section, we give additional finite-sample and asymptotic results that are either used in or complement the main text. Their corresponding proofs can be found in \Cref{appx:proofs}.
	
	\subsection{Finite Sample Results}
	
	Our first result in this subsection bounds the integrated risk of any estimator using a Bayesian information-theoretical bound. This lower bound is obtained by adapting the Bayesian Cram{\'e}r-Rao bound \citep{gill1995applications} through the van Trees inequality \citep{van2004detection}, a Bayesian analog of the information inequality. The proof uses Gaussian priors and posteriors to obtain explicit closed-form bounds.

	\begin{lemma}[\textsc{Information-Theoretic Lower Bound}]
	\label{lm:info_lb}
	In addition to Assumptions \ref{assump:1} and \ref{assump:2}, assume that $(\mathcal{E}, \theta)$ are jointly Gaussian distributed. Given a set of historical routes $y_{[N]}$ and the predicting route $y$, 
	\be
	\label{eq:lb}
	R\left(\hat{\Theta}^\ast_{y} \:\bigg|\: y_{[N]}\right)\ge \frac{|y|^2}{\sum_{s,t\in y} N_{s\cup t}\psi_{s,t}+ |y|/\tau^2},%
	\ee
	where $\psi_{s,t}$ is the $(s,t)^{\textrm{th}}$ element in the precision matrix of the segment travel times $\Psi$. 
\end{lemma}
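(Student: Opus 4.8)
The plan is to invoke the multivariate van Trees inequality (as in Gill--Levit) with a single fixed test direction, and then to reduce the resulting Fisher-information quadratic form to the precision-weighted sum $\sum_{s,t\in y}N_{s\cup t}\psi_{s,t}$ in the statement. Conditional on $y_{[N]}$, the notation of \Cref{thm:optimal_estimator} gives the Gaussian model $Z=U\theta+\mathcal E$ with $\mathcal E\sim\mathcal N(0,\Phi)$ and prior $\theta\sim\mathcal N(\mu e,\tau^2 I)$, and the target is the linear functional $\psi(\theta)=e_y^\intercal\theta=\Theta_y$. Since this is a Gaussian location family in $\theta$, the data Fisher information is constant and equal to $U^\intercal\Phi^{-1}U$; since the prior is $\mathcal N(\mu e,\tau^2 I)$ on all of $\mathbb R^{|\mathcal S|}$, its Fisher information is $(1/\tau^2)I$ and it decays at infinity, so the integration-by-parts hypotheses of van Trees hold. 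Taking the constant test vector $c=e_y$ and using $\nabla\psi\equiv e_y$, the inequality yields
\[
R\!\left(\hat\Theta_y^\ast\mid y_{[N]}\right)\ \ge\ \frac{(e_y^\intercal e_y)^2}{e_y^\intercal\big(U^\intercal\Phi^{-1}U+(1/\tau^2)I\big)e_y}\ =\ \frac{|y|^2}{\,e_y^\intercal U^\intercal\Phi^{-1}U\,e_y+|y|/\tau^2\,},
\]
using $e_y^\intercal e_y=|y|$ and $e_y^\intercal((1/\tau^2)I)e_y=|y|/\tau^2$. (Equivalently this is the Cauchy--Schwarz bound $\tfrac{(e_y^\intercal e_y)^2}{e_y^\intercal Qe_y}\le e_y^\intercal Q^{-1}e_y$ applied to the exact Gaussian posterior variance $e_y^\intercal Q^{-1}e_y$.)

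It then remains to bound the data term by the precision-weighted sum. I would first expand it trip-by-trip: writing $A_n:=y\cap y_n$ and using that $\Phi$ is block-diagonal with the $n$th block equal to the restriction $\Sigma_{y_n}:=[\sigma_{s,t}]_{s,t\in y_n}$, together with the fact that each segment is traversed at most once per route, one gets
\[
e_y^\intercal U^\intercal\Phi^{-1}U\,e_y=\sum_{n=1}^N \mathbf 1_{A_n}^\intercal\,(\Sigma_{y_n}^{-1})[A_n,A_n]\,\mathbf 1_{A_n}.
\]
The crux is the Loewner inequality $\Sigma_{y_n}^{-1}\preceq \Psi[y_n,y_n]$, where $\Psi[y_n,y_n]$ is the principal submatrix of the global precision $\Psi=\Sigma^{-1}$. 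This follows from block inversion: $\Psi[y_n,y_n]$ is the inverse of the Schur complement $\Sigma_{y_n}-\Sigma_{y_n,\bar y_n}\Sigma_{\bar y_n}^{-1}\Sigma_{\bar y_n,y_n}\preceq\Sigma_{y_n}$, and inversion is operator-antitone on positive-definite matrices. Restricting the Loewner inequality to the principal block $A_n\subseteq y_n$ preserves it, so evaluating at $\mathbf 1_{A_n}$ gives $\mathbf 1_{A_n}^\intercal(\Sigma_{y_n}^{-1})[A_n,A_n]\mathbf 1_{A_n}\le \sum_{s,t\in A_n}\psi_{s,t}$. Summing over $n$ and exchanging the order of summation with $N_{s\cup t}=\sum_n\mathbf 1\{s,t\in y_n\}$ yields $e_y^\intercal U^\intercal\Phi^{-1}U\,e_y\le\sum_{s,t\in y}N_{s\cup t}\psi_{s,t}$. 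Since enlarging the denominator only decreases the fraction, this produces the stated bound.

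The main obstacle I anticipate is precisely the matrix-ordering step $\Sigma_{y_n}^{-1}\preceq\Psi[y_n,y_n]$ --- i.e.\ recognizing that the per-trip precision (the inverse of a \emph{submatrix} of $\Sigma$) is dominated in Loewner order by the corresponding \emph{submatrix of the inverse} of $\Sigma$, and checking that this domination survives further restriction to $A_n$ and evaluation at $\mathbf 1_{A_n}$. Everything else --- the two Fisher-information computations and the van Trees application with a constant direction --- is routine for Gaussian location families, and I would relegate the standard regularity conditions for van Trees to a citation of \citet{gill1995applications} and \citet{van2004detection}.
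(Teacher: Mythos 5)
Your proposal is correct, and it reaches the bound by essentially the same route as the paper: the multivariate van Trees inequality of \cite{gill1995applications} applied to the linear functional $e_y^\intercal\theta$ with the Gaussian prior $\mathcal N(\mu e,\tau^2 I)$, whose prior information contributes the $|y|/\tau^2$ term (your parenthetical Cauchy--Schwarz derivation from the exact posterior variance $e_y^\intercal Q^{-1}e_y$ of \Cref{thm:optimal_estimator} is an equivalent, and in the Gaussian case arguably cleaner, shortcut). The one genuine difference is the step you flag as the crux. The paper's proof writes the per-trip Fisher information directly as $[\mathcal I_n(\theta_y)]_{s,t}=\psi_{s,t}$ for $s,t\in y_n$, i.e.\ it identifies the information of trip $n$ with the corresponding submatrix of the \emph{global} precision $\Psi=\Sigma^{-1}$. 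Strictly, the observations on trip $n$ have covariance $\Sigma_{y_n}=[\sigma_{s,t}]_{s,t\in y_n}$, so their information is $\Sigma_{y_n}^{-1}$ (the inverse of the submatrix), which coincides with $\Psi[y_n,y_n]$ only when the trip's segments are uncoupled from the rest of the network. Your Loewner comparison $\Sigma_{y_n}^{-1}\preceq\Psi[y_n,y_n]$ via the Schur complement, its preservation under restriction to $A_n=y\cap y_n$, and evaluation at the all-ones vector is exactly what justifies replacing the true information by the larger precision-weighted sum in the denominator --- the favourable direction for a lower bound. So your argument is a more careful version of the paper's: same skeleton, plus the matrix-ordering step that the paper's proof asserts implicitly.
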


\smallskip

	\subsection{Asymptotic Results}
	
	We begin with a few remarks on the notation used in this section: for two functions $f(p)$ and $g(p)>0$, we write $f(p) = \mathcal{O}(g(p))$ (or $f(p) = \Omega(g(p))$) if there exists a constant $c_1$ and a constant $p_1$ such that $f(p) \le c_1g(p)$ (or $f(p) \ge c_1g(p)$) for all $p \ge p_1$; we write $f(p) = o(g(p))$ (or $f(p) = \omega(g(p))$) if $\lim_{p\to\infty} f(p)/g(p) = 0$ (or $\lim_{p\to\infty} f(p)/g(p) = +\infty$). In addition, we write $f(p)\gtrsim g(p)$ (or $f(p)\lesssim g(p)$) if there is a universal constant $c>0$ such that $f(p)\ge c g(p)$ (or $f(p)\le c g(p)$) for all $p\ge1$. %
	If $f(p)\lesssim g(p)$ and $f(p)\gtrsim g(p)$, we define $f(p)\simeq g(p)$. %
	
	Our first asymptotic result in this subsection generalizes \Cref{cor:grid} to non-grid road networks and general route distribution $\mu_p$. The key quantities determining the integrated risks of these estimators are the rates at which training data accumulates on particular road segments or on particular routes. Under a road network with size $p$, we let $q_s := \bb P[s\in Y_p]$ denote the probability that a specific road segment $s$ is traversed by a randomly generated route $Y_p\sim\mu_p$. %
	Similarly, given any route $y$, we define $q_{\delta(y)} := \bb P[Y_p \in \delta(y)]$ as the probability that a randomly sampled route belongs to the  neighborhood $\delta(y)$ of a given route $y$. With a slight abuse of notation, we further let $q_{\delta} := \bb P_{Y_p^\prime\sim\mu_p, Y_p\sim\mu_p}[Y_p^\prime \in \delta(Y_p)]=\sum_{y\in\mathcal{Y}_p}q_{\delta(y)}\mathbb P[Y_p = y]$. The quantity $q_{\delta}$ marginalizes over the distribution of the predicting route $y$ and 
	is the probability that a randomly sampled route $Y'_p$ belongs to the neighborhood of another independently sampled route $Y_p$. Intuitively, $q_{\delta}$ measures the rate at which the neighborhood of any randomly sampled predicting route accumulates training data.

	Our results will compare the asymptotic integrated risk of travel time estimators $R(\hat \Theta_{Y_p})$ as $p \rightarrow \infty$ (and $N \rightarrow \infty$). In our first asymptotic result below, we compare a large family of simple segment-based estimators $\Thetaseg$ to the optimal route-based estimators $\Thetatrpopt$. This family of simple segment-based estimators only requires that $\phi_s(N_s)$ approaches $1$ quickly enough. We give conditions under which this family of simple segment-based estimators is (much) more accurate when the size of the road network gets larger. This automatically implies that the optimal segment-based estimator also dominates the optimal route-based estimator, under the same set of conditions. We provide this result without restricting to non-negative covariances, as required in \Cref{thm:opt_bayes_comparison}. %

	\begin{theorem}
		\label{thm:asymptotic_general_seg_better_route}
		For any segment-based estimator $\Thetaseg$ with $\phi_s(N_s) = 1-\mathcal{O}(1/\sqrt{N_s}),\: s\in\mathcal{S}_p$, and any optimal route-based estimator $\Thetatrpopt$ with neighborhood $\delta(\cdot)$ and route distribution $\mu_p$ such that $q_{\delta} = o(1/|S_p|)$, 
		\be
		\lim_{p\to\infty}\frac{R\left(\ThetasegP\right)}{R\left(\ThetatrpoptP\right)} = 0.
		\ee
		In addition, $R\big(\ThetasegP\big)=\mathcal{O}(|S_p|/N)$ while $R\big(\ThetatrpoptP\big)=\Omega\left(1/\left(Nq_{\delta}\right)\right).$ %

	\end{theorem}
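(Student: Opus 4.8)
The plan is to establish the two stated rate bounds separately and then divide. Both rest on the exact conditional risk formulas in \Cref{prop:diff} and \Cref{prop:optimal_formula}, combined with the envelope condition $\sum_{t\in\mathcal S_p}|\sigma_{s,t}|=\mathcal O(1)$ and the route-variance floor $\sum_{s,t\in y}\sigma_{s,t}\ge\sigma_{\min}$ supplied by \Cref{assumption:3}.

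For the segment-based upper bound, I would specialize \eqref{eq:seg_risk} to singleton super-segments and control it conditionally on the routes. Using $N_{s\cup t}\le\sqrt{N_sN_t}$, the inequality $1/\sqrt{N_sN_t}\le\tfrac12(1/N_s+1/N_t)$, and the row-sum bound $\sum_t|\sigma_{s,t}|=\mathcal O(1)$, the variance term is $\mathcal O(1)\sum_{s\in y}\1(N_s\ge1)/N_s$; under the hypothesis $\phi_s(N_s)=1-\mathcal O(1/\sqrt{N_s})$ the bias term $\sum_{s\in y}(1-\phi_s(N_s))^2\tau^2$ is of the same order plus a $\1(N_s=0)$ contribution, so both collapse into $\mathcal O(1)\sum_{s\in y}1/(N_s+1)$. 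The key maneuver is to take the expectation over the predicting route \emph{first}: since $Y_p$ is independent of the historical counts $N_s$ and $N_s\sim\mathrm{Binomial}(N,q_s)$,
\[
\bb E\Big[\sum_{s\in Y_p}\tfrac{1}{N_s+1}\Big]=\sum_{s\in\mathcal S_p}q_s\,\bb E\big[\tfrac{1}{N_s+1}\big]=\sum_{s\in\mathcal S_p}q_s\cdot\frac{1-(1-q_s)^{N+1}}{(N+1)q_s}\le\frac{|\mathcal S_p|}{N+1},
\]
so the factor $q_s$ exactly cancels the $1/q_s$ in the reciprocal moment, giving $R(\ThetasegP)=\mathcal O(|\mathcal S_p|/N)$ under any origin--destination law.

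For the route-based lower bound, I would start from \Cref{prop:optimal_formula}: conditional on the routes, the optimal route estimator attains $\min_\phi(A\phi^2-2B\phi+C)=C-B^2/A$, with $C=|y|\tau^2$, $B=\tfrac{\tau^2}{M}\sum_{s\in y}N_s^{\delta(y)}$, and $A\ge\tfrac{\tau^2}{M^2}\sum_{s\in y}(N_s^{\delta(y)})^2+V$, where $M=M_{\delta(y)}$ and $V=\tfrac{1}{M^2}\sum_{n:y_n\in\delta(y)}\sum_{s,t\in y_n}\sigma_{s,t}$ is a genuine (hence nonnegative) route-level variance satisfying $V\ge\sigma_{\min}/M$. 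Applying Cauchy--Schwarz $(\sum_{s\in y}N_s^{\delta(y)})^2\le|y|\sum_{s\in y}(N_s^{\delta(y)})^2$ and discarding the remaining nonnegative pieces of $A$ yields $R\ge|y|\tau^2\,V/(P_y+V)$ with $P_y\le|y|\tau^2$ (using $N_s^{\delta(y)}\le M$); hence $R(\Thetatrpopt\mid y_{[N]})\gtrsim V\gtrsim 1/M_{\delta(y)}$ when $M_{\delta(y)}\ge1$, and $R\ge|y|\tau^2\ge\tau^2$ when $M_{\delta(y)}=0$. Taking expectations with $M_{\delta(Y_p)}\mid Y_p\sim\mathrm{Binomial}(N,q_{\delta(Y_p)})$, using $\bb E[1/(M+1)]\ge 1/(1+Nq_{\delta(Y_p)})$ and Jensen over $Y_p$, gives $R(\ThetatrpoptP)=\Omega\!\big(1/(1+Nq_\delta)\big)=\Omega(1/(Nq_\delta))$ in the sample-rich regime $Nq_\delta\gtrsim 1$ that drives the comparison.

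Combining the two bounds, $R(\ThetasegP)/R(\ThetatrpoptP)\le C\,|\mathcal S_p|\,q_\delta$, which tends to $0$ precisely because $q_\delta=o(1/|\mathcal S_p|)$. I expect the route-based lower bound to be the main obstacle, for two reasons: the covariances are permitted to be negative here (unlike in \Cref{thm:opt_bayes_comparison}), so positivity must be routed through the aggregate route variance $V\ge\sigma_{\min}/M$ rather than through individual $\sigma_{s,t}$; and one must control the reciprocal moment of the binomial $M_{\delta(Y_p)}$, where tracking the exact $1/(1+Nq_\delta)$ form is what makes the bound honestly $\Omega(1/(Nq_\delta))$ rather than merely $\Omega(1)$.
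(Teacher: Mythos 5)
Your proposal is correct and follows the same overall architecture as the paper's proof---bound $R\big(\ThetasegP\big)$ from above and $R\big(\ThetatrpoptP\big)$ from below via the exact conditional risk formulas, then divide---but each half is executed with genuinely different intermediate devices. For the segment bound, the paper controls the cross terms via $N_{s\cup t}/(N_sN_t)\le 1/N_{s\cup t}$ together with a separate lemma giving $\bb E\big[\1\{N_{s\cup t}>0\}/N_{s\cup t}\big]\le 2/\bb E[N_{s\cup t}]$ and the identity $\bb E[I_sI_t]=\bb E[N_{s\cup t}]/N$, and handles the bias term with a Chernoff bound; your AM--GM reduction $N_{s\cup t}/(N_sN_t)\le\tfrac12(1/N_s+1/N_t)$ followed by the exact identity $\bb E[1/(N_s+1)]=(1-(1-q_s)^{N+1})/((N+1)q_s)$ collapses bias and variance uniformly into single-segment reciprocal moments, avoids the pairwise counts entirely, and is arguably cleaner (both arguments implicitly use $\phi_s\in[0,1]$ when discarding the weights, so you are on equal footing there). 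For the route bound, the paper substitutes the suboptimal weight $\phi^{\ast\ast}=M|y|\tau^2/(M|y|\tau^2+\sigma_{\min}^2)$ into a truncated risk and applies Jensen to the convex map $M\mapsto\phi^{\ast\ast}(M)/M$; you instead evaluate the closed-form minimum $C-B^2/A$ of the quadratic and apply Cauchy--Schwarz plus $N_s^{\delta(y)}\le M_{\delta(y)}$, landing on the same floor $|y|\tau^2V/(|y|\tau^2+V)\gtrsim 1/(M_{\delta(y)}+1)$ before Jensen over the binomial. Both routes hinge on the identical structural inputs (the row-sum bound on $\Sigma_p$ and the route-variance floor $\sigma_{\min}$), and both share the same cosmetic looseness at the end: the honest bound is $\Omega(1/(1+Nq_{\delta}))$, which you flag explicitly and which is exactly the paper's $\tau^2\sigma_{\min}^2/(Nq_{\delta}\tau^2+\sigma_{\min}^2)$ before it is relabeled as $\Omega(1/(Nq_{\delta}))$.
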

	
	\Cref{thm:asymptotic_general_seg_better_route} characterizes conditions under which a wide class of simple segment-based estimators dominates the optimal route-based estimator with a neighborhood such that the probability of any route within the neighborhood of a randomly sampled route being sampled scales as $o(1/|S_p|)$. We will give more explanation to this scaling under a grid network example in \Cref{subsec:grid}. It is interesting to note that \Cref{thm:asymptotic_general_seg_better_route} does not require any conditions on $q_s$. In the proof of \Cref{thm:asymptotic_general_seg_better_route}, we lower bound the integrated risk of the optimal route-based estimator $R\big(\ThetatrpoptP\big)$ by assuming that there is no additional bias introduced by including historical trips whose routes are not exactly the same as $y$ into the neighborhood $\delta(y)$.

	\subsubsection{Asymptotic Results for the Grid Networks}
	
    We provide a few additional asymptotic results for the grid networks based on the route distribution described in \Cref{subsec:grid}. These are useful results that help to prove the main results \Cref{cor:grid} and \Cref{thm:asymptotic_opt}. They concern various data accumulation rates on the grid networks. The first lemma below bounds the probability that a specific origin or destination is chosen on the grid. 
	
	\begin{lemma} \label{lm:NodeLowerBound}
		With $0 < \alpha \le 1$, %
		\be
		p^{-2}\lesssim \bb P[X=(i,j)] \lesssim  p^{-2\alpha},~\forall(i, j)\in\mathcal{V}_p.
		\ee
	\end{lemma}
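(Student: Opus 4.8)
The plan is to exploit the product structure of the origin/destination distribution. Since
\[
\bb P[X=(i,j)] = f(i)\,f(j), \qquad f(k) := \binom{p}{k}\frac{B(\alpha+k,\alpha+p-k)}{B(\alpha,\alpha)},
\]
where $f$ is the probability mass function of a symmetric $\mathrm{BetaBinomial}(p,\alpha,\alpha)$ law, it suffices to prove the one-dimensional bound $p^{-1}\lesssim f(k)\lesssim p^{-\alpha}$ uniformly in $k\in\{0,\dots,p\}$; squaring then yields the claimed two-sided estimate on $\bb P[X=(i,j)]$. First I would rewrite $f(k)$ purely in terms of Gamma functions using $B(a,b)=\Gamma(a)\Gamma(b)/\Gamma(a+b)$, obtaining
\[
f(k) = \frac{\Gamma(2\alpha)}{\Gamma(\alpha)^2}\cdot\frac{\Gamma(p+1)}{\Gamma(p+2\alpha)}\cdot\frac{\Gamma(k+\alpha)}{\Gamma(k+1)}\cdot\frac{\Gamma(p-k+\alpha)}{\Gamma(p-k+1)}.
\]

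The next step is to control each Gamma ratio by an explicit power of its argument. Using standard two-sided bounds on ratios of Gamma functions (e.g.\ Gautschi's or Wendel's inequality, which for $0<\alpha\le 1$ give $c_\alpha\,(m+1)^{\alpha-1}\le \Gamma(m+\alpha)/\Gamma(m+1)\le C_\alpha\,(m+1)^{\alpha-1}$ for all integers $m\ge 0$, and analogously $\Gamma(p+1)/\Gamma(p+2\alpha)\simeq p^{1-2\alpha}$), with constants depending only on $\alpha$, I obtain
\[
f(k) \simeq p^{\,1-2\alpha}\,(k+1)^{\alpha-1}\,(p-k+1)^{\alpha-1},
\]
uniformly in $k$, where $\simeq$ hides constants that depend only on $\alpha$.

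Finally I would optimize the right-hand side over $k$. Writing $g(k)=(k+1)^{\alpha-1}(p-k+1)^{\alpha-1}$ and noting that $\alpha-1\le 0$, we have $\log g(k) = (\alpha-1)\log\!\big((k+1)(p-k+1)\big)$, so $g$ is largest where $(k+1)(p-k+1)$ is smallest --- namely at the endpoints $k\in\{0,p\}$, giving $g\simeq p^{\alpha-1}$ --- and smallest at the center $k=\lfloor p/2\rfloor$, giving $g\simeq p^{2\alpha-2}$. Multiplying by $p^{1-2\alpha}$ yields $\max_k f(k)\simeq p^{-\alpha}$ and $\min_k f(k)\simeq p^{-1}$, which are exactly the two endpoints of the desired range; taking products over the two coordinates completes the proof. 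The case $\alpha=1$ is the uniform distribution $f(k)=1/(p+1)$ and is consistent with both bounds.

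The main obstacle is making the Gamma-ratio estimates genuinely uniform over all $k$ and $p$ with constants independent of $k$ and $p$, in particular handling the boundary indices $k=0$ and $k=p$ (where one factor degenerates to $\Gamma(\alpha)/\Gamma(1)=\Gamma(\alpha)$) and the shift $2\alpha$ possibly exceeding $1$; this is where I would invoke the precise form of Gautschi/Wendel rather than the purely asymptotic $\Gamma(x+a)/\Gamma(x+b)\sim x^{a-b}$.
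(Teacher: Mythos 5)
Your proposal is correct, and while it runs on the same computational engine as the paper's proof (rewriting the beta-binomial mass function as a product of Gamma ratios and controlling each ratio by Gautschi's inequality), it organizes the argument differently. The paper first asserts a qualitative shape property of the symmetric beta-binomial PMF --- that it is minimized at the central index $p/2$ and maximized at the endpoints $0$ and $p$ --- and then only needs to estimate $f$ at those two extremal indices; that monotonicity claim is stated without proof (it does hold for $\alpha\le 1$, as one can check from the ratio $f(k+1)/f(k)$ computed later in the paper, but the lemma's proof does not verify it). You instead establish the uniform two-sided estimate $f(k)\simeq p^{1-2\alpha}(k+1)^{\alpha-1}(p-k+1)^{\alpha-1}$ for every $k$, and then extremize the explicit elementary function $(k+1)^{\alpha-1}(p-k+1)^{\alpha-1}$, whose maximum and minimum locations are transparent since $(k+1)(p-k+1)$ is a concave parabola in $k$. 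This buys you a self-contained argument that does not rely on the unproved U-shape of the PMF, at the modest cost of needing the Gamma-ratio bounds to hold uniformly in $k$ including the boundary indices $k\in\{0,p\}$ --- a point you correctly flag and which does work out, since $\Gamma(m+\alpha)/\Gamma(m+1)$ at $m=0$ equals $\Gamma(\alpha)$, which is sandwiched between $\alpha$-dependent multiples of $(m+1)^{\alpha-1}=1$. Both routes deliver the same tight endpoint rates $f(0)\simeq p^{-\alpha}$ and $f(\lfloor p/2\rfloor)\simeq p^{-1}$, and the product structure over the two coordinates finishes the proof identically.
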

	
	These bounds are tight. As we will show in the proof, at the four corners, $\bb P[X = (0,p)] = \bb P[X = (p,0)] = \bb P[X = (0,0)] = \bb P[X = (p,p)] \simeq p^{-2\alpha}$, while at the center of the grid we have $\bb P[X = (\lceil p/2\rceil, \lceil p/2\rceil)] \simeq p^{-2}$. With this lemma, we now give another result that characterizes the rate of $q_{\delta}$, the probability that a randomly sampled route belongs to the neighborhood of another independently sampled route. In particular, we consider a neighborhood $\delta^{\textrm{od}}(\cdot)$ that includes all historical routes whose origins and destinations are close to those of the predicting route respectively. We define $x_1(y_p)$ and $x_2(y_p)$ as the origin and destination of route $y_p$. Construct $\delta^{\textrm{od}}(y_p) = \{y\in \mathcal{Y}_p: \|x_1(y), x_1(y_p)\|_1\le c,  \|x_2(y), x_2(y_p)\|_1\le c\}$ for some fixed constant $c>0$ that does \emph{not} depend on $p$.
	
	\begin{proposition}
		\label{prop:q_delta_od}
		Consider a route neighborhood $\delta^{\textrm{od}}(\cdot)$ that includes routes with similar origin and destination as those of the predicting route,
		\be
		q_{\delta^{\textrm{od}}} = \bb P_{Y_p^\prime\sim\mu_p, Y_p\sim\mu_p}[Y_p^\prime \in \delta^{\textrm{od}}(Y_p)] =\sum_{y\in\mathcal{Y}_p} q_{\delta^{\textrm{od}}}(y)\cdot\mathbb P[Y_p = y] \simeq \begin{cases} p^{-4}, &  1/2 < \alpha\le1, \\ p^{-8\alpha}, & 0<\alpha\le1/2. \end{cases}
		\ee
	\end{proposition}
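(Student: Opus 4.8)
The plan is to exploit independence to reduce the two-dimensional, two-route collision probability to a one-dimensional collision probability, and then to estimate the latter via the asymptotics of the symmetric beta-binomial distribution. Since the origin and destination of a route are sampled independently, and the two routes $Y_p, Y_p'$ are sampled independently, the four points $x_1(Y_p), x_2(Y_p), x_1(Y_p'), x_2(Y_p')$ are mutually independent, and each point has independent coordinates. Membership $Y_p'\in\delta^{\textrm{od}}(Y_p)$ depends only on origins and destinations, so the event factors as $\{\|x_1(Y_p)-x_1(Y_p')\|_1\le c\}\cap\{\|x_2(Y_p)-x_2(Y_p')\|_1\le c\}$, two independent events of equal probability. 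Writing $P_c := \bb P[\|X-X'\|_1\le c]$ for two i.i.d.\ vertices $X,X'$, this gives $q_{\delta^{\textrm{od}}} = P_c^2$. First I would split $P_c$ by coordinates: with $D_i := |X^{(i)} - X'^{(i)}|$ for $i=1,2$ independent, $P_c = \sum_{d_1+d_2\le c}\bb P[D_1=d_1]\bb P[D_2 = d_2]$. Since $c$ is a fixed constant this is a finite sum, and I would show each single-coordinate near-collision probability $\bb P[D_i = d]$ (for fixed $d$) has the same order as the exact collision probability $C_1 := \bb P[K = K'] = \sum_{k=0}^p \bb P[K=k]^2$, where $K$ is a single symmetric beta-binomial coordinate. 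This yields $P_c \simeq C_1^2$ and hence $q_{\delta^{\textrm{od}}} \simeq C_1^4$.

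The core of the argument is then to determine the order of $C_1$. I would use the Gamma-ratio asymptotic $\Gamma(p+a)/\Gamma(p+b)\simeq p^{a-b}$ to obtain two complementary estimates of the pointwise mass $\bb P[K=k]$. Near the edges, for $1\le k \ll p$, a direct expansion gives $\bb P[K=k]\simeq k^{\alpha-1}p^{-\alpha}$ (and symmetrically near $k=p$), which in particular recovers $\bb P[K=0]\simeq p^{-\alpha}$, consistent with \Cref{lm:NodeLowerBound}. In the bulk, $K/p$ converges to a $\Beta(\alpha,\alpha)$ law, so $\bb P[K=k]\simeq p^{-1}f_\alpha(k/p)$ where $f_\alpha(t) = t^{\alpha-1}(1-t)^{\alpha-1}/B(\alpha,\alpha)$; the two estimates agree in the overlap region since $f_\alpha(k/p)\simeq (k/p)^{\alpha-1}$ for small $k/p$. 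I would then split the collision sum $C_1$ into an edge part and a bulk part accordingly.

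Whether the edge or the bulk dominates is decided by the convergence of $\int_0^1 t^{2\alpha-2}(1-t)^{2\alpha-2}\,dt$, i.e.\ by whether $2\alpha-2 > -1$. When $\alpha > 1/2$ this integral converges, the bulk sum is $\simeq p^{-1}\int f_\alpha^2$, and the edge sum $p^{-2\alpha}\sum_{k\le \epsilon p} k^{2\alpha-2}\simeq p^{-2\alpha}\cdot p^{2\alpha-1} = p^{-1}$ matches it, so $C_1 \simeq p^{-1}$ and $q_{\delta^{\textrm{od}}}\simeq p^{-4}$. When $\alpha < 1/2$ the series $\sum_{k\ge1} k^{2\alpha-2}$ converges, so the edge part is $\simeq p^{-2\alpha}$, which dominates the bulk contribution $p^{-1}$ (as $2\alpha < 1$); hence $C_1 \simeq p^{-2\alpha}$ and $q_{\delta^{\textrm{od}}} \simeq p^{-8\alpha}$. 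This gives exactly the two stated cases. At the single boundary value $\alpha = 1/2$ the edge sum contributes an extra factor $\sum_{k\le\epsilon p}k^{-1}\simeq \log p$, so the true order is $p^{-4}(\log p)^4$; this only strengthens the lower bound $q_{\delta^{\textrm{od}}}\gtrsim p^{-8\alpha}$ actually used in \Cref{cor:grid}, so I would simply note it.

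The main obstacle is the uniform asymptotic control of $\bb P[K=k]$ across the whole range of $k$, and in particular rigorously stitching together the edge estimate $k^{\alpha-1}p^{-\alpha}$ and the bulk estimate $p^{-1}f_\alpha(k/p)$ through the transition region, so that the edge and bulk pieces of the collision sum can be bounded above and below by the stated orders. A secondary technical point is verifying that shifting one coordinate by a fixed $d$ does not change the leading order of the single-coordinate collision probability (so that $P_c\simeq C_1^2$); this follows because $f_\alpha(t)f_\alpha(t+d/p)\simeq f_\alpha(t)^2$ in the bulk and the edge terms with $k,d=\mathcal{O}(1)$ remain of order $p^{-2\alpha}$, but it must be checked in both regimes.
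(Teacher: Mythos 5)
Your proposal is correct and follows the same skeleton as the paper's proof: both reduce $q_{\delta^{\textrm{od}}}$ to the fourth power of the one-dimensional self-collision probability $\sum_k \bb P[K=k]^2$ of a single beta-binomial coordinate (the paper via the factorization in \eqref{eq:quad}, you via $q_{\delta^{\textrm{od}}}=P_c^2$ and $P_c\simeq C_1^2$), and both resolve the dichotomy in $\alpha$ by whether $\sum_k k^{2\alpha-2}$ converges. The one genuine difference is how the pointwise mass asymptotics are obtained: you propose separate edge ($k^{\alpha-1}p^{-\alpha}$) and bulk ($p^{-1}f_\alpha(k/p)$ via a local limit) estimates and identify stitching them as your main obstacle, whereas the paper sidesteps that entirely by telescoping the consecutive-ratio recursion into the single uniform estimate $\bb P[K=j]\simeq p^{-\alpha}(j+1)^{\alpha-1}$ for all $j\le p/2-1$ (\Cref{lm:useful_rates}), invoking symmetry for the other half; adopting that device would remove your main technical risk. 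Your treatment of the fixed offset $c$ (showing $\bb P[D=d]\simeq\bb P[D=0]$ for each fixed $d$) is actually more explicit than the paper's, which simply asserts $q_{\delta^{\textrm{od}}(y_p)}\simeq q_{\delta^{\textrm{od}\ast}(y_p)}$ for the exact-match neighborhood. Finally, your observation that at $\alpha=1/2$ the harmonic edge sum contributes an extra $(\log p)^4$ is correct: the paper's proof only carries out the divergent branch for $\alpha>1/2$, so the stated rate $p^{-8\alpha}$ at exactly $\alpha=1/2$ elides this polylogarithmic factor; as you note, this does not affect the downstream conclusion of \Cref{cor:grid}.
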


	Under route distribution $\mu_p$ specified in \Cref{fig:grid2} , we give bounds on the probability that a road segment $s$ is traversed. %
	
	\begin{proposition}\label{prop:seg_accum}
		Let $Y_p \sim \mu_p$. %
		For any road segment $s\in\mathcal{S}_p$ in the grid,
		\be
		p^{-1-\alpha} \lesssim q_s = \bb P[s\in Y_p] \lesssim p^{-\alpha}.
		\ee
	\end{proposition}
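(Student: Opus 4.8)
The plan is to reduce $q_s$ to a product of one-dimensional quantities by exploiting the independence of the four coordinates of the origin and destination together with the rigid L-shaped structure of the sampled routes. Write $\pi(k) = \binom{p}{k}B(\alpha+k,\alpha+p-k)/B(\alpha,\alpha)$ for the common marginal pmf of a single coordinate and $F(a)=\sum_{k\le a}\pi(k)$ for its CDF; under $\mu_p$ the four coordinates $i_1,j_1,i_2,j_2$ of $x_1=(i_1,j_1)$ and $x_2=(i_2,j_2)$ are i.i.d.\ with pmf $\pi$. The first step is the combinatorial observation that, conditional on $x_1,x_2$, every route minimizing both length and turns is a monotone L-shaped path: when both coordinates differ there are exactly two such paths (horizontal-then-vertical and vertical-then-horizontal), each chosen with probability $1/2$, and when $x_1,x_2$ are collinear there is a single straight path. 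Fix a directed horizontal segment $s$ crossing the column gap $(a,a+1)$ at row $b$, oriented eastbound. Then $s$ lies on the route iff (i) the trip's horizontal extent covers that gap in the correct direction, $i_1\le a < a+1\le i_2$, and (ii) the horizontal leg of the realized L-path sits at row $b$. Conditional on (i), the probability of (ii) is $\tfrac12\big(\1\{j_1=b\}+\1\{j_2=b\}\big)$, which correctly equals $1$ for a straight route ($j_1=j_2=b$) and $\tfrac12$ when exactly one endpoint lies in row $b$.

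Because $(i_1,i_2)$ is independent of $(j_1,j_2)$, taking expectations factorizes: the horizontal part contributes $\bb P[i_1\le a]\,\bb P[i_2\ge a+1]=F(a)(1-F(a))$ and the vertical part contributes $\tfrac12(\pi(b)+\pi(b))=\pi(b)$, so $q_s = F(a)(1-F(a))\,\pi(b)$; by the symmetry between the two coordinates, a vertical segment crossing the row gap $(b,b+1)$ at column $a$ satisfies $q_s = F(b)(1-F(b))\,\pi(a)$, and the reversed (westbound/southbound) orientations give identical values. The degenerate event $x_1=x_2$ never puts weight on any segment and has probability $\big(\sum_k\pi(k)^2\big)^2 = O(p^{-2\alpha})$, so whether or not one conditions on $x_1\neq x_2$ only rescales $q_s$ by $1+o(1)$ and does not affect the stated rates.

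It then remains to bound the two factors. For $\pi$ I would invoke \Cref{lm:NodeLowerBound}: since $\bb P[X=(i,j)]=\pi(i)\pi(j)$ is extremized when both coordinates are, $(\max_k\pi(k))^2=\max_{i,j}\bb P[X=(i,j)]\lesssim p^{-2\alpha}$ and $(\min_k\pi(k))^2=\min_{i,j}\bb P[X=(i,j)]\gtrsim p^{-2}$, which yields the uniform bound $p^{-1}\lesssim\pi(k)\lesssim p^{-\alpha}$ for all $k$. For $F(a)(1-F(a))$ the upper bound is the trivial $\le 1/4$. For the lower bound I would split cases: if $F(a)\le 1/2$ then $F(a)(1-F(a))\ge \tfrac12 F(a)\ge\tfrac12\pi(0)$ (monotonicity of $F$), while if $F(a)>1/2$ then $F(a)(1-F(a))\ge \tfrac12(1-F(a))\ge \tfrac12\pi(p)$ (using $1-F(a)=\bb P[i\ge a+1]\ge\pi(p)$ for $a\le p-1$); by the symmetry $\pi(0)=\pi(p)\simeq p^{-\alpha}$, both cases give $F(a)(1-F(a))\gtrsim p^{-\alpha}$.

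Combining these ranges in the product formula gives the result: $q_s \le \tfrac14\cdot O(p^{-\alpha})=O(p^{-\alpha})$ and $q_s \ge \Omega(p^{-\alpha})\cdot\Omega(p^{-1})=\Omega(p^{-1-\alpha})$, i.e.\ $p^{-1-\alpha}\lesssim q_s\lesssim p^{-\alpha}$ uniformly over segments. The main obstacle is the first step: getting the bookkeeping exactly right for directed edges and for the two-versus-one count of minimum-turn L-paths, so that the coverage probability splits into a purely horizontal factor and a purely vertical factor. Once the clean product formula $q_s=F(a)(1-F(a))\pi(b)$ is in hand, the remaining bounds follow routinely from \Cref{lm:NodeLowerBound} and elementary monotonicity of $F$.
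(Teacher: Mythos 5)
Your proof is correct and follows essentially the same route as the paper's: both reduce $q_s$ to the product of a horizontal-extent factor $F(a)(1-F(a))$ and a row-marginal factor $\pi(b)$ via the independence of the four endpoint coordinates and the two-L-path structure, and then bound each factor using \Cref{lm:NodeLowerBound}. The only cosmetic difference is that you bound $F(a)(1-F(a))\gtrsim p^{-\alpha}$ by a direct case split on $F(a)\lessgtr 1/2$, whereas the paper identifies the extremal segment positions by a monotonicity argument; both are valid.
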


	\Cref{fig:grid3} illustrates the result whose proof is provided in the supplement. For segments with horizontal orientation, the segments that accumulate the most amount of data are at the center of the upper and lower boundaries. On the other hand, the segments that accumulate the least amount of data lie at the center of the left and right boundaries. The data accumulation rates on segments with vertical orientation can be obtained by symmetry. %
	
	\begin{figure}[h]
		\centering
		\scalebox{0.7}{\begin{tikzpicture}
\draw[step=1cm,gray,very thin] (0,0) grid (6,6);
\draw node [right, blue, above] at (3.1,2.9) {\Large $p^{-1}$};
\draw[thick, ->,dashed, blue]   (0.6,3.2) -- (2.6,3.2);
\draw[thick, ->,dashed, blue]   (5.4,3.2) -- (3.4,3.2);
\draw[thick, ->,dashed, blue]   (5.4,6.2) -- (3.4,6.2);
\draw[thick, ->,dashed, blue]   (5.4,-0.35) -- (3.4,-0.35);
\draw[thick, ->,dashed, blue]   (0.6,-0.35) -- (2.6,-0.35);
\draw[thick, ->,dashed, blue]   (0.6,6.2) -- (2.6,6.2);
\draw[thick, ->,dashed, blue]   (3.2,3.6) -- (3.2,5.8);
\draw[thick, ->,dashed, blue]   (3.2,2.8) -- (3.2,0.2);
\draw[thick, ->,dashed, blue]   (-0.2,3.6) -- (-0.2,5.8);
\draw[thick, ->,dashed, blue]   (-0.2,2.8) -- (-0.2,0.2);
\draw[thick, ->,dashed, blue]   (6.2,3.6) -- (6.2,5.8);
\draw[thick, ->,dashed, blue]   (6.2,2.8) -- (6.2,0.2);

\draw node [right, blue, above] at (-0.1,2.9) {\Large $p^{-1-\alpha}$};
\draw node [right, blue, above] at (6.1,2.9) {\Large $p^{-1-\alpha}$};
\draw node [right, blue, above] at (3.1,5.9) {\Large $p^{-\alpha}$};
\draw node [right, blue, above] at (3.1,-0.7) {\Large $p^{-\alpha}$};
\draw node [right, blue, above] at (-0.1,5.9) {\Large $p^{-2\alpha}$};
\draw node [right, blue, above] at (6.1,5.9) {\Large $p^{-2\alpha}$};
\draw node [right, blue, above] at (6.1,-0.7) {\Large $p^{-2\alpha}$};
\draw node [right, blue, above] at (-0.1,-0.7) {\Large $p^{-2\alpha}$};
\end{tikzpicture}}
		\caption{Data accumulation rates of traversals on segments with horizontal movements in a grid. The dashed arrows represent directions toward which the data accumulation rates over the segment increase.}
		\label{fig:grid3}
	\end{figure}
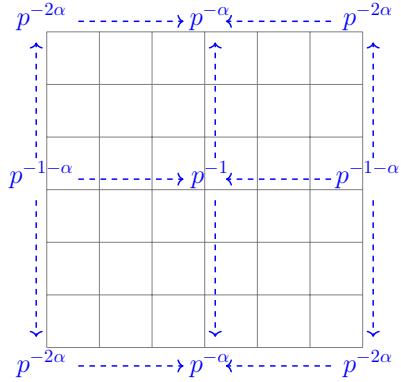

	\section{Proofs} \label{appx:proofs}
	
	\begin{proof}[Proof of  \Cref{thm:optimal_estimator}]
		In the proof, we use a notation that is common in multivariate statistics. When we want to denote that a multivariate normal random vector $Y$ has mean $\nu$ and covariance $\Sigma$, we will write
		\be
		Y \sim \mathcal{N}(\nu,\Sigma).
		\ee

  Recall that $\mathcal{E}\in\mathbb{R}^{M\times 1}$ is the vector of error terms of all segment travel times. Specifically, $\mathcal{E}_i$ is the error term of $Z_i$.
  With this notation, the entire data vector $Z$ satisfies
		\be
		Z \mid \theta \sim \mathcal{N}(U \theta,\Phi).
		\ee
		
		Let $b=\mu e$ and $B=\diag(\tau^2e)$. %
		\be
		\theta \sim \mathcal{N}(b,B).
		\ee
		
		First, notice that \emph{marginal of $\theta$}, we have
		\be
		Z \sim \mathcal{N}(Ub,UBU^\intercal+\Phi).
		\ee
		This is because $\mathbb E[Z] = \mathbb E[\mathbb E[Z \:|\:\theta]] = U \mathbb E[\theta] = U b$ and $\var[Z] = \mathbb \var[\mathbb E[Z \:|\: \theta]] + \mathbb E[\var[Z \:|\: \theta]]=UBU^\intercal+\Phi$. This means that jointly
		
		\be
		\begin{pmatrix} Z \\ \theta \end{pmatrix} \sim \mathcal{N}\left( \begin{pmatrix} Ub \\ b  \end{pmatrix}, \begin{pmatrix} UBU^\intercal + \Phi & A \\ A^\intercal & B \end{pmatrix} \right),
		\ee
		for some co-variance matrix $A$. This is because $(\mathcal{E},\theta)$ are jointly Gaussian, $(U\theta + \mathcal{E},\theta)^\top$, which is a linear transformation of $(\mathcal{E},\theta)^\top$, must also be jointly Gaussian.  Recall that, $Z=U\theta + \mathcal{E}$. Then, we have
  \begin{align}
  \notag
A=\textrm{cov}(Z,\theta)=\textrm{cov}(U\theta+\mathcal{E},\theta)=\textrm{cov}(U\theta,\theta)=UB,
  \end{align}
where the second to the last equality holds by the independence of $\theta$ and $\mathcal{E}$.

We now know the entire covariance structure for the joint distribution of $(Z,\theta)$. Then, the conditional distribution density $f(\theta\mid Z)$ should also be Gaussian. We have
\begin{align}
\notag
   &\log f(\theta\mid Z) \\[1mm] \notag
   =& \log \frac{f(Z\mid \theta) f(\theta)}{f(Z)} \\ \notag
   =& \log f(Z\mid \theta) + \log f(\theta) + c\\ \label{eq:expanding_1}
   =& -\frac{1}{2}(Z-U\theta)^\top \Phi^{-1}(Z-U\theta) -\frac{1}{2} (\theta-b)^\top B^{-1}(\theta-b) + c - \log\left(\sqrt{(2\pi)^M|\Phi|}\right) \\ \label{eq:expanding_2}
   =& -\frac{1}{2}(\theta - \mu_{\theta\mid Z})^\top \Sigma_{\theta|Z}^{-1} (\theta - \mu_{\theta\mid Z}) + c',
\end{align}
where $c$ and $c'$ are some constants, and in \cref{eq:expanding_2} we utilize the fact that the conditional distribution density must be Gaussian, and $\mu_{\theta\mid Z}$ is the conditional expectation and $\Sigma_{\theta|Z}$ is the conditional covariance matrix.

Expanding \cref{eq:expanding_1}, the terms related to $\theta$ read
\begin{align*}
-\frac{1}{2} \left(\theta^\top (U^\top \Phi^{-1}U + B^{-1})\theta - (Z^\top \Phi^{-1}U + b^\top B^{-1})\theta - \theta^\top (U^\top \Phi^{-1}Z + B^{-1}b)  \right).
\end{align*}
Similarly, expanding \cref{eq:expanding_2}, the terms related to $\theta$ read
\begin{align*}
-\frac{1}{2}\left(\theta^\top \Sigma_{\theta|Z}^{-1}\theta - (\mu_{\theta|Z}^\top \Sigma_{\theta|Z}^{-1})\theta - \theta^\top (\Sigma_{\theta|Z}^{-1}\mu_{\theta|Z})  \right).
\end{align*}

To make them equal, we must have
\begin{align*}
\Sigma_{\theta|Z}^{-1} = U^\top\Phi^{-1}U+B^{-1}, \quad \mu_{\theta \mid Z} =     \Sigma_{\theta|Z}(U^\top\Phi^{-1}Z + B^{-1}b).
\end{align*}
For brevity of the notation, we let $Q:=\Sigma_{\theta|Z}^{-1}$.

Because the estimator that minimizes the integrated risk based on squared error is the posterior mean \citep{berger2013statistical_sm},
		\be
		\hat{\Theta}^\ast_y = \mathbb E\left[\sum_{s\in y}\theta_s \:\Bigg|\: Z\right] = e_y^\intercal Q^{-1}(U^\intercal\Phi^{-1}Z+B^{-1}b).
		\ee

		\smallskip
		
		We now compute the integrated risk based on the squared error of the optimal estimator $\hat{\Theta}_y^\ast$. Recall that $E_y = e_y e_y^\intercal$. The integrated risk of $\hat{\Theta}_y^\ast$ conditional on $\theta$ is
		\be
		&\quad~\bb E\bigg[ \Big\| e_y^\intercal\theta - \hat{\Theta}_y^\ast \Big\|^2 \:\bigg|\: \theta \bigg] \\[1mm]
		&=\bb E\bigg[ \Big\| e_y^\intercal\theta - e_y^\intercal Q^{-1}\left(U^\intercal\Phi^{-1}Z+B^{-1}b\right) \Big\|^2 \:\bigg|\: \theta\bigg] \\[1mm]
		&= \bb E \bigg[\Big\| e_y^\intercal\left(\theta - Q^{-1} B^{-1} b\right) - e^\intercal_y Q^{-1} U^\intercal\Phi^{-1}Z\Big\|^2 \:\bigg|\: \theta\bigg] \\[1mm]
		&= \bb E\bigg[ \left(\theta - Q^{-1}B^{-1}b\right)^\intercal E_y\left(\theta - Q^{-1}B^{-1}b\right) -  2\left(\theta - Q^{-1}B^{-1}b\right)^\intercal E_yQ^{-1}U^\intercal\Phi^{-1}Z \\
		&\qquad~+ Z^\intercal\Phi^{-1}UQ^{-1}E_y Q^{-1}U^\intercal\Phi^{-1}Z \:\bigg|\: \theta \bigg]\\[1mm]
		&= \left(\theta - Q^{-1}B^{-1}b\right)^\intercal E_y\left(\theta - Q^{-1}B^{-1}b\right) - 2\left(\theta - Q^{-1}B^{-1}b\right)^\intercal E_yQ^{-1}U^\intercal\Phi^{-1} U\theta  \\[1mm]
		&\quad~ + \mathrm{tr}\left(\Phi^{-1} U Q^{-1}E_yQ^{-1} U^\intercal \Phi^{-1} \Phi\right) + \theta^\intercal U^\intercal \Phi^{-1} U Q^{-1}E_y Q^{-1} U^\intercal \Phi^{-1} U\theta \\[2mm]
		&= \theta^\intercal\left(E_y + U^\intercal\Phi^{-1}UQ^{-1}E_y Q^{-1}U^\intercal\Phi^{-1}U - 2 E_yQ^{-1}U^\intercal\Phi^{-1}U\right)\theta + 2 b^\intercal B^{-1}Q^{-1}E_yQ^{-1}U^\intercal\Phi^{-1}U\theta\\[1mm]
		&\quad~ +\mathrm{tr}\left(\Phi^{-1}UQ^{-1}E_yQ^{-1}U^\intercal\right) + b^\intercal B^{-1}Q^{-1}E_y Q^{-1}B^{-1}b - 2b^\intercal B^{-1}Q^{-1}E_y\theta.
		\ee

		Now we take an outer expectation over $\theta$ with respect to its prior to get the integrated risk,
		\be
		&\bb E\bigg[\bb E\bigg[ \Big\| e_y^\intercal\theta - \hat{\Theta}_y^\ast \Big\|^2 \:\bigg|\:\theta\bigg]\bigg] \\[1mm]
		=&\mathrm{tr}\left(B\left(E_y + U^\intercal\Phi^{-1}UQ^{-1}E_y Q^{-1} U^\intercal\Phi^{-1}U - 2 E_y Q^{-1}U^\intercal\Phi^{-1}U\right)\right) \\[1mm]
		&+ b^\intercal\left(E_y + U^\intercal\Phi^{-1}UQ^{-1}E_y Q^{-1} U^\intercal\Phi^{-1}U - 2 E_y Q^{-1}U^\intercal\Phi^{-1}U\right)b \\[1mm]
		&+ 2 b^\intercal B^{-1}Q^{-1}E_y Q^{-1}U^\intercal\Phi^{-1}Ub +\mathrm{tr}\left(\Phi^{-1}UQ^{-1}E_y Q^{-1}U^\intercal\right) \\[1mm]
		&+ b^\intercal B^{-1}Q^{-1}E_y Q^{-1}B^{-1}b - 2b^\intercal B^{-1}Q^{-1}E_yb\\
		=&\mathrm{tr}\left(B\left(E_y + U^\intercal\Phi^{-1}UQ^{-1}E_y Q^{-1} U^\intercal\Phi^{-1}U - 2 E_y Q^{-1}U^\intercal\Phi^{-1}U\right)\right) +\mathrm{tr}\left(\Phi^{-1}UQ^{-1}E_y Q^{-1}U^\intercal\right).
		\ee
		This completes the proof.
	\end{proof}
	
	\medskip
	
	\begin{proof}[Proof of \Cref{cor:independent_simple_form}]
		When $\sigma_{s,t}=0$,
		\be
		Q =&U^\intercal \Phi^{-1} U + \diag\left(\left(1/\tau^2\right)e\right) \\
		=&\diag\left( \left[N_s/\sigma_s^2\right]_{s\in\mathcal{S}}\right) + \diag\left(\left(1/\tau^2\right)e\right) \\
		=&\diag\left(\left[N_s/\sigma_s^2 + 1/\tau^2\right]_{s\in\mathcal{S}}\right).
		\ee
		
		Moreover,
		\be
		U^\intercal \Phi^{-1}z = \diag\Bigg( \left[\frac{\sum_{n:s\in y_n} T^\prime_{n,s}}{\sigma_s^2} \right]_{s\in\mathcal{S}}\Bigg).
		\ee
		
		This gives,
		\be
		\hat{\Theta}_y^\ast =& e^\intercal_y Q^{-1}\left(U^\intercal\Phi^{-1}z+\left(\mu/\tau^2\right)e\right)  \\[3mm]
		=&e_y^\intercal \diag\Bigg(\Big[\left(N_s/\sigma_s^2 + 1/\tau^2\right)^{-1}\Big]_{s\in\mathcal{S}}\Bigg)\diag\Bigg( \left[\sum_{n:s\in y_n} T^\prime_{n,s}/\sigma_s^2 + \mu/\tau^2 \right]_{s\in\mathcal{S}} \Bigg)\\
		=&e_y^\intercal \diag\Bigg( \Bigg[ \frac{\sum_{n:s\in y_n} T^\prime_{n,s}/\sigma_s^2 + \mu/\tau^2}{N_s/\sigma_s^2 + 1/\tau^2} \Bigg]_{s\in\mathcal{S}} \Bigg) \\
		=&\sum_{s\in y}\left( \frac{\sigma_s^2}{N_s\tau^2 + \sigma_s^2}\cdot\mu + \frac{N_s\tau^2}{N_s\tau^2 + \sigma_s^2}\cdot\frac{\sum_{n: s\in y_n} T_{n,s}^\prime}{N_s}\right).
		\ee
		This completes the proof.
	\end{proof}
	
	\medskip
	
	\begin{proof}[Proof of \Cref{prop:diff}]
		For the generalized segment-based estimator $\ThetasegG$,
		\be
		~&\mathbb{E}\left[\left(\ThetasegG - \sum_{s\in y}\theta_s\right)^2\,\middle\vert\,  y_{[N]}\right] \\
		=& \mathbb{E}\left[\; \mathbb{E}\left[\left(\ThetasegG-\sum_{s\in y}\theta_s\right)^2\,\middle\vert\, \{\theta_s\}_{s\in y},  y_{[N]}\right] \;\middle\vert\,  y_{[N]}\right]\\
		=& \mathbb{E}\left[  \var\left(\ThetasegG \,\middle\vert\,  \{\theta_s\}_{s\in y}, y_{[N]}  \right) + \Bias^2\left(\ThetasegG\Big| \{\theta_s\}_{s\in y},y_{[N]} \right)\,\middle\vert\,  y_{[N]}\right] \\
		=& \mathbb{E}\left[\var\left(\sum_{S\in \mathscr{S}_y}\phi_S(N_S)\frac{\sum_{n: S\subset y_n} \sum_{s\in S} T'_{n,s}}{N_S} \,\middle\vert\,  \{\theta_s\}_{s\in y}, y_{[N]} \right)  + \left(\sum_{S\in \mathscr{S}_y} (1-\phi_S(N_S))\left(|S|\mu - \sum_{s\in S}\theta_s\right) \right)^2 \,\middle\vert\, y_{[N]} \right] \\
		=&\mathbb{E}\left[\sum_{S,T\in\mathscr{S}_y}\frac{N_{S\cup T}}{N_S N_T}\phi_S(N_S)\phi_T(N_T)\left(\sum_{s\in S, t\in T} \sigma_{s,t}\right) + \left(\sum_{S\in \mathscr{S}_y} (1-\phi_S(N_S))\left(|S|\mu - \sum_{s\in S}\theta_s\right) \right)^2 \,\middle\vert\, y_{[N]} \right] \\
		=&\sum_{S,T\in\mathscr{S}_y}\frac{N_{S\cup T}}{N_S N_T}\phi_S(N_S)\phi_T(N_T)\left(\sum_{s\in S, t\in T} \sigma_{s,t}\right) + \mathbb{E}\left[\left(\sum_{S\in\mathscr{S}_y} (1-\phi_S(N_S))\left(|S|\mu - \sum_{s\in S}\theta_s\right) \right)^2 \,\middle\vert\, y_{[N]} \right] \\
		=&\sum_{S,T\in\mathscr{S}_y}\frac{N_{S\cup T}}{N_S N_T}\phi_S(N_S)\phi_T(N_T)\left(\sum_{s\in S, t\in T} \sigma_{s,t}\right) + \var\left(\sum_{S\in \mathscr{S}_y} (1-\phi_S(N_S))\left(|S|\mu - \sum_{s\in S}\theta_s\right) \,\middle\vert\, y_{[N]} \right) \\
		=&\sum_{S,T\in\mathscr{S}_y}\frac{N_{S\cup T}}{N_S N_T}\phi_S(N_S)\phi_{T}(N_T)\left(\sum_{s\in S, t\in T} \sigma_{s,t}\right)  + \sum_{S\in \mathscr{S}_y}(1-\phi_S(N_S))^2|S|\tau^2.
		\ee
		
		Similarly, for the route-based estimator $\Thetatrp$, to simplify the notation, we define $\Delta_{\delta(y)} = (\sum_{n:y_n\in\delta(y)}\sum_{s\in y_n}\theta_s - M_{\delta(y)}\sum_{s\in y}\theta_s ) /M_{\delta(y)}$. We have the following risk calculation. 
		\be
		~&\mathbb{E}\left[\left(\Thetatrp-\sum_{s\in y}\theta_s\right)^2\,\middle\vert\, y_{[N]}\right]\\
		=& \mathbb{E}\left[  \var\left(\Thetatrp \,\middle\vert\,  \{\theta_s\}_{s\in y}, y_{[N]}\right) + \Bias^2\left(\Thetatrp\,\middle\vert\,  \{ \{\theta_s\}_{s\in y}, y_{[N]}\}\right)\,\middle\vert\, y_{[N]}\right] \\
		=&\mathbb{E}\left[ \var\left(\phi_{\delta(y)}(M_{\delta(y)})\frac{\sum_{n : y_n\in\delta(y)} \sum_{s\in y} T'_{n,s}}{M_{\delta(y)}} \,\middle\vert\,  \{\theta_s\}_{s\in y}, y_{[N]} \right)\right. \\
		&+ \left.\left(\phi_{\delta(y)}(M_{\delta(y)}) \Delta_{\delta(y)} + \sum_{s\in y} (1-\phi_{\delta(y)}(M_{\delta(y)}))(\mu - \theta_s) \right)^2 \,\middle\vert\, y_{[N]} \right]  \\
		=&\mathbb{E}\left[\left(\frac{\phi_{\delta(y)}(M_{\delta(y)})}{M_{\delta(y)}}\right)^2\bigg(\sum_{n: y_n\in\delta(y)}\sum_{s,t\in y_n}\sigma_{s,t}\bigg)\right.\\
		& + \left. \left(\phi_{\delta(y)}(M_{\delta(y)})\Delta_{\delta(y)} + \sum_{s\in y} (1-\phi_{\delta(y)}(M_{\delta(y)}))(\mu - \theta_s) \right)^2 \,\middle\vert\, y_{[N]} \right]  \\
		=&\left(\frac{\phi_{\delta(y)}(M_{\delta(y)})}{M_{\delta(y)}}\right)^2\bigg(\sum_{n: y_n\in\delta(y)}\sum_{s,t\in y_n}\sigma_{s,t}\bigg) + \left(\phi_y\left(M_{\delta(y)}\right)\bb E\left[\Delta_{\delta(y)}\right]\right)^2 \\
		&+ \var\left(\phi_{\delta(y)}(M_{\delta(y)})\Delta_{\delta(y)} + \sum_{s\in y}(1-\phi_{\delta(y)}(M_{\delta(y)}))(\mu - \theta_s) \,\middle\vert\, y_{[N]}\right).
		\ee
		
		We further have,
		\be
		\bb E\left[ \Delta_{\delta(y)} \right] = \frac{\left(\sum_{n: y_n\in\delta(y)}|y_n|\right) - M_{\delta(y)}|y|}{M_{\delta(y)}}\mu = (\bar{y}_{\delta(y)} - |y|)\mu,
		\ee
		
		and,
		\be
		\var&\left( \sum_{s\in y}(1-\phi_{\delta(y)}(M_{\delta(y)}))(\mu - \theta_s) + \phi_{\delta(y)}(M_{\delta(y)})\Delta_{\delta(y)} \,\middle\vert\, y_{[N]}\right) \\
		=\var&\Bigg(|y|\left(1-\phi_{y}(M_{\delta(y)})\right)\mu - \left(1-\phi_{\delta(y)}(M_{\delta(y)})\right)\left(\sum_{s\in y}\theta_s\right) \\
		&- \phi_{\delta(y)}(M_{\delta(y)})\left(\sum_{s\in y}\theta_s\right) + \phi_{\delta(y)}(M_{\delta(y)})\frac{\sum_{n:y_n\in\delta(y)}\sum_{s\in y_n}\theta_s}{M_{\delta(y)}} \Bigg) \\
		=\var&\left(-\left(\sum_{s\in y}\theta_s\right) + \phi_{\delta(y)}(M_{\delta(y)})\frac{\sum_{n:y_n\in\delta(y)}\sum_{s\in y_n}\theta_s}{M_{\delta(y)}} \right)\\
		=\var&\left(-\left(\sum_{s\in y}\theta_s\right) + \phi_{\delta(y)}(M_{\delta(y)})\left(\sum_{s\in y}\frac{N_s^{\delta(y)}}{M_{\delta(y)}}\theta_s + \sum_{s\in\mathcal{S}_{\delta(y)}\setminus y}\frac{N_s^{\delta(y)}}{M_{\delta(y)}}\theta_s \right) \right) \\
		=\sum_{s\in y}&\left(1 - \phi_{\delta(y)}(M_{\delta(y)}) \frac{N_s^{\delta(y)}}{M_{\delta(y)}} \right)^2\tau^2 + \sum_{s\in \mathcal{S}_{\delta}\setminus y}\left(\phi_{\delta(y)}(M_{\delta(y)}) \frac{N_s^{\delta(y)}}{M_{\delta(y)}}\right)^2\tau^2.
		\ee
		
		Putting this all together leads to,
		
		\be
		R\left(\Thetatrp\,\middle\vert\,  y_{[N]}\right)=&%
		\left(\frac{\phi_{\delta(y)}(M_{\delta(y)})}{M_{\delta(y)}}\right)^2 \bigg(\sum_{n:y_n\in\delta(y)}\sum_{s, t\in y_n}  \sigma_{s,t}\bigg) + \left(\phi_{\delta(y)}(M_{\delta(y)})(\bar{y}_{\delta(y)} - |y|)\mu\right)^2 \\
		&+\sum_{s\in \mathcal{S}_{\delta(y)}\setminus y}\left( \phi_{\delta(y)}(M_{\delta(y)})\frac{N_s^{\delta(y)}}{M_{\delta(y)}}\right)^2\tau^2 + \sum_{s\in y}\left(1 - \phi_{\delta(y)}(M_{\delta(y)}) \frac{N_s^{\delta(y)}}{M_{\delta(y)}} \right)^2\tau^2\\
		=&%
		\left(\frac{\phi_{\delta(y)}(M_{\delta(y)})}{M_{\delta(y)}}\right)^2 \bigg(\sum_{s, t\in \mathcal{S}_{\delta(y)}} N_{s\cup t}^{\delta(y)} \sigma_{s,t}\bigg) + \left(\phi_{\delta(y)}(M_{\delta(y)})(\bar{y}_{\delta(y)} - |y|)\mu\right)^2 \\
		&+\sum_{s\in \mathcal{S}_{\delta(y)}\setminus y}\left( \phi_{\delta(y)}(M_{\delta(y)})\frac{N_s^{\delta(y)}}{M_{\delta(y)}}\right)^2\tau^2 + \sum_{s\in y}\left(1 - \phi_{\delta(y)}(M_{\delta(y)}) \frac{N_s^{\delta(y)}}{M_{\delta(y)}} \right)^2\tau^2.
		\ee
		
		This completes the proof.
	\end{proof}
	
	\medskip
	
	\begin{proof}[Proof of \Cref{prop:optimal_formula}]
		For any route $y$, 
		\be
		\phi^\ast_{\delta(y)}(M_{\delta(y)}) = \arg\min_{\phi_y(\cdot)} \mathbb{E}\left[\left(\Thetatrp-\sum_{s\in y}\theta_s\right)^2 \,\middle\vert\,  y_{[N]} \right]
		\ee
		is well defined by checking the first-order condition of \eqref{eq:trp_risk} as $\mathbb{E}\big[\big(\Thetatrp-\sum_{s\in y}\theta_s\big)^2 \:\big|\:  y_{[N]} \big]$ is strictly convex in $\phi_{\delta(y)}(M_{\delta(y)})$,
		\be
		&\frac{2\phi_{\delta(y)}^\ast(M_{\delta(y)})}{(M_{\delta(y)})^2}\left(\sum_{y_n\in\delta(y)}\sum_{s,t\in y_n}\sigma_{s,t}\right) + 2\left(\mu\left(\bar{y}_{\delta(y)} - |y| \right)\right)^2\phi_{\delta(y)}^\ast(M_{\delta(y)}) + \sum_{s\in\mathcal{S}_{\delta(y)}\setminus  y}2\tau^2\left(\frac{N_s^{\delta(y)}}{M_{\delta(y)}}\right)^2\phi_{\delta(y)}^\ast(M_{\delta(y)}) \\
		=& \sum_{s\in y} 2\tau^2\left( 1 - \phi_{\delta(y)}^\ast(M_{\delta(y)}) \frac{N_s^{\delta(y)}}{M_{\delta(y)}}\right)\frac{N_s^{\delta(y)}}{M_{\delta(y)}}.
		\ee
		This gives the optimal route-based estimator $\Thetatrpopt$,
		\be
		\Thetatrpopt:=(1-\phi_{\delta(y)}^\ast(M_{\delta(y)}&))|y|\theta + \phi^\ast_{\delta(y)}(M_{\delta(y)})\frac{\sum_{n : y_n = y} \sum_{s\in y} T'_{n,s}}{M_{\delta(y)}}, \\ 
		\phi_{\delta(y)}^\ast(M_{\delta(y)}) =\left(\sum_{s\in y} N_s^{\delta(y)} \right)\tau^2\Bigg/ \Bigg(&\sum_{s\in \mathcal{S}_{\delta(y)}} \frac{\left(N_s^{\delta(y)}\right)^2}{M_{\delta(y)}}\tau^2 + \frac{\sum_{n: y_n\in\delta(y)}\sum_{s,t\in y_n}\sigma_{s,t}}{M_{\delta(y)}} + M_{\delta(y)}\mu^2(\bar{y}_{\delta(y)} - |y|)^2\Bigg).
		\ee
		
		It can be checked that the Hessian of the integrated risk $\mathbb{E}\big[\big(\ThetasegG - \sum_{s\in y}\theta_s\big)^2 \:\big|\:  y_{[N]} \big]$ is symmetric and positive definite (PD). Suppose $|\mathscr{S}_y| = m$. Let $S_i,\: i\in\{1,\cdots, m\}$ be the $i^{\textrm{th}}$ super-segment in $\mathscr{S}_y$.
		\be
		\Hessian&\left(\mathbb{E}\left[\left(\ThetasegG-\sum_{s\in y}\theta_s\right)^2 \,\middle\vert\,  y_{[N]} \right]\right) \\
		=&   2\cdot\begin{bmatrix} 
			\frac{1}{N_{S_1}} \left(\sum_{s\in S_1, t\in S_1}\sigma_{s,t}\right) + |S_1|\tau^2 & \dots & \frac{N_{S_i\cup S_j}}{N_{S_i}N_{S_j}}\left(\sum_{s\in S_i, t\in S_j} \sigma_{s,t} \right) & \dots\\
			\vdots & \ddots & & \dots\\
			\frac{N_{S_i\cup S_j}}{N_{S_i}N_{S_j}}\left(\sum_{s\in S_i, t\in S_j} \sigma_{s,t} \right) &        & \ddots & \dots \\
			\vdots & \dots & \dots & \frac{1}{N_{S_m}}\left( \sum_{s\in S_m, t\in S_m}\sigma_{s,t}\right)+|S_m|\tau^2
		\end{bmatrix} \\
		=&2\cdot\begin{bmatrix} 
			\frac{1}{N_{S_1}} \left(\sum_{s\in S_1, t\in S_1}\sigma_{s,t}\right)  & \dots & \frac{N_{S_i\cup S_j}}{N_{S_i}N_{S_j}}\left(\sum_{s\in S_i, t\in S_j} \sigma_{s,t} \right) & \dots\\
			\vdots & \ddots & & \dots\\
			\frac{N_{S_i\cup S_j}}{N_{S_i}N_{S_j}}\left(\sum_{s\in S_i, t\in S_j} \sigma_{s,t} \right) &        & \ddots & \dots \\
			\vdots & \dots & \dots & \frac{1}{N_{S_m}}\left( \sum_{s\in S_m, t\in S_m}\sigma_{s,t}\right)
		\end{bmatrix} \\
		&+ 2\tau^2\cdot\begin{bmatrix} 
			|S_1| &  & \\
			& \ddots & \\
			& & |S_m|
		\end{bmatrix}.
		\ee
		
		The first matrix in the last equality is positive semidefinite (PSD). To see this, note that $1/N_{S_i} \ge N_{S_i\cup S_j}/\left(N_{S_i} N_{S_j}\right), \forall i,j\in\{1,\cdots,m\}$. Such scaling of the entries in a PSD matrix (the covariance matrix of super-segment travel times is PSD) results in a PSD matrix. Moreover, the second matrix in the last equality is positive definite (PD) because it is a diagonal matrix with strictly positive entries. As the sum of PSD and PD matrices is PD, this verifies the strict convexity of the integrated risk. The first-order conditions, presented in the statement, must admit a unique solution. This completes the proof. 
	\end{proof}
	
	\medskip
	
	\begin{proof}[Proof of \Cref{thm:opt_bayes_comparison}]
		
		The proof is by construction. Consider a segment-based estimator $\Thetasegprime$ with $ \phi_s(N_s)= \phi_{\delta(y)}^\ast(M_{\delta(y)})\cdot(N_s^{\delta(y)}/M_{\delta(y)})$ where $\phi_{\delta(y)}^\ast(\cdot)$ has a closed-form 
		as indicated in \Cref{prop:optimal_formula}. For any set of historical routes $y_{[N]}$, the integrated risk of this estimator is,
		\be
		&\mathbb{E}\left[\left(\Thetasegprime-\sum_{s\in y}\theta_s\right)^2\,\middle\vert\,  y_{[N]}\right] \\
		=&\sum_{s,t\in y}\frac{N_{s\cup t}}{N_s N_t}\frac{N_s^{\delta(y)} N_t^{\delta(y)}}{M_{\delta(y)}^2}
		\phi_{\delta(y)}^\ast(M_{\delta(y)})^2\sigma_{s,t}  + \sum_{s\in y}\left(1-\phi_{\delta(y)}^\ast(M_{\delta(y)})\cdot\frac{N_s^{\delta(y)}}{M_{\delta(y)}}\right)^2\tau^2 \\
		\le&\frac{\phi_{\delta(y)}^\ast(M_{\delta(y)})^2}{M_{\delta(y)}^2}
		\bigg(\sum_{s,t\in y}N_{s\cup t}^{\delta(y)}\sigma_{s,t}\bigg)  + \sum_{s\in y}\left(1-\phi_{\delta(y)}^\ast(M_{\delta(y)})\cdot\frac{N_s^{\delta(y)}}{M_{\delta(y)}}\right)^2\tau^2 \\
		\le& \frac{\phi_{\delta(y)}^\ast(M_{\delta(y)})^2}{M_{\delta(y)}^2}\bigg(\sum_{s, t\in \mathcal{S}_{\delta(y)}} N_{s\cup t}^{\delta(y)} \sigma_{s,t}\bigg)  + \sum_{s\in y}\left(1-\phi_{\delta(y)}^\ast(M_{\delta(y)})\cdot\frac{N_s^{\delta(y)}}{M_{\delta(y)}}\right)^2\tau^2 \\
		\le&	\left(\frac{\phi^\ast_{\delta(y)}(M_{\delta(y)})}{M_{\delta(y)}}\right)^2 \bigg(\sum_{s, t\in \mathcal{S}_{\delta(y)}} N_{s\cup t}^{\delta(y)} \sigma_{s,t}\bigg) + \left(\phi^\ast_{\delta(y)}(M_{\delta(y)})(\bar{y}_{\delta(y)} - |y|)\mu\right)^2 \\
		&+\sum_{s\in \mathcal{S}_{\delta(y)}\setminus y}\left(\phi^\ast_{\delta(y)}(M_{\delta(y)}) \frac{N_s^{\delta(y)}}{M_{\delta(y)}}\right)^2\tau^2 + \sum_{s\in y}\left(1 - \phi^\ast_{\delta(y)}(M_{\delta(y)}) \frac{N_s^{\delta(y)}}{M_{\delta(y)}} \right)^2\tau^2 \\
		=& \mathbb{E}\left[\left(\Thetatrpopt-\sum_{s\in y}\theta_s\right)^2 \,\middle\vert\,   y_{[N]}\right].
		\ee
		The first inequality uses the assumption that $N_{s\cup t} N_s^{\delta(y)}N_t^{\delta(y)}\le N_{s\cup t}^{\delta(y)}N_sN_t$. The second inequality uses $\sigma_{s,t}\ge0,\:\forall s,t\in\mathcal{S}$ and $\mathcal{S}_{\delta(y)}\supset y$. The third inequality holds because the additional two terms (the second and third terms) are both non-negative. The proof is then completed by 
		\be
		\mathbb{E}\left[\left(\Thetasegopt-\sum_{s\in y}\theta_s\right)^2 \,\middle\vert\,   y_{[N]}\right]\le&\mathbb{E}\left[\left(\Thetasegprime-\sum_{s\in y}\theta_s\right)^2\,\middle\vert\,   y_{[N]}\right] \le \mathbb{E}\left[\left(\Thetatrpopt-\sum_{s\in y}\theta_s\right)^2 \,\middle\vert\,   y_{[N]}\right].
		\ee
		
	\end{proof}
	
	\medskip
	
	\begin{proof}[Proof of \Cref{cor:same_route}]
		
		We first show that $R\big(\Thetasegopt \:\big|\: y_{[N]}\big) \le R\big(\ThetasegoptG \:\big|\: y_{[N]}\big)$.
		Consider a segment-based estimator $\Thetasegprime$ with $\phi_s(N_s)=\phi_{y}^\ast(N_y),\:\forall s\in y$. Note that here $\phi_y^\ast(\cdot)$ has a closed form $\phi_{y}^\ast(N_y) =N_y|y|\tau^2\big/\big(N_y|y|\tau^2 + \sum_{s,t\in y}\sigma_{s,t}\big)$ as $\mathscr{S}_y = \{\{y\}\}$.
		For any set of historical routes $y_{[N]}$, the integrated risk of this estimator is,
		\be
		&\mathbb{E}\left[\left(\Thetasegprime-\sum_{s\in y}\theta_s\right)^2\,\middle\vert\,  y_{[N]}\right] \\
		=&\sum_{s,t\in y}\frac{N_{s\cup t}}{N_s N_t}\phi_{y}^\ast(N_y)^2\sigma_{s,t}  + \sum_{s\in y}(1-\phi_{y}^\ast(N_y))^2\tau^2 \\[2mm] \label{eq:thm1_1}
		\le& \sum_{s,t\in y}\frac{1}{N_y}\phi_{y}^\ast(N_y)^2\sigma_{s,t}  + |y|(1-\phi_{y}^\ast(N_y))^2\tau^2 \\ 
		=& \mathbb{E}\left[\left(\ThetasegoptG-\sum_{s\in y}\theta_s\right)^2 \,\middle\vert\,   y_{[N]}\right].
		\ee
		
		Inequality \eqref{eq:thm1_1} holds because $N_{y}\le N_s$ and $N_{s\cup t}\le N_t$ and the assumption that $\sigma_{s,t}\ge0$. These imply $\sigma_{s,t}N_{s\cup t}/(N_s N_t)\le \sigma_{s,t}/N_s\le \sigma_{s,t}/N_y$. The proof of the inequality is then completed by
		\be
		\mathbb{E}\left[\left(\Thetasegopt-\sum_{s\in y}\theta_s\right)^2 \,\middle\vert\,y_{[N]}\right]\le&\mathbb{E}\left[\left(\Thetasegprime-\sum_{s\in y}\theta_s\right)^2\,\middle\vert\,   y_{[N]}\right] 
		\le  \mathbb{E}\left[\left(\ThetasegoptG-\sum_{s\in y}\theta_s\right)^2 \,\middle\vert\,y_{[N]}\right].
		\ee
		
		\smallskip
		
		We then prove that $R\big(\ThetasegoptG \:|\:  y_{[N]}\big) \le R\big(\Thetatrpopt \:|\:  y_{[N]}\big)$. When $\mathscr{S}_y=\{\{y\}\}$ and $\delta(y) = \{y_n: y_n = y\}$, $\phi_y^\ast(\cdot)$ and $\phi_{\delta(y)}^\ast(\cdot)$ have the same form,
		\be
		\phi_{y}^\ast(N_y) =\frac{N_y|y|\tau^2}{N_y|y|\tau^2 + \sum_{s,t\in y}\sigma_{s,t}},\quad\phi_{\delta(y)}^\ast(M_{\delta(y)}) =\frac{M_{\delta(y)}|y|\tau^2}{M_{\delta(y)}|y|\tau^2 + \sum_{s,t\in y}\sigma_{s,t}}.
		\ee
		
		The optimal integrated risks also share the same form,
		\be
		R\left(\ThetasegoptG \,\middle\vert\,   y_{[N]}\right) =& \frac{1}{N_y} (\phi_y^\ast(N_y))^2\left(\sum_{s,t\in y} \sigma_{s,t}\right) + (1-\phi^\ast_y(N_y))^2|y|\tau^2,\\
		R\left(\Thetatrpopt \,\middle\vert\,   y_{[N]}\right) =& \frac{1}{M_{\delta(y)}} (\phi_{\delta(y)}^\ast(M_{\delta(y)}))^2\left(\sum_{s,t\in y} \sigma_{s,t}\right) + (1-\phi^\ast_{\delta(y)}(M_{\delta(y)}))^2|y|\tau^2.
		\ee
		
		Because $N_y\ge M_{\delta(y)}$ and the optimal integrated risk decreases with sample size, we thus have 
		\be
		R\left(\ThetasegoptG \,\middle\vert\,   y_{[N]}\right)\le R\left(\Thetatrpopt \,\middle\vert\,   y_{[N]}\right).
		\ee
		This completes the proof.
	\end{proof}
	
	\medskip
	
	\begin{proof}[Proof of \Cref{thm:asymptotic_general_seg_better_route}]
		Under a given road network size $p$, let indicator variable $I_s$ denote whether a road segment $s$ is traversed from a randomly sampled route $Y_p\sim \mu_p$. For the segment-based estimator, given a predicting route that covers road segments indicated by $I = \{I_s\}_{s\in\mathcal{S}_p}$, 
		\be
		R\left(\ThetasegP\:\Big|\:I \right) =& \sum_{s,t\in \mathcal{S}_p} \bb E\left[\frac{N_{s\cup t}}{N_s N_{t}}\phi_s(N_s)\phi_{t}(N_{t}) \right] I_s I_t \sigma_{s,t}  + \sum_{s\in \mathcal{S}_p}\bb E\left[(1-\phi_s(N_s))^2 \right]I_s\tau^2 \\ 
		\le&\sum_{s,t\in \mathcal{S}_p}\bb E\left[\frac{N_{s\cup t}}{N_s N_t}\1\{N_s, N_t>0\}\right]I_s I_t\max\{\sigma_{s,t},0\} + \sum_{s\in \mathcal{S}_p}\bb E\left[(1-\phi_s(N_s))^2\right]I_s\tau^2 \\ \label{ineq:temp}
		\le&\sum_{s,t\in \mathcal{S}_p}\bb E\left[\frac{N_{s\cup t}}{N_s N_t}\1\{N_s, N_t>0\}\right]I_s I_t|\sigma_{s,t}| + \sum_{s\in \mathcal{S}_p}\bb E\left[(1-\phi_s(N_s))^2\right]I_s\tau^2.
		\ee
		
		Note that $\bb E[N_s] = Nq_s$. By Chernoff bound, for any $\beta>0$, $\bb P(N_s\le(1-\beta) \bb E[N_s]) = \bb P(N_s\le(1-\beta) Nq_s)\le e^{-\beta^2Nq_s/2}$. This yields,
		\be
		&\bb E\left[(1-\phi_s(N_s))^2\right] \\[1mm]
		=&\bb E\left[ (1-\phi_s(N_s))^2 \mid N_s\le (1-\beta) Nq_s \right] \bb P\left(N_s\le (1-\beta) Nq_s\right)\\
		&+ \bb E\left[ (1-\phi_s(N_s))^2 \mid N_s> (1-\beta) Nq_s \right]\bb P\left(N_s> (1-\beta) Nq_s\right) \\[1mm]
		\le&\bb P\left(N_s\le (1-\beta) Nq_s\right)+\bb E\left[ (1-\phi_s(N_s))^2 \mid N_s> (1-\beta) Nq_s \right] \\[1mm] 
		=& \bb P\left(N_s\le (1-\beta) Nq_s\right)+\bb E\left[ \mathcal{O}(1/N_s)\mid N_s> (1-\beta) Nq_s \right] \\[1mm] 
		\le& \bb P\left(N_s\le (1-\beta) Nq_s\right) + \mathcal{O}(1/((1-\beta)Nq_s)) \\[1mm] \label{ineq:temp2}
		=& \mathcal{O}(1/(Nq_s)).
		\ee
		
		The second equality holds because $(1-\phi_s(N_s)) = \mathcal{O}(1/\sqrt{N_s})$ and the last equality holds because $\bb P(N_s\le(1-\beta) Nq_s)\le e^{-\beta^2Nq_s/2}=\mathcal{O}(1/(Nq_s))$. Using this observation, 
		\be
		&~\eqref{ineq:temp}=\sum_{s,t\in \mathcal{S}_p}\bb E\left[\frac{N_{s\cup t}}{N_s N_t}\1\{N_s, N_t>0\}\right]I_sI_t|\sigma_{s,t}| +\sum_{s\in \mathcal{S}_p}\mathcal{O}(1/(Nq_s))I_s\tau^2 \\
		&~~\quad\qquad\le\sum_{s,t\in \mathcal{S}_p}\bb E\left[\frac{1}{N_{s\cup t}}\1\{N_{s\cup t}>0\}\right]I_sI_t|\sigma_{s,t}| + \sum_{s\in \mathcal{S}_p}\mathcal{O}(1/(Nq_s))I_s\tau^2 \\
		&~\quad\stackrel{(\Cref{lm:SumBound})}{\le}\sum_{s,t\in \mathcal{S}_p}\frac{2}{\bb E[N_{s\cup t}]}I_sI_t|\sigma_{s,t}| + \sum_{s\in \mathcal{S}_p}\mathcal{O}(1/(Nq_s))I_s\tau^2.
		\ee
		The second inequality holds as $N_{s\cup t}\le N_s$ and $N_{s\cup t}\le N_t$ for any $s,t\in y_p$. The third equality holds by \Cref{lm:SumBound} (\Cref{apx:lemmas}) which implies that $\bb E\left[1/N_s\1\{N_s>0\}\right] < 2/(q_sN)$. 
		
		\smallskip
		
		Taking the expectation over $I=\{I_s\}_{s\in\mathcal{S}_p}$,
		\be
		R\left(\ThetasegP \right) \le \sum_{s,t\in \mathcal{S}_p}\frac{2}{\bb E[N_{s\cup t}]}\bb E\left[I_sI_t\right]|\sigma_{s,t}| + \sum_{s\in \mathcal{S}_p}\mathcal{O}(1/(Nq_s))\bb E\left[I_s\right]\tau^2.
		\ee
		
		Note that $\bb E[I_sI_t] = \bb P(I_s=1, I_t=1) = \bb E[N_{s\cup t}]/N$ and $\bb E[I_s]= \bb E[N_s]/N$. This gives,
		\be
		R\left(\ThetasegP \right) \le& \sum_{s,t\in \mathcal{S}_p}\frac{2}{\bb E[N_{s\cup t}]}\bb E\left[I_sI_t\right]|\sigma_{s,t}| + \sum_{s\in \mathcal{S}_p}\mathcal{O}(1/(Nq_s))\bb E\left[I_s\right]\tau^2\\
		=&\sum_{s,t\in \mathcal{S}_p}\frac{2}{N}\bb |\sigma_{s,t}| + \sum_{s\in \mathcal{S}_p}\mathcal{O}(1/N)\tau^2\\[1mm] \label{eq:seg_ub}
		=&\mathcal{O}(|\mathcal{S}_p|/N).
		\ee
		
		\medskip
		
		The last equality uses the second part of Assumption \ref{assumption:3}. We now focus on the integrated risk of the optimal route-based estimator $\ThetatrpoptP$. conditional on the predicting route $Y_p$,
		\be
		R\left(\ThetatrpoptP\:\bigg|\: Y_p\right)=&%
		\bb E\left[\left(\frac{\phi^\ast_{Y_p}(M_{\delta(Y_p)})}{M_{\delta(Y_p)}}\right)^2 \sum_{n:y_n\in\delta(Y_p)}\sum_{s, t\in y_n}  \sigma_{s,t}\right] + \bb E\left[\left(\phi^\ast_{Y_p}(M_{\delta(Y_p)})(\bar{y}_{\delta(Y_p)} - |Y_p|)\mu\right)^2\right] \\ 
		&+\sum_{s\in \delta(Y_p)\setminus Y_p}\bb E\left[\left( \frac{N_s^{\delta(Y_p)}}{M_{\delta(Y_p)}}\right)^2\tau^2\right] + \sum_{s\in Y_p}\bb E\left[\left(1 - \phi^\ast_{Y_p}(M_{\delta(Y_p)}) \frac{N_s^{\delta(Y_p)}}{M_{\delta(Y_p)}} \right)^2\tau^2\right] \\
		\ge&\bb E\left[\left(\frac{\phi^\ast_{Y_p}(M_{\delta(Y_p)})}{M_{\delta(Y_p)}}\right)^2 M_{\delta(Y_p)}\sigma_{\textrm{min}}^2\right] + \sum_{s\in Y_p}\bb E\left[\left(1 - \phi^\ast_{Y_p}(M_{\delta(Y_p)}) \right)^2\tau^2\right] \\ \label{eq:intermediate}
		=&\bb E\left[\frac{\phi^\ast_{Y_p}(M_{\delta(Y_p)})^2}{M_{\delta(Y_p)}}\right]\sigma_{\textrm{min}}^2 + \sum_{s\in Y_p}\bb E\left[\left(1 - \phi^\ast_{Y_p}(M_{\delta(Y_p)}) \right)^2\tau^2\right]. 
		\ee

		The first inequality holds by the second part of Assumption \ref{assumption:3} and $N_s^{\delta(Y_p)}\le M_{\delta(Y_p)},\forall s\in Y_p$. We let $\phi^{\ast\ast}_{Y_p}(M_{\delta(Y_p)}) := M_{\delta(Y_p)}|Y_p|\tau^2 / (M_{\delta(Y_p)}|Y_p|\tau^2 + \sigma^2_{\textrm{min}})$ which minimizes
		\be
		\frac{\phi_{Y_p}(M_{\delta(Y_p)})^2}{M_{\delta(Y_p)}} \sigma^2_{\textrm{min}} + \sum_{s\in Y_p}\left(1-\phi_{Y_p}(M_{\delta(Y_p)})\right)^2\tau^2,
		\ee
		for any realization of $\delta(Y_p)$. This yields,
		
		\be
		\eqref{eq:intermediate}\ge&\bb E\left[\frac{\phi^{\ast\ast}_{Y_p}(M_{\delta(Y_p)})^2}{M_{\delta(Y_p)}}\right]\sigma_{\textrm{min}}^2 + \sum_{s\in Y_p}\bb E\left[\left(1 - \phi^{\ast\ast}_{Y_p}(M_{\delta(Y_p)})  \right)^2\tau^2\right].
		\ee
		
		We now consider two scenarios. First, when $M_{\delta(Y_p)}\ge1$,
		\be
		&\bb E\left[\frac{\phi^{\ast\ast}_{Y_p}(M_{\delta(Y_p)})^2}{M_{\delta(Y_p)}}\right]\sigma_{\textrm{min}}^2 + \sum_{s\in Y_p}\bb E\left[\left(1 - \phi^{\ast\ast}_{Y_p}(M_{\delta(Y_p)})  \right)^2\tau^2\right]\\
		\ge&\bb E\left[\frac{\phi^{\ast\ast}_{Y_p}(M_{\delta(Y_p)})^2}{M_{\delta(Y_p)}}\right]\sigma_{\textrm{min}}^2 \\
		\ge&\phi_{Y_p}^{\ast\ast}(1)\frac{\phi^{\ast\ast}_{Y_p}\left(\bb E\left[M_{\delta(Y_p)} \right]\right)}{\bb E\left[M_{\delta(Y_p)} \right]}\sigma^2_{\textrm{min}} \\
		=&\frac{|Y_p|\tau^2}{|Y_p|\tau^2 + \sigma^2_{\textrm{min}}} \cdot\frac{|Y_p|\tau^2}{\bb E\left[M_{\delta(Y_p)} \right]|Y_p|\tau^2 + \sigma^2_{\textrm{min}}}\cdot \sigma^2_{\textrm{min}} \\
		\ge&\frac{\tau^2}{\tau^2 + \sigma^2_{\textrm{min}}} \cdot\frac{\tau^2}{\bb E\left[M_{\delta(Y_p)} \right]\tau^2 + \sigma^2_{\textrm{min}}} \cdot\sigma^2_{\textrm{min}}.
		\ee
		
		The second inequality holds because $\phi^{\ast\ast}_{y_p}(\cdot)$ is non-decreasing so that $\phi^{\ast\ast}_{y_p}(M_{\delta(Y_p)})\ge \phi^{\ast\ast}_{y_p}(1)$ and $\mathbb E\big[\phi_{Y_p}^{\ast\ast}(M_{\delta(Y_p)}) \big/M_{\delta(Y_p)} \big] \ge \phi_{Y_p}^{\ast\ast}(\mathbb E[M_{\delta(Y_p)}]) \big/\mathbb E[M_{\delta(Y_p)}]$ by Jensen's inequality because the term $\phi^{\ast\ast}_{Y_p}(M_{\delta(Y_p)})/M_{\delta(Y_p)}$ is convex in $M_{\delta(Y_p)}$.
		The last inequality holds as $|Y_p|\ge1$. Note that the final term $(\tau^2/(\tau^2 + \sigma^2_{\textrm{min}}))\cdot(\tau^2/(\bb E\left[M_{\delta(Y_p)} \right]\tau^2 + \sigma^2_{\textrm{min}}))\cdot \sigma^2_{\textrm{min}}\le\tau^2$. 
		
		\medskip
		
		On the other hand, when $M_{\delta(Y_p)}=0$,
		\be
		&\bb E\left[\frac{\phi^{\ast\ast}_{Y_p}(M_{\delta(Y_p)})^2}{M_{\delta(Y_p)}}\right]\sigma_{\textrm{min}}^2 + \sum_{s\in Y_p}\bb E\left[\left(1 - \phi^{\ast\ast}_{Y_p}(M_{\delta(Y_p)})  \right)^2\tau^2\right]=|Y_p|\tau^2\ge\tau^2.  
		\ee
		
		This suggests that for all $Y_p$, 
		\be
		\eqref{eq:intermediate}\ge\frac{\tau^2}{\tau^2 + \sigma^2_{\textrm{min}}} \cdot\frac{\tau^2}{\bb E\left[M_{\delta(Y_p)} \right]\tau^2 + \sigma^2_{\textrm{min}}} \cdot\sigma^2_{\textrm{min}}.
		\ee
		
		By taking the expectation over $Y_p$,
		\be
		\label{eq:route_lb}
		R\left(\ThetatrpoptP\right) \ge \frac{\tau^2}{\tau^2 + \sigma^2_{\textrm{min}}} \cdot\frac{\tau^2}{Nq_{\delta}\tau^2 + \sigma^2_{\textrm{min}}} \cdot\sigma^2_{\textrm{min}}=\Omega\left(1/\left(Nq_{\delta}\right)\right).
		\ee
		
		Based on \eqref{eq:seg_ub} and \eqref{eq:route_lb}, we have that when $q_{\delta}=o(1/|\mathcal{S}_p|)$, 
		\be
		\lim_{p\to\infty}\frac{R\left(\ThetasegP\right)}{R\left(\ThetatrpoptP\right)} = 0.
		\ee
		
		This completes the proof.
	\end{proof}
	
	\medskip
	
	\begin{proof}[Proof of \Cref{lm:NodeLowerBound}]
		We first derive the lower bounds. For even $p$, the probability mass function (PMF) of the symmetric beta-binomial distribution is symmetric about $p/2$, and has a minimum value on its support at $p/2$. For simplicity and without loss of generality, we will assume $p$ is even in this proof. The odd case can be proven with some minor modifications. We have,
		\be
		\bb P[x = (p/2,\cdot)] &= {p \choose p/2} \frac{B(\alpha + p/2,\alpha+p/2)}{B(\alpha,\alpha)} \\
		&= \frac{1}{B(\alpha,\alpha)} \frac{\Gamma(p+1)}{\Gamma(p/2+1)\Gamma(p/2+1)} \cdot\frac{\Gamma(p/2+\alpha)\Gamma(p/2+\alpha)}{\Gamma(p+2 \alpha)},
		\ee
		where $B(\cdot, \cdot)$ is Beta function and $\Gamma(\cdot)$ is Gamma function. 
		
		\smallskip
		
		Define by
		\be
		f(p) := \frac{\Gamma(p+1)}{\Gamma(p/2+1)\Gamma(p/2+1)}  \frac{\Gamma(p/2+\alpha)\Gamma(p/2+\alpha)}{\Gamma(p+2 \alpha)}.
		\ee
		By Gautschi's inequality \citep[Eq. 5.6.4]{NIST:DLMF}, we have,
		\be
		x^{1-\beta}\le &\frac{\Gamma\left(x+1\right)}{\Gamma\left(x+\beta\right)}\le(x+1)^{1-\beta},~~0 < \beta \le 1; \\
		(x+1)^{1-\beta}\le&\frac{\Gamma\left(x+1\right)}{\Gamma\left(x+\beta\right)}\le x^{1-\beta},\qquad\quad 1 < \beta \le 2.
		\ee
		
		\begin{itemize}
			\item Under the case that $0<\alpha\le 1/2$, it follows that
			\be
			\frac{\Gamma(p+1)}{\Gamma(p+2\alpha)} &\ge p^{1-2\alpha}, \\
			\frac{\Gamma(p/2+1)}{\Gamma(p/2+\alpha)} &\le (p/2+1)^{1-\alpha},
			\ee
			and so
			\be
			\frac{\Gamma(p/2+\alpha)}{\Gamma(p/2+1)} \ge (p/2+1)^{\alpha-1}. 
			\ee
			It follows that
			\be
			f(p) &\ge (p/2+1)^{2\alpha - 2} p^{1-2\alpha} \\
			&= 2^{2-2\alpha} \frac{p}{(p+2)^2} \left( \frac{p+2}{p} \right)^{2 \alpha}
			\\
			&\ge 2^{2-2\alpha} \frac{p}{(p+2)^2} \\
			&\ge \frac19 2^{2-2\alpha} p^{-1},
			\ee
			and therefore
			\be
			\bb P[x = (i, \cdot)] \ge \bb P[x = (p/2, \cdot)] \ge \frac{4^{1-\alpha }}{9B(\alpha,\alpha)} p^{-1}.
			\ee
			
			This gives,
			\be
			\bb P[x = (i, j)] \ge \bb P[x = (p/2, p/2)] \ge \frac{4^{2-2\alpha}}{81B(\alpha,\alpha)^2} p^{-2}.
			\ee

			\item Under the case that $1/2<\alpha \le 1$, it follows that,
			\be
			\frac{\Gamma(p+1)}{\Gamma(p+2\alpha)} &\ge (p+1)^{1-2\alpha}, \\
			\frac{\Gamma(p/2+1)}{\Gamma(p/2+\alpha)} &\le (p/2+1)^{1-\alpha},
			\ee
			and so
			\be
			\frac{\Gamma(p/2+\alpha)}{\Gamma(p/2+1)} \ge (p/2+1)^{\alpha-1}. 
			\ee
			
			It follows that
			\be
			f(p) &\ge (p/2+1)^{2\alpha - 2} (p+1)^{1-2\alpha} \\[3mm]
			&= 2^{2-2\alpha} (p+2)^{2\alpha -2} (p+1)^{1-2\alpha}
			\\[1mm]
			&= 2^{2-2\alpha} \frac{(p+2)^{2\alpha -2}}{(p+1)^{2\alpha -1}} \\
			&= 2^{2-2\alpha} \frac{(p+2)^{2\alpha -2}}{(p+1)^{2\alpha -2}} \frac{1}{p+1} \\
			&> 2^{2-2\alpha} \frac1{p+1} \\
			&\ge 2^{2-2\alpha} \frac1{2p} = 2^{1-2\alpha}p^{-1},
			\ee
			
			and therefore
			\be
			\bb P[x = (i, \cdot)] \ge \bb P[x = (p/2, \cdot)] \ge \frac{2^{1-2\alpha }}{B(\alpha,\alpha)} p^{-1}.
			\ee
			
			This gives,
			\be
			\bb P[x = (i, j)] \ge \bb P[x = (p/2, p/2)] \ge \frac{4^{1-2\alpha}}{B(\alpha,\alpha)^2} p^{-2}.
			\ee
		\end{itemize}
		
		We note that using the other side of Gautschi's inequality, we can also show that $\bb P[x=(p/2,p/2)]\lesssim p^{-2}$. This gives $\bb P[x=(p/2,p/2)]\simeq p^{-2}$.
		
		\medskip
		
		We then derive the upper bounds. The PMF of the symmetric beta-binomial distribution has a maximum value  on its support at either $0$ or $p$. Without loss of generality, we select the maximum at $0$.
		\be
		\bb P[x = (0,\cdot)] &= {p \choose 0} \frac{B(p+\alpha,\alpha)}{B(\alpha,\alpha)} \\
		&= \frac{\Gamma(\alpha)}{B(\alpha,\alpha)}\cdot\frac{\Gamma(p+\alpha)}{\Gamma(p+2 \alpha)}.
		\ee
		
		Similarly, we now look at two cases. 
		
		\smallskip
		
		\begin{itemize}
			\item Under the case that $0<\alpha\le 1/2$, we have
			\be
			\frac{\Gamma(p+\alpha)}{\Gamma(p+2\alpha)} =& \frac{\Gamma(p+\alpha)}{\Gamma(p+1)}\cdot\frac{\Gamma(p+1)}{\Gamma(p+2\alpha)} \\
			\le&p^{\alpha-1}(p+1)^{1-2\alpha} \\[1mm]
			\le&p^{-\alpha},
			\ee
			and therefore
			\be
			\bb P[x=(i,\cdot)]\le \bb P[x=(0,\cdot)]\le\frac{\Gamma(\alpha)}{B(\alpha, \alpha)}p^{-\alpha}.
			\ee
			
			This gives,
			\be
			\bb P[x=(i,j)]\le \bb P[x=(0,0)] \le \frac{\Gamma(\alpha)^2}{B(\alpha, \alpha)^2}p^{-2\alpha}.
			\ee
			
			\item Under the case that $1/2<\alpha\le 1$, we have
			\be
			\frac{\Gamma(p+\alpha)}{\Gamma(p+2\alpha)} =& \frac{\Gamma(p+\alpha)}{\Gamma(p+1)}\cdot\frac{\Gamma(p+1)}{\Gamma(p+2\alpha)} \\
			\le&p^{\alpha-1}p^{1-2\alpha} \\
			=&p^{-\alpha}.
			\ee
			
			Similarly, this gives,
			\be
			\bb P[x=(i,j)]\le \bb P[x=(0,0)] \le \frac{\Gamma(\alpha)^2}{B(\alpha, \alpha)^2}p^{-2\alpha}.
			\ee
		\end{itemize}
		Using the other side of Gautschi's inequality, we can show that $\bb P[x = (0,0)]\gtrsim p^{-2\alpha}$. By symmetry, we have $\bb P[x = (0,0)]=\bb P[x=(p,0)]=\bb P[x=(0,p)]=\bb P[x=(p,p)] \simeq p^{-2\alpha}$. This completes the proof.
	\end{proof}
	
	\medskip
	
	\begin{proof}[Proof of \Cref{prop:q_delta_od}]
		Consider a particular neighborhood near a predicting route $y_p$, $\delta^{\textrm{od}\ast}(y_p) = \{y\in \mathcal{Y}_p: \|x_1(y), x_1(y_p)\|_1=0,  \|x_2(y), x_2(y_p)\|_1=0\}$. In words, neighborhood $\delta^{\textrm{od}\ast}(y_p)$ includes historical routes that have the same origin and destination as those of route $y_p$. It is clear that for any other route neighborhood $\delta^{\textrm{od}}(\cdot)$ with $\delta^{\textrm{od}}(y_p) = \{y\in \mathcal{Y}_p: \|x_1(y), x_1(y_p)\|_1\le c,  \|x_2(y), x_2(y_p)\|_1\le c\}$ for some constant $c>0$ that does not depend on $p$,
		\be
		q_{\delta^{\textrm{od}}(y_p)} \simeq q_{\delta^{\textrm{od}\ast}(y_p)}.
		\ee
		We thus focus on analyzing $q_{\delta^{\textrm{od}\ast}} = \bb P_{Y_p\sim\mu_p, Y_p^\prime\sim\mu_p}[Y_p \in \delta^{\textrm{od}\ast}(Y_p^\prime)]=\sum_{y\in\mathcal{Y}_p}q_{\delta^{\textrm{od}\ast}(y)}\mathbb P[Y^\prime_p = y]$ instead. %
		\be
		q_{\delta^{\textrm{od}\ast}} =& \sum_{y\in\mathcal{Y}_p}q_{\delta^{\textrm{od}\ast}(y)}\mathbb P[Y^\prime_p = y] \\
		=&\sum_{x_1\in\mathcal{V}_p, x_2\in\mathcal{V}_p}\mathbb P\left[x_1(Y_p) = x_1, x_2(Y_p)=x_2\right]\cdot \mathbb P\left[x_1(Y^\prime_p) = x_1, x_2(Y^\prime_p)=x_2\right]\\
		=&\sum_{x_1\in\mathcal{V}_p,x_2\in\mathcal{V}_p} \bb P^2\left[x_1(Y_p)=x_1, x_2(Y_p)=x_2\right] \\
		=&\sum_{x_1\in\mathcal{V}_p,x_2\in\mathcal{V}_p}\bb P^2\left[x_1(Y_p) = x_1 \right] \bb P^2\left[x_2(Y_p) = x_2 \right]\\
		=&\sum_{i,j,l,m\in\{0,\cdots,p\}}\bb P^2\left[x_1(Y_p) = (i,\cdot) \right]\bb P^2\left[x_1(Y_p) = (\cdot,j)\right]\bb P^2\left[x_2(Y_p) = (l,\cdot)\right]\bb P^2\left[x_2(Y_p) = (\cdot,m)\right] \\ \label{eq:quad}
		=&\left(\sum_{i\in\{0,\cdots,p\}}\bb P^2\left[x_1(Y_p) = (i,\cdot) \right] \right)^4.
		\ee
		
		The third equality holds because the sampling processes of origins and destinations are independent. Similarly, the fourth equality holds because the sampling processes of horizontal and vertical coordinates are independent. 
		
		\smallskip
		
		For any $i\in\{0,\cdots,p-1\}$,
		\be
		\frac{\bb P\left[x_1(Y_p)=(i+1,\cdot)\right]}{\bb P\left[x_1(Y_p)=(i,\cdot)\right]} =& \frac{{p \choose i+1}B(i+\alpha,\: p-i+\alpha)/B(\alpha,\alpha)}{{p \choose i}B(i+\alpha, p-i+\alpha)/B(\alpha,\alpha)} \\
		=&\frac{{p \choose i+1}B(i+\alpha,\: p-i+\alpha)}{{p \choose i}B(i+\alpha, n-i+\alpha)} \\
		=&\frac{p-i}{i+1}\cdot\frac{\Gamma(i+1+\alpha)\Gamma(p-i-1+\alpha)/\Gamma(p+2\alpha)}{\Gamma(i+\alpha)\Gamma(p-i+\alpha)/\Gamma(p+2\alpha)} \\
		=&\frac{p-i}{i+1}\cdot\frac{\Gamma(i+1+\alpha)\Gamma(p-i-1+\alpha)}{\Gamma(i+\alpha)\Gamma(p-i+\alpha)} \\
		=&\frac{p-i}{i+1}\cdot(i+\alpha)\cdot\frac{1}{p-i-1+\alpha}\\ \label{eq:recursion}
		=&\frac{i+\alpha}{i+1}\cdot\frac{p-i}{p-i-1+\alpha}. 
		\ee 
		
		The second-to-last equation holds by using the fact that $\Gamma(z+1)=z\Gamma(z)$. Let $g(j)=\prod_{i=1}^j \left(\frac{i+\alpha}{i+1}\right)\cdot\left(\frac{p-i}{p-i-1+\alpha}\right)$. We consider the case where the grid size $p$ is even. The case of $p$ being odd can be proven with minor modifications. Using recursion \eqref{eq:recursion}, 
		\be
		&\sum_{i\in\{0,\cdots,p\}}\bb P^2[x_1(Y_p)=(i,\cdot)] \\ \label{simeq:2}
		\simeq&\sum_{i\in\{0,\cdots,p/2-1\}}\bb P^2[x_1(Y_p)=(i,\cdot)] \\
		=&\mathbb P^2\left[x_1(Y_p)=(0,\cdot)\right]\left(1+\sum_{j=1}^{p/2-1}g^2(j)\right)\\ \label{simeq:1}
		\simeq&p^{-2\alpha}\left(1+\sum_{j=1}^{p/2-1}g^2(j)\right).
		\ee
		
		Equation \eqref{simeq:2} holds by the symmetry of the distributions of origins and destinations. Equation \eqref{simeq:1} uses the fact that $P\left[x_1(Y_p)=(0,\cdot)\right]\simeq p^{-\alpha}$ from the proof of \Cref{lm:NodeLowerBound}. By \Cref{lm:useful_rates} in \Cref{apx:lemmas} which shows that $g(j) \simeq \left(1/(j+1)\right)^{1-\alpha}$ for all $j\le p/2-1$,
		\be
		\eqref{simeq:2} \simeq&p^{-2\alpha}\left(1 + \sum_{j=1}^{p/2-1} \left(\frac{1}{j+1}\right)^{2-2\alpha} \right) \\
		\simeq&p^{-2\alpha}\left(\sum_{j=1}^{p/2}\left(\frac{1}{j}\right)^{2-2\alpha} \right)
		\ee
		
		\begin{itemize}
			\item When $0<\alpha<1/2$, we know that $\sum_{j=1}^{\infty}(1/j)^{2-2\alpha}<+\infty$ converges. As a result, $\sum_{i\in\{0,\cdots,p\}}\bb P^2[x_1(Y_p)=(i,\cdot)]$ as a function of $p$ satisfies,
			\be
			\sum_{i\in\{0,\cdots,p\}}\bb P^2[x_1(Y_p)=(i,\cdot)] \simeq \sum_{i\in\{0,\cdots,p/2-1\}}\bb P^2[x_1(Y_p)=(i,\cdot)]  \simeq p^{-2\alpha}.
			\ee
			
			\item When $1/2\le\alpha\le 1$, we know that  $\sum_{j=1}^{\infty}(1/j)^{2-2\alpha}$ diverges. For $\alpha>1/2$, 
			\be
			\int_{1}^{p/2}\left(\frac{1}{j+1}\right)^{2-2\alpha}dj &\le \sum_{j=1}^{p/2}\left(\frac{1}{j}\right)^{2-2\alpha} \le \int_{1}^{p/2}\left(\frac{1}{j}\right)^{2-2\alpha}dj \\
			\Longleftrightarrow\quad \frac{1}{2\alpha-1}\left( \left(\frac{p}{2} + 1\right)^{2\alpha-1} - 2^{2\alpha - 1} \right) &\le \sum_{j=1}^{p/2}\left(\frac{1}{j}\right)^{2-2\alpha} \le \frac{1}{2\alpha-1}\left(\left(\frac{p}{2}\right)^{2\alpha-1} - 1 \right).
			\ee
			
			This yields,
			\be
			&\sum_{i\in\{0,\cdots,p\}}\bb P^2[x_1(Y_p)=(i,\cdot)] \\
			\simeq&\sum_{i\in\{0,\cdots,p/2-1\}}\bb P^2[x_1(Y_p)=(i,\cdot)]  \\
			\simeq&p^{-2\alpha}\left(\sum_{j=1}^{p/2}\left(\frac{1}{j}\right)^{2-2\alpha} \right) \\
			\simeq&p^{-2\alpha}\cdot p^{2\alpha - 1} \\[2mm]
			\simeq&p^{-1}.
			\ee
		\end{itemize}
		
		Plugging these rates into \eqref{eq:quad} completes the proof.
	\end{proof}
	
	\medskip
	
	\begin{proof}[Proof of \Cref{prop:seg_accum}]
		Consider a segment $s=(i,j)\rightarrow(i+1, j)\in \mathcal{S}_p$ from the grid of even size $p$ and assume that $i<p/2$. By symmetry, segments with vertical movements or at other positions can be proven in the same way. The case with odd $p$ can be proven with minor modifications. Consider a route $Y_p\sim\mu_p$ that covers segment $s$. Let $X_1$ and $X_2$ be the corresponding origin and destination of route $Y_p$. There are two scenarios in which $s\in Y_p$.
		\begin{itemize}
			\item $X_1=(i_1,j)$ for some $i_1\le i$ and $X_2=(i_2,\cdot)$ for some $i_2> i$. This is with probability
			\be
			\simeq&\sum_{i_1=0}^i\sum_{i_2=i+1}^p\mathbb P(X_1 = (i_1, j))\mathbb P(X_2=(i_2,\cdot)) \\ 
			=&\mathbb P(X_1=(\cdot, j))\cdot\left(\sum_{i_1=0}^i\sum_{i_2=i+1}^p\mathbb P(X_1 = (i_1, \cdot))\mathbb P(X_2=(i_2,\cdot))\right)\\ \label{eq:probability}
			=&\mathbb P(X_1=(\cdot, j))\cdot\left(\sum_{i_1=0}^i\mathbb P(X_1 = (i_1, \cdot))\right)\left(\sum_{i_2=i+1}^p\mathbb P(X_2=(i_2,\cdot))\right).
			\ee
			
			Clearly, given $i$, \eqref{eq:probability} achieves its minimum value when $j=p/2$ and achieves its maximum value when $j=0$ or $p$. Similarly, given $j$, we can show that \eqref{eq:probability} achieves its minimum value at $i=0$ and its maximum value at $i=p/2-1$. To see that, for any $i<p/2$,
			\be
			\sum_{i_1=0}^i\mathbb P(X_1 = (i_1, \cdot)) < \sum_{i_2=i+1}^p\mathbb P(X_2=(i_2,\cdot)).
			\ee
			This yields,
			\be
			&\left(\sum_{i_1=0}^{i-1}\mathbb P(X_1 = (i_1, \cdot))\right)\left(\sum_{i_2=i}^p\mathbb P(X_2=(i_2,\cdot))\right) \\
			=&\left(\sum_{i_1=0}^{i}\mathbb P(X_1 = (i_1, \cdot)) - \bb P(X_1=(i,\cdot))\right)\left(\sum_{i_2=i+1}^p\mathbb P(X_2=(i_2,\cdot)) + \bb P(X_2=(i,\cdot))\right)\\
			=&\left(\sum_{i_1=0}^{i}\mathbb P(X_1 = (i_1, \cdot)) - \bb P(X_1=(i,\cdot))\right)\left(\sum_{i_2=i+1}^p\mathbb P(X_2=(i_2,\cdot)) + \bb P(X_1=(i,\cdot))\right) \\
			=&\left(\sum_{i_1=0}^i\mathbb P(X_1 = (i_1, \cdot))\right)\left(\sum_{i_2=i+1}^p\mathbb P(X_2=(i_2,\cdot))\right) \\
			&-\mathbb P(X_1=(i,\cdot))\left(\sum_{i_2=i+1}^p\mathbb P(X_2=(i_2,\cdot)) - \sum_{i_1=0}^i\mathbb P(X_1 = (i_1, \cdot))\right)\\
			&-\bb P^2(X_1=(i,\cdot))\\
			\le&\left(\sum_{i_1=0}^i\mathbb P(X_1 = (i_1, \cdot))\right)\left(\sum_{i_2=i+1}^p\mathbb P(X_2=(i_2,\cdot))\right),
			\ee
			for all $i<p/2$. This suggests that \eqref{eq:probability} achieves its overall maximum at $i=p/2 - 1, j = 0$ or $p$ with 
			\be
			\sum_{i_1=0}^{p/2-1}\sum_{i_2=p/2}^p\mathbb P(X_1 = (i_1, 0))\mathbb P(X_2=(i_2,\cdot)) \simeq \bb P(X_1 = (\cdot,0))\simeq p^{-\alpha}.
			\ee
			On the other hand, it achieves its overall minimum at $i=0,j = p/2$ with
			\be
			\sum_{i_1=0}^{0}\sum_{i_2=1}^p\mathbb P(X_1 = (i_1, p/2))\mathbb P(X_2=(i_2,\cdot)) \simeq \mathbb P(X_1 = (0, p/2)) \simeq p^{-1-\alpha}.
			\ee
			
			\smallskip
			
			\item $X_1=(i_1,\cdot)$ for some $i_1\le i$ and $x_2=(i_2,j)$ for some $i_2> i$. This is with probability 
			\be
			\simeq&\sum_{i_1=0}^i\sum_{i_2=i+1}^p\mathbb P(X_1 = (i_1, \cdot))\mathbb P(X_2=(i_2,j)) \\ 
			=&\mathbb P(X_2=(\cdot, j))\cdot\left(\sum_{i_1=0}^i\sum_{i_2=i+1}^p\mathbb P(X_1 = (i_1, \cdot))\mathbb P(X_2=(i_2,\cdot))\right)\\ 
			=&\mathbb P(X_2=(\cdot, j))\cdot\left(\sum_{i_1=0}^i\mathbb P(X_1 = (i_1, \cdot))\right)\left(\sum_{i_2=i+1}^p\mathbb P(X_2=(i_2,\cdot))\right),
			\ee
			which is symmetric to the previous case and thus has the same conclusion.  
		\end{itemize}
		
		This concludes the proof.
		
	\end{proof}
	
	\medskip
	
	\begin{proof}[Proof of \Cref{thm:asymptotic_opt}]
		We first give a lower bound for $R(\hat{\Theta}_y^\ast)$. By \Cref{lm:info_lb},
		\be
		\label{eq:info_lb}
		R\left(\hat{\Theta}^\ast_{y} \:\bigg|\: y_{[N]}\right)\ge \frac{|y|^2}{\sum_{s,t\in y} N_{s\cup t}\psi_{s,t}+ |y|/\tau^2}.
		\ee
		
		Under a given road network size $p$, let indicator random variable $I_s$ denote whether a road segment $s$ is traversed by a randomly sampled route $Y_p\sim \mu_p$. Given $\{I_s\}_{s\in\mathcal{S}_p}$, we rewrote \eqref{eq:info_lb} as 
		\be
		R\left(\hat{\Theta}^\ast_{y} \:\bigg|\: y_{[N]}\right)\ge& \frac{\left(\sum_{s\in\mathcal{S}_p}I_s\right)^2}{\sum_{s,t\in \mathcal{S}_p} N_{s\cup t}\psi_{s,t}I_sI_t + \left(\sum_{s\in\mathcal{S}_p}I_s\right)/\tau^2}\\
		=&\frac{1}{\sum_{s,t\in\mathcal{S}_p}N_{s\cup t}\psi_{s,t}\frac{I_s I_t}{(\sum_{s'\in\mathcal{S}_p}I_{s'})^2}+\frac{1}{\left(\sum_{s'\in\mathcal{S}_p}I_{s'}\right)\tau^2}} \\
		\ge&\frac{1}{\sum_{s,t\in\mathcal{S}_p}N_{s\cup t}|\psi_{s,t}|\frac{I_s I_t}{(\sum_{s'\in\mathcal{S}_p}I_{s'})^2}+\frac{1}{\left(\sum_{s'\in\mathcal{S}_p}I_{s'}\right)\tau^2}}.
		\ee
		
		Taking expectations over the predicting route and the historical routes,
		\be
		R(\hat{\Theta}_{Y_p}^\ast)\ge&\mathbb E\left[\frac{1}{\sum_{s,t\in\mathcal{S}_p}N_{s\cup t}|\psi_{s,t}|\frac{I_s I_t}{(\sum_{s'\in\mathcal{S}_p}I_{s'})^2}+\frac{1}{\left(\sum_{s'\in\mathcal{S}_p}I_{s'}\right)\tau^2}}\right]\\
		\ge&\frac{1}{\sum_{s,t\in\mathcal{S}_p}\bb E[N_{s\cup t}]|\psi_{s,t}|\bb E\Big[\frac{I_s I_t}{(\sum_{s'\in\mathcal{S}_p}I_{s'})^2}\Big]+\frac{1}{\tau^2}\bb E \Big[\frac{1}{\sum_{s'\in\mathcal{S}_p}I_{s'}}\Big]}.
		\ee
		The second inequality holds because of $\mathbb E[1/X] \ge 1/\mathbb E[X]$ for any non-negative random variable $X$. By \Cref{cor:grid}, we have $R(\Thetasegpbigy) = \mathcal{O}(p^2/N)$. This gives,
		\be
		\frac{R(\Thetasegpbigy)}{R(\hat{\Theta}_{Y_p}^\ast)}\le&\mathcal{O}(p^2/N)\Bigg(\sum_{s,t\in\mathcal{S}_p}\bb E[N_{s\cup t}]|\psi_{s,t}|\bb E\left[\frac{I_s I_t}{(\sum_{s'\in\mathcal{S}_p}I_{s'})^2}\right] \\ \label{eq:ub_risk}
		&+\frac{1}{\tau^2}\bb E \Big[\frac{1}{\sum_{s'\in\mathcal{S}_p}I_{s'}}\Big]\Bigg).
		\ee
		We analyze the two terms in the parenthesis separately. For the first term,
		\be
		&\mathcal{O}(p^2/N)\sum_{s,t\in\mathcal{S}_p}\bb E[N_{s\cup t}]|\psi_{s,t}|\bb E\left[\frac{I_s I_t}{(\sum_{s'\in\mathcal{S}_p} I_{s'})^2}\right]\\
		=&\mathcal{O}(p^2)\sum_{s,t\in\mathcal{S}_p}\frac{\bb E[N_{s\cup t}]}{N}|\psi_{s,t}|\bb E\left[\frac{I_s I_t}{(\sum_{s'\in\mathcal{S}_p} I_{s'})^2}\right]\\
		\le&\mathcal{O}(p^2)\sum_{s,t\in\mathcal{S}_p}\bb P(I_s=1, I_t=1)|\psi_{s,t}|\bb E\left[\frac{I_s}{(\sum_{s'\in\mathcal{S}_p} I_{s'})^2}\right]\\
		\le&\mathcal{O}(p^2)\sum_{s,t\in\mathcal{S}_p}\bb P(I_s=1)|\psi_{s,t}|\bb E\left[\frac{I_s}{(\sum_{s'\in\mathcal{S}_p} I_{s'})^2}\right]\\
		=&\mathcal{O}(p^2)\sum_{s\in\mathcal{S}_p}\sum_{t\in\mathcal{S}_p}\bb P^2(I_s=1)|\psi_{s,t}|\bb E\left[\frac{1}{(\sum_{s'\in\mathcal{S}_p} I_{s'})^2}\:\Bigg|\: I_s=1\right] \\
		=&\mathcal{O}(p^2)\sum_{s\in\mathcal{S}_p}\bb P^2(I_s=1)\bb E\left[\frac{1}{(\sum_{s'\in\mathcal{S}_p} I_{s'})^2}\:\Bigg|\: I_s=1\right]\left(\sum_{t\in\mathcal{S}_p}|\psi_{s,t}|\right) \\
		=&\mathcal{O}(p^2)\sum_{s\in\mathcal{S}_p}\bb P^2(I_s=1)\bb E\left[\frac{1}{(\sum_{s'\in\mathcal{S}_p} I_{s'})^2}\:\Bigg|\: I_s=1\right]\mathcal{O}(1)\\ \label{eq:temp3}
		=&\mathcal{O}(p^2)\sum_{s\in\mathcal{S}_p}\bb P^2(I_s=1)\bb E\left[\frac{1}{(\sum_{s'\in\mathcal{S}_p} I_{s'})^2}\:\Bigg|\: I_s=1\right].
		\ee

		The second-to-last equality uses the second part of Assumption \ref{assumption:3}. We have the following claims whose proof can be found in the proof of \Cref{lm:conditional_length} in \Cref{apx:lemmas}. For any segment $s\in\mathcal{S}_p$,
		\be
		\bb E\left[\frac{1}{(\sum_{s'\in\mathcal{S}_p} I_{s'})^2}\:\Bigg|\: I_s=1\right] = \mathcal{O}(\log(p) p^{-2}),\quad\bb E\left[\frac{1}{(\sum_{s'\in\mathcal{S}_p} I_{s'})^2}\:\Bigg|\: I_s=0\right] = \mathcal{O}(\log(p) p^{-2}).
		\ee
		
		\medskip
		
		Now go back to equation \eqref{eq:temp3}, for any $1/2\le\alpha\le 1$,
		\be
		&\mathcal{O}(p^2)\sum_{s\in\mathcal{S}_p}\bb P^2(I_s=1)\bb E\left[\frac{1}{(\sum_{s'\in\mathcal{S}_p} I_{s'})^2}\:\Bigg|\: I_s=1\right] \\
		=&\mathcal{O}(p^2)\left(\sum_{s\in\mathcal{S}_p}\bb P^2(I_s=1)\right)\mathcal{O}(\mathrm{log}(p)p^{-2}) \\
		=&\mathcal{O}(p^2)\mathcal{O}(1)\mathcal{O}(\mathrm{log}(p)p^{-2})\quad\qquad(\textrm{by \Cref{lm:sum_square_ub} in \Cref{apx:lemmas}})\\[4mm] 
		=&\mathcal{O}(\mathrm{log}(p)).
		\ee
		
		\medskip

		\medskip
		
		For the second term in \eqref{eq:ub_risk}, 
		\be
		~&\mathcal{O}(p^2/N)\frac{1}{\tau^2}\mathbb E\left[\frac{1}{\sum_{s^\prime\in\mathcal{S}_p}I_{s^\prime}}\right] \\
		\le& \mathcal{O}(p^2/N)\frac{1}{\tau^2}\sqrt{\bb E\left[\frac{1}{\left(\sum_{s'\in\mathcal{S}_p}I_{s'}\right)^2}\right]} \\
		=&\mathcal{O}(p^2/N)\frac{1}{\tau^2}\sqrt{\bb E\left[\frac{1}{\left(\sum_{s'\in\mathcal{S}_p}I_{s'}\right)^2} \,\middle\vert\, I_s=1 \right]\mathbb P(I_s=1) + \bb E\left[\frac{1}{\left(\sum_{s'\in\mathcal{S}_p}I_{s'}\right)^2} \,\middle\vert\, I_s=0 \right]\mathbb P(I_s=0)} \\
		=&\mathcal{O}(p^2/N)\frac{1}{\tau^2}\sqrt{\mathcal{O}(p^{-2}\log(p))} \\[1mm]
		=&\mathcal{O}\left(p\sqrt{\log(p)}/N\right)\frac{1}{\tau^2}\\[2mm]
		=&\mathcal{O}(1).\quad\qquad\qquad(\textrm{as } N=\omega(p))
		\ee
		The first inequality uses the fact that $\mathbb E[X]\le \sqrt{\mathbb E[X^2]}$ for any random variable $X$.
		
		\medskip
		
		This completes the proof that $R(\Thetasegpbigy)/R(\hat{\Theta}_{Y_p}^\ast)=\mathcal{O}(\log(p))$ when $N=\omega(p)$ and $1/2\le\alpha\le1$. For the second part of the theorem, using the information-theoretic lower bound (\Cref{lm:info_lb}), given a set of historical routes $Y_{p, [N]}$ and the predicting route $Y_p$ under a grid size $p$, for any estimator,
		\be
		R\left(\hat{\Theta}_{Y_p} \:\bigg|\: Y_{p, [N]}\right)\ge &\frac{|Y_p|^2}{\sum_{s,t\in Y_p} N_{s\cup t}\psi_{s,t}+ |Y_p|/\tau^2} \\
		\ge&\frac{|Y_p|^2}{\sum_{s,t\in Y_p} N\psi_{s,t}+ |Y_p|/\tau^2}\qquad\qquad\textrm{($N\ge N_{s\cup t},\:\forall s,t\in Y_p$)} \\
		=&\frac{|Y_p|^2}{\sum_{s,t\in Y_p} \mathcal{O}(p)\psi_{s,t}+ |Y_p|/\tau^2} \\
		=&\frac{|Y_p|^2}{|Y_p| \mathcal{O}(p)+ |Y_p|/\tau^2} \qquad\qquad\qquad~\textrm{(${\textstyle\sum}_{t\in Y_p} \psi_{s,t}=\mathcal{O}(1)$ by Assumption 2.2)} \\
		=&\frac{|Y_p|}{\mathcal{O}(p)+ 1/\tau^2}.
		\ee
		
		Taking the expectation over the predicting route $Y_p\in\mu_p$ yields
		\be
		R\left(\hat{\Theta}_{Y_p}\right) \ge \frac{\mathbb E[|Y_p|]}{\mathcal{O}(p)+ 1/\tau^2}.
		\ee
		
		We know that as $\alpha$ increases, route origins and destinations are distributed toward the center of the grid. We thus can focus on the case of $\alpha=1$ to get a lower bound for $\mathbb E[|Y_p|]$. When $\alpha=1$,
		\be
		\bb E[|Y_p|] &= \frac{1}{(p+1)^2}\left( \sum_{i_1 = 0}^p \sum_{i_2 = 0}^p |i_1 - i_2| + \sum_{j_1 = 0}^p \sum_{j_2 = 0}^p |j_1 - j_2|\right) \\
		&= \frac{2}{(p+1)^2}\left( \sum_{i_1 = 0}^p \sum_{i_2 = 0}^p |i_1 - i_2|\right) \\
		&= \frac{2}{(p+1)^2} \left( \sum_{i_1 = 0}^p \sum_{i_2 = i_1+1}^p (i_2-i_1) + \sum_{i_1 = 0}^p \sum_{i_2 = 0}^{i_1-1} (i_1 - i_2) \right) \\
		&=\frac{2}{(p+1)^2}\cdot 2\cdot \sum_{i=1}^p \frac{i(i+1)}{2}\\
		&=\frac{2}{(p+1)^2}\left(\sum_{i=1}^p i^2 + \sum_{i=1}^p i \right) \\
		&=\frac{2}{(p+1)^2}\left(  \frac{p(p+1)(2p+1)}{6}+\frac{p(p+1)}{2}\right)\\[1mm]
		&=\frac{p(2p+1)}{3(p+1)} + \frac{p}{p+1}\\[3mm]
		&=\Omega(p).
		\ee
		
		Thus for any $\alpha\in(0,1]$, there exist $\epsilon>0$ such that for any estimator,
		\be
		R\left(\hat{\Theta}_{Y_p}\right) \ge \frac{\Omega(p)}{\mathcal{O}(p)+ 1/\tau^2}>\epsilon,\quad\forall p.
		\ee
		This completes the proof that $\liminf_{p\rightarrow\infty}R\left(\hat{\Theta}_{Y_p}\right) > 0$ for any estimator.
	\end{proof}
	
	\medskip
	
	\begin{proof}[Proof of \Cref{lm:info_lb}]
		The result follows by adapting Theorem 1 of \cite{gill1995applications} which gives a multivariate version of the van Trees inequality. Consider estimating the total travel time on route $y$, $\Theta_y = \sum_{s\in y}\theta_s$, with one single observation $z\in\mathbb R_{\ge0}^{M\times1}$ where $z_i$ is a single observed travel time for a single segment on a single trip and $M=\sum_{n=1}^N|y_n|$. Let $u_i=s$ if the $i^{\textrm{th}}$ observation in $Z$ is a travel time on segment $s$.
		Let $w_i = n$ if the $i^{\textrm{th}}$ observation in $z$ is a travel time from trip $n$. Thus, $z_i = T^\prime_{w_i, u_i},\:\forall i\in\{1,\cdots, M\}$. With a bit abuse of notation, let $\theta_{y} = [\theta_s]_{s\in y}$. Define the Fisher information matrix
		\be
		\mathcal{I}(\theta_y) = \mathbb E\left[\left(\frac{\partial \log f(Z\:|\:\theta_y)}{\partial \theta_y}\right)^\intercal\left(\frac{\partial \log f(Z\:|\:\theta_y)}{\partial \theta_y}\right) \right]\in\mathbb R^{|y|\times|y|},
		\ee
		where the expectation is taken over $Z$ with $u_i$ and $w_i$ fixed and $f(Z \:|\: \theta_y)$ is the density of $Z$ given $\theta_y$. Suppose $f(\cdot\:|\:\theta_y)$ is on an arbitrary measure space $\mathbf{Z}$ for all $\theta_y$. Note that $\partial \log f(Z\:|\:\theta_y)/\partial \theta_y := [\partial \log f(Z\:|\:\theta_y)/\partial \theta_s]_{s\in y}\in\mathbb R^{1\times |y|}$. Let $\lambda(\theta_y)$ be the prior density. Suppose $\theta_y\in\mathbf{\Theta}_y\in\mathbb R^{1\times|y|}$ and $\theta_s\in\mathbf{\Theta}_s\in\mathbb R$. Define the information on the prior distribution $\lambda(\cdot)$,
		\be
		\mathcal{I}(\lambda) = \mathbb E\left[\left(\frac{\partial \log \lambda(\theta_y)}{\partial\theta_y}\right)^\intercal\left(\frac{\partial \log \lambda(\theta_y)}{\partial\theta_y}\right)\right]\in\mathbb R^{|y|\times|y|},
		\ee
		where the expectation is taken over $\theta_y$. The following result is taken from Theorem 1 of \cite{gill1995applications} and adapted to our setup by choosing $B(\theta_y) = 1$ and $C(\theta_y) = 1^{1\times|y|}$ in their theorem. The original assumptions stated in \cite{gill1995applications} are provided below. We call a function $g(\theta_y)$ nice if for each $s\in y$, it is absolutely continuous in $\theta_s$ for almost all values of the other components of $\theta_y$ and its partial derivatives $\partial g/\partial \theta_s$ are measurable in $\theta_y$.
		
		\begin{assumption}
			\label{assump:prior}
			\cite{gill1995applications} impose the following assumptions.
			\begin{enumerate}
				\item $f(z \mid \theta_y)$ is nice in $\theta_y$ for almost all $z$ and its partial derivatives with respect to $\theta_y$ are measurable in $z, \theta_y$.
				
				\item The Fisher information matrix $\mathcal{I}(\theta_y)$ exists and $\diag(\mathcal{I}(\theta_y))^{1/2}$ is locally integrable in $\theta_y$.
				
				\item $\lambda(\theta_y)$ is nice in $\theta_y$; $\mathbf{\Theta}_y$ is compact with boundary which is piecewise $C^1$-smooth; $\lambda(\theta_y)$ is positive on the interior of $\mathbf{\Theta}_y$ and zero on its boundary.  
			\end{enumerate}
		\end{assumption}
		
		\begin{theorem}[Multivariate van Trees inequality]\label{thm:van_Trees} For any estimator $\hat{\Theta}_y$,
			\be
			&\int_{\mathbf{\Theta}_y} \bb E\left[\left(\hat{\Theta}_y - \Theta_y\right)^2\:\bigg|\: y_{[N]}, \theta_y \right]\lambda(\theta_y)d\theta_y\\ \label{eq:van_trees}
			\ge&\frac{|y|^2}{\int_{\mathbf{\Theta}_y}\mathrm{trace}(\mathcal{I}(\theta_y))\lambda(\theta_y)d\theta_y + \mathrm{trace}(\mathcal{I}(\lambda))}.
			\ee
		\end{theorem}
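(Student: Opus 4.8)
The statement is the specialization of the general multivariate van Trees inequality of \cite{gill1995applications} to the scalar functional $\Theta_y=\sum_{s\in y}\theta_s=\mathbf 1^\intercal\theta_y$, obtained by taking $B(\theta_y)=1$ and $C(\theta_y)=\mathbf 1^{1\times|y|}$. The plan is therefore to (i) verify that our model meets the hypotheses collected in Assumption \ref{assump:prior}, and (ii) reprove the core inequality directly, which both justifies the citation and makes transparent how the choice $C=\mathbf 1^\intercal$ produces the numerator $|y|^2$.

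First I would set up the score identity. Write the joint density of $(Z,\theta_y)$ as $f(Z\mid\theta_y)\lambda(\theta_y)$ (with $y_{[N]}$ fixed throughout), and for each coordinate $s\in y$ define the total score $\rho_s:=\partial_{\theta_s}\log\!\big(f(Z\mid\theta_y)\lambda(\theta_y)\big)$. Put $g:=\hat\Theta_y(Z)-\Theta_y(\theta_y)$ and $\rho:=\sum_{s\in y}\rho_s$ (this aggregation is exactly the action of $C=\mathbf 1^\intercal$). The key claim is $\mathbb E[g\,\rho]=|y|$. For each $s$ this follows by integration by parts in $\theta_s$: since $g\,\partial_{\theta_s}(f\lambda)=\partial_{\theta_s}(g\,f\lambda)-(\partial_{\theta_s}g)f\lambda$ with $\partial_{\theta_s}g=-\partial_{\theta_s}\Theta_y=-1$, integrating over $\mathbf\Theta_y$ gives $\mathbb E[g\rho_s]=1$, provided the boundary term vanishes; it does because $\lambda\equiv0$ on $\partial\mathbf\Theta_y$ (Assumption \ref{assump:prior}, part 3) and the niceness in part 1 legitimizes differentiating under the integral. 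Summing over the $|y|$ coordinates yields $\mathbb E[g\rho]=|y|$.

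Next I would apply Cauchy--Schwarz and evaluate the resulting information. Cauchy--Schwarz gives $|y|^2=\mathbb E[g\rho]^2\le\mathbb E[g^2]\,\mathbb E[\rho^2]$, and $\mathbb E[g^2]$ is exactly the left-hand side $\int_{\mathbf\Theta_y}\mathbb E[(\hat\Theta_y-\Theta_y)^2\mid y_{[N]},\theta_y]\lambda(\theta_y)\,d\theta_y$. It remains to identify $\mathbb E[\rho^2]$ with the denominator. Splitting $\rho_s=\partial_{\theta_s}\log f+\partial_{\theta_s}\log\lambda$ into likelihood score and prior score, the cross terms vanish because the likelihood score has conditional mean zero given $\theta_y$ while the prior score is $\theta_y$-measurable; what survives is the aggregated Fisher information $\int_{\mathbf\Theta_y}\mathbf 1^\intercal\mathcal I(\theta_y)\mathbf 1\,\lambda\,d\theta_y$ plus the prior information $\mathbf 1^\intercal\mathcal I(\lambda)\mathbf 1$. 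This is precisely the denominator produced by Gill--Levit under the substitution $B=1$, $C=\mathbf 1^\intercal$ (for the independent, coordinatewise prior of Assumption \ref{assump:2} the prior term further collapses to $\mathrm{trace}(\mathcal I(\lambda))=|y|/\tau^2$), which completes the bound.

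The main obstacle, and the only genuinely delicate part, is the rigorous justification of the integration-by-parts identity in the first step: one must differentiate under the integral sign and discard the boundary contribution, which is exactly why Assumption \ref{assump:prior} requires $f$ and $\lambda$ to be nice (absolutely continuous with measurable partials), requires $\mathbf\Theta_y$ to be compact with a piecewise $C^1$ boundary, and forces $\lambda$ to vanish on $\partial\mathbf\Theta_y$. These conditions---already isolated as the hypotheses of the cited theorem---guarantee that no boundary term is silently dropped and that the expectations are finite (the inequality being vacuous when the denominator is infinite). Once this regularity is in place, the remainder is the standard Cauchy--Schwarz argument above, and the scalar specialization is immediate.
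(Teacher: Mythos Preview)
Your approach---setting $X=\hat\Theta_y-\Theta_y$, $Y=\sum_{s\in y}\partial_{\theta_s}\log(f\lambda)$, establishing $\mathbb E[XY]=|y|$ by coordinatewise integration by parts, then invoking Cauchy--Schwarz---is exactly the paper's (and Gill--Levit's) argument. Two small points of divergence are worth flagging. First, you kill the boundary term using the compact-domain hypothesis of Assumption~\ref{assump:prior}(3), whereas the paper actually needs the inequality for a Gaussian prior on $\mathbb R^{|y|}$ and therefore re-derives $\mathbb E[XY]=|y|$ under a \emph{revised} assumption (namely $\Theta_y\lambda(\theta_y)\to0$ as $\theta_s\to\pm\infty$ in each coordinate); your computation transfers verbatim to that setting, but you should say which regime you are in. Second, and more substantively, your computation of $\mathbb E[\rho^2]$ correctly yields $\int\mathbf 1^\intercal\mathcal I(\theta_y)\mathbf 1\,\lambda\,d\theta_y+\mathbf 1^\intercal\mathcal I(\lambda)\mathbf 1$, the \emph{sum of all entries} of each information matrix, not the trace. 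You notice this for the prior term (where the i.i.d.\ structure makes $\mathbf 1^\intercal\mathcal I(\lambda)\mathbf 1=\mathrm{trace}\,\mathcal I(\lambda)$) but not for the likelihood term. The paper writes ``$\mathrm{trace}$'' in the displayed statement, yet its downstream application in Lemma~\ref{lm:info_lb} substitutes $\sum_{s,t\in y_n}\psi_{s,t}$, which is $\mathbf 1^\intercal\mathcal I_n\mathbf 1$; so the quantity you derive is the one actually intended and used, and the ``trace'' appears to be a notational slip in the paper rather than a gap in your argument.
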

		
		\noindent\textbf{Revised the third part of Assumption \ref{assump:prior}.} We note that the compactness of $\mathbf{\Theta}_y$ can be replaced with $\mathbf{\Theta}_y=\mathbb R^{1\times|y|}$
		and $\lim_{\theta_s\to+\infty} \Theta_y\lambda(\theta_y)=0$ and $\lim_{\theta_s\to-\infty} \Theta_y\lambda(\theta_y)=0$ for all $s\in y$ and for almost all values of the other components of $\theta_y$. We still require $\lambda(\theta_y)$ to be nice in $\theta_y\in\mathbf{\Theta}_y$. We will use this changed assumption later in the proof as we will impose a Gaussian prior for $\lambda(\theta_y)$. Here we provide an updated proof based on this revised assumption. 
		\begin{proof}[Proof of \Cref{thm:van_Trees} under revised Assumption \ref{assump:prior}]
			Most of the proof follows exactly as the one provided on page 65 of \cite{gill1995applications}. We do not repeat the arguments here. The only part we have to re-verify is the derivation of $\mathbb E[XY]$ where $X = \hat{\Theta}_y - \Theta_y$ and $Y = \sum_{s\in y} (\partial \{f(Z\:|\: \theta_y)\lambda(\theta_y) \}/\partial \theta_s)\cdot(1/(f(Z\:|\: \theta_y)\lambda(\theta_y)))$. We let $\mathbf{\Theta}_{-s}$ and $\theta_{-s}$ define the measure space and vector excluding the $s^{\textrm{th}}$ component.
			\be
			\mathbb E[XY] =& \int_{\mathbf{Z}}\int_{\mathbf{\Theta}_y}  (\hat{\Theta}_y - \Theta_y) \sum_{s\in y} \frac{\partial \{f(z\:|\: \theta_y)\lambda(\theta_y)\} }{\partial \theta_s}d\theta dz \\
			=& \int_{\mathbf{Z}}\left(\sum_{s\in y}\int_{\mathbf{\Theta}_{- s}}\int_{\mathbf{\Theta}_s} (\hat{\Theta}_y - \Theta_y) \frac{\partial \{f(z\:|\: \theta_y)\lambda(\theta_y)\} }{\partial \theta_s}d\theta_s d\theta_{-s}\right) dz\\
			=& \int_{\mathbf{Z}}\left(\sum_{s\in y}\int_{\mathbf{\Theta}_{-s}}\int_{-\infty}^{+\infty} (\hat{\Theta}_y - \Theta_y) \frac{\partial \{f(z\:|\: \theta_y)\lambda(\theta_y)\} }{\partial \theta_s}d\theta_s d\theta_{-s}\right) dz  \\
			=& \int_{\mathbf{Z}}\left(\sum_{s\in y}\int_{\mathbf{\Theta}_{-s}}\left(\left[(\hat{\Theta}_y - \Theta_y) f(z\:|\: \theta_y)\lambda(\theta_y)\right]_{-\infty}^{+\infty} + \int_{-\infty}^{+\infty} f(z\:|\:\theta_y)\lambda(\theta_y)d\theta_s \right)d\theta_{-s}\right) dz \\
			=& \int_{\mathbf{Z}}\left(\sum_{s\in y}\int_{\mathbf{\Theta}_{-s}}\left(\int_{-\infty}^{+\infty} f(z\:|\:\theta_y)\lambda(\theta_y)d\theta_s \right)d\theta_{-s}\right) dz \\
			=&|y| \int_{\mathbf{Z}}\int_{\mathbf{\Theta}_y}f(z\:|\:\theta_y)\lambda(\theta_y)d\theta_{y} dz  \\[2mm]
			=&|y|.
			\ee 
			The fourth equality is integration by parts. The fifth equality holds as $\lim_{\theta_s\to+\infty} \Theta_y\lambda(\theta_y)=0$ and $\lim_{\theta_s\to-\infty} \Theta_y\lambda(\theta_y)=0$, which implies that $\lim_{\theta_s\to+\infty} \lambda(\theta_y)=0$ and $\lim_{\theta_s\to-\infty} \lambda(\theta_y)=0$.
			The rest of the proof follows exactly as the one in \cite{gill1995applications} by showing that 
			\be
			\mathbb E[Y^2] = \int_{\mathbf{\Theta}_y}\mathrm{trace}(\mathcal{I}(\theta_y))\lambda(\theta_y)d\theta_y + \mathrm{trace}(\mathcal{I}(\lambda)),
			\ee
			and finally by Cauchy-Schwarz inequality,
			\be
			\int_{\mathbf{\Theta}_y} \bb E\left[\left(\hat{\Theta}_y - \Theta_y\right)^2\:\bigg|\: y_{[N]}, \theta_y \right]\lambda(\theta_y)d\theta_y=\mathbb E[X^2] \ge& \frac{\mathbb E[XY]^2}{\mathbb E[Y^2]} \\
			=&\frac{|y|^2}{\int_{\mathbf{\Theta}_y}\mathrm{trace}(\mathcal{I}(\theta_y))\lambda(\theta_y)d\theta_y + \mathrm{trace}(\mathcal{I}(\lambda))}.
			\ee
			This completes the proof.
		\end{proof}

		\medskip
		
		By the data generative process, we know that
		\be
		f(z \:|\: \theta_y) = \prod_{n=1}^N f_n( \{z_i: w_i = n\} \:|\: \theta_y)~\Rightarrow~\log f(z\:|\: \theta_y) = \sum_{n=1}^N \log f_n( \{z_i: w_i = n\} \:|\: \theta_y),
		\ee
		where $f_n( \{z_i: w_i = n\} \:|\: \theta_y)$ is the density of observing the segment travel times on trip $n$ given $\theta_y$. Define the Fisher information matrix for each trip $n$,
		\be
		\mathcal{I}_n(\theta_y) = \mathbb E\left[\left(\frac{\partial \log f_n( \{Z_i: w_i = n\} \:|\: \theta_y)}{\partial \theta_y}\right)^\intercal\left(\frac{\partial \log f_n(\{Z_i: w_i = n\} \:|\: \theta_y)}{\partial \theta_y}\right) \right]\in\mathbb R^{|y|\times|y|}.
		\ee
		
		By an equivalent definition of the Fisher information matrix under mild regularity conditions (see Lemma 5.3 of \cite{lehmann2006theory}), we have
		\be
		\left[\mathcal{I}(\theta_y)\right]_{s,t} =& -\mathbb E\left[\frac{\partial^2}{\partial \theta_s\partial\theta_t}\log f(Z\:|\:\theta_y) \right] \\
		=& -\sum_{n=1}^N\mathbb E\left[\frac{\partial^2}{\partial \theta_s\partial\theta_t}\log f_n(\{Z_i: w_i = n\}\:|\:\theta_y) \right] \\
		=&\sum_{n=1}^N  \left[\mathcal{I}_n(\theta_y)\right]_{s,t},
		\ee
		for all $s,t\in y$. Plugging this into \eqref{eq:van_trees} yields,
		\be
		&\int_{\mathbf{\Theta}_y} \bb E\left[\left(\hat{\Theta}_y - \Theta_y\right)^2\:\bigg|\: y_{[N]}, \theta_y \right]\lambda(\theta_y)d\theta_y \\
            \ge& R\left(\hat{\Theta}^\ast_{y} \:\bigg|\: y_{[N]}\right) \\ \label{eq:van_tree_2}
		\ge&\frac{|y|^2}{\sum_{n=1}^N \int_{\mathbf{\Theta}_{y}}\textrm{tr}\:\mathcal{I}_n(\theta_{y})\lambda(\theta_{y})d\theta_{y} + \textrm{tr}\:\mathcal{I}(\lambda)}.
		\ee

       Under the assumption that $(\mathcal{E},\theta)$ are jointly Gaussian distributed, both the means of segment travel times, and the segment travel times conditional on their means are normally distributed. Note that Gaussian priors and posteriors satisfy the revised Assumption \ref{assump:prior}. This greatly simplifies the analysis --- we know that under multivariate normal the Fisher information matrix is simply the precision matrix. This gives $\mathcal{I}(\lambda)=\diag([\underbrace{1/\tau^2,\cdots,1/\tau^2}_{|y|}])$. Moreover,
		\be
		\left[\mathcal{I}_n(\theta_{y})\right]_{s,t} = \begin{cases}
			\psi_{s,t}, & \text{if $s,t\in y_n$},\\
			0, & \text{o/w},
		\end{cases}
		\ee
		for all $s,t\in y$. This further yields,
		\be
		R\left(\hat{\Theta}^\ast_{y}\:\Big|\:y_{[N]}\right)\ge&\frac{|y|^2}{\sum_{n=1}^N \sum_{s,t\in y_n}\psi_{s,t} + |y|/\tau^2} \\
		=&\frac{|y|^2}{\sum_{s,t\in y} N_{s\cup t}\psi_{s,t}+ |y|/\tau^2}.
		\ee
		
		As a sanity check, when the sample size is zero ($N_{s\cup t}=0,\:\forall s,t\in y$), we have $R(\hat{\Theta}^\ast_{y}\:|\: y_{[N]})\ge|y|\tau^2$. This matches the intuition because if there is no historical data at all, the best estimator should just be the prior mean $|y|\mu$. It leads to the integrated risk $|y|\tau^2$ which contains no variance but only bias. %
	\end{proof}
	
	\bigskip

	\section{Lemmas} 
	\label{apx:lemmas}
	
	\begin{lemma} \label{lm:SumBound}
		Let 
		\be
		S(q_s,N) = \sum_{N_s = 1}^{N} \frac1{N_s} {N \choose N_s} q_s^{N_s} (1-q_s)^{N-N_s}.
		\ee
		Then
		\be
		\frac{1-(1-q_s)^{N+1}}{(N+1)q_s} - (1-q_s)^N &< S(q_s,N) \\
		2 \left( \frac{1-(1-q_s)^{N+1}}{(N+1)q_s} - (1-q_s)^N \right) &> S(q_s,N)
		\ee
	\end{lemma}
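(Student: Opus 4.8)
The plan is to read $S(q_s,N)$ as the expectation $\mathbb{E}\!\left[N_s^{-1}\1\{N_s\ge 1\}\right]$ where $N_s\sim\Binom(N,q_s)$, and to trap the inconvenient factor $1/N_s$ between $1/(N_s+1)$ and $2/(N_s+1)$. Writing $q:=q_s$, the elementary bounds
\[
\frac{1}{k+1}<\frac{1}{k}\le\frac{2}{k+1},\qquad k\ge1,
\]
(with the right inequality strict for $k\ge2$) reduce everything to evaluating the single auxiliary sum
\[
A:=\sum_{k=1}^{N}\frac{1}{k+1}\binom{N}{k}q^{k}(1-q)^{N-k},
\]
since then $A<S(q,N)\le 2A$, and I expect $A$ to equal exactly the displayed quantity $\tfrac{1-(1-q)^{N+1}}{(N+1)q}-(1-q)^{N}$.

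The engine for evaluating $A$ is the combinatorial identity
\[
\frac{1}{k+1}\binom{N}{k}=\frac{1}{N+1}\binom{N+1}{k+1},
\]
which is immediate from $\tfrac{N!}{(k+1)!(N-k)!}=\tfrac{1}{N+1}\cdot\tfrac{(N+1)!}{(k+1)!(N-k)!}$. Substituting this into $A$ and re-indexing with $j=k+1$ turns the sum into $\tfrac{1}{(N+1)q}\sum_{j=2}^{N+1}\binom{N+1}{j}q^{j}(1-q)^{N+1-j}$; the binomial theorem applied to $(q+(1-q))^{N+1}=1$ then lets me subtract the missing $j=0$ and $j=1$ terms, namely $(1-q)^{N+1}$ and $(N+1)q(1-q)^{N}$, giving
\[
A=\frac{1-(1-q)^{N+1}-(N+1)q(1-q)^{N}}{(N+1)q}=\frac{1-(1-q)^{N+1}}{(N+1)q}-(1-q)^{N},
\]
exactly the target expression.

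With $A$ in hand the two bounds follow directly. The lower bound uses $1/k>1/(k+1)$ termwise to get $S(q,N)>A$; the upper bound uses $1/k\le 2/(k+1)$ termwise to get $S(q,N)\le 2A$, with strictness whenever the $\Binom(N,q)$ law places positive mass on $\{k\ge2\}$ (e.g.\ for $N\ge2$ and $0<q<1$), which covers the regime in which the lemma is applied. I do not anticipate a genuine obstacle here beyond bookkeeping: the only delicate step is the re-indexing together with the correct accounting of the two omitted binomial terms, so the main thing to verify is that the $j=0,1$ contributions are subtracted with the right signs and that the $1/q$ prefactor is tracked correctly through the shift from exponent $k$ to $j=k+1$.
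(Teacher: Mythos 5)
Your proposal is correct and follows essentially the same route as the paper: both sandwich $1/k$ between $1/(k+1)$ and $2/(k+1)$ and reduce the problem to the closed-form evaluation of $\sum_{k=1}^{N}\tfrac{1}{k+1}\binom{N}{k}q^{k}(1-q)^{N-k}$, which the paper imports from \citet{chao1972negative} while you derive it directly via $\tfrac{1}{k+1}\binom{N}{k}=\tfrac{1}{N+1}\binom{N+1}{k+1}$ and reindexing. Your remark that the upper inequality is strict only when $\Binom(N,q)$ puts mass on $\{k\ge 2\}$ is in fact a point the paper glosses over, and is worth keeping.
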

	\begin{proof}
		For $N_s \sim \Binom(N,q_s)$, we use \cite[Eqn. 3.4]{chao1972negative} to obtain
		\be
		\bb E[(N_s+1)^{-1}] &= (1-q_s)^N + \sum_{N_s = 1}^N \frac1{N_s + 1} {N \choose N_s} q_s^{N_s} (1-q_s)^{N-N_s} \\
		&= \frac{1-(1-q_s)^{N+1}}{(N+1)q_s},
		\ee
		and since
		\be
		\sum_{N_s = 1}^N \frac1{N_s + 1} {N \choose N_s} q_s^{N_s} (1-q_s)^{N-N_s} &< S(q_s,N), \\
		2 \sum_{N_s = 1}^N \frac1{N_s + 1} {N \choose N_s} q_s^{N_s} (1-q_s)^{N-N_s} &> S(q_s,N),
		\ee
		the result follows.
	\end{proof}
	
	\medskip
	
	\begin{lemma}
		\label{lm:useful_rates}
		For an even number $p$, consider the following function of $j\in\bb Z_{>0}$,
		\be
		g(j)=\prod_{i=1}^j \frac{i+\alpha}{i+1}\cdot\frac{p-i}{p-i-1+\alpha},
		\ee
		we have,
		\be
		g(j) \simeq \left(\frac{1}{j+1}\right)^{1-\alpha},
		\ee
		for all $0\le\alpha\le 1$ and $j\le p/2-1$.
	\end{lemma}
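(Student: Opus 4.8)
The plan is to turn $g(j)$ into a product of two Gamma-function ratios and then bound each using Gautschi's inequality, exactly as it was already invoked in the proof of \Cref{lm:NodeLowerBound}. First I would split the product according to its two alternating factors,
\[
g(j) = \underbrace{\left(\prod_{i=1}^j \frac{i+\alpha}{i+1}\right)}_{A(j)}\cdot\underbrace{\left(\prod_{i=1}^j \frac{p-i}{p-i-1+\alpha}\right)}_{B(j)},
\]
and telescope each piece into Gamma quotients. Using $\prod_{i=1}^j(i+\alpha)=\Gamma(j+1+\alpha)/\Gamma(1+\alpha)$ and $\prod_{i=1}^j(i+1)=\Gamma(j+2)$ gives $A(j)=\Gamma(j+1+\alpha)/\bigl(\Gamma(1+\alpha)\,\Gamma(j+2)\bigr)$. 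For the second factor I would reindex by $k=p-i$ to write $B(j)=\prod_{k=p-j}^{p-1}\frac{k}{k-1+\alpha}=\dfrac{\Gamma(p)\,\Gamma(p-j-1+\alpha)}{\Gamma(p-j)\,\Gamma(p-1+\alpha)}$.

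For $A(j)$, I would apply Gautschi with $x=j+1$ and $\beta=\alpha\in(0,1]$ to the ratio $\Gamma(j+2)/\Gamma(j+1+\alpha)$, obtaining $(j+1)^{1-\alpha}\le \Gamma(j+2)/\Gamma(j+1+\alpha)\le (j+2)^{1-\alpha}$; taking reciprocals and using $j+2\le 2(j+1)$ yields $A(j)\simeq(j+1)^{-(1-\alpha)}$, with constants depending only on $\alpha$ (and $\Gamma(1+\alpha)$ is bounded above and below on $[0,1]$). For $B(j)$ the goal is to show $B(j)\simeq 1$ uniformly for $j\le p/2-1$. Applying Gautschi separately to $\Gamma(p)/\Gamma(p-1+\alpha)$ (with $x=p-1$) and to $\Gamma(p-j)/\Gamma(p-j-1+\alpha)$ (with $x=p-j-1$) gives
\[
\left(\frac{p-1}{p-j}\right)^{1-\alpha}\le B(j)\le\left(\frac{p}{p-j-1}\right)^{1-\alpha}.
\]
Here the restriction $j\le p/2-1$ forces $p-j-1\ge p/2$, so the upper ratio is at most $2$, while the lower ratio is at least $1$ (since $j\ge1$ gives $p-j\le p-1$); hence $1\le B(j)\le 2^{1-\alpha}$ and $B(j)\simeq1$. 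Multiplying the two estimates then gives $g(j)\simeq(j+1)^{-(1-\alpha)}$.

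The endpoints $\alpha\in\{0,1\}$ lie outside Gautschi's range and I would check them directly: at $\alpha=1$ every factor equals $1$, so $g(j)\equiv 1=(j+1)^0$, and at $\alpha=0$ both telescoping products collapse to give $g(j)=\frac{1}{j+1}\cdot\frac{p-1}{p-j-1}\simeq(j+1)^{-1}$. The only genuine subtlety—and the main point to get right—is ensuring that all multiplicative constants are uniform in $p$ rather than merely asymptotic; this is precisely where the hypothesis $j\le p/2-1$ is essential, since without it the factor $p/(p-j-1)$ appearing in the bound for $B(j)$ is unbounded as $j\to p-1$, and the clean estimate $B(j)\simeq 1$ would fail.
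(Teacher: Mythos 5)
Your proof is correct, but it takes a genuinely different route from the paper's. You telescope both alternating factors into Gamma quotients, $A(j)=\Gamma(j+1+\alpha)/(\Gamma(1+\alpha)\Gamma(j+2))$ and $B(j)=\Gamma(p)\Gamma(p-j-1+\alpha)/(\Gamma(p-j)\Gamma(p-1+\alpha))$, and then apply Gautschi's inequality (already invoked in the proof of \Cref{lm:NodeLowerBound}) to each quotient; the hypothesis $j\le p/2-1$ enters only to pin $B(j)$ between $1$ and $2^{1-\alpha}$. The paper instead works factor by factor without Gamma functions: the lower bound comes from the pointwise inequality $(i+\alpha)/(i+1)\ge(i/(i+1))^{1-\alpha}$ (proved by concavity in $\alpha$), while the upper bound is obtained by maximizing the ratio $\frac{(i+\alpha)/(i+1)}{(i/(i+1))^{1-\alpha}}$ over $\alpha$ for each $i$, showing the maximizer $\alpha^\ast(i)$ tends to $1/2$, and then controlling the resulting products via a Taylor expansion of $(i/(i+1))^{1/2+i}$, the convergence criterion for $\prod(1-p_i)$, and a Wallis-type pairing of $\prod\frac{2i+1}{2i+2}$ against $\prod\frac{2i}{2i+1}$; the $p$-dependent factor is handled exactly as you do, via $\frac{p-1}{p-j-1}\le 2$. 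Your argument is shorter, yields explicit two-sided constants depending only on $\alpha$, and reuses machinery the paper already deploys elsewhere, whereas the paper's argument is self-contained at the level of elementary inequalities and avoids any appeal to the Gamma function for this lemma. One point in your favor worth keeping: you explicitly verify the endpoints $\alpha\in\{0,1\}$ (where Gautschi as stated for $0<\beta\le1$ either degenerates or does not apply), which matches the lemma's stated range $0\le\alpha\le1$.
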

	\begin{proof}
		We first show that $(i+\alpha)/(i+1)\ge (i/(i+1))^{1-\alpha},\:\forall i\in\mathbb Z_{>0}$. Notice that at the two end points $\alpha=0$ and $\alpha=1$, we have $(i+\alpha)/(i+1)=(i/(i+1))^{1-\alpha},\:\forall i\in\mathbb Z_{>0}$. Given that
		\be
		\frac{d^2}{d\alpha^2}\left(\frac{i+\alpha}{i+1} - \left(\frac{i}{i+1}\right)^{1-\alpha}\right) = -\left(\frac{i}{i+1}\right)^{1-\alpha}\log^2\left(\frac{i}{i+1}\right) < 0,
		\ee
		we know that $(i+\alpha)/(i+1)-(i/(i+1))^{1-\alpha}$ is concave in $\alpha$ for any $i\in\bb Z_{>0}$. This gives $(i+\alpha)/(i+1)-(i/(i+1))^{1-\alpha}\ge0,\:\forall i\in\mathbb Z_{>0},\forall\alpha\in[0,1]$. We then have
		\be
		g(j)\ge\prod_{i=1}^j \frac{i+\alpha}{i+1}\ge\prod_{i=1}^j\left(\frac{i}{i+1}\right)^{1-\alpha}=\left(\frac{1}{j+1}\right)^{1-\alpha}.
		\ee
		\medskip
		
		We now prove the other direction. For some $i\in\mathbb Z_{>0}$, we first look at the function 
		\be
		f(\alpha) = \frac{(i+\alpha)/(i+1)}{\left(i/(i+1)\right)^{1-\alpha}},\:\:\forall \alpha\in[0,1].
		\ee
		
		The first-order condition of $f(\alpha)$ is
		\be
		\frac{d}{d\alpha} f(\alpha) = \frac{\left(\frac{i}{i+1}\right)^\alpha\left((i+\alpha)\log(\frac{i}{i+1}) + 1\right)}{i} = 0,
		\ee
		which yields
		\be
		\label{eq:opt_alpha}
		\alpha^\ast(i) = \frac{1}{\log((i+1)/i)} - i \in[0,1],\:\:\forall i\in\mathbb Z_{>0}.
		\ee
		
		The second derivative of $f(\alpha)$ is
		\be
		\frac{d^2}{d\alpha^2} f(\alpha) =& \frac{(i+\alpha)\left(\frac{i}{i+1}\right)^{\alpha-1}\log^2\left(\frac{i}{i+1}\right)}{i+1} + \frac{2\left(\frac{i}{i+1}\right)^{\alpha-1}\log\left(\frac{i}{i+1}\right)}{i+1} \\
		=&\frac{1}{i+1}\left(\frac{i}{i+1}\right)^{\alpha-1}\log\left(\frac{i}{i+1}\right)\left((i+\alpha)\log\left(\frac{i}{i+1}\right)+2\right)<0,\:\:\forall \alpha\in[0,1], \forall i\in\mathbb Z_{>0}.
		\ee
		
		This suggests that $\alpha^\ast(i)$ in \eqref{eq:opt_alpha} is the solution that maximizes $f(\alpha)$ for a given $i\in\mathbb Z_{>0}$. Moreover, $\alpha^\ast(i)$ is increasing in $i$ and $\lim_{i\rightarrow+\infty} \alpha^\ast(i) = 1/2$, which gives $\alpha^\ast(i)\in[0,1/2],\:\forall i \in\mathbb Z_{>0}$.
		
		\smallskip
		
		We now show that 
		\be
		\prod_{i=1}^j \frac{i+\alpha}{i+1}\lesssim \prod_{i=1}^j \left(\frac{i}{i+1}\right)^{1-\alpha} = \left(\frac{1}{j+1}\right)^{1-\alpha},\:\:\forall \alpha\in[0,1],\forall j\in\mathbb Z_{>0}.
		\ee
		
		To see that,
		\be
		\label{eq:temp4}
		\frac{\prod_{i=1}^j \frac{i+\alpha}{i+1}}{\prod_{i=1}^j \left(\frac{i}{i+1}\right)^{1-\alpha}} \le \frac{\prod_{i=1}^j \frac{i+\alpha^\ast(i)}{i+1}}{\prod_{i=1}^j \left(\frac{i}{i+1}\right)^{1-\alpha^\ast(i)}} \le \frac{\prod_{i=1}^j \frac{i+1/2}{i+1}}{\prod_{i=1}^j \left(\frac{i}{i+1}\right)^{1-\alpha^\ast(i)}}.
		\ee
		
		For $\prod_{i=1}^j \left(\frac{i}{i+1}\right)^{1-\alpha^\ast(i)}$,
		\be
		&\prod_{i=1}^j \left(\frac{i}{i+1}\right)^{1-\alpha^\ast(i)} \\
		=& \prod_{i=1}^j \left(\frac{i}{i+1}\right)^{1+i-\frac{1}{\log((i+1)/i)}} \\
		=&\prod_{i=1}^j \left(\frac{i}{i+1}\right)^{1/2}\left(\frac{i}{i+1}\right)^{1/2 + i}\left(\frac{i}{i+1}\right)^{-\frac{1}{\log((i+1)/i)}} \\
		=&\prod_{i=1}^j \left(\frac{i}{i+1}\right)^{1/2}\left(\frac{i}{i+1}\right)^{1/2 + i}e \\
		\ge&\prod_{i=1}^j \left(\frac{i}{i+1}\right)^{1/2}\left(e^{-1} - \frac{1}{12e\cdot i^2}\right)e\qquad\quad(\textrm{using Taylor series of $\left(i/(i+1)\right)^{1/2 + i}$ at $+\infty$}) \\
		=&\left(\frac{1}{j+1}\right)^{1/2}\prod_{i=1}^j\left(1 - \frac{1}{12\cdot i^2}\right)\\
		\gtrsim&\left(\frac{1}{j+1}\right)^{1/2}.
		\ee
		The last inequality uses a result in analysis that for a series of $0<p_i<1,\: i\in\mathbb Z_{>0}$, a sufficient and necessary condition for $\prod_{i=1}^{+\infty} (1 - p_i)>0$ is $\sum_{i=1}^{+\infty} p_i < +\infty$. This leads to $\prod_{i=1}^j(1 - 1/(12\cdot i^2))\ge\prod_{i=1}^{+\infty}(1 - 1/(12\cdot i^2))>0$.
		
		\smallskip
		
		On the other hand,
		\be
		\prod_{i=1}^j \frac{i+1/2}{i+1} = \prod_{i=1}^j \frac{2i+1}{2i+2}.
		\ee
		
		We know that
		\be
		\prod_{i=1}^j \frac{2i+1}{2i+2} \cdot \prod_{i=1}^j \frac{2i}{2i+1} = \frac{1}{j+1},\quad\frac{\prod_{i=1}^j \frac{2i+1}{2i+2}}{\prod_{i=1}^j \frac{2i}{2i+1}} \le (3/4)/(2/3) = 9/8,
		\ee
		which suggests that
		\be
		\prod_{i=1}^j \frac{i+1/2}{i+1} \lesssim \left(\frac{1}{j+1}\right)^{1/2}.
		\ee
		
		Using \eqref{eq:temp4}, 
		\be
		\frac{\prod_{i=1}^j \frac{i+\alpha}{i+1}}{\prod_{i=1}^j \left(\frac{i}{i+1}\right)^{1-\alpha}} \lesssim \frac{\left(\frac{1}{j+1}\right)^{1/2}}{\left(\frac{1}{j+1}\right)^{1/2}} = 1\quad\Rightarrow\quad \prod_{i=1}^j \frac{i+\alpha}{i+1}\lesssim \prod_{i=1}^j \left(\frac{i}{i+1}\right)^{1-\alpha} = \left(\frac{1}{j+1}\right)^{1-\alpha}.
		\ee
		
		This further gives,
		\be
		g(j)=&\prod_{i=1}^j \frac{i+\alpha}{i+1}\cdot\frac{p-i}{p-i-1+\alpha} \\
		\le&\prod_{i=1}^j \frac{i+\alpha}{i+1}\cdot\frac{p-i}{p-i-1} \\
		=&\left(\prod_{i=1}^j \frac{i+\alpha}{i+1}\right)\cdot\frac{p-1}{p-j-1} \\
		\le&\left(\prod_{i=1}^j \frac{i+\alpha}{i+1}\right)\cdot\frac{p-1}{p-p/2+1-1} \\
		=&\left(\prod_{i=1}^j \frac{i+\alpha}{i+1}\right)\cdot\frac{p-1}{p/2} \\
		\le&2\left(\prod_{i=1}^j \frac{i+\alpha}{i+1}\right) \\
		\lesssim&\left(\frac{1}{j+1} \right)^{1-\alpha}.
		\ee
		This completes the proof.
	\end{proof}
	
	\medskip

	\begin{lemma}
		\label{lm:sum_square_ub}
		Under the route distribution $\mu_p$ in \Cref{subsec:grid},
		\be
		\bb \sum_{s\in\mathcal{S}_p}q_s^2=\sum_{s\in\mathcal{S}_p} \bb P^2\left[s\in Y_p\right]\simeq \begin{cases} 1, &  \frac12< \alpha\le1, \\ p^{1-2\alpha}, & 0<\alpha\le\frac12. \end{cases}.
		\ee
	\end{lemma}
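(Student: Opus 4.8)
The plan is to exploit the fact that, up to constants, $q_s$ factorizes into a part depending only on the coordinate transverse to $s$ and a part depending only on its position along the direction of travel. Concretely, the computation in the proof of \Cref{prop:seg_accum} (see \eqref{eq:probability}) shows that for a horizontal segment $s=(i,j)\to(i+1,j)$ one has
\be
q_s \simeq P_j \cdot F(i)\bigl(1-F(i)\bigr),
\ee
where $P_i := \bb P[X=(i,\cdot)]$ is the one-dimensional marginal of a coordinate (common to origin and destination, and to the two axes by symmetry) and $F(i):=\sum_{i'=0}^{i}P_{i'}$ is its CDF, so that $\sum_{i'=i+1}^{p}P_{i'}=1-F(i)$. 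Vertical segments contribute the same order by symmetry. Summing the squares and using that the summand separates, I would obtain
\be
\sum_{s\in\mathcal{S}_p}q_s^2 \simeq \left(\sum_{j=0}^{p}P_j^2\right)\left(\sum_{i=0}^{p-1}\bigl[F(i)(1-F(i))\bigr]^2\right),
\ee
so the proof reduces to estimating the two factors separately. For the first factor I would directly invoke the computation already carried out in the proof of \Cref{prop:q_delta_od}, which establishes $\sum_{i=0}^{p}P_i^2\simeq p^{-2\alpha}$ when $0<\alpha\le 1/2$ and $\simeq p^{-1}$ when $1/2\le\alpha\le 1$.

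The bulk of the new work is the second factor, and I claim $\sum_{i=0}^{p-1}[F(i)(1-F(i))]^2\simeq p$ for every $\alpha\in(0,1]$. To see this I would first recover the density profile $P_j\simeq p^{-\alpha}(j+1)^{\alpha-1}$ for $0\le j\le p/2$ from the recursion in the proof of \Cref{prop:q_delta_od}, together with the rate $g(j)\simeq (1/(j+1))^{1-\alpha}$ of \Cref{lm:useful_rates} and $P_0\simeq p^{-\alpha}$ from \Cref{lm:NodeLowerBound}. Summing yields $F(i)\simeq\sum_{j\le i}p^{-\alpha}(j+1)^{\alpha-1}\simeq((i+1)/p)^{\alpha}$ for $i\le p/2$. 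Symmetry of the beta-binomial about $p/2$ gives the exact identity $F(p/2)=(1+P_{p/2})/2\to 1/2$, so that $1-F(i)\ge 1-F(p/2)\simeq 1$ for all $i\le p/2$; hence $[F(i)(1-F(i))]^2\simeq F(i)^2\simeq((i+1)/p)^{2\alpha}$ on the left half, and by symmetry the right half behaves like $((p-i)/p)^{2\alpha}$.

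Combining these estimates, I would conclude
\be
\sum_{i=0}^{p-1}\bigl[F(i)(1-F(i))\bigr]^2 \simeq 2\,p^{-2\alpha}\sum_{i=0}^{p/2}(i+1)^{2\alpha}\simeq p^{-2\alpha}\cdot p^{2\alpha+1}=p,
\ee
using $\sum_{k=1}^{p/2}k^{2\alpha}\simeq p^{2\alpha+1}$, valid for all $\alpha>0$. Multiplying the two factors then gives $\sum_s q_s^2\simeq p^{-2\alpha}\cdot p=p^{1-2\alpha}$ for $\alpha\le 1/2$ and $\simeq p^{-1}\cdot p=1$ for $\alpha\ge 1/2$, the two expressions agreeing at $\alpha=1/2$, which is exactly the claim.

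The main obstacle I anticipate is making the second factor rigorous: one must control $F(i)(1-F(i))$ uniformly, in particular establishing that $1-F(i)$ stays bounded away from $0$ throughout the left half of the grid (so that $[F(1-F)]^2\simeq F^2$) rather than only near the boundary. The clean symmetry identity $F(p/2)=(1+P_{p/2})/2$ is what makes this transparent; the remaining care is in propagating the $\simeq$ constants in the density profile $P_j\simeq p^{-\alpha}(j+1)^{\alpha-1}$ through the discrete sum defining $F$, for which comparison with the integrals $\int k^{\alpha-1}\,dk$ and $\int k^{2\alpha}\,dk$ suffices.
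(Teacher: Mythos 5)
Your proposal is correct and follows essentially the same route as the paper's own proof: the same factorization of $q_s^2$ into a transverse factor $\sum_j P_j^2$ (evaluated by citing the computation in the proof of \Cref{prop:q_delta_od}) and a longitudinal factor $\sum_i [F(i)(1-F(i))]^2$, reduced to the left half of the grid where $1-F(i)\simeq 1$ and $F(i)\simeq((i+1)/p)^{\alpha}$ via the density profile from \Cref{lm:useful_rates}. The only (harmless) difference is presentational: you make the step $1-F(i)\simeq 1$ explicit through the symmetry identity $F(p/2)=(1+P_{p/2})/2$, whereas the paper simply restricts the sum to $i\le p/2-1$ and absorbs that factor into the $\simeq$.
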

	\begin{proof}
		Without loss of generality, we focus on the case that $p$ is even. From the proof of \Cref{prop:q_delta_od}, we know that for a segment $s=(i,j)\rightarrow(i+1,j)$, 
		\be
		&\bb P\left[s\in Y_p \right] \\
		\simeq&\sum_{i_1=0}^i\sum_{i_2=i+1}^p\mathbb P(x_1 = (i_1, j))\mathbb P(x_2=(i_2,\cdot)) \\ 
		=&\mathbb P(x_1=(\cdot, j))\cdot\left(\sum_{i_1=0}^i\sum_{i_2=i+1}^p\mathbb P(x_1 = (i_1, \cdot))\mathbb P(x_2=(i_2,\cdot))\right)\\
		=&\mathbb P(x_1=(\cdot, j))\cdot\left(\sum_{i_1=0}^i\mathbb P(x_1 = (i_1, \cdot))\right)\left(\sum_{i_2=i+1}^p\mathbb P(x_2=(i_2,\cdot))\right).
		\ee
		
		Moreover, because $\mu_p$ is symmetric,
		\be
		&\sum_{s\in\mathcal{S}_p} \bb P^2\left[s\in Y_p\right]\\ \simeq&\sum_{i\in\{0,\cdots,p/2-1\}}\sum_{j\in\{0,\cdots,p/2\}} \bb P^2\left[s=(i,j)\rightarrow(i+1,j)\in Y_p\right] \\
		\simeq&\sum_{i\in\{0,\cdots,p/2-1\}}\sum_{j\in\{0,\cdots,p/2\}}\mathbb P^2(x_1=(\cdot, j))\cdot\left(\sum_{i_1=0}^i\mathbb P(x_1 = (i_1, \cdot))\right)^2\left(\sum_{i_2=i+1}^p\mathbb P(x_2=(i_2,\cdot))\right)^2 \\
		=&\left(\sum_{j\in\{0,\cdots,p/2\}}\mathbb P^2(x_1=(\cdot, j))\right)\cdot\left(\sum_{i\in\{0,\cdots,p/2-1\}}\left(\sum_{i_1=0}^i\mathbb P(x_1 = (i_1, \cdot))\right)^2\left(\sum_{i_2=i+1}^p\mathbb P(x_2=(i_2,\cdot))\right)^2 \right)\\ \label{eq:temp1}
		\simeq&\left(\sum_{j\in\{0,\cdots,p/2\}}\mathbb P^2(x_1=(\cdot, j))\right)\cdot\left(\sum_{i\in\{0,\cdots,p/2-1\}}\left(\sum_{i_1=0}^i\mathbb P(x_1 = (i_1, \cdot))\right)^2\right).
		\ee
		\begin{itemize}
			\item When $1/2<\alpha\le1$, from the proof of \Cref{prop:q_delta_od},
			\be
			\sum_{j\in\{0,\cdots,p/2\}}\mathbb P^2(x_1=(\cdot, j))\simeq p^{-1},
			\ee
			and
			\be
			\sum_{i_1=0}^i\mathbb P(x_1 = (i_1, \cdot))\simeq p^{-\alpha}\left(\sum_{i_1=0}^i\left(\frac{1}{i_1+1}\right)^{1-\alpha} \right)\simeq p^{-\alpha}(i+1)^\alpha.
			\ee
			This yields,
			\be
			\eqref{eq:temp1}\simeq p^{-1} p^{-2\alpha}\sum_{i\in\{1,\cdots,p/2\}}i^{2\alpha}\simeq p^{-1}p^{-2\alpha}p^{2\alpha+1} \simeq 1.
			\ee
			\item When $0<\alpha\le 1/2$, from the proof of \Cref{prop:q_delta_od},
			\be
			\sum_{j\in\{0,\cdots,p/2\}}\mathbb P^2(x_1=(\cdot, j))\simeq p^{-2\alpha},
			\ee
			and
			\be
			\sum_{i_1=0}^i\mathbb P(x_1 = (i_1, \cdot))\simeq p^{-\alpha}\left(\sum_{i_1=0}^i\left(\frac{1}{i_1+1}\right)^{1-\alpha} \right)\simeq p^{-\alpha}(i+1)^\alpha.
			\ee
			This yields,
			\be
			\eqref{eq:temp1}\simeq p^{-2\alpha} p^{-2\alpha}\sum_{i\in\{1,\cdots,p/2\}}i^{2\alpha}\simeq p^{-2\alpha}p^{-2\alpha}p^{2\alpha+1} \simeq p^{1-2\alpha}.
			\ee
		\end{itemize}
		This completes the proof.
	\end{proof}
	
	\medskip
	
	\begin{lemma} For any $i,j\in\{0,\cdots,p\}$,
		\label{lm:algebra}
		\be
		&\sum_{i_1=0}^i\sum_{i_2=i+1}^p\sum_{j_2=0}^p\frac{1}{(|i_1-i_2| + |j-j_2|)^2} \le2\left(\sum_{n=1}^{i+1}\frac{1+\cdots+n}{n^2} + \sum_{n=i+2}^{2p}\frac{1+\cdots + (i + 1) + (i + 1)(n-i-1)}{n^2}\right).
		\ee
	\end{lemma}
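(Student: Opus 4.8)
The plan is to collapse the triple sum into a weighted sum of $1/n^2$ by grouping terms according to the value of the denominator. First I would note that since $i_1 \le i < i_2$, the gap $a := i_2 - i_1$ is a positive integer and $|i_1 - i_2| = a$, so the summand depends on $(i_1,i_2)$ only through $a$. Writing $P(a)$ for the number of admissible pairs with $i_2 - i_1 = a$, the sum factors as $\sum_{a\ge 1} P(a)\sum_{j_2=0}^{p}(a+|j-j_2|)^{-2}$. For the inner $j_2$-sum I would use that each value $b=|j-j_2|\ge 1$ is realized by at most two choices of $j_2$ (namely $j\pm b$), while $b=0$ is realized once, so that $\sum_{j_2=0}^p (a+|j-j_2|)^{-2} \le a^{-2} + 2\sum_{b=1}^p (a+b)^{-2} \le 2\sum_{b=0}^p (a+b)^{-2}$. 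This is precisely what produces the overall factor of $2$ on the right-hand side.

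Next I would set $n := a+b$ and interchange the order of summation. Since $1\le a\le p$ and $0\le b\le p$, the index $n$ ranges over $\{1,\dots,2p\}$, and for fixed $n$ the admissible gaps are $a\in[\max(1,n-p),\min(p,n)]$. This yields $\mathrm{LHS} \le 2\sum_{n=1}^{2p} n^{-2} D(n)$, where $D(n):=\sum_{a=\max(1,n-p)}^{\min(p,n)} P(a)$. It then suffices to bound $D(n)$ by $\tfrac{n(n+1)}2 = 1+\cdots+n$ when $n\le i+1$, and by $\tfrac{(i+1)(i+2)}2 + (i+1)(n-i-1) = 1+\cdots+(i+1)+(i+1)(n-i-1)$ when $n\ge i+2$, since substituting these bounds reproduces the stated right-hand side exactly.

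The remaining ingredient is the elementary pair count. Writing the admissible range for $i_1$ as $[\max(0,i+1-a),\min(i,p-a)]$, I would bound its cardinality by $P(a)\le\min(a,\,i+1)$ (the first factor from $\min(i,p-a)\le i$ together with $\max(0,i+1-a)\ge i+1-a$, the second from $\max(0,i+1-a)\ge 0$). For $n\le i+1$ the gap range is exactly $[1,n]$ with $n\le i+1$, so $D(n)\le\sum_{a=1}^n\min(a,i+1)=\sum_{a=1}^n a=\tfrac{n(n+1)}2$; for $n\ge i+2$ I would enlarge the summation range to $[1,n]$ (only adding nonnegative terms) and evaluate $\sum_{a=1}^n\min(a,i+1)=\tfrac{(i+1)(i+2)}2+(i+1)(n-i-1)$. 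Assembling these piecewise bounds gives the claim, with the degenerate case $i=p$ (empty inner sum, so $\mathrm{LHS}=0$) handled trivially. The main obstacle is purely bookkeeping: tracking the index ranges through the reindexing $n=a+b$ and the sum interchange, and checking that the two bounds on $D(n)$ agree at the boundary $n=i+1$ so that the split into the two sums on the right-hand side is consistent.
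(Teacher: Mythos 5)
Your proposal is correct and follows essentially the same route as the paper's proof: the factor of $2$ comes from the same symmetrization of $|j-j_2|$, and your count $D(n)\le\sum_{a=1}^{n}\min(a,i+1)$ is exactly the paper's lattice-point count of solutions to $i_1+i_2+j_2=n$ after its reindexing, just organized through the gap $a=i_2-i_1$ rather than an explicit change of variables. No gaps; the bookkeeping you flag (index ranges under $n=a+b$ and agreement of the two bounds at $n=i+1$) all checks out.
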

	\begin{proof}
		We have,
		\be
		&\sum_{i_1=0}^i\sum_{i_2=i+1}^p\sum_{j_2=0}^p\frac{1}{(|i_1-i_2| + |j-j_2|)^2} \\
		=&\sum_{j_2=0}^p\sum_{i_2=i+1}^p\sum_{i_1=i}^0\frac{1}{(|i_1-i_2| + |j-j_2|)^2} \\
		\le&2\left(\sum_{j_2=0}^p\sum_{i_2=i+1}^p\sum_{i_1=i}^0\frac{1}{(|i_1-i_2| + j_2)^2}\right)\quad\qquad\textrm{(by symmetry and $j\in\{0,\cdots,p\}$)} \\
		=&2\left(\sum_{j_2=0}^p\sum_{i_2=1}^{p-i}\sum_{i_1=0}^{i}\frac{1}{(i_1 + i_2 + j_2)^2}\right) \\
		\le&2\left(\sum_{n=1}^{i+1}\frac{1+\cdots+n}{n^2} + \sum_{n=i+2}^{2p}\frac{1+\cdots+ (i+1) + (i+1)(n - i - 1)}{n^2}\right).
		\ee
		This completes the proof. 
	\end{proof}
	
	\medskip
	
	\begin{lemma}
		\label{lm:conditional_length}
		Under the route distribution $\mu_p$ in \Cref{subsec:grid}, for any segment $s\in\mathcal{S}_p$,
		\be
		\bb E\left[\frac{1}{(\sum_{s'\in\mathcal{S}_p} I_{s'})^2}\:\Bigg|\: I_s=1\right] = \mathcal{O}(\log(p) p^{-2}),\quad\bb E\left[\frac{1}{(\sum_{s'\in\mathcal{S}_p} I_{s'})^2}\:\Bigg|\: I_s=0\right] = \mathcal{O}(\log(p) p^{-2}).
		\ee
	\end{lemma}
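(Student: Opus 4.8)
The plan is to first observe that $\sum_{s'\in\mc S_p} I_{s'} = |Y_p|$, the number of segments on the sampled route, and that under $\mu_p$ each route is a minimal-length path, so $|Y_p| = \|X_1 - X_2\|_1 = |i_1-i_2| + |j_1-j_2|$, where $X_1 = (i_1,j_1)$ and $X_2 = (i_2,j_2)$ are the independent origin and destination. Thus both quantities to be bounded are $\bb E[\,|Y_p|^{-2} \mid I_s = b\,]$ for $b \in \{0,1\}$. The case $b=0$ reduces immediately to the unconditional bound: since $\bb P(I_s = 0) = 1 - q_s \ge 1/2$ for $p$ large (because $q_s = o(1)$ by \Cref{prop:seg_accum}), we have $\bb E[|Y_p|^{-2}\mid I_s=0] \le 2\,\bb E[|Y_p|^{-2}]$, so it suffices to show $\bb E[|Y_p|^{-2}] = \mc O(\log(p)\,p^{-2})$.

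Second, I would prove the unconditional bound by writing $\bb E[|Y_p|^{-2}] = \sum_{x_1, x_2} \bb P[X_1 = x_1]\,\bb P[X_2 = x_2] / \|x_1 - x_2\|_1^2$ and organizing the sum over the $\ell_1$-distance $d = \|x_1 - x_2\|_1$. Using the uniform upper bound $\bb P[X = x] \lesssim p^{-2\alpha}$ from \Cref{lm:NodeLowerBound} together with the fact that, for fixed $x_1$, the number of grid points at distance $d$ is $\mc O(d)$, the inner sum satisfies $\sum_{x_2} \|x_1-x_2\|_1^{-2} \lesssim \sum_{d=1}^{2p} d^{-1} \simeq \log p$; summing over the $p^2$ choices of $x_1$ and inserting the probability weights yields $\mc O(\log(p)\,p^{-2})$ once the two corner-weight factors are absorbed. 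In the uniform case $\alpha = 1$ this is transparent; for $\alpha < 1$ the polynomial decay of $\bb P[X=(i,\cdot)] \simeq p^{-\alpha}(i+1)^{-(1-\alpha)}$ away from the corners (assembled in the proof of \Cref{prop:q_delta_od}) must be carried through the distance-shell sum.

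Third, for $I_s = 1$ I would use $\bb E[|Y_p|^{-2}\mid I_s = 1] = \bb E[I_s\,|Y_p|^{-2}]/q_s$ and decompose the numerator exactly as in the proof of \Cref{prop:seg_accum}: for a horizontal segment $s = (i,j)\to(i+1,j)$, traversal forces the origin and destination onto opposite sides of column $i$ in one of two symmetric scenarios. In the first scenario ($X_1 = (i_1, j)$ with $i_1 \le i$ and $X_2 = (i_2, j_2)$ with $i_2 > i$) the length is $(i_2 - i_1) + |j - j_2|$, so the numerator is a weighted version of the triple sum controlled by \Cref{lm:algebra}, whose right-hand side is $\mc O\big((i+1)\log p\big)$. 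Dividing by $q_s \simeq \bb P[X=(\cdot,j)]\,\big(\sum_{i_1\le i}\bb P[X=(i_1,\cdot)]\big)\big(\sum_{i_2>i}\bb P[X=(i_2,\cdot)]\big)$ from \Cref{prop:seg_accum}, the positional factors carrying the $i$- and $j$-dependence cancel and leave $\mc O(\log(p)\,p^{-2})$ uniformly in the location of $s$; the second scenario is symmetric.

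The main obstacle is the general-$\alpha$ bookkeeping in the last two steps: \Cref{lm:algebra} is stated for the \emph{unweighted} triple sum, so for $\alpha < 1$ I must replace it by a weighted estimate in which each node probability $\bb P[X=(k,\cdot)] \simeq p^{-\alpha}(k+1)^{-(1-\alpha)}$ (and its mirror near $k = p$) is inserted, and then verify that after dividing by the matching weights in $q_s$ the residual sum still telescopes to $\log(p)/p^2$. The delicate point is that the weights are largest precisely where the denominator is smallest, namely near the corners where $i_2 - i_1$ can be $\mc O(1)$, so the cancellation between the numerator and $q_s$ must be tracked term-by-term along the distance shells rather than bounded crudely. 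Once this weighted analogue of \Cref{lm:algebra} is in hand, the conclusion follows as in the uniform prototype.
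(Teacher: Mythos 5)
Your $\alpha=1$ core is correct and is essentially the paper's own argument: for $I_s=1$ you write the conditional expectation as $\bb E[I_s|Y_p|^{-2}]/q_s$, bound the resulting triple sum over compatible origin--destination pairs by $\mathcal{O}((i+1)\log p)$ via \Cref{lm:algebra}, and cancel the $(i+1)$ against $q_s\simeq (i+1)p^{-2}$; for $I_s=0$ you reduce to the unconditional expectation (the paper does the same thing by noting that the excluded origin--destination pairs number only $\mathcal{O}(p^2)$ out of $\mathcal{O}(p^4)$) and bound $\sum_{x_1,x_2}\|x_1-x_2\|_1^{-2}=\mathcal{O}(p^2\log p)$ by summing over distance shells. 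Up to this point the two proofs coincide.

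The genuine gap is the step you yourself flag as the ``main obstacle'' and defer: carrying the beta-binomial weights through the shell sums for general $\alpha\in(0,1]$. This cannot be completed as described, because the weighted bound is simply false for small $\alpha$. Take $x_1=(0,0)$ and $x_2=(1,0)$: by \Cref{lm:NodeLowerBound} and the decay rate $\bb P[X=(k,\cdot)]\simeq p^{-\alpha}(k+1)^{-(1-\alpha)}$ you quote, this single pair contributes $\simeq p^{-4\alpha}$ to $\bb E[|Y_p|^{-2}]$, which dominates $p^{-2}\log p$ whenever $\alpha<1/2$; and since such corner-hugging short routes do not traverse a mid-grid segment $s$, the same mass survives conditioning on $I_s=0$. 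So your reduction of the $I_s=0$ case to the unconditional expectation, combined with the claim that the unconditional expectation is $\mathcal{O}(p^{-2}\log p)$ for all $\alpha$, breaks down on $(0,1/2)$, and no amount of term-by-term tracking of the weights against $q_s$ will rescue it there. The paper avoids this entirely with a one-line reduction: it asserts that both conditional expectations are increasing in $\alpha$ (shorter routes as mass moves to the center) and therefore bounds only the $\alpha=1$ case. To close your argument you need either that monotonicity reduction, or to restrict to $\alpha\ge 1/2$ --- which is in fact the only regime in which the lemma is invoked, namely in the proof of \Cref{thm:asymptotic_opt}.
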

	\begin{proof}
		For any segment $s$, $\bb E[1 /(\sum_{s'\in\mathcal{S}_p} I_{s'})^2 \:|\: I_s=1]$ and $\bb E[1 /(\sum_{s'\in\mathcal{S}_p} I_{s'})^2 \:|\: I_s=0]$ are increasing in $\alpha\in(0,1]$, this is simply because as $\alpha$ increases, route origins and destinations are more concentrated in the center of the grid. We can thus focus on the case where $\alpha=1$ to get upper bounds. Let $\mathcal{X}_s$ be the set of origins and destinations such that $\bb P[x_1=(i_1,j_1), x_2=(i_2,j_2) \:|\: I_s = 1]>0$ for any origin-destination pair $((i_1, j_1), (i_2,j_2))\in \mathcal{X}_s$. Similarly, let $\mathcal{X}^\prime_s$ be the set of origins and destinations such that $\bb P[x_1=(i_1,j_1), x_2=(i_2,j_2) \:|\: I_s = 0]>0$ for any origin-destination pair $((i_1, j_1), (i_2,j_2))\in \mathcal{X}^\prime_s$.  
		
		\medskip
		
		For any segment $s\in\mathcal{S}_p$, when $\alpha=1$, i.e., route origins and destinations are uniformly distributed over the grid,
		\be
		&\bb E\left[\frac{1}{(\sum_{s'\in\mathcal{S}_p} I_{s'})^2}\:\Bigg|\: I_s=1\right]\\ =&\sum_{((i_1,j_1),(i_2,j_2))\in\mathcal{X}_s}\frac{1}{(|i_1-i_2| + |j_1-j_2|)^2}\cdot\bb P[x_1 = (i_1,j_1), x_2=(i_2,j_2) \mid I_s = 1] \\
		=&\sum_{((i_1,j_1),(i_2,j_2))\in\mathcal{X}_s}\frac{1}{(|i_1-i_2| + |j_1-j_2|)^2}\cdot\frac{\bb P[I_s = 1 \mid x_1 = (i_1,j_1), x_2=(i_2,j_2)]\bb P[x_1 = (i_1,j_1), x_2=(i_2,j_2)]}{\bb P[I_s = 1]}  \\ \label{eq:sim_eq}
		\simeq&\sum_{((i_1,j_1),(i_2,j_2))\in\mathcal{X}_s}\frac{1}{(|i_1-i_2| + |j_1-j_2|)^2}\cdot\frac{\bb P[x_1 = (i_1,j_1), x_2=(i_2,j_2)]}{\sum_{((i_1,j_1),(i_2,j_2))\in\mathcal{X}_s}\bb P[x_1 = (i_1,j_1), x_2=(i_2,j_2)]} \\ \label{eq:temp2}
		=&\sum_{((i_1,j_1),(i_2,j_2))\in\mathcal{X}_s}\frac{1}{(|i_1-i_2| + |j_1-j_2|)^2}\cdot\frac{1}{|\mathcal{X}_s|}.
		\ee
		
		Equation \eqref{eq:sim_eq} holds because $\bb P[I_s = 1 \mid x_1 = (i_1,j_1), x_2=(i_2,j_2)]\in\{0.5, 1\}$, $\forall ((i_1,j_1),(i_2,j_2))\in\mathcal{X}_s$. Moreover, equation \eqref{eq:temp2} holds because of the uniformity of the distribution of origins and destinations. 
		
		\smallskip
		
		Without loss of generality, consider any segment with horizontal movement $s=(i,j)\rightarrow(i+1,j)\in\mathcal{S}_p$ with $i<p/2$,
		\be
		&\bb E\left[\frac{1}{(\sum_{s'\in\mathcal{S}_p} I_{s'})^2}\:\Bigg|\: I_s=1\right]\\
		\simeq&\sum_{((i_1,j_1),(i_2,j_2))\in\mathcal{X}_s}\frac{1}{(|i_1-i_2| + |j_1-j_2|)^2}\cdot\frac{1}{|\mathcal{X}_s|} \\
		=&\sum_{i_1\in\{0,\cdots,i\}}\sum_{i_2\in\{i+1,\cdots,p\}}\sum_{j_2\in\{0,\cdots,p\}}\frac{1}{(|i_1-i_2| + |j-j_2|)^2}\cdot\frac{1}{|\mathcal{X}_s|}\\
		&+\sum_{i_1\in\{0,\cdots,i\}}\sum_{j_1\in\{0,\cdots,p\}}\sum_{i_2\in\{i+1,\cdots,p\}}\frac{1}{(|i_1-i_2| + |j_1-j|)^2}\cdot\frac{1}{|\mathcal{X}_s|} \\
		=&\frac{2}{|\mathcal{X}_s|}\left(\sum_{i_1\in\{0,\cdots,i\}}\sum_{i_2\in\{i+1,\cdots,p\}}\sum_{j_2\in\{0,\cdots,p\}}\frac{1}{(|i_1-i_2| + |j-j_2|)^2}\right) \\
		\le&\frac{4}{|\mathcal{X}_s|}\left(\sum_{n=1}^{i+1}\frac{1 + \cdots + n}{n^2} + \sum_{n=i+2}^{2p}\frac{1+\cdots+ (i+1) + (i+1)(n-i-1)}{n^2}\right)~\textrm{(by \Cref{lm:algebra} in \Cref{apx:lemmas})} \\
		=&\frac{4}{|\mathcal{X}_s|}\left(\sum_{n=1}^{i+1}\frac{n(n+1)/2}{n^2} + \sum_{n=i+2}^{2p}\frac{(i+2)(i+1)/2 + (i+1)(n-i-1)}{n^2}\right)\\
		\le&\frac{4}{|\mathcal{X}_s|}\left((i+1) + (i+1)\sum_{n=i+2}^{2p}\frac{n-i/2}{n^2}\right) \\
		=&\frac{4}{|\mathcal{X}_s|}(i+1)\mathcal{O}(\mathrm{log}(p))\\
		\simeq&\frac{4}{(i+1)p^2} (i+1) \mathcal{O}(\mathrm{log}(p)) \\[2mm]
		=&\mathcal{O}(\mathrm{log}(p)p^{-2}).
		\ee

		\medskip

		Similarly, 
		\be
		&\bb E\left[\frac{1}{(\sum_{s'\in\mathcal{S}_p} I_{s'})^2}\:\Bigg|\: I_s=0\right]\\ =&\sum_{((i_1,j_1),(i_2,j_2))\in\mathcal{X}^\prime_s}\frac{1}{(|i_1-i_2| + |j_1-j_2|)^2}\cdot\bb P[x_1 = (i_1,j_1), x_2=(i_2,j_2) \mid I_s = 0] \\
		=&\sum_{((i_1,j_1),(i_2,j_2))\in\mathcal{X}^\prime_s}\frac{1}{(|i_1-i_2| + |j_1-j_2|)^2}\cdot\frac{\bb P[I_s = 0 \mid x_1 = (i_1,j_1), x_2=(i_2,j_2)]\bb P[x_1 = (i_1,j_1), x_2=(i_2,j_2)]}{\bb P[I_s = 0]}  \\ \label{eq:sim_eq_2}
		\simeq&\sum_{((i_1,j_1),(i_2,j_2))\in\mathcal{X}^\prime_s}\frac{1}{(|i_1-i_2| + |j_1-j_2|)^2}\cdot\frac{\bb P[x_1 = (i_1,j_1), x_2=(i_2,j_2)]}{\sum_{((i_1,j_1),(i_2,j_2))\in\mathcal{X}^\prime_s}\bb P[x_1 = (i_1,j_1), x_2=(i_2,j_2)]} \\ \label{eq:temp2_2}
		=&\sum_{((i_1,j_1),(i_2,j_2))\in\mathcal{X}^\prime_s}\frac{1}{(|i_1-i_2| + |j_1-j_2|)^2}\cdot\frac{1}{|\mathcal{X}^\prime_s|}.
		\ee
		
		Equation \eqref{eq:sim_eq_2} holds because $\bb P[I_s = 0 \mid x_1 = (i_1,j_1), x_2=(i_2,j_2)]\in\{0.5, 1\}$, $\forall ((i_1,j_1),(i_2,j_2))\in\mathcal{X}^\prime_s$, and equation \eqref{eq:temp2_2} holds because of the uniformity of the distribution of origins and destinations. 
		
		\medskip
		
		Without loss of generality, consider any segment with horizontal movement $s=(i,j)\rightarrow(i+1,j)\in\mathcal{S}_p$ with $i<p/2$,
		\be
		&\bb E\left[\frac{1}{(\sum_{s'\in\mathcal{S}_p} I_{s'})^2}\:\Bigg|\: I_s=0\right]\\
		\simeq&\sum_{((i_1,j_1),(i_2,j_2))\in\mathcal{X}^\prime_s}\frac{1}{(|i_1-i_2| + |j_1-j_2|)^2}\cdot\frac{1}{|\mathcal{X}^\prime_s|} \\
		\simeq& \left(\sum_{i_1=0}^p\sum_{i_2=0}^p\sum_{j_1=0}^p\sum_{j_2=0}^p\frac{1}{(|i_1-i_2| + |j_1-j_2|)^2}\right)\cdot\frac{1}{p^4} \\
		=&\mathcal{O}\left((p^{2}\mathrm{log}(p)) \cdot \frac{1}{p^4}\right) \\[3mm]
		=&\mathcal{O}(p^{-2}\mathrm{log}(p)).
		\ee
		
		The second equality holds because the only origin-destination pairs that are \emph{excluded} in $\mathcal{X}^\prime_s$ are those with $j_1=j_2=j$ and $i_1\in\{0,\cdots, i\}, i_2\in\{i+1,\cdots, p\}$. The cardinality of these origin-destination pairs is of a much smaller order ($p^2$) compared to the cardinality of all possible origin-destination pairs (of the order of $p^4$). This completes the proof. 
	\end{proof}
	
	\bigskip
	
	\section{Additional Numerical Experiments}
	\label{sec:additional_num}

	\Cref{fig:risk_1.0_1_3_1} and \Cref{fig:risk_1.0_3_1_1} report additional numerical experiments based on the setup in \Cref{sec:numerical} under $\alpha=1.0$ with two different specifications of the covariance matrices $e^{-3\mathscr{L}} + I$ and $3e^{-\mathscr{L}} + I$. The results are qualitatively similar. Simple segment-based method tends to perform a bit worse when correlation is strong and sample size is small, but it quickly regains competitiveness as the sample size increases.
    We further test two additional covariance structures which do not satisfy the second part of Assumption \ref{assumption:3}, under $\alpha=1.0$. In both cases, the covariance matrices of the segment travel times $\Sigma_p$ for the grid network with size $p$ is constructed as $\Sigma_p = (1/|\mathcal{S}_p|^2)K_p^\intercal K_p$ where $K_p$ is an $|\mathcal{S}_p|\times|\mathcal{S}_p|$ random matrix. The two cases differ in the way $K_p$ is generated. 
	
	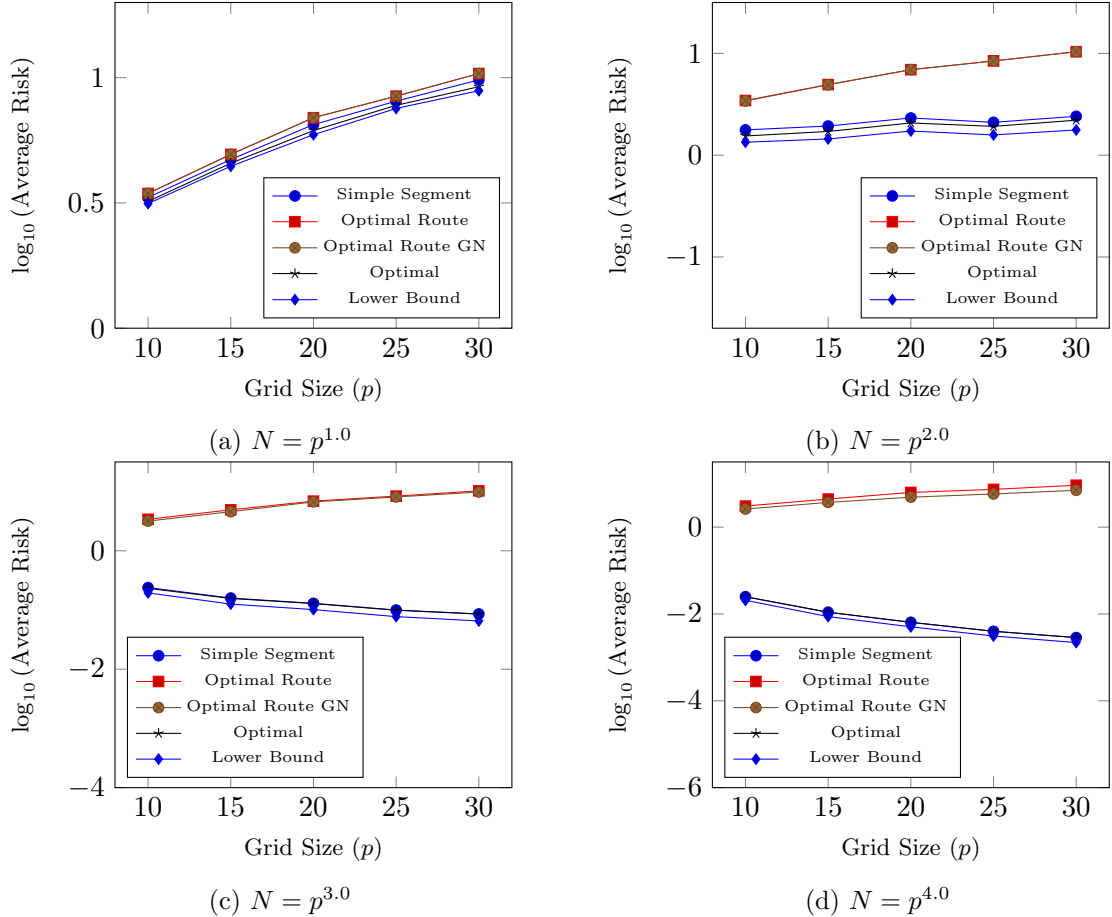
\begin{figure}[h]
		\captionsetup[subfigure]{justification=centering}
		\centering 
		\begin{subfigure}[h]{0.45\textwidth}
			\begin{tikzpicture}
				\begin{axis}[
					width=2.7in,
					xlabel = Grid Size ($p$),
					ylabel = $\log_{10}\textrm{(Average Risk)}$,
					legend style={legend pos=south east, font=\tiny},
					label style={font=\footnotesize},
					ymin=0,
					ymax=1.3,
					xtick={10,15,20,25,30}
					]
					
					\addplot+ [
					discard if not={alpha}{1.0},
					on layer=background
					] table [
					x=grid_size,
					y=seg_simple,
					]{data/revision/results_1.0_cov_6_1_3_1.txt};\addlegendentry{Simple Segment};
					
					\addplot+ [
					discard if not={alpha}{1.0},
					on layer=main
					] table [
					x=grid_size,
					y=route,
					]{data/revision/results_1.0_cov_6_1_3_1.txt};\addlegendentry{Optimal Route};

     				\addplot+ [
					discard if not={alpha}{1.0},
					] table [
					x=grid_size,
					y=route_grow,
					]{data/revision/results_1.0_cov_6_1_3_1.txt};\addlegendentry{Optimal Route GN};
					
					\addplot+ [
					discard if not={alpha}{1.0},
					on layer=main
					] table [
					x=grid_size,
					y=bayes_optimal,
					]{data/revision/results_1.0_cov_6_1_3_1.txt};\addlegendentry{Optimal};
					
					\addplot+ [
					discard if not={alpha}{1.0},
					on layer=main
					] table [
					x=grid_size,
					y=lb,
					]{data/revision/results_1.0_cov_6_1_3_1.txt};\addlegendentry{Lower Bound};
				\end{axis}
			\end{tikzpicture}
			\caption{$N= p^{1.0}$}
			\label{fig:risk_1.0_1_3_1_1.0}
		\end{subfigure}
		~~
		\begin{subfigure}[h]{0.45\textwidth}
			\begin{tikzpicture}
				\begin{axis}[
					width=2.7in,
					xlabel = Grid Size ($p$),
					ylabel = $\log_{10}\textrm{(Average Risk)}$,
					legend style={legend pos=south east, font=\tiny},
					label style={font=\footnotesize},
					ymin=-1.7,
					ymax=1.5,
					xtick={10,15,20,25,30}
					]
					\addplot+ [
					discard if not={alpha}{2.0},
					] table [
					x=grid_size,
					y=seg_simple,
					]{data/revision/results_1.0_cov_6_1_3_1.txt};\addlegendentry{Simple Segment};

					\addplot+ [
					discard if not={alpha}{2.0},
					] table [
					x=grid_size,
					y=route,
					]{data/revision/results_1.0_cov_6_1_3_1.txt};\addlegendentry{Optimal Route};

     				\addplot+ [
					discard if not={alpha}{2.0},
					] table [
					x=grid_size,
					y=route_grow,
					]{data/revision/results_1.0_cov_6_1_3_1.txt};\addlegendentry{Optimal Route GN};
					
					\addplot+ [
					discard if not={alpha}{2.0},
					on layer=main
					] table [
					x=grid_size,
					y=bayes_optimal,
					]{data/revision/results_1.0_cov_6_1_3_1.txt};\addlegendentry{Optimal};
					
					\addplot+ [
					discard if not={alpha}{2.0},
					on layer=main
					] table [
					x=grid_size,
					y=lb,
					]{data/revision/results_1.0_cov_6_1_3_1.txt};\addlegendentry{Lower Bound};
				\end{axis}
			\end{tikzpicture}
			\caption{$N= p^{2.0}$}
			\label{fig:risk_1.0_1_3_1_2.0}
		\end{subfigure}
		
		\begin{subfigure}[h]{0.45\textwidth}
			\begin{tikzpicture}
				\begin{axis}[
					width=2.7in,
					xlabel = Grid Size ($p$),
					ylabel = $\log_{10}\textrm{(Average Risk)}$,
					legend style={legend pos=south west, font=\tiny},
					label style={font=\footnotesize},
					ymin=-4,
					ymax=1.5,
					xtick={10,15,20,25,30},
					]
					\addplot+ [
					discard if not={alpha}{3.0},
					] table [
					x=grid_size,
					y=seg_simple,
					]{data/revision/results_1.0_cov_6_1_3_1.txt};\addlegendentry{Simple Segment};

					\addplot+ [
					discard if not={alpha}{3.0},
					] table [
					x=grid_size,
					y=route,
					]{data/revision/results_1.0_cov_6_1_3_1.txt};\addlegendentry{Optimal Route};

					\addplot+ [
					discard if not={alpha}{3.0},
					] table [
					x=grid_size,
					y=route_grow,
					]{data/revision/results_1.0_cov_6_1_3_1.txt};\addlegendentry{Optimal Route GN};
					
					\addplot+ [
					discard if not={alpha}{3.0},
					on layer=main
					] table [
					x=grid_size,
					y=bayes_optimal,
					]{data/revision/results_1.0_cov_6_1_3_1.txt};\addlegendentry{Optimal};
					
					\addplot+ [
					discard if not={alpha}{3.0},
					on layer=main
					] table [
					x=grid_size,
					y=lb,
					]{data/revision/results_1.0_cov_6_1_3_1.txt};\addlegendentry{Lower Bound};
				\end{axis}
			\end{tikzpicture}
			\caption{$N= p^{3.0}$}
			\label{fig:risk_1.0_1_3_1_3.0}
		\end{subfigure}
		~~
		\begin{subfigure}[h]{0.45\textwidth}
			\begin{tikzpicture}
				\begin{axis}[
					width=2.7in,
					xlabel = Grid Size ($p$),
					ylabel = $\log_{10}\textrm{(Average Risk)}$,
					legend style={legend pos=south west, font=\tiny},
					label style={font=\footnotesize},
					ymin=-6,
					ymax=1.5,
					xtick={10,15,20,25,30},
					]
					\addplot+ [
					discard if not={alpha}{4.0},
					] table [
					x=grid_size,
					y=seg_simple,
					]{data/revision/results_1.0_cov_6_1_3_1.txt};\addlegendentry{Simple Segment};

					\addplot+ [
					discard if not={alpha}{4.0},
					] table [
					x=grid_size,
					y=route,
					]{data/revision/results_1.0_cov_6_1_3_1.txt};\addlegendentry{Optimal Route};

					\addplot+ [
					discard if not={alpha}{4.0},
					] table [
					x=grid_size,
					y=route_grow,
					]{data/revision/results_1.0_cov_6_1_3_1.txt};\addlegendentry{Optimal Route GN};
					
					\addplot+ [
					discard if not={alpha}{4.0},
					on layer=main
					] table [
					x=grid_size,
					y=bayes_optimal,
					]{data/revision/results_1.0_cov_6_1_3_1.txt};\addlegendentry{Optimal};
					
					\addplot+ [
					discard if not={alpha}{4.0},
					on layer=main
					] table [
					x=grid_size,
					y=lb,
					]{data/revision/results_1.0_cov_6_1_3_1.txt};\addlegendentry{Lower Bound};
				\end{axis}
			\end{tikzpicture}
			\caption{$N= p^{4.0}$}
			\label{fig:risk_1.0_1_3_1_4.0}
		\end{subfigure}
		
		\caption{Average integrated risks of different estimators ($\alpha = 1.0$, covariance is $e^{-3\mathscr{L}} + I$).}
		\label{fig:risk_1.0_1_3_1}
	\end{figure}

	\begin{figure}[h]
		\captionsetup[subfigure]{justification=centering}
		\centering 
		\begin{subfigure}[h]{0.45\textwidth}
			\begin{tikzpicture}
				\begin{axis}[
					width=2.7in,
					xlabel = Grid Size ($p$),
					ylabel = $\log_{10}\textrm{(Average Risk)}$,
					legend style={legend pos=south east, font=\tiny},
					label style={font=\footnotesize},
					ymin=0,
					ymax=1.3,
					xtick={10,15,20,25,30}
					]
					
					\addplot+ [
					discard if not={alpha}{1.0},
					on layer=background
					] table [
					x=grid_size,
					y=seg_simple,
					]{data/revision/results_1.0_cov_6_3_1_1.txt};\addlegendentry{Simple Segment};
					
					\addplot+ [
					discard if not={alpha}{1.0},
					on layer=main
					] table [
					x=grid_size,
					y=route,
					]{data/revision/results_1.0_cov_6_3_1_1.txt};\addlegendentry{Optimal Route};

     				\addplot+ [
					discard if not={alpha}{1.0},
					] table [
					x=grid_size,
					y=route_grow,
					]{data/revision/results_1.0_cov_6_3_1_1.txt};\addlegendentry{Optimal Route GN};
					
					\addplot+ [
					discard if not={alpha}{1.0},
					on layer=main
					] table [
					x=grid_size,
					y=bayes_optimal,
					]{data/revision/results_1.0_cov_6_3_1_1.txt};\addlegendentry{Optimal};
					
					\addplot+ [
					discard if not={alpha}{1.0},
					on layer=main
					] table [
					x=grid_size,
					y=lb,
					]{data/revision/results_1.0_cov_6_3_1_1.txt};\addlegendentry{Lower Bound};
				\end{axis}
			\end{tikzpicture}
			\caption{$N= p^{1.0}$}
			\label{fig:risk_1.0_3_1_1_1.0}
		\end{subfigure}
		~~
		\begin{subfigure}[h]{0.45\textwidth}
			\begin{tikzpicture}
				\begin{axis}[
					width=2.7in,
					xlabel = Grid Size ($p$),
					ylabel = $\log_{10}\textrm{(Average Risk)}$,
					legend style={legend pos=south east, font=\tiny},
					label style={font=\footnotesize},
					ymin=-1.7,
					ymax=1.5,
					xtick={10,15,20,25,30}
					]
					\addplot+ [
					discard if not={alpha}{2.0},
					] table [
					x=grid_size,
					y=seg_simple,
					]{data/revision/results_1.0_cov_6_3_1_1.txt};\addlegendentry{Simple Segment};

					\addplot+ [
					discard if not={alpha}{2.0},
					] table [
					x=grid_size,
					y=route,
					]{data/revision/results_1.0_cov_6_3_1_1.txt};\addlegendentry{Optimal Route};

     				\addplot+ [
					discard if not={alpha}{2.0},
					] table [
					x=grid_size,
					y=route_grow,
					]{data/revision/results_1.0_cov_6_3_1_1.txt};\addlegendentry{Optimal Route GN};
					
					\addplot+ [
					discard if not={alpha}{2.0},
					on layer=main
					] table [
					x=grid_size,
					y=bayes_optimal,
					]{data/revision/results_1.0_cov_6_3_1_1.txt};\addlegendentry{Optimal};
					
					\addplot+ [
					discard if not={alpha}{2.0},
					on layer=main
					] table [
					x=grid_size,
					y=lb,
					]{data/revision/results_1.0_cov_6_3_1_1.txt};\addlegendentry{Lower Bound};
				\end{axis}
			\end{tikzpicture}
			\caption{$N= p^{2.0}$}
			\label{fig:risk_1.0_3_1_1_2.0}
		\end{subfigure}
		
		\begin{subfigure}[h]{0.45\textwidth}
			\begin{tikzpicture}
				\begin{axis}[
					width=2.7in,
					xlabel = Grid Size ($p$),
					ylabel = $\log_{10}\textrm{(Average Risk)}$,
					legend style={legend pos=south west, font=\tiny},
					label style={font=\footnotesize},
					ymin=-4,
					ymax=1.5,
					xtick={10,15,20,25,30},
					]
					\addplot+ [
					discard if not={alpha}{3.0},
					] table [
					x=grid_size,
					y=seg_simple,
					]{data/revision/results_1.0_cov_6_3_1_1.txt};\addlegendentry{Simple Segment};

					\addplot+ [
					discard if not={alpha}{3.0},
					] table [
					x=grid_size,
					y=route,
					]{data/revision/results_1.0_cov_6_3_1_1.txt};\addlegendentry{Optimal Route};

					\addplot+ [
					discard if not={alpha}{3.0},
					] table [
					x=grid_size,
					y=route_grow,
					]{data/revision/results_1.0_cov_6_3_1_1.txt};\addlegendentry{Optimal Route GN};
					
					\addplot+ [
					discard if not={alpha}{3.0},
					on layer=main
					] table [
					x=grid_size,
					y=bayes_optimal,
					]{data/revision/results_1.0_cov_6_3_1_1.txt};\addlegendentry{Optimal};
					
					\addplot+ [
					discard if not={alpha}{3.0},
					on layer=main
					] table [
					x=grid_size,
					y=lb,
					]{data/revision/results_1.0_cov_6_3_1_1.txt};\addlegendentry{Lower Bound};
				\end{axis}
			\end{tikzpicture}
			\caption{$N= p^{3.0}$}
			\label{fig:risk_1.0_3_1_1_3.0}
		\end{subfigure}
		~~
		\begin{subfigure}[h]{0.45\textwidth}
			\begin{tikzpicture}
				\begin{axis}[
					width=2.7in,
					xlabel = Grid Size ($p$),
					ylabel = $\log_{10}\textrm{(Average Risk)}$,
					legend style={legend pos=south west, font=\tiny},
					label style={font=\footnotesize},
					ymin=-6,
					ymax=1.5,
					xtick={10,15,20,25,30},
					]
					\addplot+ [
					discard if not={alpha}{4.0},
					] table [
					x=grid_size,
					y=seg_simple,
					]{data/revision/results_1.0_cov_6_3_1_1.txt};\addlegendentry{Simple Segment};

					\addplot+ [
					discard if not={alpha}{4.0},
					] table [
					x=grid_size,
					y=route,
					]{data/revision/results_1.0_cov_6_3_1_1.txt};\addlegendentry{Optimal Route};

					\addplot+ [
					discard if not={alpha}{4.0},
					] table [
					x=grid_size,
					y=route_grow,
					]{data/revision/results_1.0_cov_6_3_1_1.txt};\addlegendentry{Optimal Route GN};
					
					\addplot+ [
					discard if not={alpha}{4.0},
					on layer=main
					] table [
					x=grid_size,
					y=bayes_optimal,
					]{data/revision/results_1.0_cov_6_3_1_1.txt};\addlegendentry{Optimal};
					
					\addplot+ [
					discard if not={alpha}{4.0},
					on layer=main
					] table [
					x=grid_size,
					y=lb,
					]{data/revision/results_1.0_cov_6_3_1_1.txt};\addlegendentry{Lower Bound};
				\end{axis}
			\end{tikzpicture}
			\caption{$N= p^{4.0}$}
			\label{fig:risk_1.0_3_1_1_4.0}
		\end{subfigure}
		
		\caption{Average integrated risks of different estimators ($\alpha = 1.0$, covariance is $3e^{-\mathscr{L}} + I$).}
		\label{fig:risk_1.0_3_1_1}
	\end{figure}

	\subsection{Entries in $K_p$ are drawn from $\mathcal{U}_{[-1,1]}$}
	\label{subsec:-1_1}
	
	In the first case, $K_p$ is a random matrix whose elements are drawn from a uniform distribution between $[-1,1]$. It can be checked that both the covariance matrix $\Sigma_p = (1/|\mathcal{S}_p|^2)K_p^\intercal K_p$ and the precision matrix $\Psi_p=\Sigma^{-1}_p$ violate the second part of Assumption \ref{assumption:3}. The variance of the means of segment travel times $\tau^2$ is set to be $0.5$ which is similar to the variance of the segment travel times $\sigma^2_s$. The rest of the experimental setups are the same as those in \Cref{sec:numerical}. %
    \Cref{fig:cov_2_risk_1.0} shows similar trends for the integrated risks of the simple segment-based estimators and the optimal route-based estimators, as those in \Cref{fig:risk_1.0_1_1_1}. This is not too much out of expectation as the proof of \Cref{thm:asymptotic_general_seg_better_route} and
	\Cref{cor:grid} do not require spatial decay of precision matrix $\Psi_p$. Moreover, the entries in the covariance matrix are mostly dominated by the diagonal and the off-diagonal entries sum up to zero in expectation. This roughly gives $\sum_{s,t\in\mathcal{S}_p}\sigma_{s,t} = \mathcal{O}(|\mathcal{S}_p|)=\mathcal{O}(p^2)$. %

	\begin{figure}[htbp]
		\captionsetup[subfigure]{justification=centering}
		\centering 
		
		\begin{subfigure}[h]{0.45\textwidth}
			\begin{tikzpicture}
				\begin{axis}[
					width=2.7in,
					xlabel = Grid Size ($p$),
					ylabel = $\log_{10}\textrm{(Average Risk)}$,
					legend style={legend pos=south east, font=\tiny},
					label style={font=\footnotesize},
					ymin=-0.5,
					ymax=1.2,
					xtick={10,15,20,25,30}
					]
					\addplot+ [
					discard if not={alpha}{1.0},
					] table [
					x=grid_size,
					y=seg_simple,
					]{data/results_beta_1.0_cov_4.txt};\addlegendentry{Simple Segment};
					
					\addplot+ [
					discard if not={alpha}{1.0},
					] table [
					x=grid_size,
					y=route,
					]{data/results_beta_1.0_cov_4.txt};\addlegendentry{Optimal Route};
					
					\addplot+ [
					discard if not={alpha}{1.0},
					] table [
					x=grid_size,
					y=bayes_optimal,
					]{data/results_beta_1.0_cov_4.txt};\addlegendentry{Optimal};

					\addplot+ [
					discard if not={alpha}{1.0},
					] table [
					x=grid_size,
					y=lb,
					]{data/results_beta_1.0_cov_4.txt};\addlegendentry{Lower Bound};
				\end{axis}
			\end{tikzpicture}
			\caption{$N= p^{1.0}$}
			\label{fig:cov_2_risk_1.0_1.0}
		\end{subfigure}
		~~
		\begin{subfigure}[h]{0.45\textwidth}
			\begin{tikzpicture}
				\begin{axis}[
					width=2.7in,
					xlabel = Grid Size ($p$),
					ylabel = $\log_{10}\textrm{(Average Risk)}$,
					legend style={legend pos=south east, font=\tiny},
					label style={font=\footnotesize},
					ymin=-10,
					ymax=1.5,
					xtick={10,15,20,25,30}
					]
					\addplot+ [
					discard if not={alpha}{2.0},
					] table [
					x=grid_size,
					y=seg_simple,
					]{data/results_beta_1.0_cov_4.txt};\addlegendentry{Simple Segment};
					
					\addplot+ [
					discard if not={alpha}{2.0},
					] table [
					x=grid_size,
					y=route,
					]{data/results_beta_1.0_cov_4.txt};\addlegendentry{Optimal Route};
					
					\addplot+ [
					discard if not={alpha}{2.0},
					] table [
					x=grid_size,
					y=bayes_optimal,
					]{data/results_beta_1.0_cov_4.txt};\addlegendentry{Optimal};
					
					\addplot+ [
					discard if not={alpha}{2.0},
					] table [
					x=grid_size,
					y=lb,
					]{data/results_beta_1.0_cov_4.txt};\addlegendentry{Lower Bound};
				\end{axis}
			\end{tikzpicture}
			\caption{$N= p^{2.0}$}
			\label{fig:cov_2_risk_1.0_2.0}
		\end{subfigure}
		
		\begin{subfigure}[h]{0.45\textwidth}
			\begin{tikzpicture}
				\begin{axis}[
					width=2.7in,
					xlabel = Grid Size ($p$),
					ylabel = $\log_{10}\textrm{(Average Risk)}$,
					legend style={legend pos=south west, font=\tiny},
					label style={font=\footnotesize},
					ymin=-11.0,
					ymax=1.5,
					xtick={10,15,20,25,30},
					]
					\addplot+ [
					discard if not={alpha}{3.0},
					] table [
					x=grid_size,
					y=seg_simple,
					]{data/results_beta_1.0_cov_4.txt};\addlegendentry{Simple Segment};
					
					\addplot+ [
					discard if not={alpha}{3.0},
					] table [
					x=grid_size,
					y=route,
					]{data/results_beta_1.0_cov_4.txt};\addlegendentry{Optimal Route};
					
					\addplot+ [
					discard if not={alpha}{3.0},
					] table [
					x=grid_size,
					y=bayes_optimal,
					]{data/results_beta_1.0_cov_4.txt};\addlegendentry{Optimal};
					
					\addplot+ [
					discard if not={alpha}{3.0},
					] table [
					x=grid_size,
					y=lb,
					]{data/results_beta_1.0_cov_4.txt};\addlegendentry{Lower Bound};
				\end{axis}
			\end{tikzpicture}
			\caption{$N= p^{3.0}$}
			\label{fig:cov_2_risk_1.0_3.0}
		\end{subfigure}
		~~
		\begin{subfigure}[h]{0.45\textwidth}
			\begin{tikzpicture}
				\begin{axis}[
					width=2.7in,
					xlabel = Grid Size ($p$),
					ylabel = $\log_{10}\textrm{(Average Risk)}$,
					legend style={legend pos=south west, font=\tiny},
					label style={font=\footnotesize},
					ymin=-11,
					ymax=1.2,
					xtick={10,15,20,25,30},
					]
					\addplot+ [
					discard if not={alpha}{4.0},
					] table [
					x=grid_size,
					y=seg_simple,
					]{data/results_beta_1.0_cov_4.txt};\addlegendentry{Simple Segment};
					
					\addplot+ [
					discard if not={alpha}{4.0},
					] table [
					x=grid_size,
					y=route,
					]{data/results_beta_1.0_cov_4.txt};\addlegendentry{Optimal Route};
					
					\addplot+ [
					discard if not={alpha}{4.0},
					] table [
					x=grid_size,
					y=bayes_optimal,
					]{data/results_beta_1.0_cov_4.txt};\addlegendentry{Optimal};
					
					\addplot+ [
					discard if not={alpha}{4.0},
					] table [
					x=grid_size,
					y=lb,
					]{data/results_beta_1.0_cov_4.txt};\addlegendentry{Lower Bound};
				\end{axis}
			\end{tikzpicture}
			\caption{$N= p^{4.0}$}
			\label{fig:cov_2_risk_1.0_4.0}
		\end{subfigure}
		\caption{Average integrated risks of different estimators ($\alpha = 1.0$). Covariance of segment travel times is $\Sigma_p = K_p^\intercal K_p/|\mathcal{S}_p|^2$ where $K_p$ is a random matrix whose entries are drawn from $\mathcal{U}_{[-1,1]}$.}
		\label{fig:cov_2_risk_1.0}
	\end{figure}
	
	On the other hand, the information-theoretic lower bounds seem to be of lower order compared to the integrated risks of the simple segment-based estimator. This is expected as the proof of \Cref{thm:asymptotic_opt} critically uses the spatial decay of the precision matrix $\Psi_p$. What is surprising is that the simple segment-based estimator is still highly competitive compared to the optimal estimator --- its risk is very close to the optimal risk. %

	\subsection{Entries in $K_p$ are drawn from $\mathcal{U}_{[0,1]}$}
	\label{subsec:0_1}
	
	In the second case, $K_p$ is a random matrix whose elements are drawn from a uniform distribution between $[0,1]$. It can be checked that both the covariance matrix $\Sigma_p = (1/|\mathcal{S}_p|^2)K_p^\intercal K_p$ and the precision matrix $\Psi_p=\Sigma^{-1}_p$ violate the second part of Assumption \ref{assumption:3}. In particular, we have $\sum_{s,t\in\mathcal{S}_p}|\sigma_{s,t}| = \sum_{s,t\in\mathcal{S}_p}\sigma_{s,t} = \mathcal{O}(|\mathcal{S}_p|^2)$. The variance of the means of segment travel times $\tau^2$ is set to be $0.5$ which is similar to the variance of the segment travel times $\sigma^2_s$. %
 \Cref{fig:cov_2_risk_1.0} now shows quite different trends for the integrated risks of the simple segment-based estimators and the optimal route-based estimators, compared to those in \Cref{fig:risk_1.0_1_1_1}. Following the same proof of \Cref{thm:asymptotic_general_seg_better_route}, one can show that $R\big(\ThetasegP\big)=\mathcal{O}(|\mathcal{S}_p|^2/N)=\mathcal{O}(p^4/N)$ which is reflected by Figures \ref{fig:cov_3_risk_1.0}. %
	The simple segment-based estimator with $\phi_s(N_s)=N_s/(N_s + 1),\:\forall s\in y$ weighs too much on the training data, and since segment travel times in this setting have much higher variance, the optimal segment-based estimator places more weights on the prior mean when the sample size is small. This results in poor performance of the simple segment-based estimators (outperformed by the optimal route-based estimator) when sample size is small. The performance of the simple segment-based estimator improves significantly as the sample size gets larger, and eventually dominates the optimal route-based estimator. %

	\begin{figure}[htbp]
		\captionsetup[subfigure]{justification=centering}
		\centering 
		
		\begin{subfigure}[h]{0.45\textwidth}
			\begin{tikzpicture}
				\begin{axis}[
					width=2.7in,
					xlabel = Grid Size ($p$),
					ylabel = $\log_{10}\textrm{(Average Risk)}$,
					legend style={legend pos=south east, font=\tiny},
					label style={font=\footnotesize},
					ymin=-0.75,
					ymax=1.5,
					xtick={10,15,20,25,30}
					]
					\addplot+ [
					discard if not={alpha}{1.0},
					] table [
					x=grid_size,
					y=seg_simple,
					]{data/results_beta_1.0_cov_5.txt};\addlegendentry{Simple Segment};
					
					\addplot+ [
					discard if not={alpha}{1.0},
					] table [
					x=grid_size,
					y=route,
					]{data/results_beta_1.0_cov_5.txt};\addlegendentry{Optimal Route};
					
					\addplot+ [
					discard if not={alpha}{1.0},
					on layer=main
					] table [
					x=grid_size,
					y=bayes_optimal,
					]{data/results_beta_1.0_cov_5.txt};\addlegendentry{Optimal};
					
					\addplot+ [
					discard if not={alpha}{1.0},
					on layer=main
					] table [
					x=grid_size,
					y=lb,
					]{data/results_beta_1.0_cov_5.txt};\addlegendentry{Lower Bound};
					
				\end{axis}
			\end{tikzpicture}
			\caption{$N= p^{1.0}$}
			\label{fig:cov_3_risk_1.0_1.0}
		\end{subfigure}
		~~
		\begin{subfigure}[h]{0.45\textwidth}
			\begin{tikzpicture}
				\begin{axis}[
					width=2.7in,
					xlabel = Grid Size ($p$),
					ylabel = $\log_{10}\textrm{(Average Risk)}$,
					legend style={legend pos=south east, font=\tiny},
					label style={font=\footnotesize},
					ymin=-10,
					ymax=1.2,
					xtick={10,15,20,25,30}
					]
					\addplot+ [
					discard if not={alpha}{2.0},
					] table [
					x=grid_size,
					y=seg_simple,
					]{data/results_beta_1.0_cov_5.txt};\addlegendentry{Simple Segment};
					
					\addplot+ [
					discard if not={alpha}{2.0},
					] table [
					x=grid_size,
					y=route,
					]{data/results_beta_1.0_cov_5.txt};\addlegendentry{Optimal Route};
					
					\addplot+ [
					discard if not={alpha}{2.0},
					on layer=main
					] table [
					x=grid_size,
					y=bayes_optimal,
					]{data/results_beta_1.0_cov_5.txt};\addlegendentry{Optimal};
					
					\addplot+ [
					discard if not={alpha}{2.0},
					on layer=main
					] table [
					x=grid_size,
					y=lb,
					]{data/results_beta_1.0_cov_5.txt};\addlegendentry{Lower Bound};
					
				\end{axis}
			\end{tikzpicture}
			\caption{$N= p^{2.0}$}
			\label{fig:cov_3_risk_1.0_2.0}
		\end{subfigure}
		
		\begin{subfigure}[h]{0.45\textwidth}
			\begin{tikzpicture}
				\begin{axis}[
					width=2.7in,
					xlabel = Grid Size ($p$),
					ylabel = $\log_{10}\textrm{(Average Risk)}$,
					legend style={legend pos=south west, font=\tiny},
					label style={font=\footnotesize},
					ymin=-13,
					ymax=1.5,
					xtick={10,15,20,25,30},
					]
					\addplot+ [
					discard if not={alpha}{3.0},
					] table [
					x=grid_size,
					y=seg_simple,
					]{data/results_beta_1.0_cov_5.txt};\addlegendentry{Simple Segment};
					
					\addplot+ [
					discard if not={alpha}{3.0},
					] table [
					x=grid_size,
					y=route,
					]{data/results_beta_1.0_cov_5.txt};\addlegendentry{Optimal Route};
					
					\addplot+ [
					discard if not={alpha}{3.0},
					on layer=main
					] table [
					x=grid_size,
					y=bayes_optimal,
					]{data/results_beta_1.0_cov_5.txt};\addlegendentry{Optimal};
					
					\addplot+ [
					discard if not={alpha}{3.0},
					on layer=main
					] table [
					x=grid_size,
					y=lb,
					]{data/results_beta_1.0_cov_5.txt};\addlegendentry{Lower Bound};
					
				\end{axis}
			\end{tikzpicture}
			\caption{$N= p^{3.0}$}
			\label{fig:cov_3_risk_1.0_3.0}
		\end{subfigure}
		~~
		\begin{subfigure}[h]{0.45\textwidth}
			\begin{tikzpicture}
				\begin{axis}[
					width=2.7in,
					xlabel = Grid Size ($p$),
					ylabel = $\log_{10}\textrm{(Average Risk)}$,
					legend style={legend pos=south west, font=\tiny},
					label style={font=\footnotesize},
					ymin=-15,
					ymax=1.5,
					xtick={10,15,20,25,30},
					]
					\addplot+ [
					discard if not={alpha}{4.0},
					] table [
					x=grid_size,
					y=seg_simple,
					]{data/results_beta_1.0_cov_5.txt};\addlegendentry{Simple Segment};
					
					\addplot+ [
					discard if not={alpha}{4.0},
					] table [
					x=grid_size,
					y=route,
					]{data/results_beta_1.0_cov_5.txt};\addlegendentry{Optimal Route};
					
					\addplot+ [
					discard if not={alpha}{4.0},
					on layer=main
					] table [
					x=grid_size,
					y=bayes_optimal,
					]{data/results_beta_1.0_cov_5.txt};\addlegendentry{Optimal};
					
					\addplot+ [
					discard if not={alpha}{4.0},
					on layer=main
					] table [
					x=grid_size,
					y=lb,
					]{data/results_beta_1.0_cov_5.txt};\addlegendentry{Lower Bound};

				\end{axis}
			\end{tikzpicture}
			\caption{$N= p^{4.0}$}
			\label{fig:cov_3_risk_1.0_4.0}
		\end{subfigure}
		\caption{Average integrated risks of different estimators ($\alpha = 1.0$). Covariance of segment travel times is $\Sigma_p = K_p^\intercal K_p/|\mathcal{S}_p|^2$ where $K_p$ is a random matrix whose entries are drawn from $\mathcal{U}_{[0,1]}$.}
		\label{fig:cov_3_risk_1.0}
	\end{figure}
	
	Similarly to \Cref{fig:cov_2_risk_1.0}, the information-theoretic lower bounds are of lower order compared to the risks of the simple segment-based estimator. The difference is that the gap between the simple segment-based estimator and the optimal estimator gets larger.
	These observations along with those in Section \ref{subsec:0_1} suggest that controlling the covariance matrix of the segment travel times is likely much more important than controlling the precision matrix in maintaining the optimality of the simple segment-based estimator.

\end{document}